\setlist[enumerate, 1]{label=(\roman*)}
\setlist[itemize]{leftmargin=1.5em}
\setlist[description]{leftmargin=1em}
\definecolor{darkgreen}{RGB}{0, 102, 0}
\newcommand\nobreakpar{\par\nobreak\@afterheading} 
\numberwithin{equation}{section}
\declaretheorem[style=plain,name=Theorem,qed={\tiny$\blacksquare$},numberwithin=section]{theorem}
\declaretheorem[style=definition,name=Definition,sibling=theorem,qed={\tiny$\blacksquare$}]{definition}
\declaretheorem[style=plain,name=Lemma,sibling=theorem,qed={\tiny$\blacksquare$}]{lemma}
\declaretheorem[style=definition,name=Remark,sibling=theorem,qed={\tiny$\blacksquare$}]{remark}
\declaretheorem[style=definition,name=Question,sibling=theorem,qed={\tiny$\blacksquare$}]{question}
\declaretheorem[style=plain,name=Proposition,sibling=theorem,qed={\tiny$\blacksquare$}]{proposition}
\declaretheorem[style=remark,name=Coordinates,sibling=theorem]{coordinates}
\newcommand{\HScaled}{%
  \mathpalette\@HScaled{}%
}
\newcommand*{\@HScaled}{%
  \scalebox{0.35}{$\square\m@th$}%
}
\title{Principal binets}
\author{
	Niklas C. Affolter\thanks{Institut für Diskrete Mathematik und Geometrie, TU Wien, Austria; and
	Institut für Mathematik, TU Berlin, Germany. 
   \textit{E-mail address}: \texttt{affolter@posteo.net}},
   Jan Techter\thanks{Institut für Mathematik, TU Berlin, Germany. 
    \textit{E-mail address}: \texttt{techter@math.tu-berlin.de}} 
    \bigskip\bigskip\\
}
\date{\today}
\begin{document}

\maketitle

\noindent
\textbf{Abstract.}
Conjugate line parametrizations of surfaces were first discretized  almost a century ago as quad meshes with planar faces.
With the recent development of discrete differential geometry, two discretizations of principal curvature line parametrizations were discovered:
circular nets and conical nets, both of which are special cases of discrete conjugate nets.
Subsequently, circular and conical nets were given a unified description as isotropic line congruences in the Lie quadric.
We propose a generalization by considering polar pairs of line congruences in the ambient space of the Lie quadric.
These correspond to pairs of discrete conjugate nets with orthogonal edges, which we call principal binets,
a new and more general discretization of principal curvature line parametrizations.
We also introduce two new discretizations of orthogonal and Gauß-orthogonal parametrizations.
All our discretizations are subject to the transformation group principle,
which means that they satisfy the corresponding Lie, Möbius, or Laguerre invariance respectively, in analogy to the smooth theory.
Finally, we show that they satisfy the consistency principle, which means that our definitions generalize to higher dimensional square lattices.
Our work expands on recent work by Dellinger on checkerboard patterns.

\newpage

\setcounter{tocdepth}{1}
\tableofcontents

\newpage

\section{Introduction} \label{sec:introduction}

Discrete differential geometry is an area of mathematics that aims to discretize differential geometry in a way that is \emph{structure preserving} \cite{ddgbook}. This means that the goal is to discretize sets of definitions and theorems simultaneously and consistently with each other. In this paper we deal with discretizations of parametrized surfaces.

Let us first characterize the parametrizations that are relevant to us.
\begin{enumerate}
	\item An \emph{orthogonal parametrization} is such that the first fundamental form is diagonal.
	\item A \emph{conjugate line parametrization} is such that the second fundamental form is diagonal.
	\item Another type of parametrization that we encounter is such that the third fundamental form is diagonal. We call this case a \emph{Gauß-orthogonal parametrization}.
\end{enumerate}

Due to the linear dependence of the three fundamental forms of a parametrization,
a conjugate line parametrization is orthogonal if and only if it is Gauß-orthogonal. Moreover, a conjugate line parametrization that is orthogonal (or equivalently Gauß-orthogonal) is called a \emph{(principal) curvature line parametrization}.

We follow the approach to consider \emph{discrete nets} as discrete analogues of parametrized surfaces.
Discrete nets are maps from $\Z^2$ (or a more general quad graph) to $\R^3$ (or some other ambient space).
A commonly used discretization of conjugate line parametrizations
is given by \emph{discrete conjugate nets}, which are discrete nets,
such that the image of each quad is contained in a plane \cite{sauerqnet, dsqnet}.
There are two well established discretizations of curvature line parametrizations.
First, there are \emph{circular nets}, which are discrete conjugate nets such that the image of each quad is contained in a circle \cite{cdscircular}.
Secondly, there are \emph{conical nets}, which are discrete conjugate nets such that around each vertex the four planes are in contact with a cone of revolution \cite{lpwywconical,bsorganizing,pwconical}.

\begin{figure}[tb]
  \centering
  \includegraphics[width=0.32\textwidth]{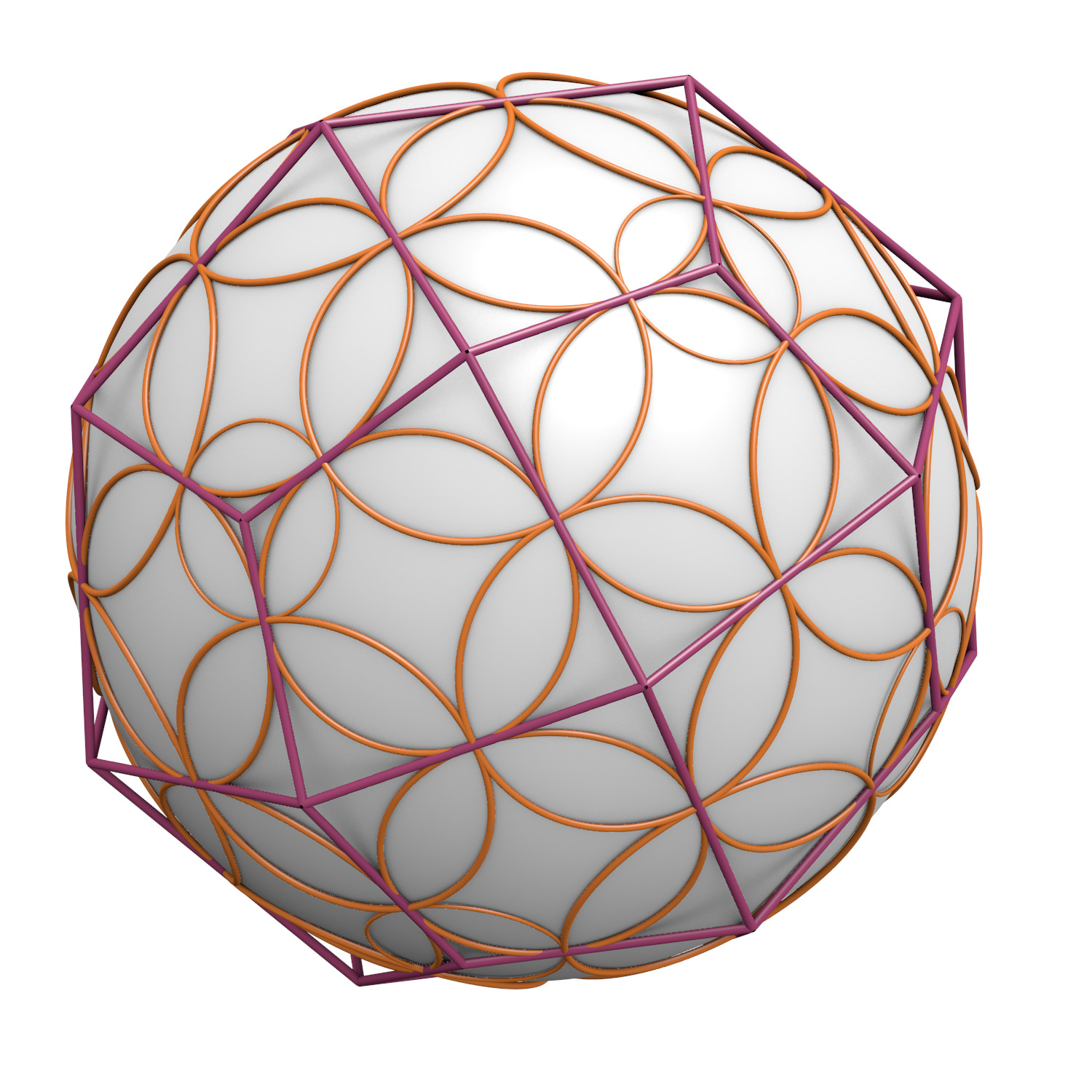}
  \includegraphics[width=0.32\textwidth]{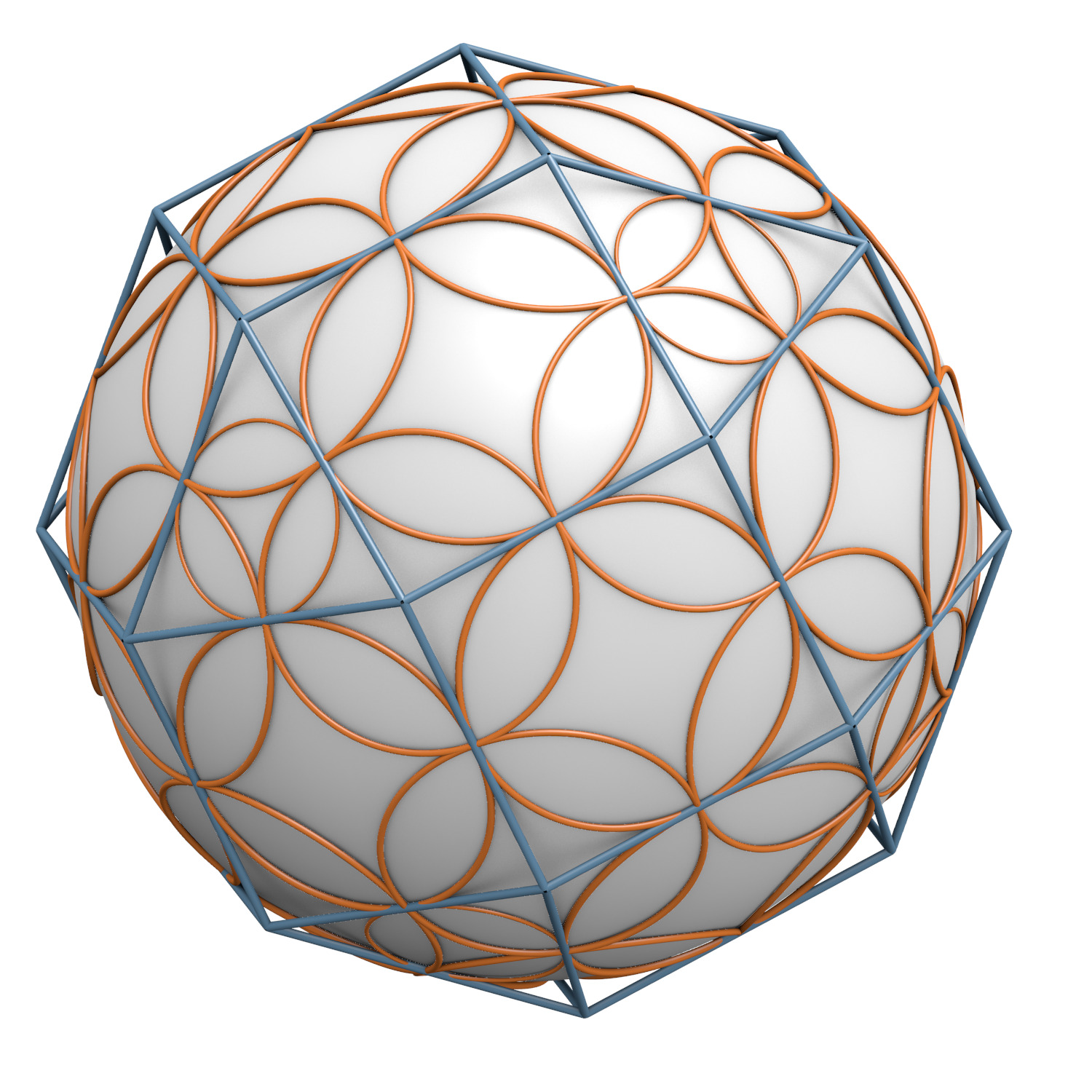}
  \includegraphics[width=0.33\textwidth]{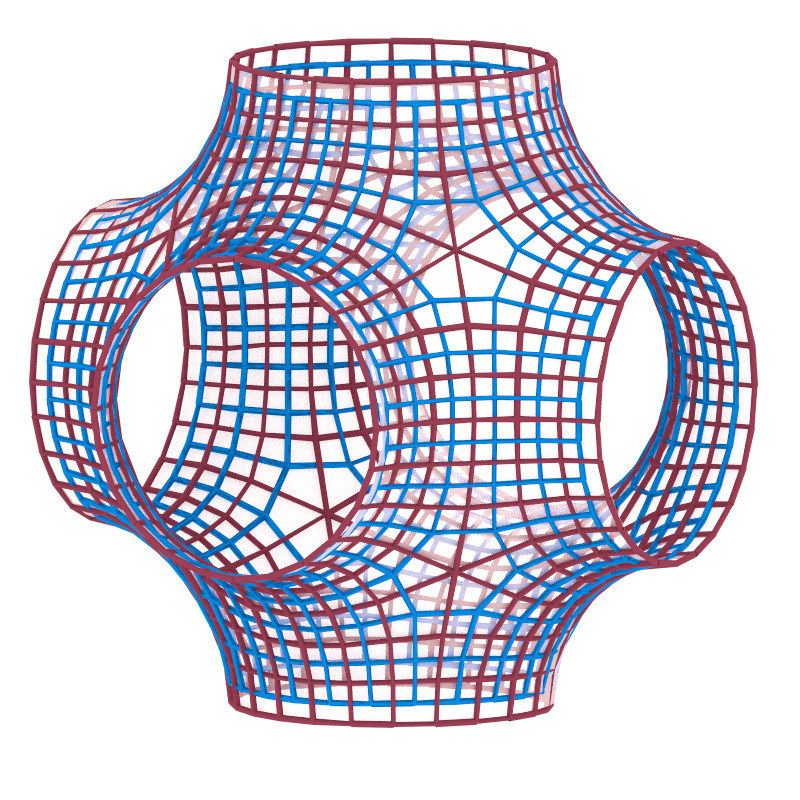}
  \caption{
    Left/middle: Two dual Koebe polyhedra.
    Right: Pair of discrete minimal surfaces with dual combinatorics.
  }
  \label{fig:examples}
\end{figure}
There is no commonly used discretization of general orthogonal parametrizations in terms of single discrete nets in the literature. However, there are several interesting examples of discrete surfaces that arise naturally as pairs of nets with dual combinatorics.
We give three examples, which constitute the main motivation for our approach.
\begin{enumerate}
\item \label{itm:examplekoebe}
  Koebe polyhedra are discrete conjugate nets with edges that are tangent to the unit sphere.
  For each Koebe polyhedron there is a dual Koebe polyhedron with polar edges (see Figure~\ref{fig:examples}, left/middle). 
  Koebe polyhedra may be interpreted as discretizations of the Gauß map of a minimal surface.
  Consequently, in \cite{bhssminimal} discrete minimal surfaces are constructed from Koebe polyhedra (see Figure~\ref{fig:examples}, right).
  Moreover, in \cite{bhsring} the authors consider a generalization of Koebe polyhedra, which also come in pairs and are discretizations of the Gauß map of constant mean curvature surfaces \cite{bhscmc}.
\item
  \label{itm:exampleconfocal}
  Discrete confocal quadrics as introduced in \cite{bsstconfocali, bsstconfocalii}
  appear as pairs of higher dimensional discrete nets (see Figure~\ref{fig:discrete-confocal}, left).
  Taking a 2-dimensional slice of a 3-dimensional system of discrete confocal quadrics
  results in a pair of combinatorially dual discrete nets describing one discrete quadric \cite{hstellipsoid} (see Figure~\ref{fig:discrete-confocal}, right).
\item
  Starting with a circular net, there is a geometric construction
  by which one can obtain a corresponding conical net of dual combinatorics \cite{bsorganizing,pwconical} (see Figure~\ref{fig:circular-conical-binet}).
  Vice versa, starting with a conical net, a corresponding circular net can be constructed in a similar way.
  Together the circular and conical net form a pair of nets with dual combinatorics.
\end{enumerate}

In each of these examples the two discrete nets are discrete conjugate nets such that pairs of dual edges are orthogonal.
In this sense, they all constitute discretizations of curvature line parametrizations. Moreover, the discrete conditions for the discretization of a conjugate line parametrization and of an orthogonal parametrization can be clearly separated.

Motivated by the examples, we now develop a theory of discretizations of surface parametrizations using pairs of nets.
For this purpose, we denote by $D$ the union of vertices and faces of $\Z^2$, that is $D=\Z^2 \cup F(\Z^2)$. We call $d,d'\in D$ \emph{incident} if $d$ is a face that contains the vertex $d'$ or vice versa.
The idea we put forward in this paper is to consider the notion of \emph{binets}, which are maps
\begin{align}
	b: D=\Z^2 \cup F(\Z^2) \rightarrow \R^3.
\end{align}
The restriction of $D$ to vertices or faces is isomorphic to $\Z^2$, thus we consider the restriction of a binet to vertices or faces to be a discrete net.
For each edge $(v,v')$ of $\Z^2$ there is a unique dual edge $(f,f')$, and we call the pair of edge and dual edge a \emph{cross}.

A \emph{conjugate binet} is a binet such that the two restrictions to vertices and faces are discrete conjugate nets each.
Therefore, a conjugate binet is simply a pair of discrete conjugate nets.

An \emph{orthogonal binet} is a binet such that at each cross, the line $b(v) \vee b(v')$ is orthogonal to the line $b(f) \vee b(f')$.
Note that while there is no condition for conjugate binets that involves a combination of points from both vertices and faces of $\Z^2$,
this is the case for orthogonal binets.

Bobenko and Suris defined the so called \emph{transformation group principle} as a desired property for structure preserving discretizations \cite{bsorganizing}.
The principle states that discretizations should be invariant under the same group of transformations of the ambient space as in the corresponding smooth theory. In particular, it is well-known that
\begin{enumerate}
	\item conjugate parametrizations are invariant under projective transformations,
	\item orthogonal parametrizations are invariant under Möbius transformations,
	\item Gauß-orthogonal parametrizations are invariant under Laguerre transformations,
	\item and curvature line parametrizations are invariant under Lie transformations.
\end{enumerate}

And indeed, discrete conjugate nets are invariant under projective transformations and thereby so are conjugate binets.
Hence, the transformation principle is satisfied.
Orthogonal parametrizations are invariant under Möbius transformations.
However orthogonal binets are not invariant under applying a Möbius transformation to the points of the binet.
Yet we may achieve Möbius invariance for orthogonal binets in the following way.

Given a binet $b$ consider a map $b_\sp$ from $D$ to the set of spheres, such that the center of $b_\sp(d)$ is $b(d)$ for all $d \in D$, and such that $b_\sp(d)$ is orthogonal to $b_\sp(d')$ whenever $d,d'$ are incident. We call such a map an \emph{orthogonal sphere representation} of $b$.
We show that a binet has an orthogonal sphere representation if and only if $b$ is an orthogonal binet.
As every orthogonal binet comes with an orthogonal sphere representation, this allows us to apply a Möbius transformation to an orthogonal binet by applying it to an orthogonal sphere representation instead. In this sense, orthogonal binets are \emph{Möbius invariant}, which shows that they satisfy the transformation group principle.

In the projective model of Möbius geometry, $\R^3$ is represented by a quadric $\mobq$ of signature $\texttt{(++++-)}$ in $\RP^4$, which is called the \emph{Möbius quadric}.
Spheres in $\R^3$ correspond to points outside of $\mobq$. Thus, given an orthogonal binet $b$ and a sphere representation $b_\sp$, we can consider the \emph{Möbius lift}
\[
  b_\mobq: D \rightarrow \RP^4
\]
of $b$, which is such that $b_\mobq(d)$ is the point corresponding to the sphere $b_\sp(d)$ for all $d\in D$. The orthogonality of $b$ is reflected in the fact that $b_\mobq(d)$ is polar (with respect to $\mobq$) to $b_\mobq(d')$ whenever $d$ and $d'$ are incident.
We call a binet with that property a \emph{polar binet},
and show that Möbius lifts of orthogonal binets are in bijection with polar binets in the projective model of Möbius geometry.

Another type of map that we consider are \emph{bi*nets}. A bi*net is a map from $D$ to the space of planes of $\R^3$. We think of a bi*net as discretizing the tangent planes of a surface.
An \emph{orthogonal bi*net} is a bi*net $b$ such that at each cross, the line $b(v) \cap b(v')$ is orthogonal to the line $b(f) \cap b(f')$. This may be viewed as a discretization of a Gauß-orthogonal parametrization.

Given a bi*net $b$, consider a map $b_\ci$ from $D$ to the set of circles in the unit sphere $\unis$, such that the axis of $b_\ci(d)$ is orthogonal to $b(d)$ for all $d \in D$, and such that $b_\ci(d)$ is orthogonal to $b_\ci(d')$ whenever $d$ and $d'$ are incident. We call such a map an \emph{orthogonal circle representation} of $b$. We show that a bi*net has an orthogonal circle representation if and only if $b$ is an orthogonal bi*net. We may represent each circle $b_\ci(d)$ by a point $n(d)$ outside of $\unis$, and we call the resulting map the \emph{normal binet} of $b$.
We consider the normal binet to be a discretization of the normal map (or Gauß map) of a surface (in Gauß-orthogonal parametrization).
Normal binets are polar binets with respect to $\unis$.

In the projective model of \emph{Laguerre geometry}, oriented planes of $\R^3$ are represented by points on a quadric $\blac$ of signature $\texttt{(+++-0)}$ in $\RP^4$, which is called the \emph{Blaschke cylinder}.
Let $b$ be an orthogonal bi*net and $b_\ci$ an  orthogonal circle representation of $b$.
We show that $b(d)$ and $b_\ci(d)$ together define a unique point $b_\blac(d)$ in $\RP^4$, and we call the resulting map
\[
  b_\blac : D \rightarrow \RP^4
\]
the \emph{Laguerre lift} of $b$.
We show that $b_\blac$ is a polar binet with respect to $\blac$,
and that Laguerre lifts of orthogonal bi*nets are in bijection with polar binets in the projective model of Laguerre geomemtry.
Using the Laguerre lift, we have a way to apply Laguerre transformations to an orthogonal bi*net such that the result is again an orthogonal bi*net.
This is analogous to the smooth theory, as Gauß-orthogonal parametrizations are preserved by Laguerre transformations.

As a discretization of curvature line parametrizations we introduce \emph{principal binets},
which are binets that are both conjugate and orthogonal.
Note that we may always interpret the planes of a conjugate binet $b$ as a bi*net $\square b$.
In analogy to the smooth theory, we show that a conjugate binet $b$ is orthogonal if and only if its corresponding bi*net $\square b$ is orthogonal.
Thus, if $b$ is principal, there exists both a Möbius lift $b_\mobq$ of $b$ and a Laguerre lift of $\square b$ which we denote by $b_\blac$.

Both Möbius geometry and Laguerre geometry are subgeometries of Lie geometry.
In the projective model of Lie geometry, oriented spheres and oriented planes of $\R^3$ are represented as points on a quadric $\lieq$ of signature $\texttt{(++++--)}$ in $\RP^5$, which is called the \emph{Lie quadric}.
The projective model of Möbius geometry and the projective model of Laguerre geometry are each included in a hyperplane of $\RP^5$.
As a result, we may embed the Möbius lift $b_\mobq$ and the Laguerre lift $b_\blac$ into $\RP^5$.
We define the \emph{Lie lift} $b_\lieq$ of a principal binet $b$ by the lines joining corresponding points of $b_\mobq$ and $b_\blac$
\[
	b_\lieq: D \rightarrow \mathrm{Lines}(\RP^5), \qquad  d \mapsto b_\mobq(d) \vee b_\blac(d).
\]
We show that $b(d)$ is in the polar space of $b(d')$ whenever $d$ and $d'$ are incident.
As a consequence we prove that the restriction of the Lie lift to the vertices of $\Z^2$ is a \emph{discrete line congruence}, which means that adjacent lines intersect.
The same holds for the restriction to the faces $F(\Z^2)$. We say that the Lie lift is a \emph{polar line bicongruence} (with respect to $\lieq$).
We show Lie lifts of principal binets are in bijection with polar line bicongruences in the projective model of Lie geometry.
As a result, in analogy to the smooth theory, we are able to apply Lie transformations to principal binets such that the result is again a principle binet.

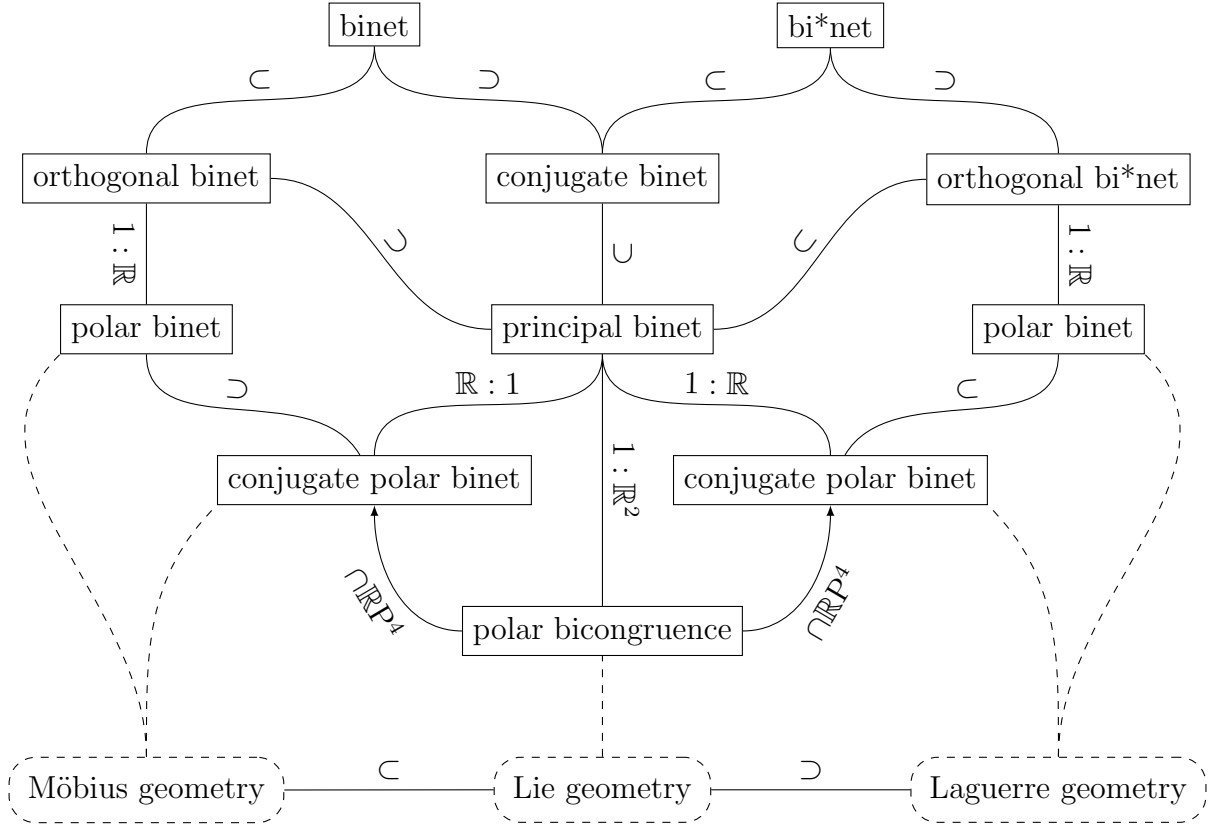
\begin{figure}[tb]
	\begin{tikzpicture}[every node/.style={anchor=north,draw, rectangle, align=center}]
		\node (bi) at (3,8) {binet};
		\node (bis) at (9,8) {bi*net};		
		\node (obi) at (0,6) {orthogonal binet};		
		\node (cbi) at (6,6) {conjugate binet};		
		\node (obis) at (12,6) {orthogonal bi*net};		
		\node (mpbi) at (0,4) {polar binet};		
		\node (lpbi) at (12,4) {polar binet};		
		\node (mcpbi) at (3,2) {conjugate polar binet};		
		\node (lcpbi) at (9,2) {conjugate polar binet};		
		\node (liebi) at (6,0) {polar bicongruence};			
		\node (pribi) at (6,4) {principal binet};	
		
		\node[rectangle,dashed,rounded corners=10,inner sep=7pt] (mobg) at (0,-2) {Möbius geometry};	
		\node[rectangle,dashed,rounded corners=10,inner sep=7pt] (lagg) at (12,-2) {Laguerre geometry};	
		\node[rectangle,dashed,rounded corners=10,inner sep=7pt] (lieg) at (6,-2) {Lie geometry};	
		
		\draw[-, sloped,out=270,in=90]
			(bi) edge[] node[above,draw=none] {$\subset$} (obi) edge[]  node[above,draw=none] {$\supset$} (cbi)
			(bis) edge[]  node[above,draw=none] {$\supset$} (obis) edge[]  node[above,draw=none] {$\subset$} (cbi)
			(obi) edge[]  node[below,draw=none] {$1:\R$} (mpbi) edge[out=0,in=180] node[above,draw=none] {$\supset$} (pribi)
			(obis) edge[]  node[above,draw=none] {$1:\R$} (lpbi) edge[out=180,in=0] node[above,draw=none] {$\subset$} (pribi)
			(cbi) edge[]  node[above,draw=none] {$\supset$} (pribi)
			(mpbi) edge[in=120] node[above,draw=none] {$\supset$} (mcpbi)
			(lpbi) edge[in=60] node[above,draw=none] {$\subset$} (lcpbi)
			(pribi) edge[]  node[above,draw=none] {$\R:1$} (mcpbi) edge[]  node[above,draw=none] {$1:\R$} (lcpbi) edge[] node[above,draw=none] {$1:\R^2$} (liebi)
			(mcpbi) edge[in=180,latex-] node[below,draw=none] {$\cap \RP^4$} (liebi)
			(lcpbi) edge[in=0,latex-] node[below,draw=none] {$\cap \RP^4$} (liebi)
			
			(mobg) edge[out=0,in=180] node[above,draw=none] {$\subset$} (lieg) edge[out=90,in=225, dashed] (mpbi.south west) edge[out=90,in=225, dashed] (mcpbi.south west)
			(lagg) edge[out=180,in=0] node[above,draw=none] {$\supset$} (lieg) edge[out=90,in=315, dashed] (lpbi.south east) edge[out=90,in=315, dashed] (lcpbi.south east)
			(lieg) edge[out=90,in=270,dashed] (liebi)
		;
		\draw[-]
		;
	\end{tikzpicture}
	
	\caption{Classes of binets and their relations.}
\end{figure}

Both circular nets and conical nets are discretizations of curvature line parametrizations, hence they should be Lie invariant.
However, circular nets are only Möbius invariant, while conical nets are only Laguerre invariant.
It was shown that this can be overcome by considering \emph{principal contact element nets},
which are generated from pairs of conical and circular nets \cite{bsorganizing, pwconical}, which we call \emph{circular-conical binets}.
The Lie lifts for principal contact element nets introduced in \cite{bsorganizing} are in bijection with \emph{discrete isotropic line congruences} in the Lie quadric, which are discrete line congruences such that the lines are contained in the Lie quadric.
We show that a circular-conical binet $b$ is a principal binet, and that in this case the restriction of our Lie lift $b_\lieq$ to the vertices
is a discrete isotropic line congruence, and thus recovers the previously known Lie lift.
In fact, whenever the restriction of a polar line bicongruence to the vertices is isotropic, this is the Lie lift of some circular-conical binet.
Therefore, our results show that it is not necessary to constrain the Lie lift to isotropic line congruences
in order to obtain a discretization of curvature line parametrizations.
Moreover, our description in Möbius geometry and in Laguerre geometry generalizes to discretizations of orthogonal and Gauß-orthogonal parametrizations.

Although Möbius lift, normal binet and Laguerre lift can be defined in an abstract manner, in practical terms it helps to have a coordinate description as well.
We use the standard coordinate framework of \cite{ddgbook} for the projective models of Möbius geometry (see Section~\ref{sec:moebius})
and Laguerre geometry (see Section~\ref{sec:lie}).
With these conventions, the Möbius lift $b_\mobq$ of an orthogonal binet $b$ is given by
\[
    b_\mobq(d) = [b(d) + e_0 + 2\rho(d) e_\infty ],
\]
where $\rho: D \rightarrow \R$ is a function that satisfies
\[
    \sca{b(d), b(d')} = \rho(d) + \rho(d'),
\]
for all incident $d, d' \in D$. The normal binet $n$ of an orthogonal bi*net $b$ is given by
\[
	n(d) = \sigma^{-1}(d)u(d),
\]
where $u(d)$ is a unit-normal vector to $b(d)$ and $\sigma: D \rightarrow \R$ is a function that satisfies
\[
    \sca{u(d), u(d')} = \sigma(d) \sigma(d'),
\]
for all incident $d,d' \in D$. Finally, the Laguerre lift $b_\blac$ of an orthogonal bi*net $b$ is given by
\[
	b_\blac(d) = [u(d), \sigma(d), h(d)] = [n(d), 1, \tfrac{h(d)}{\sigma(d)}],
\]
where $h: D \rightarrow \R$ is defined by
\[
  b(d) = \set{x \in \eucl}{\sca{u(d), x} + h(d) = 0}.
\]
Embedding Möbius geometry and Laguerre geometry into the projective model of Lie geometry (see Section~\ref{sec:lie})
the Lie lift $b_\lieq$ of a principal binet $b$ is given by the join of the Möbius lift and the Laguerre lift
\[
  b_\lieq(d) = b_\mobq(d) \vee b_\blac(d).
\]

In their seminal work \cite{bsorganizing}, Bobenko and Suris formulated the \emph{consistency principle}, stating that discretizations should have analogous integrability properties as their smooth counterparts.
In the case of curvature line parametrizations, consistency is expressed by the fact that such parametrizations do exist on domains in $\R^3$ and higher dimensions.
As a discrete analogue, we define binets and in particular principal binets on the vertices and faces of $\Z^N$ with $N \geq 3$,
We show that principal binets are a \emph{consistent reduction} of binets, meaning that if we require conjugacy and orthogonality on 2-dimensional initial data, this can be extended to a unique principal binet on all of $\Z^N$. In terms of the Lie lift, this is equivalent to showing that polar line bicomplexes are a consistent reduction of discrete line complexes, which are the $\Z^N$ analogue of discrete line congruences \cite{bslinecomplexes}.

During our investigations, Felix Dellinger has independently introduced the concept of \emph{checkerboard patterns} \cite{dellingercbpatterns}, which is closely related to binets. In particular, our definition of conjugate, orthogonal and principal binets are in bijection with the \emph{control nets} of conjugate, orthogonal and principal checkerboard patterns. Note that Dellinger has, in his language, also explored the projective invariance of conjugate checkerboard patterns and the Möbius invariance of orthogonal checkerboard patterns. Compared to Dellinger, we add the description of orthogonal bi*nets and their Laguerre invariance and the Lie lift and Lie invariance of principal binets. The results on consistency are also new. On the other hand, Dellinger has introduced the shape operator and fundamental forms for checkerboard patterns, something we would like to build upon in the future. Dellinger has also investigated K{\oe}nigs, isothermic and minimal checkerboard patterns.
\emph{K{\oe}nigs binets} will be part of a joint future publication \cite{adtkoenigsbinets}.
Note that the checkerboard approach is reminiscent of \emph{linear discrete complex analysis} \cite{BGcomplexanalysis} and more generally of \emph{Tutte embeddings} \cite{tutteembedding}.

\subsection{Future work}

In this paper we have shown how the binet approach unifies and generalizes discretizations of curvature line parametrizations. As mentioned, we are currently working on the topic of K{\oe}nigs binets, where we will show that K{\oe}nigs binets unify and generalize known discretizations of K{\oe}nigs nets.
This in turn may allow us to extend the results to isothermic and in particular constant mean curvature and minimal surfaces.

Another question that we have not investigated is whether there is some relation between binets and edge-constraint nets \cite{hsfwedgeconstraint} or the nets defined in \cite{hkyconstraint}, which in some sense could be seen as a dual notion of edge-constraint nets.

It is also known that the discretization of asymptotic parametrizations called \emph{A-nets} \cite{saueranet} corresponds to discrete isotropic line congruences in the Plücker quadric \cite{doliwaanetspluecker}. Thus, it would be interesting to investigate if there is a generalization of discrete A-nets to polar line bicongruences with respect to the Plücker quadric in analogy to the Lie lift of principal binets. 

Finally, besides isotropic line congruences, there are other interesting special cases of line congruences: linear line congruences \cite{bslinecomplexes}, tangential line congruences, Doliwa line congruences \cite{affolterthesis}. Do some of them correspond to meaningful special cases of principal binets?

\subsection{Structure of the paper}

We begin by introducing general binets in Section~\ref{sec:binets} and then proceed to introduce conjugate, polar and orthogonal binets in Section~\ref{sec:conjugatebinets}, \ref{sec:polarbinets} and \ref{sec:orthobinets}. We then recall the basics of Möbius geometry in Section~\ref{sec:moebius} in order to explain the Möbius lift of orthogonal binets in Section~\ref{sec:mobiuslift}. Subsequently, we present the dual story, that is we introduce bi*nets, conjugate bi*nets, and orthogonal bi*nets in Sections~\ref{sec:bistarnets}, \ref{sec:conjugate-bi-star-nets} and \ref{sec:orthobinstarnets} as well as normal binets in Section~\ref{sec:normalbinets}. Laguerre geometry is introduced in Section~\ref{sec:laguerre} and used in Section~\ref{sec:laguerrelift} for the Laguerre lift of orthogonal bi*nets. We introduce line bicongruences in Section~\ref{sec:linebi}, which prepares us for the synthesis of binets (the primal side) and bi*nets (the dual side): principal binets in Section~\ref{sec:principal}. On the geometric side, the synthesis of Möbius geometry and Laguerre geometry is Lie geometry as explained in Section~\ref{sec:lie}, which we use to introduce the Lie lift in Section~\ref{sec:lielift}. We also discuss the occurrence of principal curvature spheres in Section~\ref{sec:curvaturespheres}, and how our previous results specialize to circular and conical nets in Section~\ref{sec:circularconical}. In Section~\ref{sec:spaces} we briefly discuss the spaces of the various classes of binets. Finally, we discuss multi-dimensional consistency of principal binets in Section~\ref{sec:consistency}.

In spirit of the Klein-Erlangen program, our discussion of the results is based on a coordinate-free approach using projective models for the various geometries.
However, for practical purposes coordinate descriptions are essential, hence we have added coordinate boxes throughout the paper that are optional for the reader.

We give a one-page overview (cheat sheet) for our notation of the various spaces and maps in Appendix~\ref{sec:cheatsheet}.

\subsection*{Acknowledgements}
N.~C.~Affolter and J.~Techter were supported by the Deutsche Forschungsgemeinschaft (DFG) Collaborative Research Center TRR 109 ``Discretization in Geometry and Dynamics''. We would like to thank Alexander Bobenko, Felix Dellinger, Alexander Fairley, Christian Müller, Wolfgang Schief, Nina Smeenk and Boris Springborn for discussions. We would also like to thank Johannes Wallner for organizing the Fall School ``Discrete Geometry and Topology'' in Graz in 2016, where the authors first discussed this research project together.

Most of the images were created with \texttt{Krita}, \texttt{GeoGebra}, or  \texttt{pyddg} \cite{pyddg}.
We would like to thank
Shana Choukri for Figures~\ref{fig:orthogonal-sphere-represantation}, \ref{fig:cross-spheres}, \ref{fig:principal-binet-circles},
Oliver Gross for Figure~\ref{fig:focal-binet},
Stefan Sechelmann for Figure~\ref{fig:examples} (left and middle),
and Nina Smeenk for Figure~\ref{fig:examples} (right).

\section{Nets and binets} \label{sec:binets}

Traditionally, (discrete) nets are maps $\Z^2 \rightarrow \RP^n$,
which are viewed as discretizations of smooth parametrized surfaces \cite{ddgbook}.
Here $\RP^n$ denotes the $n$-dimensional real projective space.
We introduce binets as pairs of maps on the vertices \emph{and} the faces of $\Z^2$ (see Figure~\ref{fig:binet}).

\begin{figure}[H]
  \centering
  \begin{overpic}[width=0.6\textwidth]{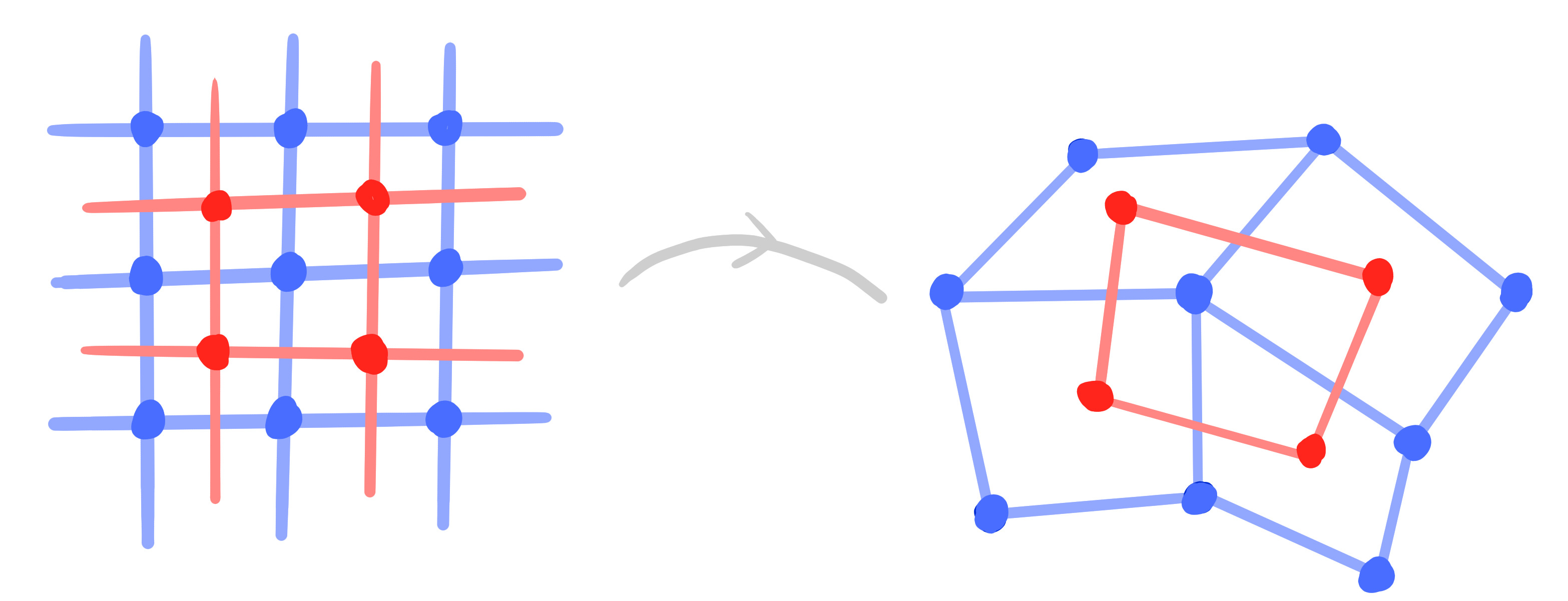}
    \put(3,5){$D$}
    \put(90,30){$\RP^n$}
  \end{overpic}
  \caption{A binet is a map $D = V \cup F \rightarrow \RP^n$.}
  \label{fig:binet}
\end{figure}

Consider $\Z^2$ as an infinite planar graph with
\begin{align*}
  \text{vertices} ~ &V \coloneqq V(\Z^2) = \Z^2,\\
  \text{edges} ~ &E \coloneqq E(\Z^2) \simeq \Z^2 \cup \Z^2,\\
  \text{faces} ~ &F \coloneqq F(\Z^2) \eqqcolon (\Z^2)^* \simeq \Z^2.
\end{align*}
We say two vertices $v, v' \in V$ (resp.\ two faces) are \emph{adjacent} if they share an edge, i.e.,
$
  (v, v') \in E.
$
We say that a vertex $v \in V$ and a face $f \in F$ are \emph{incident} if $v$ is a boundary vertex of $f$, and write
\[
  v \inc f.
\]
We additionally denote the vertices of the double graph by
\[
  D \coloneqq V \cup F.
\]
For $d,d' \in D$ the incidence $d \inc d'$ implies $d'\in V, d\in F$ or $d\in V, d'\in F$.
We also consider the incident pairs of adjacent vertices and faces, and denote them by
\[
  C \coloneqq \set{(v,f,v',f')}{v, v' \in V, \quad f, f' \in F, \quad v, v' \inc f, f'},
\]
which we call the \emph{crosses} of $D$ (see Figure~\ref{fig:cross}).
\begin{figure}[H]
  \centering
  \begin{overpic}[width=0.12\textwidth]{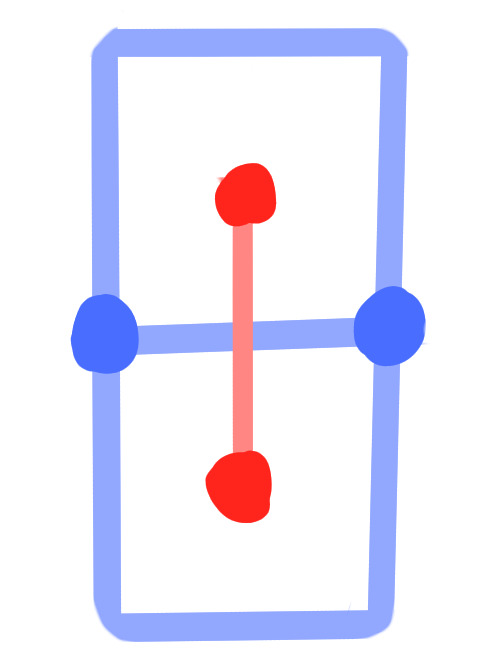}
    \put(0,45){$\color{blue}v$}
    \put(66,45){$\color{blue}v'$}
    \put(42,20){$\color{red}f$}
    \put(42,65){$\color{red}f'$}
  \end{overpic}
  \caption{A cross $(v,f,v',f') \in C$.}
  \label{fig:cross}
\end{figure}
\begin{remark}
  \label{rem:dual-graph}
  The faces and edges of $\Z^2$ can be identified with the vertices and edges of the dual graph $(\Z^2)^*$, respectively.
  In this way $D$ may be viewed as the join of $\Z^2$ and $(\Z^2)^*$,
  and $C$ as pairs of an edge of $\Z^2$ and its corresponding dual edge of $(\Z^2)^*$.
\end{remark}

We use the following definition of nets for our purposes. 
It includes a definition of regularity, which is a genericity assumption on triples of vertices,
which in the smooth case corresponds to the regularity of a parametrized surface. Throughout the article, we include definitions of regularity that allow for precise statements, and that reduce the number of special cases that need to be treated. In order to keep the text less technical, we will not differentiate between regular maps and non-regular nets in the text outside of definitions, lemmas and theorems.
\begin{definition}[Nets]\label{def:net}
  A \emph{net} is a map $p: V \rightarrow \RP^n$. A \emph{regular net} is a net 
  such that the points of any three vertices of each face
  span a plane.  
\end{definition}
The identification of the faces $F$ with the vertices of the dual graph (see Remark~\ref{rem:dual-graph})
yields an analogous definition for nets as maps $p : F \rightarrow \RP^n$.
We define binets as maps on $D$ that come from pairs of nets on $V$ and $F$.
\begin{definition}[Binets]\label{def:binet}
  A \emph{binet} is a map $b: D \rightarrow \RP^n$, such that the restrictions to $V$ and $F$ are nets. A \emph{regular binet} is a binet, such that 
  \begin{enumerate}
  \item
    the restrictions to $V$ and $F$ are regular nets,
  \item
    and for all incident $d,d'\in D$ the points $b(d),b(d')$ are distinct.\qedhere
  \end{enumerate}
\end{definition}

\section{Conjugate binets} \label{sec:conjugatebinets}

A conjugate net is also known as Q-net or quadrilateral lattice \cite{sauerqnet, ddgbook, dsqnet}.

\begin{definition}[Conjugate nets]
  A \emph{conjugate net} is a net $p: V \rightarrow \RP^n$, such that the image of each quad is contained in a plane. 
\end{definition}

By the identification of the faces $F$ with the vertices of the dual graph (see Remark~\ref{rem:dual-graph}), conjugate nets are defined on $F$ analogously.

\begin{definition}[Conjugate binets] \label{def:conjugatebinet}
  A binet $b$ such that the restrictions to $V$ and $F$ are conjugate nets is a \emph{conjugate binet}.
\end{definition}

A conjugate binet is essentially a pair of conjugate nets.

\subsection*{Invariance}
The definition of conjugate binets is invariant under projective transformations.

\section{Polar binets} \label{sec:polarbinets}

In the following, binets that consist of two nets which are related by polarity with respect to a quadric will play an important role (see Figure~\ref{fig:polar-binet}).
In particular, the lifts of orthogonal binets to Möbius geometry
and of orthogonal bi*nets to Laguerre geometry will be polar binets.

Recall that a quadric $\mathcal Q \subset \RP^n$ is given by a symmetric bilinear form $\sca{\cdot, \cdot}_{\mathcal{Q}}$ on $\R^{n+1}$
\[
  \mathcal{Q} = \set{[x] \in \RP^n}{\sca{x,x}_{\mathcal{Q}} = 0}.
\]
For a projective subspace $K \subset \RP^n$ its polar subspace is given by
\[
  K^\pol = \set{X \in \RP^n}{X \pol Y ~\text{for all}~ Y \in K},
\]
where we denote the polarity relation of two points $X = [x], Y = [y] \in \RP^n$ by
\[
  X \pol Y
  \quad\Leftrightarrow\quad
  \sca{x,y}_{\mathcal{Q}} = 0.
\]
\begin{definition}[Polar binets]
  Let $\mathcal Q \subset \RP^n$ be a quadric.
  A \emph{polar binet}  is a binet $b: D \rightarrow \RP^n$, such that
  \[
    b(d) \pol b(d') \qquad   \text{for all incident}~ d,d'\in D.\qedhere
  \]
\end{definition}
\begin{figure}[H]
  \centering
  \begin{overpic}[width=0.6\textwidth]{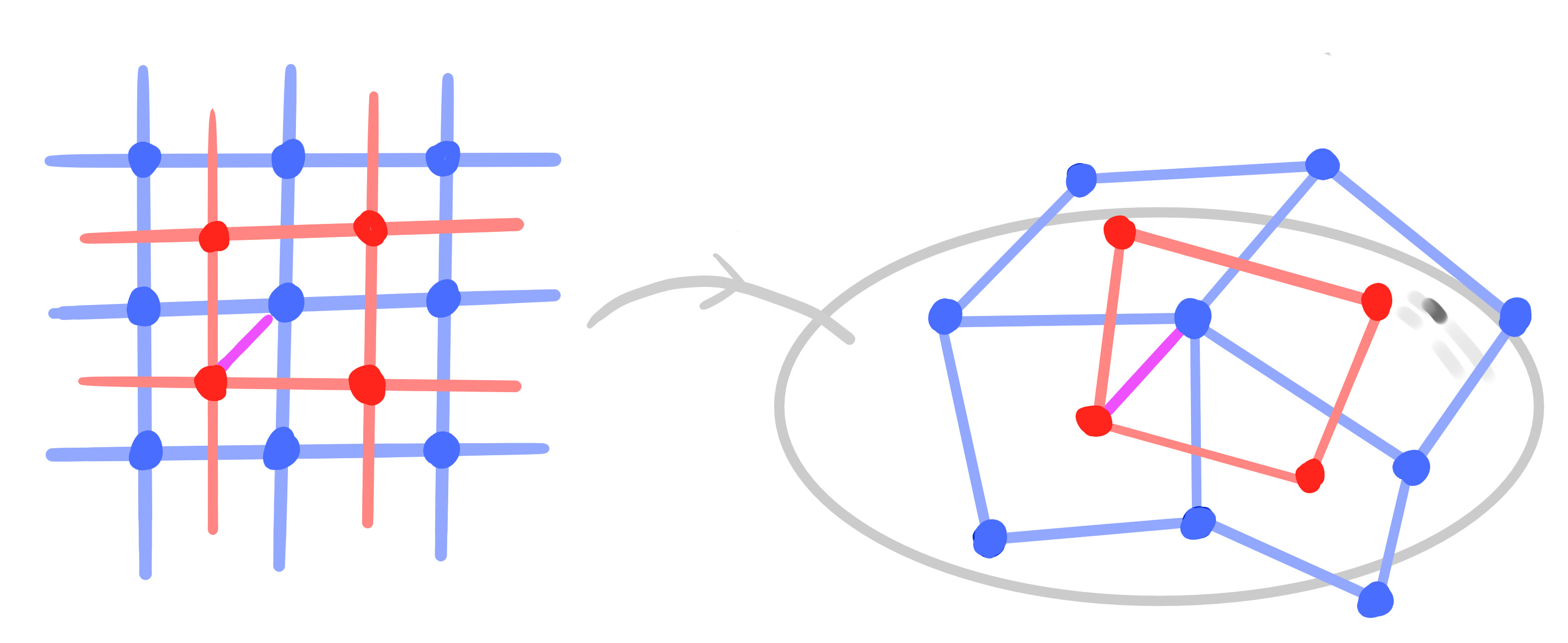}
    \put(19.5,23){$\color{blue}v$}
    \put(10,13){$\color{red}f$}
    \put(78,20){$\color{blue}b(v)$}
    \put(64,9.5){$\color{red}b(f)$}
    \put(95,5){$\mathcal{Q} \subset \RP^n$}
  \end{overpic}
  \caption{
    A polar binet $b : D \rightarrow \RP^n$ with respect to the quadric $\mathcal{Q}$.
    Two points $b(v)$ and $b(f)$ are polar for incident $v \in V$ and $f \in F$.
  }
  \label{fig:polar-binet}
\end{figure}

Equivalently polar binets can be characterized by the condition
that the two lines of an edge and its corresponding dual edge are related by polarity, i.e.,
\[
  (b(v) \vee b(v')) \pol (b(f) \vee b(f')) \qquad \text{for all crosses} ~ (v,f,v',f') \in C,
\]
where $X \vee Y$ denotes the line spanned by $X, Y \in \RP^n$.

\subsection*{Invariance}
The definition of polar binets is invariant under projective transformations that preserve the quadric $\mathcal Q$.

\section{Orthogonal binets}
\label{sec:orthobinets}

For binets in the Euclidean space
\[
  \eucl^n \simeq \R^n,
\]
we employ the following orthogonality condition on the crosses for $\Z^2$ (see Figure~\ref{fig:orthogonal-binet}),
which was introduced for discrete orthogonal coordinate systems in \cite{bsstconfocali, bsstconfocalii}.

\begin{definition}[Orthogonal binets]
  \label{def:orthogonal-binets}
  A binet $b: D \rightarrow \eucl^n$ is an \emph{orthogonal binet} if
  \[
    b(v) \vee b(v') ~\orth~ b(f) \vee b(f') \quad
    \text{for all crosses}~ (v,f,v',f') \in C,
  \]
  where $X \vee Y$ denotes the line spanned by $X, Y \in \eucl^n$ and $\orth$ denotes the Euclidean orthogonality.
\end{definition}
\begin{figure}[H]
  \centering
  \begin{overpic}[width=0.3\textwidth]{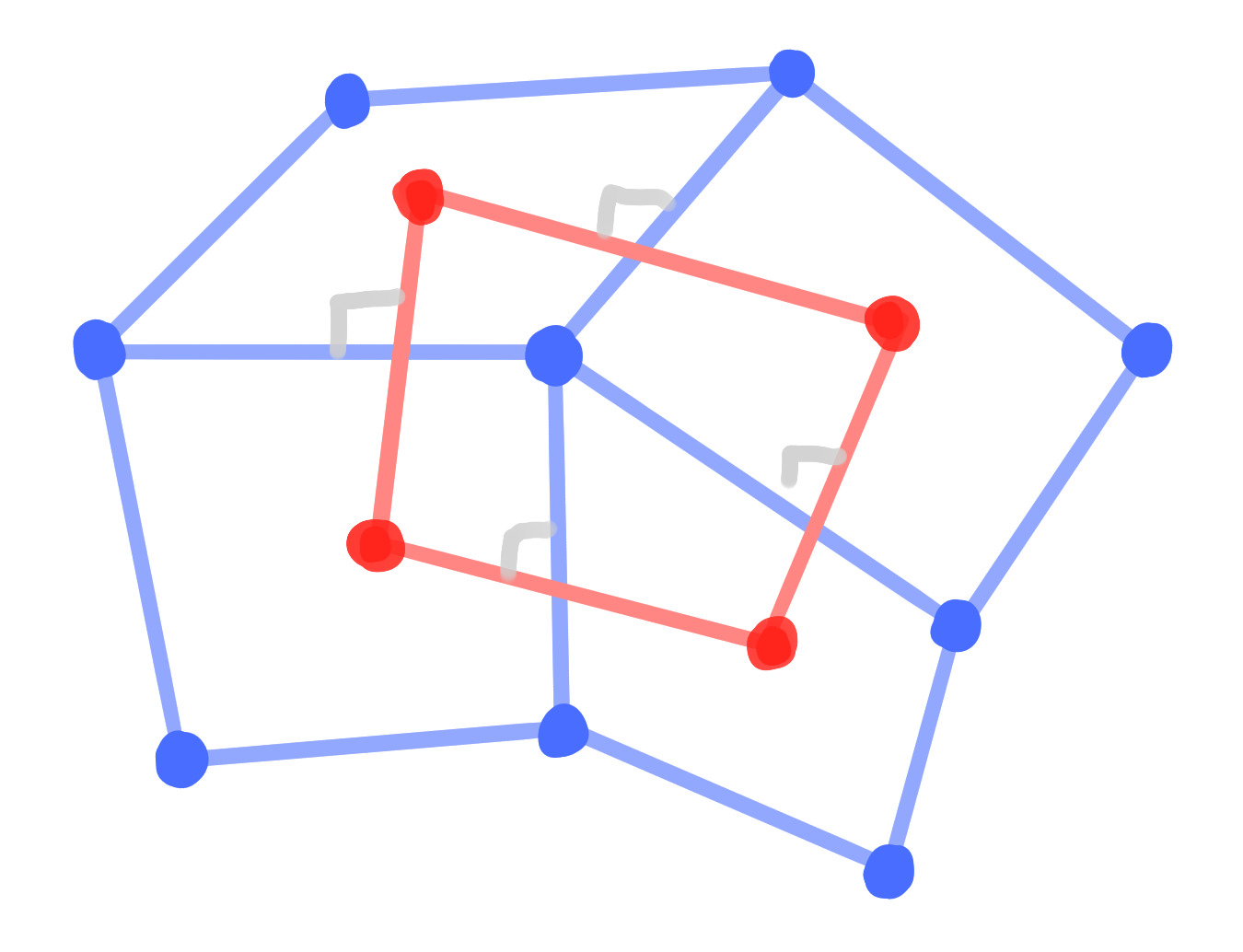}
    \put(36,8){$\color{blue}b(v)$}
    \put(48.5,47){$\color{blue}b(v')$}
    \put(13,27){$\color{red}b(f)$}
    \put(56,16){$\color{red}b(f')$}
  \end{overpic}
  \caption{
    An orthogonal binet $b : D \rightarrow \eucl^n$.
    Dual edges are orthogonal at every cross $(v,f,v',f') \in C$.
  }
  \label{fig:orthogonal-binet}
\end{figure}
Equivalently, this orthogonality condition can be written as
\[
  \sca{b(v) - b(v'), b(f) - b(f')} = 0,
\]
where $\sca{\cdot,\cdot}$ denotes the Euclidean scalar product.
Note that the two lines $b(v) \vee b(v')$ and $b(f) \vee b(f')$ do not need to intersect.

For simplicity of the presentation and in regards to the lift to Lie geometry,
we restrict the following discussion of orthogonal binets to $n=3$.

\subsection*{Invariance}
The definition of orthogonal binets is invariant under similarity transformations.

In the smooth theory, an orthogonal parametrization of a surface is invariant under conformal transformations of the surface.
Thus, in particular, orthogonal parametrizations are invariant under conformal transformations of $\eucl^3$,
which are exactly the \emph{Möbius transformations}.
Therefore, Möbius invariance is a desired property of discretizations of orthogonal parametrizations.
Indeed, it is an advantage of circular nets (see Definition~\ref{def:circular-net}) that they are Möbius invariant.
Orthogonal binets are invariant under similarity transformations, yet a priori not under Möbius transformations.
However, in Section \ref{sec:mobiuslift} we introduce a Möbius lift of orthogonal binets, which does give us the ability to apply Möbius transformations to orthogonal binets such that we obtain an orthogonal binet again.

\section{Möbius geometry}
\label{sec:moebius}

First, recall the projective model of Euclidean geometry \cite{kleincomparative,kleinvorlesungen}.
Let $E^\infty \subset \RP^3$ be some plane, which we interpret as the \emph{plane at infinity}.
We identify
\[
  \eucl \coloneqq \eucl^3 \simeq \RP^3 \setminus E^\infty
\]
with the 3-dimensional \emph{Euclidean space}.
Consider a symmetric bilinear form $\sca{\cdot, \cdot}$ of signature $\texttt{(+++)}$ on $E^\infty$.
It defines the absolute conic $\euclq$ of Euclidean geometry in $E^\infty$, which has no real points.
Thus, the complexification of the absolute conic
in $\CP^3 \supset \RP^3$ should be considered in its place.
\begin{coordinates}
  \label{coords:euclidean}
  We choose the plane at infinity as
  \[
    E^\infty = \set{[x] \in \RP^3}{x_4 = 0},
  \]
  and obtain the Euclidean space
  \[
    \eucl = \set{[x] \in \RP^3}{x_4 \neq 0} \cong \R^3.
  \]
  We represent the absolute conic $\euclq$ by 
  \[
    \sca{x,x} = x_1^2 + x_2^2 + x_3^2,\qquad
    x \in \R^4, ~ x_4 = 0.
  \]
\end{coordinates}

Secondly, recall the projective model of Möbius geometry \cite{blaschkevl, hjmoebius}.
Let $\sca{\cdot, \cdot}_{\mobq}$ be a symmetric bilinear form of signature $\texttt{(++++-)}$, and
\[
  \mobq = \set{[x] \in \RP^4}{ \sca{x,x}_{\mobq} = 0}
\]
the corresponding quadric in $\RP^4$, which we call the \emph{Möbius quadric}.
\begin{coordinates}
  In homogeneous coordinates of $\RP^4$ we choose
  \[
    \sca{x,x}_{\mobq} = x_1^2 + x_2^2 + x_3^2 + x_4^2 - x_5^2.
  \]
  And thus, in affine coordinates $x_5 = 1$, the Möbius quadric is the unit sphere $\S^3$:
  \[
    x_1^2 + x_2^2 + x_3^2 + x_4^2 = 1.
  \]
\end{coordinates}

We now embed the 3-dimensional Euclidean space $\eucl$ into $\RP^4$ in the following way.
Let $\PB \in \mobq$ be a point on the Möbius quadric,
$S_\eucl \subset \RP^4$ a hyperplane that does not contain $\PB$,
and define $E^\infty \coloneqq S_{\eucl} \cap \PB^\pol$
as the intersection of $S_\eucl$ with the tangent hyperplane at $\PB$.
We identify
\[
  \eucl \simeq S_\eucl \setminus E^\infty,
\]
with the 3-dimensional Euclidean space.
The restriction of $\sca{\cdot,\cdot}_\mobq$ to $E^\infty$ has signature $\texttt{(+++)}$
and defines the absolute conic $\euclq$ of Euclidean geometry.

Consider the central projection to $S_\eucl$ with center $\PB$:
\[
  \pi_{\eucl} : \RP^4 \setminus \{ \PB \} \rightarrow S_\eucl,\quad
  X \mapsto (X \vee \PB) \cap S_\eucl.
\]
Its restriction to $\mobq \setminus \{ \PB \}$ is a bijection to $\eucl$,
which is usually called the \emph{stereographic projection}. The \emph{inverse stereographic projection is}
\begin{align}
	\pi_{\eucl}^{-1} : S_\eucl \rightarrow \mobq \setminus \{ \PB \},\quad
	X \mapsto (X \vee \PB) \cap (\mobq \setminus \{ \PB \}), \nonumber
\end{align}
which satisfies $\pi_{\eucl}^{-1} \circ \pi_{\eucl} = \id$ restricted to $\mobq \setminus \{ \PB \}$ and $\pi_{\eucl} \circ \pi_{\eucl}^{-1} = \id$ restricted to $\eucl$. 
\begin{coordinates}
  We choose
  \[
    \PB = [0,0,0,1,1], \qquad
    S_\eucl = \set{[x] \in \RP^4}{ x_4 = 0}.
  \]
  Then
  \[
    E^\infty = \set{[x] \in \RP^4}{ x_4 = x_5 = 0}, \qquad
    \eucl = \set{[x] \in \RP^4}{ x_4 = 0,\, x_5 \neq 0} \cong \R^3,
  \]
  and
  \[
    \sca{x,x} = x_1^2 + x_2^2 + x_3^2, \qquad x \in \R^5, ~ x_4 = x_5 = 0.
  \]
  Additionally, the stereographic projection is given by
  \[
    \pi_{\eucl}([x]) = [x_1, x_2, x_3, 0, x_5 - x_4].
  \]
\end{coordinates}

\subsection*{Spheres in Möbius geometry}
The \emph{outside} of the Möbius quadric is given by
\[
  \mobq^+ = \set{[x] \in \RP^4}{ \sca{x,x}_{\mobq} > 0}.
\]
Let $X \in \mobq^+$ be a point outside the Möbius quadric.
Then its polar hyperplane $X^\pol$ intersects $\mobq$ in a 2-dimensional quadric of signature $\texttt{(+++-)}$.
Under the stereographic projection this section becomes a Euclidean sphere
$
  \pi_\eucl( X^\pol \cap \mobq ) \subset \eucl
$
with center ${\pi_\eucl(X) \in \eucl}$ if $X \notin \PB^\pol$, or, a Euclidean plane if $X \in \PB^\pol$ (see Figure~\ref{fig:moebius-geometry}).

\begin{figure}[H]
  \centering
  \begin{overpic}[width=0.6\textwidth]{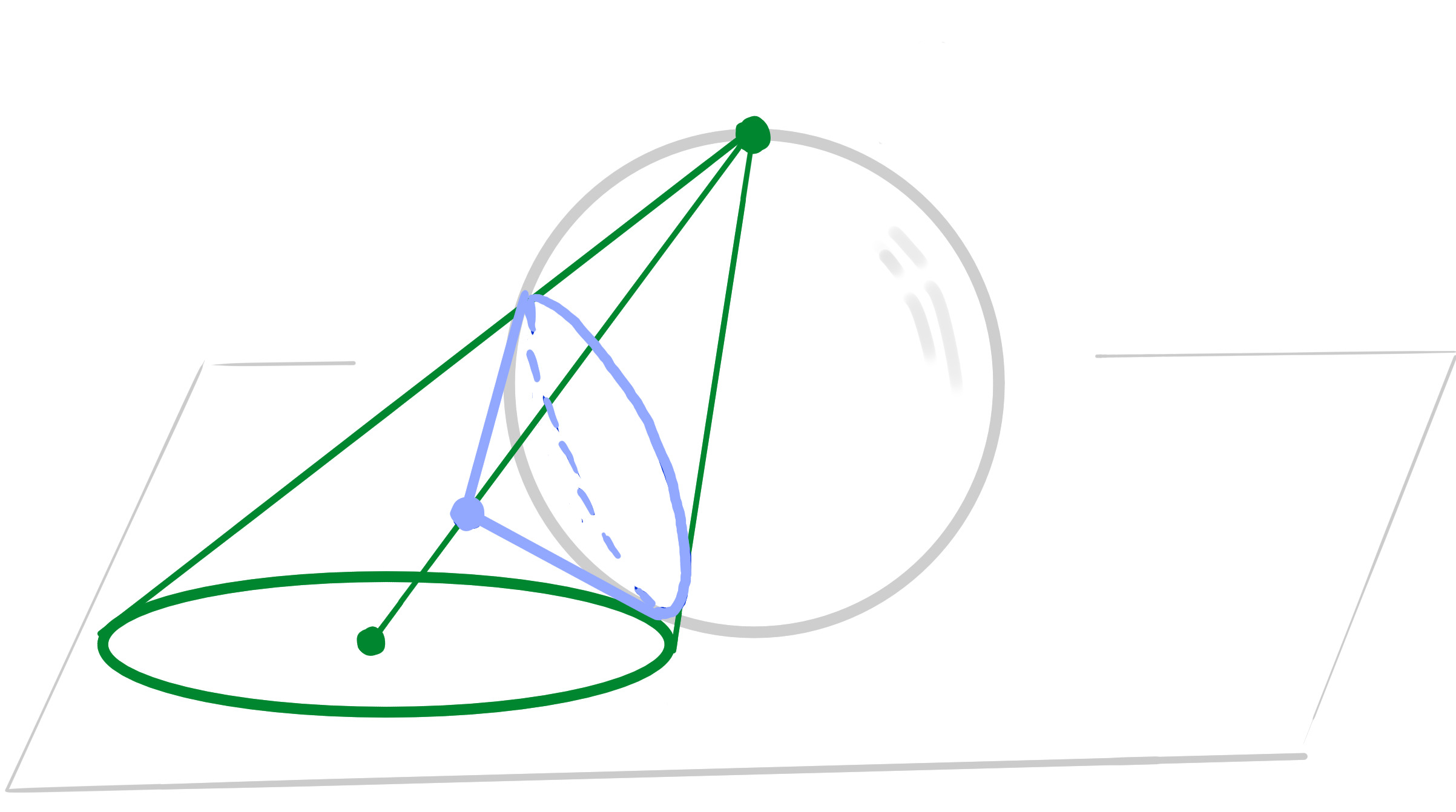}
    \put(85,6){$\eucl$}
    \put(63,44){$\mobq\subset\RP^4$}
    \put(51.5,48.5){\color{darkgreen}$\PB$}
    \put(27,19){\color{blue}$X$}
    \put(27,10){\color{darkgreen}$\pi_\eucl(X)$}
    \put(40,4.5){\color{darkgreen}$\xi_\eucl(X)$}
  \end{overpic}
  \caption{
    The representation of spheres in the projective model of Möbius geometry.
    A point $X \in \mobq^+$ corresponds to a sphere $\xi_\sp(X) \subset \eucl$ with center $\pi_\eucl(X) \in \eucl$ by stereographic projection.
  }
  \label{fig:moebius-geometry}
\end{figure}

Let $\sp$ denote the set of \emph{generalized Euclidean spheres}, that is the set of Euclidean spheres and planes
\[
  \mathcal{S} \coloneqq \text{Spheres}(\eucl).
\]
Then we obtain a map
\[
  \xi_\sp : \mobq^+ \rightarrow \sp, \quad
  X \mapsto \pi_\eucl( X^\pol \cap \mobq ).
\]
In this representation of spheres (and planes) the orthogonality of spheres is given by polarity with respect to the Möbius quadric (see Figure~\ref{fig:sphere-polarity}).
\begin{proposition}
  \label{prop:orthogonal-spheres}
  Two points $X, X' \in \mobq^+$ are polar with respect to $\mobq$ if and only if
  the two corresponding spheres $\xi_\sp(X)$ and $\xi_\sp(X')$ are orthogonal.
\end{proposition}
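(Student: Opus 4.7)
The plan is to reduce the proposition to the classical Euclidean criterion for orthogonal spheres by an explicit computation in the coordinate model fixed above.

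First, I would make the correspondence $\xi_\sp$ concrete. Using $\pi_\eucl^{-1}(p) = [2p,\, |p|^2 - 1,\, |p|^2 + 1]$ for $p \in \eucl$, the membership $p \in \xi_\sp(X) = \pi_\eucl(X^\pol \cap \mobq)$ is equivalent to $\sca{\pi_\eucl^{-1}(p), X}_\mobq = 0$. Expanding this as a quadratic condition in $p$ and matching coefficients with the defining equation $|p - c|^2 = r^2$ of a sphere, respectively $\sca{u, p} = h$ (with $|u| = 1$) of a plane, I identify the representatives
\[
    X(c, r) = [2c,\, |c|^2 - r^2 - 1,\, |c|^2 - r^2 + 1], \qquad
    X(u, h) = [u,\, h,\, h].
\]
A direct check gives $\sca{X(c, r), X(c, r)}_\mobq = 4 r^2$ and $\sca{X(u, h), X(u, h)}_\mobq = |u|^2$, both strictly positive, confirming $X \in \mobq^+$ in either case; the plane case corresponds precisely to $X \in \PB^\pol$.

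Second, I would compute the pairing $\sca{X, X'}_\mobq$ in each of the three subcases. For two spheres, using the identity $(A - 1)(A' - 1) - (A + 1)(A' + 1) = -2(A + A')$ with $A = |c|^2 - r^2$, the computation collapses to
\[
    \sca{X(c, r), X(c', r')}_\mobq = -2\, |c - c'|^2 + 2(r^2 + r'^2),
\]
so polarity holds iff $|c - c'|^2 = r^2 + r'^2$, which is the classical condition for the two spheres to meet orthogonally. For two planes, the pairing collapses to $\sca{u, u'}$, which vanishes exactly when the two planes are orthogonal. For a sphere and a plane, it reduces to $2(\sca{u, c'} - h)$, which vanishes exactly when the sphere's center lies on the plane — the condition for their intersection to be right-angled.

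These three computations exhaust every case and together establish the equivalence claimed. The computations are elementary; the only mild care required is to treat the plane cases separately from the sphere cases (alternatively, they can be obtained as a limit $r \to \infty$ with $c$ moving appropriately along the normal). No substantive obstacle is expected.
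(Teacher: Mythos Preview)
Your proposal is correct and follows essentially the same route as the paper: the paper treats this proposition as a standard fact and supplies the key computation in Coordinates~\ref{coo:mobiusspheres}, where the lift $x = c + e_0 + (|c|^2 - r^2)e_\infty$ is used to show that $\sca{x,x'}_\mobq = 0$ is equivalent to $|c-c'|^2 = r^2 + r'^2$. Your version uses the equivalent direct homogeneous coordinates and is slightly more thorough in that you also spell out the plane--plane and sphere--plane cases, which the paper leaves implicit.
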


\begin{figure}[H]
  \centering
  \begin{overpic}[width=0.6\textwidth]{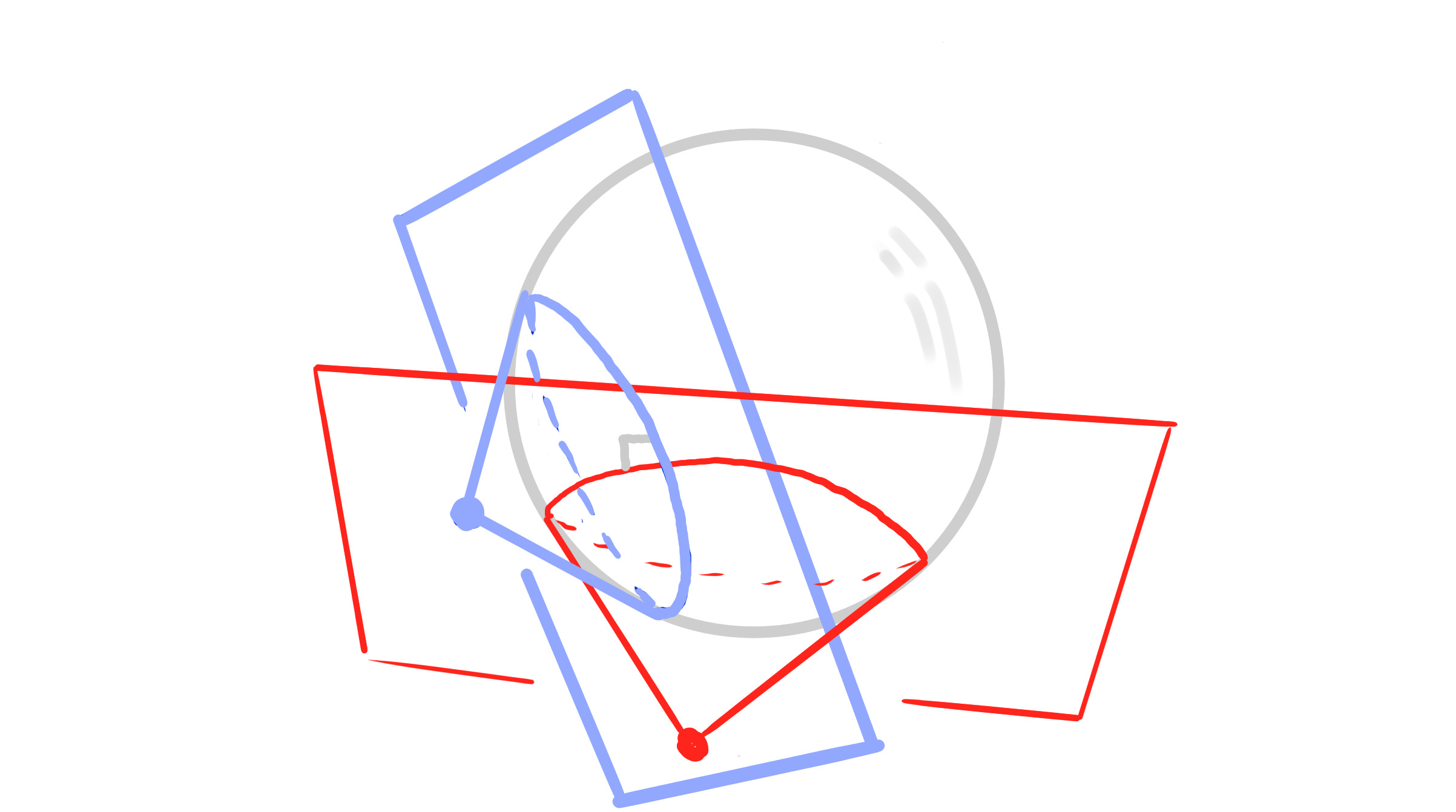}
    \put(27,19){\color{blue}$X$}
    \put(49.2,3){\color{red}$X'$}
    \put(32,46){\color{blue}$X^\pol$}
    \put(78,15){\color{red}$X'^\pol$}
    \put(63,44){$\mobq \subset \RP^4$}
  \end{overpic}
  \caption{
    Orthogonality of spheres in terms of polarity in the projective model of Möbius geometry.
    Two spheres $\xi_\sp(X)$ and $\xi_\sp(X')$ are orthogonal if and only if $X \pol X'$.
  }
  \label{fig:sphere-polarity}
\end{figure}

For a point $X \in \mobq^-$ inside the Möbius quadric,
the polar hyperplane does not intersect $\mobq$ in any real points.
Yet the projection $\pi_\eucl(X) \in \eucl$ still yields a real point in Euclidean space.
The point $X$, or its polar hyperplane, can be interpreted as an \emph{imaginary sphere} with real center and imaginary radius \cite{pssdarbouxwebs}.
In particular, for $X \in \mobq^-$ we choose for $\xi_\sp$ the \emph{real representative}
\begin{align}
	\xi_\sp(X) \coloneqq ((X \vee \PB) \cap X^\perp)^\perp.
\end{align}
We think of $\xi_\sp(X)$ as having \emph{imaginary radius}, see also Coordinates~\ref{coo:mobiusspheres}.
Note that the center of the real representative $\xi_\sp(X)$ is still $\pi_\eucl(X)$.

If we extend the set of spheres $\sp$ by the set of points and imaginary spheres,
the map $\xi_\sp$ can be extended to the entire space
\[
  \xi_\sp : \RP^4 \rightarrow \sp.
\]
The orthogonality of real spheres and imaginary spheres is then defined in such a way
that Proposition~\ref{prop:orthogonal-spheres} generalizes in the following way.
\begin{proposition}
  \label{prop:orthogonal-spheres-ext}
  Two points $X, X' \in \RP^4$ are polar with respect to $\mobq$ if and only if
  the two corresponding (possibly imaginary) spheres $\xi_\sp(X)$ and $\xi_\sp(X')$ are orthogonal.
\end{proposition}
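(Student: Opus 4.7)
The plan is to extend Proposition~\ref{prop:orthogonal-spheres} by reducing the equivalence to a single algebraic identity on homogeneous coordinates, and then adopting this identity as the definition of orthogonality on the extended sphere set. The condition $\sca{x,x'}_\mobq = 0$ is already purely algebraic, and the map $\xi_\sp$ has been extended to all of $\RP^4$ via the projective formula $\xi_\sp(X) = ((X \vee \PB) \cap X^\perp)^\perp$ for $X \in \mobq^-$. So the task is to fix a notion of orthogonality on $\sp$ (now including points, planes, and imaginary spheres) that matches polarity and that reduces to the classical one when both arguments lie in $\mobq^+$.

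Using the coordinates of the paper, for $X = [x] \in \RP^4$ with $x_5 - x_4 \neq 0$ I would read off the Euclidean center $c = (x_1,x_2,x_3)/(x_5-x_4)$ and a signed squared radius $r^2 = -\sca{x,x}_\mobq/(x_5-x_4)^2$, which is positive on $\mobq^+$, zero on $\mobq$, and negative on $\mobq^-$. A short expansion then shows that, after normalization of representatives, $\sca{x,x'}_\mobq = 0$ is equivalent to the identity $|c-c'|^2 = r^2 + r'^2$. Taking this identity as the defining relation for orthogonality on the extended $\sp$ covers the remaining cases in a uniform way: for $r^2 = 0$ it specializes to incidence of a point with a sphere, and for $r^2 < 0$ it is the natural extension of the orthogonality relation to imaginary radii. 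When $X \in \PB^\pol$, so that $\xi_\sp(X)$ is a plane, an analogous calculation handles the plane-sphere, plane-plane, and plane-point configurations.

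The main obstacle is conceptual rather than computational: one has to commit to the right definition of orthogonality in the degenerate and imaginary cases and verify that it is compatible with Proposition~\ref{prop:orthogonal-spheres}. Once this is in place, the statement can also be viewed as a Zariski-density argument, since both sides of the equivalence are polynomial conditions on the homogeneous coordinates of $X$ and $X'$ that already agree on the open dense subset $\mobq^+ \times \mobq^+$ by Proposition~\ref{prop:orthogonal-spheres}, and therefore extend to all of $\RP^4 \times \RP^4$.
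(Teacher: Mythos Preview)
Your proposal is correct and matches the paper's approach: the paper explicitly states that orthogonality of (possibly imaginary) spheres is \emph{defined} so that this proposition holds, and the coordinate box makes this concrete via the identity $|c-c'|^2 = r^2 + r'^2$, just as you derive. Your Zariski-density remark is a nice extra viewpoint, but the content is otherwise the same as the paper's (definitional) treatment.
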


\begin{remark}
  \label{rem:orthogonal-imaginary-sphere}
  Let $X \in \mobq^+, X' \in \mobq^-$.
  Then $\xi_\sp(X')$ is an imaginary sphere.
  We say the (real) sphere $\xi_\sp(X)$ is orthogonal to the (real representative) sphere $\xi_\sp(X')$
  if and only if the sphere $\xi_\sp(X)$ intersects the sphere $\xi_\sp(X')$ in a great circle on $\xi_\sp(X')$.
\end{remark}

\begin{coordinates}
	\label{coo:mobiusspheres}
  Two spheres in $\R^3$ with real centers $c, c' \in \R^3$ and real or imaginary radii, that is $r^2, r'^2 \in \R$, are orthogonal
  if and only if
  \[
    \abs{c - c'}^2 = r^2 + r'^2,
  \]
  or, with $\rho = \frac{1}{2}(\abs{c}^2 - r^2)$, if and only if
  \begin{equation}
    \label{eq:orthogonal-spheres}
    \sca{c, c'} = \rho + \rho'.
  \end{equation}
  For the lift to $\RP^4$ we introduce the basis of $\R^5$
  \[
    e_1,\quad
    e_2,\quad
    e_3,\quad
    e_\infty = \tfrac{1}{2}(e_5 + e_4),\quad
    e_0 = \tfrac{1}{2}(e_5 - e_5).
  \]
  Then a sphere with center $c \in \R^3$ and real or imaginary radius, that is $r^2 \in \R$, corresponds to the point $[x] \in \RP^4 \setminus B^\pol$ with
  \begin{equation}
    \label{eq:Möbius-lift-spheres}
    x = c + e_0 + \underbrace{(\abs{c}^2 - r^2)}_{= 2\rho} e_\infty = c + e_0 + 2\rho e_\infty.
  \end{equation}
  Vice versa, from a point $[x] \in \RP^4 \setminus B^\pol$
  the center and radius of the corresponding sphere $\xi_\sp([x])$ are recovered by
  \[
    c = \pi_\eucl([x]), \qquad
    r^2 = \frac{\sca{x,x}_\mobq}{4\sca{x, e_\infty}_\mobq^2}.
  \]
  Note that
  \[
    \begin{aligned}
      r^2 > 0 \quad &\Leftrightarrow \quad [x] \in \mobq^+,\\
      r^2 < 0 \quad &\Leftrightarrow \quad [x] \in \mobq^-.
    \end{aligned}
  \]
  For two points $[x]$, and $[x']$ the two corresponding spheres $\xi_\sp([x])$ and $\xi_\sp([x'])$ are orthogonal
  if and only if $\sca{x,x'}_\mobq = 0$, which is equivalent to \eqref{eq:orthogonal-spheres}.
\end{coordinates}

\begin{remark}
  In the same way that hyperplanes in $\RP^4$ can be identified with (possibly imaginary) spheres (and planes),
  2-dimensional planes in $\RP^4$ can be identified with (possibly imaginary) \emph{circles} (and \emph{lines}).
\end{remark}

When considering the lift of orthogonal binets to Möbius geometry,
the radical plane of two spheres will be a useful concept.
\begin{definition}[Radical planes]
  The \emph{radical plane} of two spheres $S, S' \subset \eucl$
  is the plane containing the centers of all spheres (with real or imaginary radius) orthogonal to $S$ and $S'$.
\end{definition}
In particular, we will make use of the following properties of radical planes.
\begin{proposition}
  \label{prop:radical-plane}
  Let $X, X' \in \RP^4$ be two points.
  Then the radical plane of the two spheres $\xi_\sp(X)$ and $\xi_\sp(X')$ is given by
  \[
    \pi_\eucl(X^\pol \cap X'^\pol).
  \]
  Furthermore, the radical plane is orthogonal to the line connecting the centers of the two spheres, that is
  \[
    \pi_\eucl(X) \vee \pi_\eucl(X') \quad\orth\quad \pi_\eucl(X^\pol \cap X'^\pol).\qedhere
  \]
\end{proposition}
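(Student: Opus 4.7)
The plan is to treat the two claims separately. For the first equality, I would invoke Proposition~\ref{prop:orthogonal-spheres-ext}: a point $Y \in \RP^4$ represents a (possibly imaginary) sphere $\xi_\sp(Y)$ orthogonal to both $\xi_\sp(X)$ and $\xi_\sp(X')$ precisely when $Y \pol X$ and $Y \pol X'$, i.e., $Y \in X^\pol \cap X'^\pol$. Since the center of $\xi_\sp(Y)$ is $\pi_\eucl(Y)$ by construction, the collection of centers of all such spheres is exactly $\pi_\eucl(X^\pol \cap X'^\pol)$, which by definition is the radical plane.

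For the orthogonality statement, I would use the coordinate lift of Coordinates~\ref{coo:mobiusspheres}. Writing $x = c + e_0 + 2\rho e_\infty$, $x' = c' + e_0 + 2\rho' e_\infty$, and a representative $y = p + e_0 + 2\sigma e_\infty$ of an arbitrary $Y \in X^\pol \cap X'^\pol$, the two polarity conditions unfold as
\[
  \sca{y,x}_\mobq = \sca{p,c} - \rho - \sigma = 0, \qquad \sca{y,x'}_\mobq = \sca{p,c'} - \rho' - \sigma = 0.
\]
Their difference is the single affine relation $\sca{p, c - c'} = \rho - \rho'$, which cuts out a Euclidean plane with normal $c - c' = \pi_\eucl(X) - \pi_\eucl(X')$. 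This direction is precisely the direction of the line $\pi_\eucl(X) \vee \pi_\eucl(X')$, so the radical plane is Euclidean-orthogonal to the line of centers, as claimed.

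A coordinate-free alternative is to note that $L^\pol = X^\pol \cap X'^\pol$ for $L = X \vee X'$, and to push the $\mobq$-polarity between $L$ and $L^\pol$ through the tangent hyperplane $\PB^\pol$ and the projection $\pi_\eucl$: the bilinear form $\sca{\cdot, \cdot}_\mobq$ restricted to $\PB^\pol$ has one-dimensional kernel $\PB$ and descends to the form defining $\euclq$ on the quotient $\PB^\pol / \PB \cong E^\infty$. Therefore $\mobq$-polarity in $\RP^4$ descends to $\euclq$-polarity at infinity, which is Euclidean orthogonality of the projected subspaces. The main obstacle is not the core argument but the edge cases (when $\PB \in L$, so one of the spheres is a plane; when $X = X'$, so $L$ degenerates; or when representatives lie in $\PB^\pol$) where the projection and dimension counts need to be reinterpreted; the generic calculation above is otherwise routine.
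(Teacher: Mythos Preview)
Your argument is correct. The paper actually states this proposition without proof (it is treated as a standard fact about Möbius geometry and radical planes), so there is no original proof to compare against; your use of Proposition~\ref{prop:orthogonal-spheres-ext} for the first claim and the coordinate computation from Coordinates~\ref{coo:mobiusspheres} for the orthogonality are exactly the natural steps the paper's setup invites, and your coordinate-free alternative via the descent of $\sca{\cdot,\cdot}_\mobq$ to $\euclq$ on $E^\infty$ is a clean bonus.
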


\section{Möbius lift of orthogonal binets}
\label{sec:mobiuslift}

We use the projective model of Möbius geometry $\mobq \subset \RP^4$,
and embed the Euclidean space $\eucl \subset S_\eucl \subset \RP^4$ as described in Section~\ref{sec:moebius}.

We identified points in $\RP^4$ with spheres in $\eucl$ by means of the map $\xi_\sp : \RP^4 \rightarrow \sp$.
This immediately leads to a representation of a polar binet in $\RP^4$ in terms of orthogonal spheres,
which we call its \emph{orthogonal sphere representation} (see Figure~\ref{fig:orthogonal-sphere-represantation}).
\begin{figure}[H]
  \centering
  \includegraphics[width=0.32\textwidth, trim={800 0 300 0},clip]{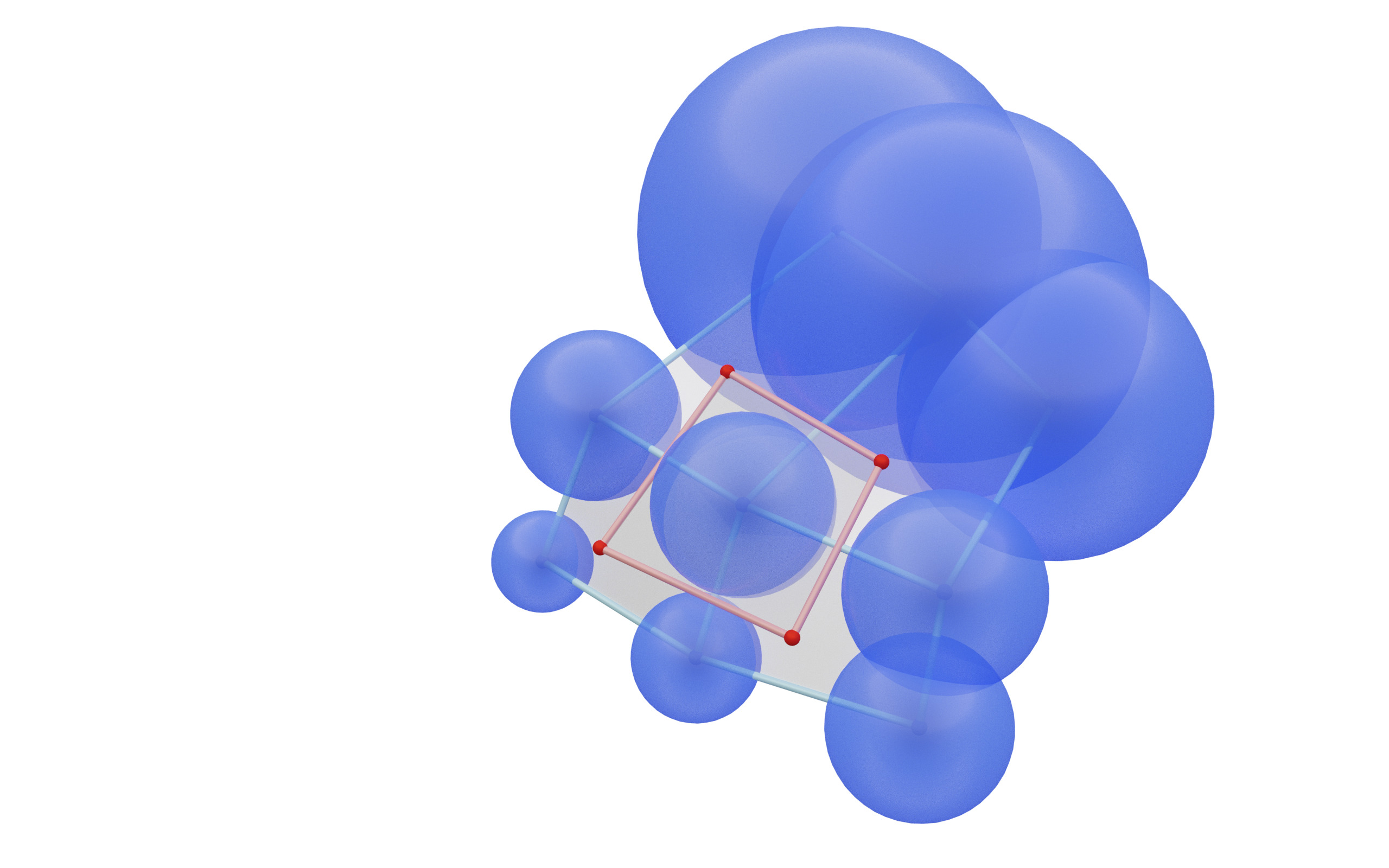}
  \includegraphics[width=0.32\textwidth, trim={800 0 300 0},clip]{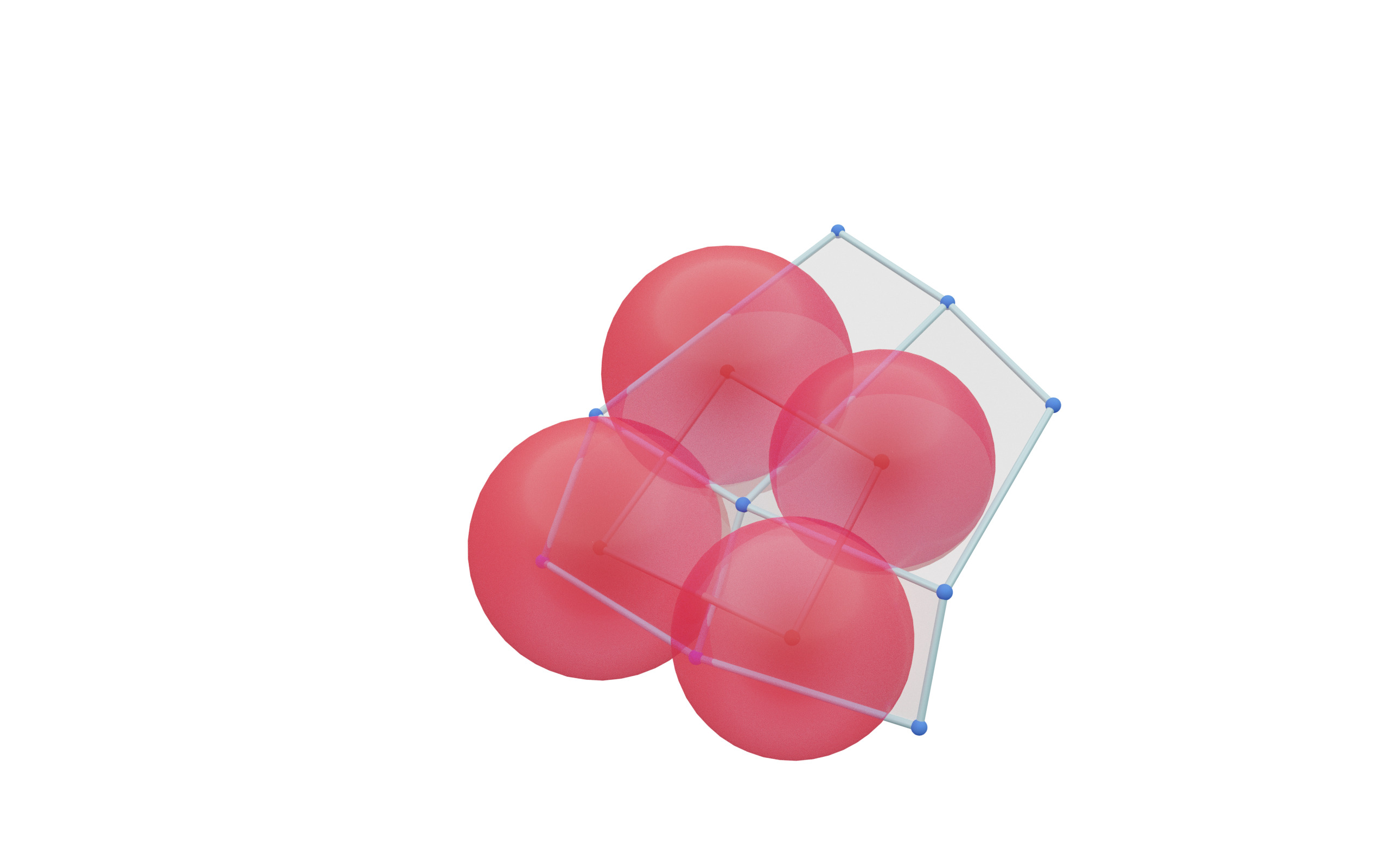}
  \includegraphics[width=0.32\textwidth, trim={800 0 300 0},clip]{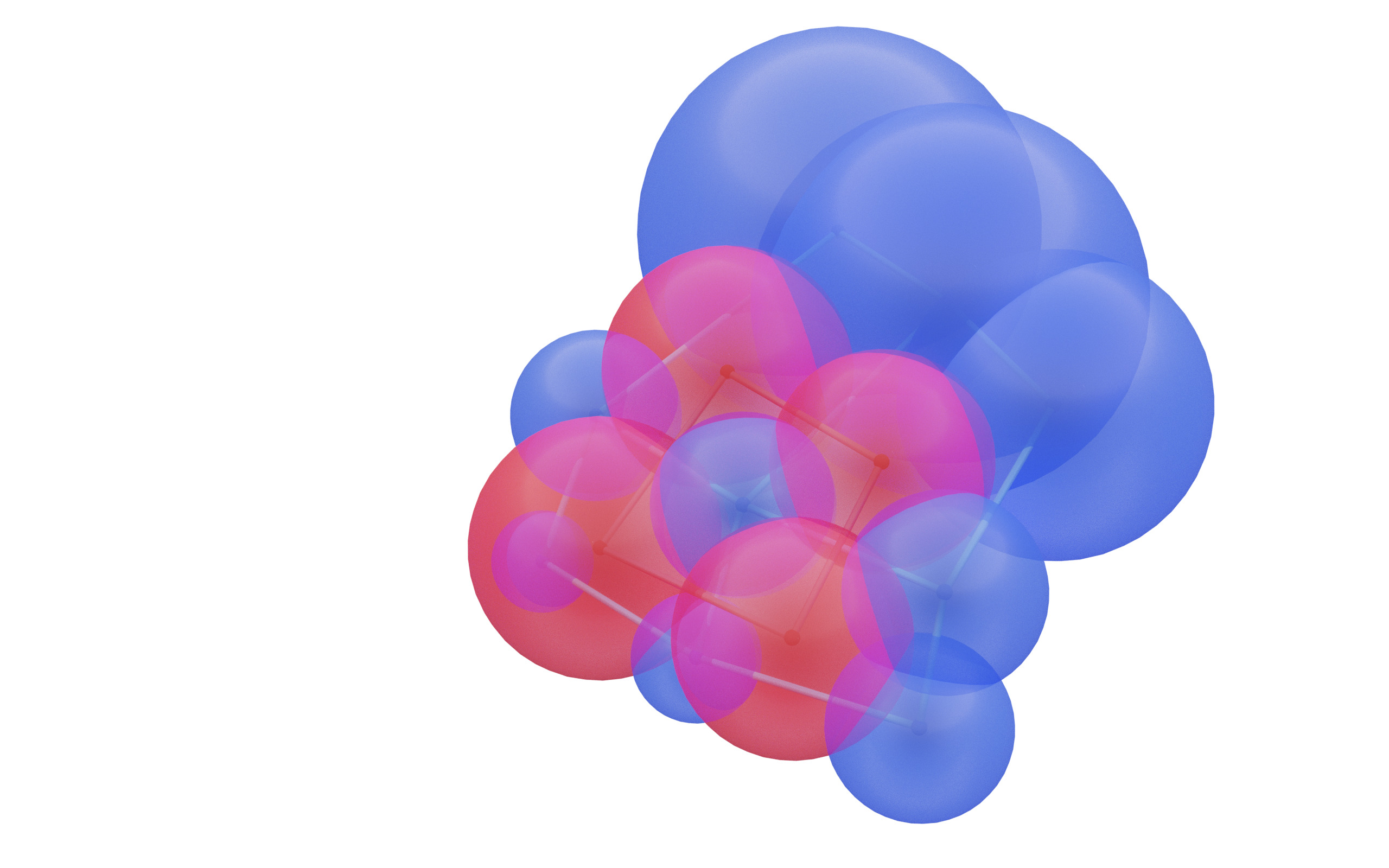}
  \caption{
    Orthogonal sphere representation of an orthogonal binet.
    Blue spheres are defined on $V$ and red spheres on $F$.
  }
  \label{fig:orthogonal-sphere-represantation}
\end{figure}
\begin{lemma}
  \label{lem:orthogonal-sphere-representation}
  Let $b_\mobq : D \rightarrow \RP^4 \setminus \PB^\pol$ be a polar binet with respect to the Möbius quadric $\mobq \subset \RP^4$.
  Let
  \[
    b_\sp \coloneqq \xi_\sp \circ b_\mobq
  \]
  be its orthogonal sphere representation, and
  \[
    b \coloneqq \pi_\eucl \circ b_\mobq
  \]
  the projection of $b_\mobq$ to the Euclidean space $\eucl$.
  Then
  \begin{enumerate}
  \item
    \label{lem:orthogonal-sphere-representation1}
    the two spheres $b_\sp(d)$ and $b_\sp(d')$ intersect orthogonally for all incident $d,d' \in D$,
  \item $b(d)$ is the center of $b_\sp(d)$ for all $d \in D$.\qedhere
  \end{enumerate}
\end{lemma}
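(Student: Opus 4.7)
The plan is to read off both statements directly from the definitions of $\xi_\sp$ and $\pi_\eucl$ combined with Proposition~\ref{prop:orthogonal-spheres-ext}, since $b_\mobq$ is by hypothesis a polar binet whose values avoid the tangent hyperplane $\PB^\pol$. The hypothesis $b_\mobq(d) \notin \PB^\pol$ is exactly what guarantees that $\pi_\eucl(b_\mobq(d))$ lies in the affine part $\eucl \subset S_\eucl$ (and not on the plane at infinity $E^\infty = S_\eucl \cap \PB^\pol$), so each $b_\sp(d) = \xi_\sp(b_\mobq(d))$ really is a (possibly imaginary) \emph{sphere} with finite center rather than a plane.

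For part~\ref{lem:orthogonal-sphere-representation1}, the plan is as follows. By the definition of a polar binet, $b_\mobq(d) \pol b_\mobq(d')$ with respect to $\mobq$ for any two incident $d, d' \in D$. Applying Proposition~\ref{prop:orthogonal-spheres-ext} pointwise to the pair $\bigl(b_\mobq(d), b_\mobq(d')\bigr) \in \RP^4 \times \RP^4$, this polarity is equivalent to orthogonality of the associated (possibly imaginary) spheres
\[
  b_\sp(d) = \xi_\sp(b_\mobq(d)), \qquad b_\sp(d') = \xi_\sp(b_\mobq(d')).
\]
By Remark~\ref{rem:orthogonal-imaginary-sphere}, orthogonality of two such spheres is interpreted in each of the cases ``real--real'', ``real--imaginary'', and ``imaginary--imaginary'' as the appropriate intersection condition, which is the content of part~\ref{lem:orthogonal-sphere-representation1}.

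For part~(ii), I would invoke the definition of $\xi_\sp$ directly. If $b_\mobq(d) \in \mobq^+$, then by the discussion following the definition of $\xi_\sp$ (see also Figure~\ref{fig:moebius-geometry}), the sphere $\xi_\sp(b_\mobq(d)) = \pi_\eucl\bigl(b_\mobq(d)^\pol \cap \mobq\bigr)$ has center $\pi_\eucl(b_\mobq(d)) = b(d)$. If $b_\mobq(d) \in \mobq^-$, the remark after the definition explicitly notes that the real representative $\xi_\sp(b_\mobq(d))$ still has center $\pi_\eucl(b_\mobq(d)) = b(d)$. The remaining case $b_\mobq(d) \in \mobq \setminus \PB^\pol$ corresponds to a degenerate sphere of radius zero, namely the point $\pi_\eucl(b_\mobq(d)) = b(d)$, which is trivially its own center.

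There is no real obstacle; the lemma is essentially a translation of the algebraic condition (polarity with respect to $\mobq$) into the geometric condition (orthogonality of spheres) through the dictionary established in Section~\ref{sec:moebius}. The only subtle point to flag in the write-up is the exclusion $b_\mobq(D) \cap \PB^\pol = \varnothing$, which is what makes the statement ``$b(d)$ is the center of $b_\sp(d)$'' meaningful in the first place.
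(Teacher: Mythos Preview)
Your proof is correct and follows essentially the same approach as the paper's own proof: part~(i) is reduced to Proposition~\ref{prop:orthogonal-spheres-ext}, and part~(ii) to the discussion in Section~\ref{sec:moebius} that $\pi_\eucl(X)$ is the center of $\xi_\sp(X)$. Your version is more detailed in its case analysis (handling $\mobq^+$, $\mobq^-$, and $\mobq$ separately) and in explaining why the exclusion of $\PB^\pol$ matters, but the underlying argument is identical.
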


\begin{proof}\
  \nobreakpar
  \begin{enumerate}
  \item
    Follows from Proposition~\ref{prop:orthogonal-spheres-ext}.
  \item
    As discussed in Section \ref{sec:moebius},
    for $X \in \RP^4$ the point $\pi_\eucl(X)$ is the center of the sphere $\xi_\sp(X)$.\qedhere
  \end{enumerate}
\end{proof}
The first key aspect that we want to emphasize in this article, is the close relation of polar binets in $\RP^4$ and orthogonal binets in $\eucl$. More specifically, the stereographic projection of a polar binet is an orthogonal binet. Conversely, an orthogonal binet can always be lifted to a polar binet. We formalize both statements in the following.
\begin{lemma}
  \label{lem:moebius-lift-projection}
  Let $b_\mobq : D \rightarrow \RP^4 \setminus \PB^\pol$ be a polar binet with respect to the Möbius quadric $\mobq \subset \RP^4$.
  Then its projection
  \[
    b \coloneqq \pi_\eucl \circ b_\mobq,
  \]
  to Euclidean space $\eucl$ is an orthogonal binet.
\end{lemma}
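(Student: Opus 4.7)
The plan is to show that at every cross $(v,f,v',f') \in C$ the line $b(v) \vee b(v')$ is perpendicular to $b(f) \vee b(f')$, using the radical plane construction of Proposition~\ref{prop:radical-plane} applied to the orthogonal sphere representation of $b_\mobq$. The setup: since $b_\mobq$ is a polar binet, the four incidences $v \inc f$, $v \inc f'$, $v' \inc f$, $v' \inc f'$ give rise via Proposition~\ref{prop:orthogonal-spheres-ext} to four pairwise incident orthogonality relations between the (possibly imaginary) spheres $b_\sp(v)$, $b_\sp(v')$, $b_\sp(f)$, $b_\sp(f')$, with centers $b(v)$, $b(v')$, $b(f)$, $b(f')$ by Lemma~\ref{lem:orthogonal-sphere-representation}(ii).

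First I would consider the radical plane $R$ of the two vertex spheres $b_\sp(v)$ and $b_\sp(v')$. By Proposition~\ref{prop:radical-plane}, $R = \pi_\eucl(b_\mobq(v)^\pol \cap b_\mobq(v')^\pol)$, and it consists of the centers of all (possibly imaginary) spheres orthogonal to both $b_\sp(v)$ and $b_\sp(v')$. Since $b_\sp(f)$ is orthogonal to both $b_\sp(v)$ and $b_\sp(v')$, its center $b(f)$ lies in $R$; the same argument shows $b(f') \in R$. Consequently the face line $b(f) \vee b(f')$ is contained in the radical plane $R$.

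Then I would invoke the second part of Proposition~\ref{prop:radical-plane}, which asserts that $R$ is orthogonal to the line joining the centers of the two defining spheres, namely $b(v) \vee b(v')$. Since $b(f) \vee b(f') \subset R$, this immediately yields
\[
    b(v) \vee b(v') ~\orth~ b(f) \vee b(f'),
\]
verifying Definition~\ref{def:orthogonal-binets}. The fact that $b$ is a binet (i.e.\ that the restrictions to $V$ and to $F$ are nets in $\eucl$) is automatic from the assumption $b_\mobq(D) \subset \RP^4 \setminus \PB^\pol$, which guarantees that the stereographic projection lands in $\eucl$ rather than at infinity.

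The main obstacle I anticipate is making sure the argument is valid when some of the spheres $b_\sp(d)$ are imaginary, i.e.\ when $b_\mobq(d)$ lies inside $\mobq$. This is precisely why the paper has already extended both the orthogonality (Proposition~\ref{prop:orthogonal-spheres-ext}) and the radical plane notion to (possibly imaginary) spheres with real centers; once these extensions are accepted, the proof works uniformly without case distinctions. A minor technicality is that the lines $b(v) \vee b(v')$ and $b(f) \vee b(f')$ need to be well-defined, which follows from genericity in the regular case, but even without regularity the orthogonality statement as an incidence of polar subspaces remains meaningful via the radical plane.
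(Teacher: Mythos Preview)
Your proposal is correct and follows essentially the same argument as the paper: both use the orthogonal sphere representation (Lemma~\ref{lem:orthogonal-sphere-representation}) to place $b(f)$ and $b(f')$ in the radical plane of $b_\sp(v)$ and $b_\sp(v')$, and then invoke Proposition~\ref{prop:radical-plane} to obtain the required orthogonality. Your additional remarks about the imaginary-sphere case and the well-definedness of $b$ are accurate and slightly more explicit than the paper's version.
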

\begin{proof}
  Consider a cross $(v,f,v',f') \in C$.
  By Lemma~\ref{lem:orthogonal-sphere-representation}~\ref{lem:orthogonal-sphere-representation1},
  the two spheres $b_\sp(v)$ and $b_\sp(v')$ are orthogonal to both spheres $b_\sp(f)$ and $b_\sp(f')$.
  Therefore the line spanned by the two sphere centers $b(f) \vee b(f')$ is in the radical plane of $b_\sp(v)$ and $b_\sp(v')$.
  By Proposition~\ref{prop:radical-plane},
  this radical plane (and thus the line $b(f) \vee b(f')$) is orthogonal to the line $b(v) \vee b(v')$.
\end{proof}

We are now in a position to define the \emph{Möbius lift} of orthogonal binets (see Figure~\ref{fig:moebius-lift}).
\begin{definition}[Möbius lift]
	Let $b: D \rightarrow \eucl$ be an binet.
	Then a binet $b_\mobq: D \rightarrow \RP^4 \setminus \PB^\perp$ is called a \emph{Möbius lift} of $b$ if
	\begin{enumerate}
		\item $b_\mobq$ is a polar binet with respect to $\mobq$,
		\item $b$ is the projection of $b_\mobq$, that is, $\pi_\eucl(b_\mobq) = b$.\qedhere
	\end{enumerate}
\end{definition}
\begin{figure}[H]
  \centering
  \begin{overpic}[width=0.6\textwidth]{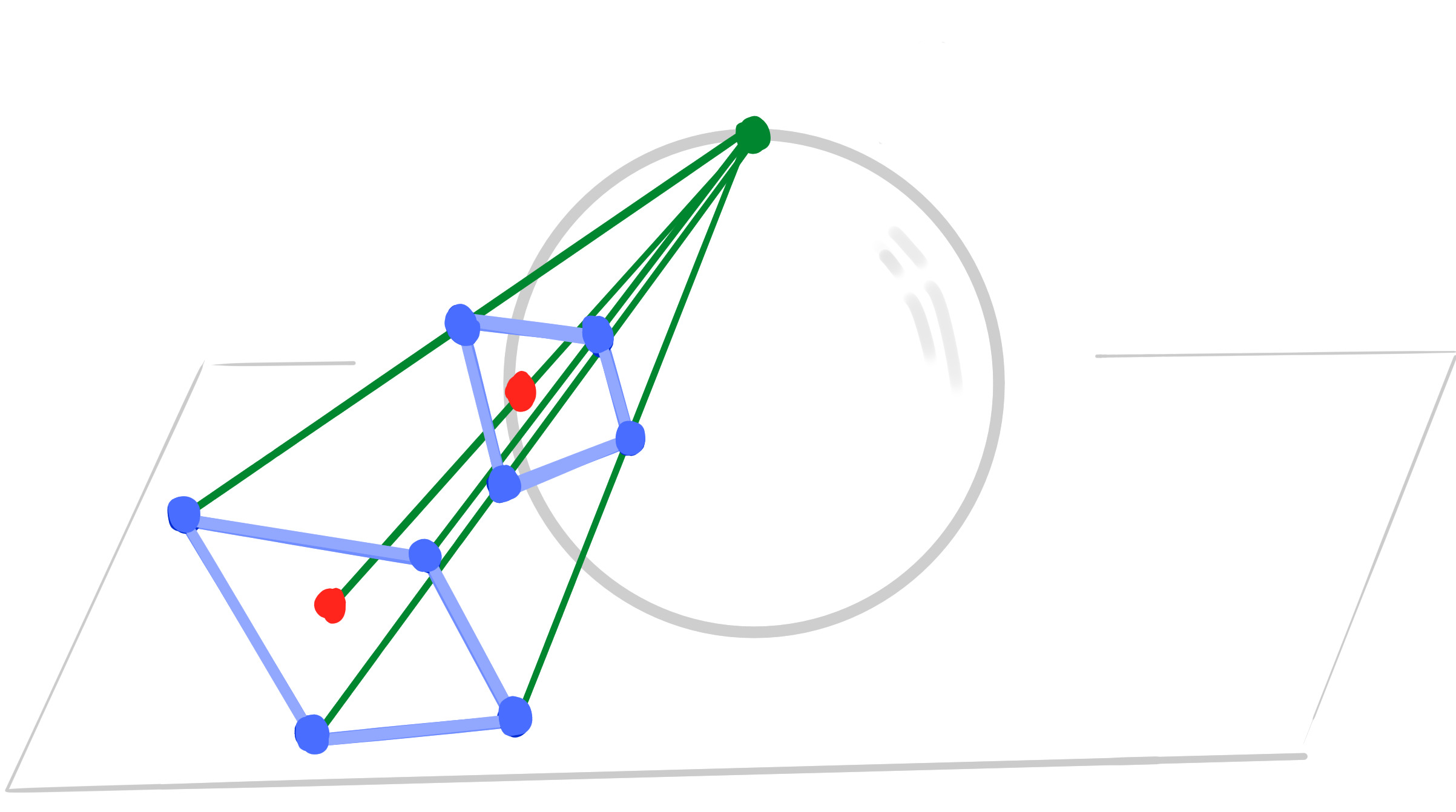}
    \put(85,6){$\eucl$}
    \put(63,44){$\mobq\subset\RP^4$}
    \put(51.5,48.5){\color{darkgreen}$\PB$}
    \put(37.2,4.5){\color{blue}$b(d)$}
    \put(45,24){\color{blue}$b_\mobq(d)$}
  \end{overpic}
  \caption{
    Möbius lift $b_\mobq$ of a binet $b$.
    A point $b(d) \in \eucl$ is lifted to a point $b_\mobq(d) \in \RP^4$ on the line $b(d) \vee \PB$.
  }
  \label{fig:moebius-lift}
\end{figure}
Lemma~\ref{lem:moebius-lift-projection} implies that the condition that $b$
is an orthogonal binet is necessary for the existence of a Möbius lift $b_\mobq$.
The following theorem shows that for regular binets it is also sufficient,
and thus guarantees the existence of a Möbius lift for an orthogonal binet.
\begin{theorem}\label{th:mobiusortho}
  Let $b: D \rightarrow \eucl$ be a regular binet.
  Then a Möbius lift $b_\mobq : D \rightarrow \RP^4 \setminus \PB^\perp$ of $b$ exists if and only if $b$ is a regular orthogonal binet.  
\end{theorem}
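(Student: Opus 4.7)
The forward implication is immediate from Lemma~\ref{lem:moebius-lift-projection}: if a Möbius lift $b_\mobq$ exists, then its projection $b = \pi_\eucl \circ b_\mobq$ is an orthogonal binet, and regularity of $b$ follows from that of $b_\mobq$ since stereographic projection from $\PB \notin S_\eucl$ is bijective onto $\eucl$ off $\PB^\pol$ and preserves distinctness of points as well as non-collinearity of triples. The substance of the theorem lies in the converse.

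For the converse, I would set up coordinates as in Coordinates~\ref{coo:mobiusspheres}. Any lift of $b(d) \in \eucl$ to a point of $\RP^4 \setminus \PB^\pol$ which projects to $b(d)$ under $\pi_\eucl$ admits a unique representative of the form
\[
  b(d) + e_0 + 2\rho(d)\,e_\infty
\]
for some scalar $\rho(d) \in \R$. By \eqref{eq:orthogonal-spheres}, the polarity condition $b_\mobq(d) \pol b_\mobq(d')$ for incident $d \inc d'$ translates into
\[
  \sca{b(d), b(d')} = \rho(d) + \rho(d').
\]
Constructing a Möbius lift is therefore equivalent to solving this linear system for a real-valued function $\rho$ on $D$. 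Prescribing $\rho$ at a single vertex $d_0$, the equation propagates $\rho$ along any path in the bipartite incidence graph on $D$. Around a single cross $(v,f,v',f') \in C$, the two alternative paths $v \to f \to v'$ and $v \to f' \to v'$ deliver the same value for $\rho(v')$ precisely when
\[
  \sca{b(v) - b(v'), b(f) - b(f')} = 0,
\]
which is exactly the orthogonality of $b$ at that cross.

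To upgrade this local consistency to a globally defined $\rho$, I would invoke that the incidence graph on $D$ is simply connected in the combinatorial sense that its elementary $4$-cycles are precisely the crosses, since $\Z^2$ is simply connected. Hence every closed loop decomposes as a sum of cross-loops, and the orthogonality of $b$ at every cross forces path-independence, yielding a well-defined $\rho$ and hence a polar binet $b_\mobq$ lifting $b$. Regularity then transfers in the other direction: $b_\mobq(d) \ne b_\mobq(d')$ for incident $d \ne d'$ because the Euclidean parts of the representatives already differ, and three vertices of a face that are non-collinear in $\eucl$ lift to points spanning a plane in $\RP^4$ because the additional $e_0$ and $e_\infty$ components cannot lower the rank of the coefficient matrix. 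The main obstacle, and the only non-routine step, is the global extension argument; it is elementary for $\Z^2$ but will become a genuine statement to be proved in the higher-dimensional square lattices treated in Section~\ref{sec:consistency}.
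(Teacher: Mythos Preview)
Your proof is correct. It is precisely the coordinate argument that the paper itself records in the Coordinates box immediately following the theorem: a Möbius lift amounts to a function $\rho$ satisfying $\sca{b(d),b(d')}=\rho(d)+\rho(d')$ on incidences, and the closing condition around each cross is exactly $\sca{b(v)-b(v'),b(f)-b(f')}=0$. The paper's \emph{main} proof, however, takes a coordinate-free route: it propagates $b_\mobq$ along the lines $b(d)\vee\PB$ via $b_\mobq(d')=\ell(d')\cap b_\mobq(d)^\pol$, and verifies consistency around a cross using the geometry of radical planes (Proposition~\ref{prop:radical-plane}) and the orthogonal sphere representation. Both arguments establish the same local-to-global step; yours is shorter and makes the $1$-parameter freedom of the lift visible as the additive constant in $\rho$, while the synthetic argument highlights the sphere geometry that later feeds into the Lie lift.

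Two small remarks. In the forward direction, regularity of $b$ is already a hypothesis, so there is nothing to transfer from $b_\mobq$. And your final paragraph on the regularity of the lift is not needed for the theorem as stated; the paper separates that out as Lemma~\ref{lem:regularmobiuslift}.
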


\begin{proof}\ \linebreak
  ($\Rightarrow$) 
  Follows from Lemma~\ref{lem:moebius-lift-projection}.
  \newline
  ($\Leftarrow$) 
  Assume $b$ is a regular orthogonal binet, and let us construct a corresponding Möbius lift $b_\mobq$.
  Because $b$ needs to be the stereographic projection of $b_\mobq$,
  for all $d \in D$ the point $b_\mobq(d)$ must lie on the line through $b(d)$ and $B$, that is
  \[
    b_\mobq(d) \in \ell(d) := b(d) \vee \PB.
  \]
  Note that for incident $d,d' \in D$ the two lines $\ell(d), \ell(d')$ cannot coincide since $b$ is regular. 
  Moreover, the point $b_\mobq(d)$ determines the point $b_\mobq(d')$ as the intersection of the line $\ell(d)$ with the polar hyperplane $b_\mobq(d)^\pol$:
  \[
    b_\mobq(d') = \ell(d) \cap b_\mobq(d)^\pol.
  \]
  Therefore, if for one $d\in D$ the lift $b_\mobq(d)$ is chosen arbitrarily on the line $\ell(d)$,
  then the whole image of $b_\mobq$ is determined.
  
  It remains to show that this construction is well-defined around crosses (compare Figure~\ref{fig:cross-spheres}).
  Consider a cross $(v,f,v',f') \in C$ and assume that the lifts of $v,f,f'$ satisfy $b_\mobq(v) \pol b_\mobq(f), b_\mobq(f')$,
  or equivalently, the corresponding sphere $b_\sp(v)$ is orthogonal to $b_\sp(f)$ and $b_\sp(f')$.
  Thus, $b(v)$ is in the radical plane $E$ of $b_\sp(f)$ and $b_\sp(f')$.
  By Proposition~\ref{prop:radical-plane}, the line $b(f) \vee b(f')$ is orthogonal to $E$.
  On the other hand, this line is also orthogonal to $b(v) \vee b(v')$ since $b$ is an orthogonal binet.
  Thus, the point $b(v')$ also lies in $E$.
  Hence, there is a sphere $b_\sp(v')$ with center $b(v')$ that is orthogonal to both $b_\sp(f)$ and $b_\sp(f')$,
  or equivalently, there is a well-defined lift $b_\mobq(v')$ which is polar to both $b_\mobq(f)$ and $b_\mobq(f')$.
\end{proof}
\begin{figure}[H]
  \centering
  \begin{overpic}[width=0.32\textwidth]{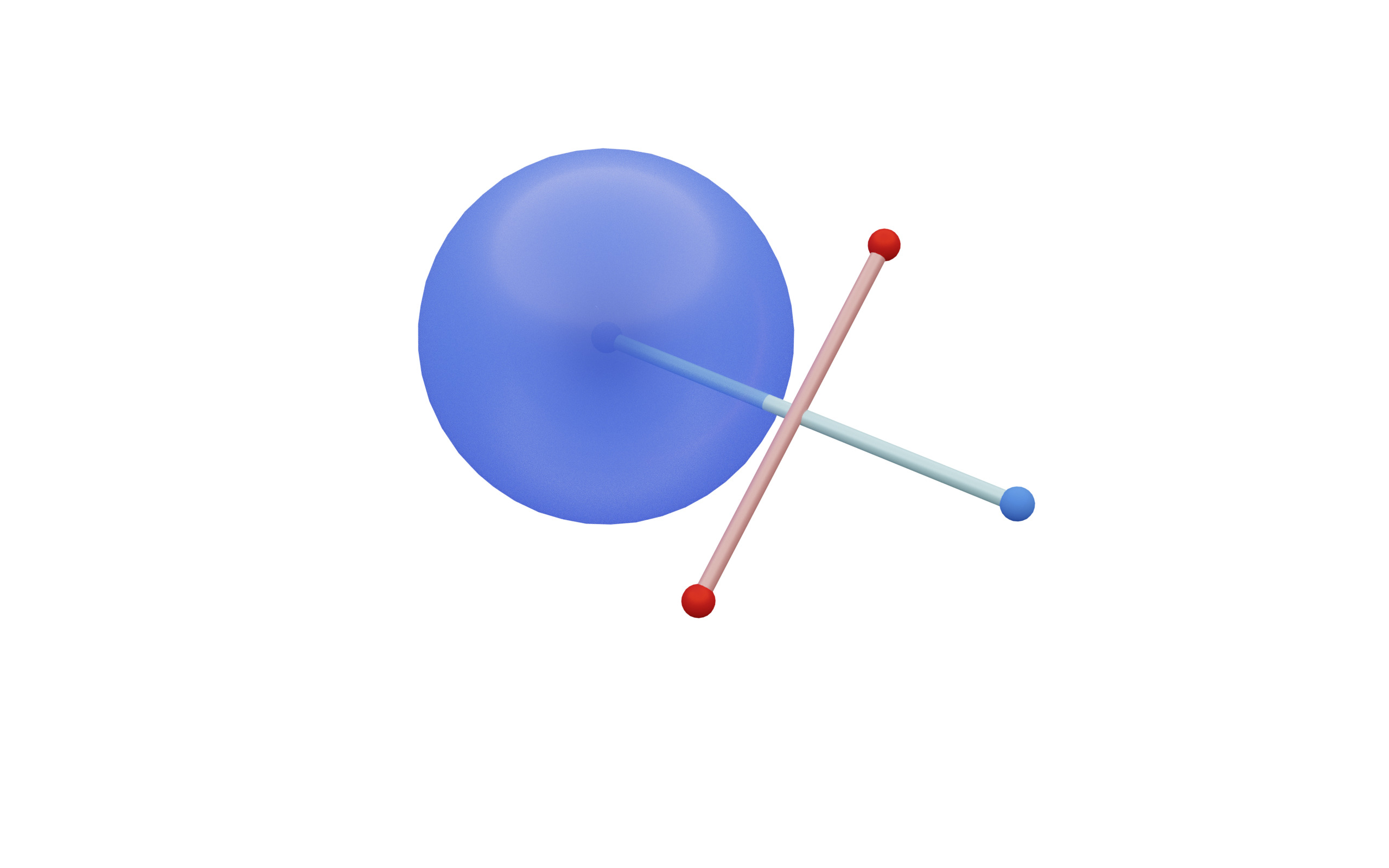}
    \put(11,38){$\color{blue}b_\sp(v)$}
  \end{overpic}
  \begin{overpic}[width=0.32\textwidth]{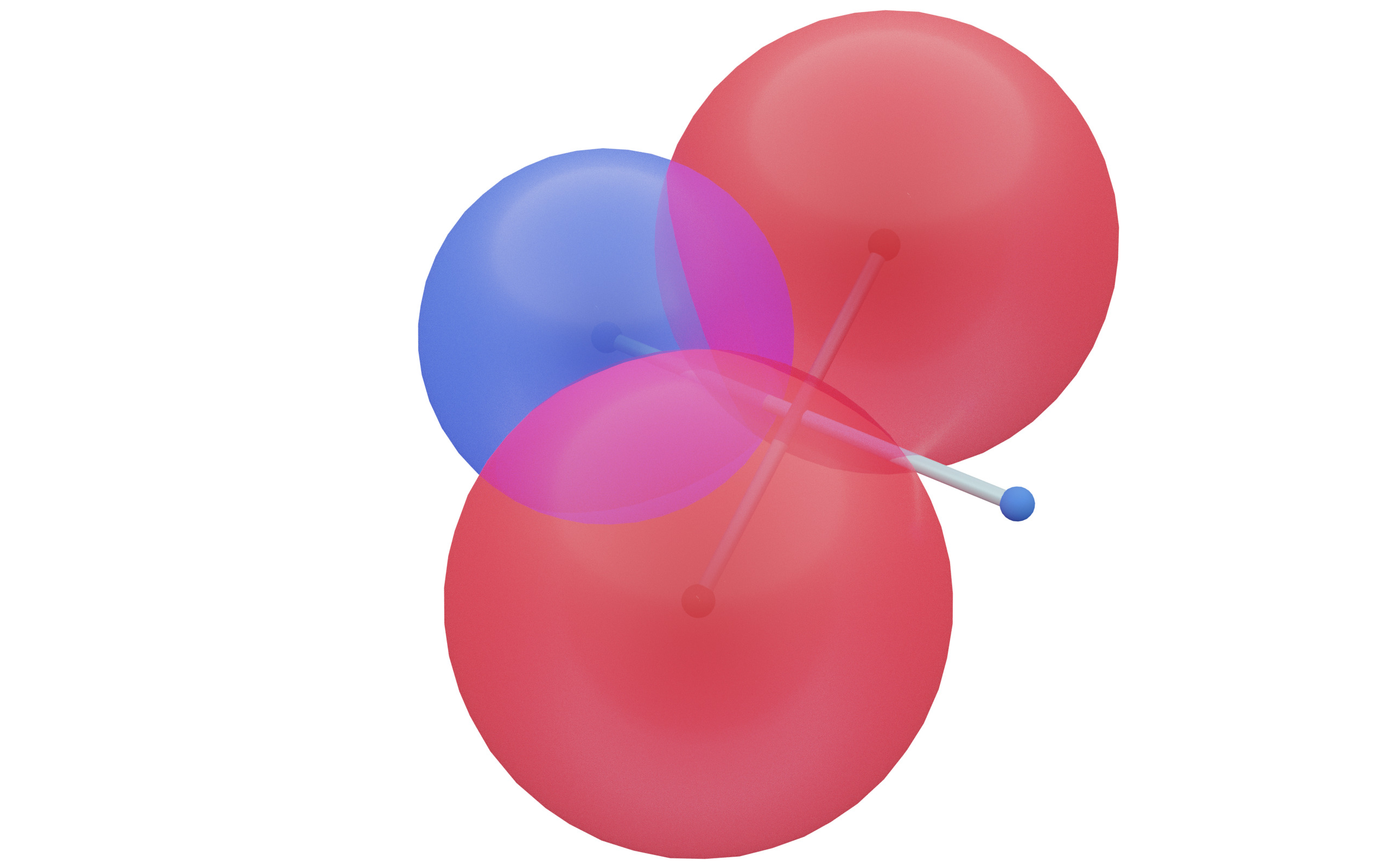}
    \put(70,12){$\color{red}b_\sp(f)$}
    \put(81,45){$\color{red}b_\sp(f')$}
  \end{overpic}
  \begin{overpic}[width=0.32\textwidth]{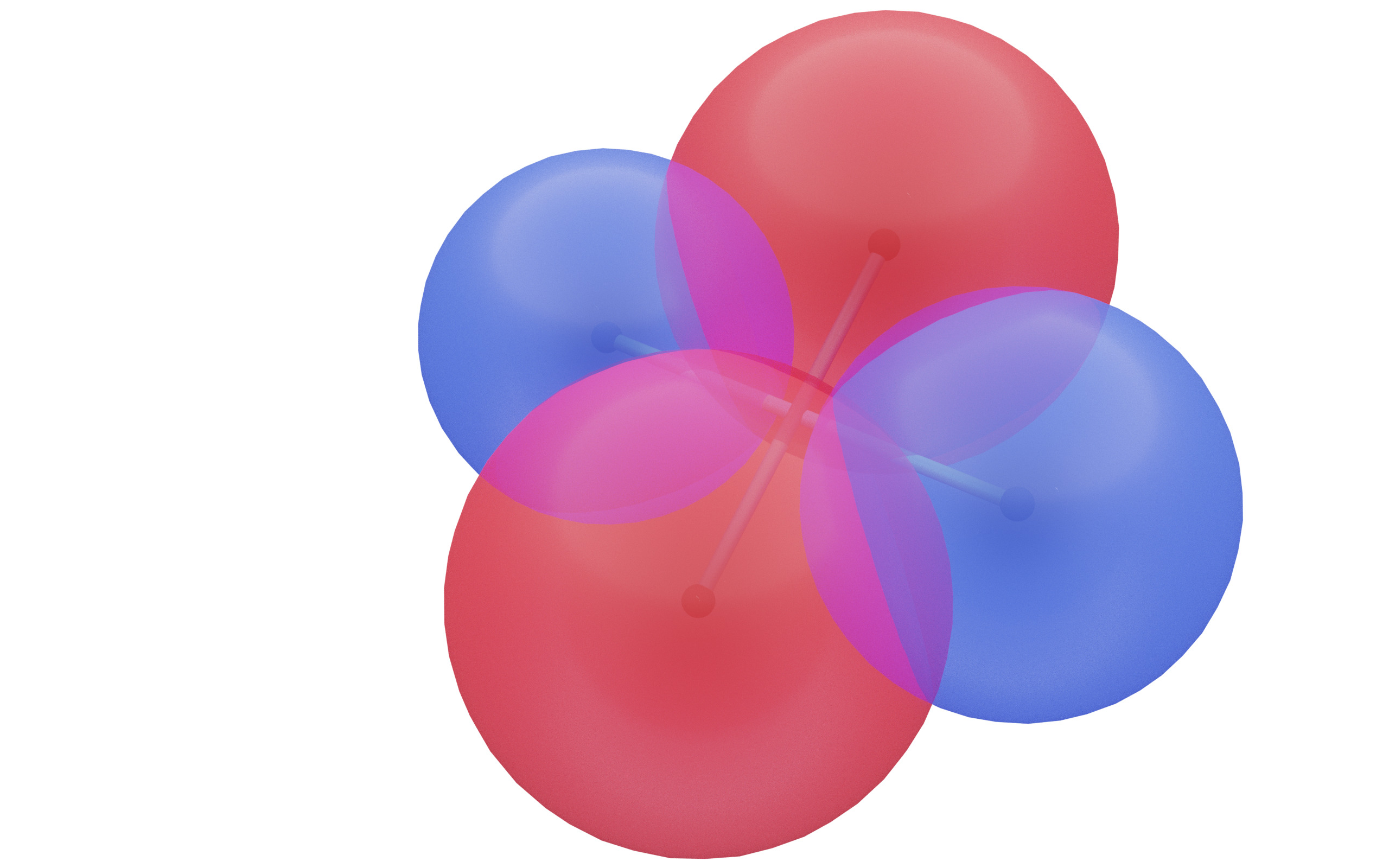}
    \put(90.5,22){$\color{blue}b_\sp(v')$}
  \end{overpic}
  \caption{
    Constructing four spheres of an orthogonal sphere representation of an orthogonal binet around a cross $(v, f, v', f') \in C$.
    The radius for $b_\sp(v)$ can be chosen arbitrarily. Then $b_\sp(f)$ and $b_\sp(f')$ are uniquely determined.
    The two possible results for $b_\sp(v')$ coincide if and only if the two dual edges $b(v) \vee b(v')$ and $b(f) \vee b(f')$ are orthogonal.
  }
  \label{fig:cross-spheres}
\end{figure}

\begin{remark}
  Note that for the existence of a Möbius lift, the binet regularity condition that $b(d) \neq b(d')$ for incident $d,d'\in D$ is essential. However, the net regularity condition that $b(v) \neq b(v')$ for adjacent $v,v'\in V$ is not, since $b(v) = b(v')$ just implies that $b_\mobq(v) = b_\mobq(v')$.
\end{remark}

From the proof of Theorem \ref{th:mobiusortho}
follows that each orthogonal binet has a 1-parameter family of Möbius lifts.
One point $b_\mobq(d)$ can be chosen arbitrarily on the line $b(d) \vee \PB$
(or correspondingly, the radius of one sphere $b_\sp(d)$ with center $b(d)$ can be chosen arbitrarily)
and then the remaining points of the lift are uniquely determined.

\begin{coordinates}
  The existence of the lift can easily be checked using the orthogonality condition \eqref{eq:orthogonal-spheres}.
  Given a binet $b$, a function $\rho : D \rightarrow \R$ with
  \[
    \sca{b(d), b(d')} = \rho(d) + \rho(d')
    \quad \text{for all incident}~ d, d' \in D,
  \]
  exists if and only if
  \[
    \sca{b(v), b(f)} - \sca{b(f), b(v')} + \sca{b(v'), b(f')} - \sca{b(f'), b(v)} = 0.
  \]
  But this equation is equivalent to $\sca{b(v) - b(v'), b(f) - b(f')} = 0$, i.e.,
  the condition that $b$ is an orthogonal binet.
  By \eqref{eq:Möbius-lift-spheres}, the Möbius lift of $b$ is then given by
  \[
    b_\mobq(d) = [b(d) + e_0 + 2\rho(d) e_\infty ].
  \]
  The function $\rho$ is related to the centers $c$ and radii $r$ of the spheres of the orthogonal sphere representation by
  \[
    2\rho = \abs{c}^2 - r^2.
  \]
  Note that -- using the function $\rho$ -- a different choice for the Möbius lift is represented by the simple transformation
  \[
    \begin{aligned}
      &\rho(v) \rightarrow \rho(v) + \varepsilon, &&v \in V,\\
      &\rho(f) \rightarrow \rho(f) - \varepsilon,  &&f \in F,
    \end{aligned}
  \]
  with $\varepsilon \in \R$.
\end{coordinates}

\begin{lemma}\label{lem:regularmobiuslift}
	If $b$ is a regular orthogonal binet, then every Möbius lift $b_\mobq$ of $b$ is a regular polar binet.
\end{lemma}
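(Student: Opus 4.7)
The plan is to verify the two regularity conditions for the polar binet $b_\mobq$ directly, using that the Möbius lift projects via $\pi_\eucl$ onto the regular orthogonal binet $b$.

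First I would dispose of the easier condition, namely that $b_\mobq(d)\neq b_\mobq(d')$ for all incident $d,d'\in D$. Since $b_\mobq$ takes values in $\RP^4\setminus \PB^\pol$, in particular it avoids $\PB$, so $\pi_\eucl\circ b_\mobq$ is well-defined and equals $b$ by the definition of Möbius lift. If two incident lifts coincided, their projections $b(d)$ and $b(d')$ would coincide, contradicting the second regularity clause for $b$.

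The main step is to check that the restrictions $b_\mobq|_V$ and $b_\mobq|_F$ are regular nets, i.e., that for any face the images of any three of its vertices span a plane in $\RP^4$. The key observation is that the central projection $\pi_\eucl$ with center $\PB$ sends a projective line $L\subset\RP^4$ either to a line in $S_\eucl$ (when $\PB\notin L$) or to a single point (when $\PB\in L$). So I would argue by contradiction: assume three of the lifted vertices of some face, say $b_\mobq(v_1),b_\mobq(v_2),b_\mobq(v_3)$, lie on a common line $L\subset\RP^4$. In the first case, $\pi_\eucl(L)$ is a line in $S_\eucl$ containing $b(v_1),b(v_2),b(v_3)$, contradicting the regularity of $b|_V$; in the second case, $\pi_\eucl(L)$ is a single point, forcing $b(v_1)=b(v_2)=b(v_3)$, again contradicting the regularity of $b|_V$. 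The same argument works verbatim for $b_\mobq|_F$.

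I do not expect any real obstacle here beyond cleanly stating the behavior of the central projection on projective lines; the proof is essentially a pullback of regularity along $\pi_\eucl$, with the exclusion of $\PB^\pol$ from the image of $b_\mobq$ ensuring that the projection is defined on the whole lift.
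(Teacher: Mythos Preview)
Your proposal is correct and follows essentially the same route as the paper's proof. The paper phrases the main step more tersely---it just notes that the dimension of subspaces is non-increasing under the central projection $\pi_\eucl$, so if three lifts failed to span a plane their projections would fail to as well---whereas you spell out the dichotomy $\PB\in L$ versus $\PB\notin L$ explicitly; but the content is the same.
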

\proof{
	Note that for incident $d,d'\in D$ the lines $\ell(d), \ell(d')$ (as in the proof of Theorem \ref{th:mobiusortho}) intersect only in $\PB$, since $b(d) \neq b(d')$ in a regular binet. Therefore, we also obtain that $b_\mobq(d) \neq b_\mobq(d')$. An analogous argument shows that for adjacent $v,v' \in V$ (or $f,f' \in F$) the points $b_\mobq(v), b_\mobq(v')$ are different. Additionally, the dimension of subspaces under stereographic projection is non-increasing. Therefore, if three points span a plane in $\eucl$ their Möbius lifts also span a plane. Thus, all the regularity conditions of Definitions \ref{def:net} and \ref{def:binet} are satisfied.\qed
}

Note that the converse is not true, not every projection of a regular polar binet in $\RP^4$ is a regular orthogonal binet, since points of the polar binet may be in special position in relation to $\PB$.

\begin{remark}[Generalization to $\eucl^n$]
For simplicity, we restricted our discussion of the Möbius lift of orthogonal binets to $\eucl = \eucl^3$.
Yet with the projective model of $n$-dimensional Möbius geometry
and Definition~\ref{def:orthogonal-binets} of orthogonal binets for $\eucl^n$,
all claims generalize without obstruction. In particular, the following holds:

Let $b: D \rightarrow \eucl^n$ be a regular binet.
Then a Möbius lift $b_\mobq : D \rightarrow \RP^{n+1} \setminus \PB^\perp$ of $b$ exists if and only if $b$ is an orthogonal binet.
\end{remark}

\subsection*{Invariance}
Let $T$ be a Möbius transformation of $\eucl \cup \{\infty\}$ and $b$ an orthogonal binet.
In general, $T\circ b$ is not an orthogonal binet anymore.
However, if we choose a Möbius lift $b_\mobq$ we may apply the corresponding Möbius transformation $\tilde T$ of $\RP^4$ to $b_\mobq$.
Then $\tilde T \circ b_\mobq$ is a polar binet and therefore the projection $\pi_\eucl \circ \tilde T \circ b_\mobq$ is an orthogonal binet.
Since each orthogonal binet has a 1-parameter family of Möbius lifts, the action of a Möbius transformation on an orthogonal binet is not unique.
However, if we consider an orthogonal binet $b$ together with a fixed Möbius lift $b_\mobq$, or equivalently, together with an orthogonal sphere representation $b_{\sp}$, the action of a Möbius transformation is unique.

\section{*nets and bi*nets} \label{sec:bistarnets}

We define a *net as map $V \rightarrow \planesof{\RP^n}$ into the space of (2-dimensional) planes of $\RP^n$.
This notion is motivated by Q*-nets \cite{ddgbook}.
We view a *net as a discretization of the tangent planes of a smooth parametrized surface.

\begin{definition}[*nets] \label{def:starnet}
  A \emph{*net} is a map $p: V \rightarrow \planesof{\RP^n}$
  such that for every edge $(v,v') \in E$ the planes $p(v), p(v')$ intersect in a line.
  A \emph{regular *net} is a *net such that the planes of any three vertices of each face intersect in exactly one point.
\end{definition}

Again, the identification of the faces $F$ with the vertices of the dual graph (see Remark~\ref{rem:dual-graph})
yields an analogous definition for *nets as maps $p : F \rightarrow \planesof{\RP^n}$.
We define bi*nets as maps on $D$ that come from pairs of *nets on $V$ and $F$ (see Figure~\ref{fig:bi-star-net}).
\begin{definition}[Bi*nets]
  A \emph{bi*net} is a map $b: D \rightarrow \planesof{\RP^n}$. A \emph{regular bi*net} is a bi*net, such that
  \begin{enumerate}
	  \item
	    the restrictions to $V$ and $F$ are regular *nets,
	  \item
	    and for all incident $d,d'\in D$ the planes $b(d),b(d')$ are distinct.\qedhere
	  \end{enumerate}		    
\end{definition}
\begin{figure}[H]
  \centering
  \begin{overpic}[width=0.6\textwidth]{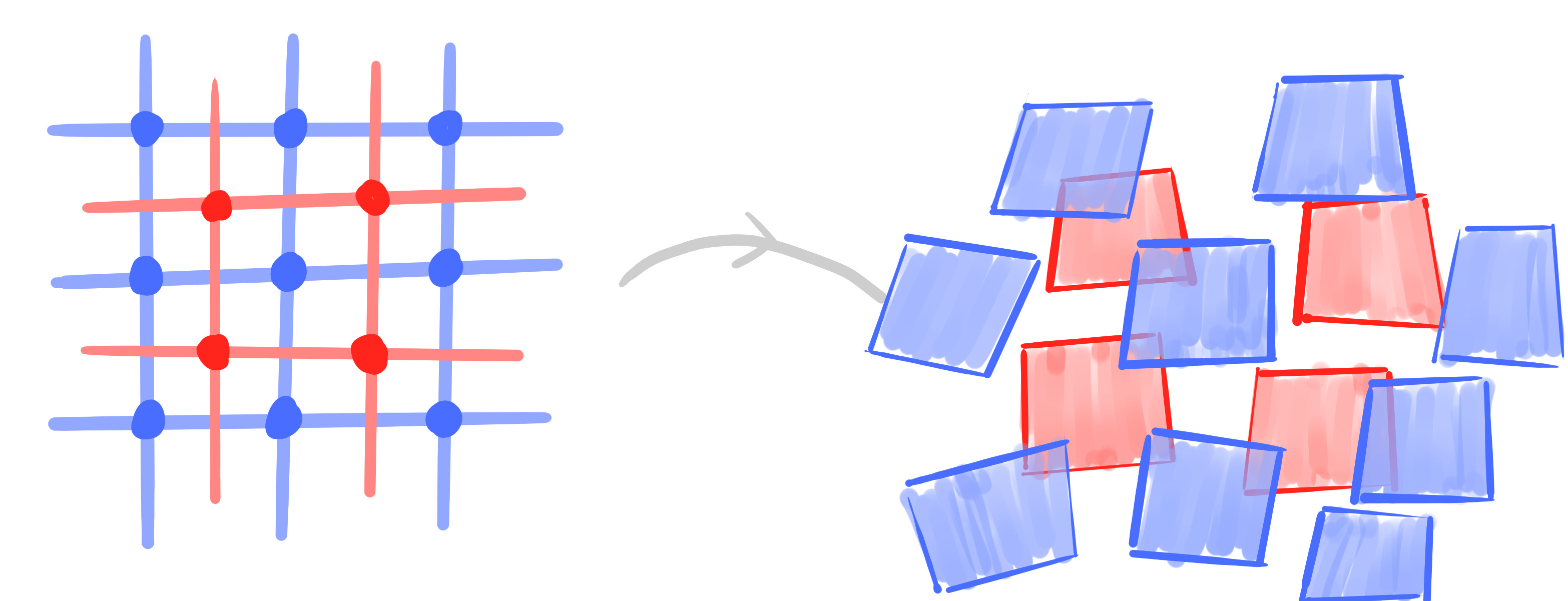}
    \put(3,5){$D$}
    \put(95,30){$\RP^n$}
  \end{overpic}
  \caption{A bi*net is a map $D = V \cup F \rightarrow \planesof{\RP^n}$.}
  \label{fig:bi-star-net}
\end{figure}

For a projective subspace $K \subset \RP^n$ we denote its dual subspace by
\[
\dual{K} \subset (\RP^n)^*.
\]
By projective duality the space $\planesof{\RP^3}$ can be identified with the dual projective space $(\RP^3)^*$.
Thus, for $n=3$ a *net $p: \Z^2 \rightarrow \planesof{\RP^3}$ defines a net on the dual space $\dual{p}: \Z^2 \rightarrow (\RP^3)^*$.
And similarly, a bi*net $b: D \rightarrow \planesof{\RP^3}$ defines a binet on the dual space $\dual{b}: D \rightarrow (\RP^3)^*$.

\section{Conjugate bi*nets}
\label{sec:conjugate-bi-star-nets}
A dual version of conjugate nets, where the role of points and planes is interchanged, are conjugate *nets.
They are also known as Q*-nets \cite{dsqstarnet, bprmultinets}, which were originally introduced for $\RP^3$ only.

\begin{definition}[Conjugate *nets]
  A \emph{conjugate *net} is a net $p: V \rightarrow \planesof{\RP^n}$, such that the four planes around each face intersect in a point.
\end{definition}

\begin{definition}[Conjugate bi*nets]
  A \emph{conjugate bi*net} is a bi*net $b$ such that the restrictions to $V$ and $F$ are conjugate *nets is .
\end{definition}

For each conjugate binet $b$ there is a conjugate bi*net $\square b$, such that for all $v\in V$ (and $f \in F$) the plane $\square b(v)$ contains the points of $b$ of the corresponding quad in $F$ (resp. $V$). If $b$ is a regular conjugate binet, then $\square b$ is unique. More explicitly,
\begin{align}
	\square b(d) = \spa\set{b(d')}{d' \inc d}. 
\end{align}
Note that even if $b$ is a regular conjugate binet, $\square b$ is not necessarily a regular conjugate bi*net.
Conversely, for each conjugate bi*net $b$ there is a conjugate binet $\square^* b$ defined by the intersection points of $b$. If $b$ is a regular conjugate bi*net then
\begin{align}
	\square^*b(d) = \cap\set{b(d')}{d' \inc d}. 
\end{align}
Moreover, assuming sufficient regularity, we have the obvious identities $\square^* \circ \square = \id$ and $\square \circ \square^* = \id$.

Thus, there is no actual difference between conjugate binets and conjugate bi*nets, since they are in bijection.

\section{Orthogonal bi*nets} \label{sec:orthobinstarnets}

For *nets and bi*nets in Euclidean space $\eucl^n$, we introduce some additional regularity conditions compared to the projective space case.
For the regularity of *nets we essentially replace ``distinct'' with ``non-parallel'' and restrict the incidence of planes to $\eucl^n$.
\begin{definition}[Regular~*nets in $\eucl^n$] \label{def:starneteuclidean}
  A \emph{regular *net} in Euclidean space is a *net $p: V \rightarrow \planesof{\eucl^n}$
  such that the planes of any three vertices of each face intersect in exactly one point in $\eucl^n$.
\end{definition}
\begin{remark}
  Note that the regularity constraint for *nets in $\eucl^n$ implies
  that for every edge $(v,v') \in E$ the planes $p(v), p(v')$ are not parallel.
\end{remark}
\begin{definition}[Regular bi*nets in $\eucl^n$]\label{def:bistarneteuclidean}
  A \emph{regular bi*net} in Euclidean space is a regular bi*net $p: D \rightarrow \planesof{\eucl^n}$, such that
    for all incident $d,d'\in D$ the two planes $b(d),b(d')$ are not parallel and not orthogonal.
\end{definition}
Analogous to orthogonal binets, we implement the following orthogonality condition on the crosses of bi*nets.
\begin{definition}[Orthogonal bi*nets]
  \label{def:orthogonal-bi-star-net}
  A bi*net $b: D \rightarrow \planesof{\eucl^n}$ is an \emph{orthogonal bi*net} if $b(v) \cap b(v') ~\orth~ b(f) \cap b(f')$ for all crosses $(v,f,v',f') \in C$, whenever the occurring lines are not at infinity.
\end{definition}

In particular, the definition implies that at each cross of a regular orthogonal bi*net the two lines are not at infinity, and are therefore orthogonal to each other.

For simplicity of the presentation and in regards of normal binets and the lift to Laguerre and Lie geometry,
we restrict the following discussion of orthogonal bi*nets to $n=3$.
Thus, again, we denote the 3-dimensional Euclidean space by
\[
  \eucl \coloneqq \eucl^3,
\]
and the space of planes in $\eucl$ by
\[
  \pl \coloneqq \planesof{\eucl}.
\]

\subsection*{Invariance}
The definition of orthogonal bi*nets is invariant under similarity transformations.

In the smooth theory, if a parametrized surface is described as the envelope of its (oriented) tangent planes,
the notion of a Gauß-orthogonal parametrization (third fundamental form diagonal)
is invariant under \emph{Laguerre transformations} of $\eucl$.
Thus, while Möbius invariance is a desirable property for a discretization of orthogonal parametrizations in terms of points on the surfaces,
Laguerre invariance is a desirable property for a discretization of Gauß-orthogonal parametrizations in terms of tangent planes of the surface.
Indeed, while circular nets are Möbius invariant, it is a feature of conical nets (see Definition~\ref{sec:laguerrelift}) that they are Laguerre invariant.
Orthogonal bi*nets are invariant under similarity transformations, yet a priori not under Laguerre transformations.
However, in Section~\ref{sec:laguerrelift} we introduce a Laguerre lift of orthogonal bi*nets,
which does give us the ability to apply Laguerre transformations to orthogonal bi*nets
such that we obtain an orthogonal bi*net again.

\section{Normal binets} \label{sec:normalbinets}
Normal binets play the role of the Gauß map of a surface in the smooth theory.
In the context of binets, the condition of its image points lying on the unit sphere is replaced by polarity with respect to the unit sphere (see Figure~\ref{fig:normal-binet}).
Thus, let
\[
  \unis \subset \eucl \subset \RP^3
\]
be the unit sphere centered at the origin $\ori \in \eucl$.

\begin{definition}[Normal binet]
  \label{def:normal-binet}
  Let $b: D \rightarrow \pl$ be a bi*net.
  A binet $n: D \rightarrow \eucl \setminus \{\ori\}$ is a \emph{normal binet} of $b$ if
  \begin{enumerate}
  \item
    \label{def:normal-binet1}
    $n$ is a polar binet with respect to the unit sphere $\unis$,
  \item
    \label{def:normal-binet2}
    $(\ori \vee n(d)) \orth b(d)$ for all $d \in D$.\qedhere
  \end{enumerate}
\end{definition}
\begin{figure}[H]
  \centering
  \begin{overpic}[width=0.7\textwidth]{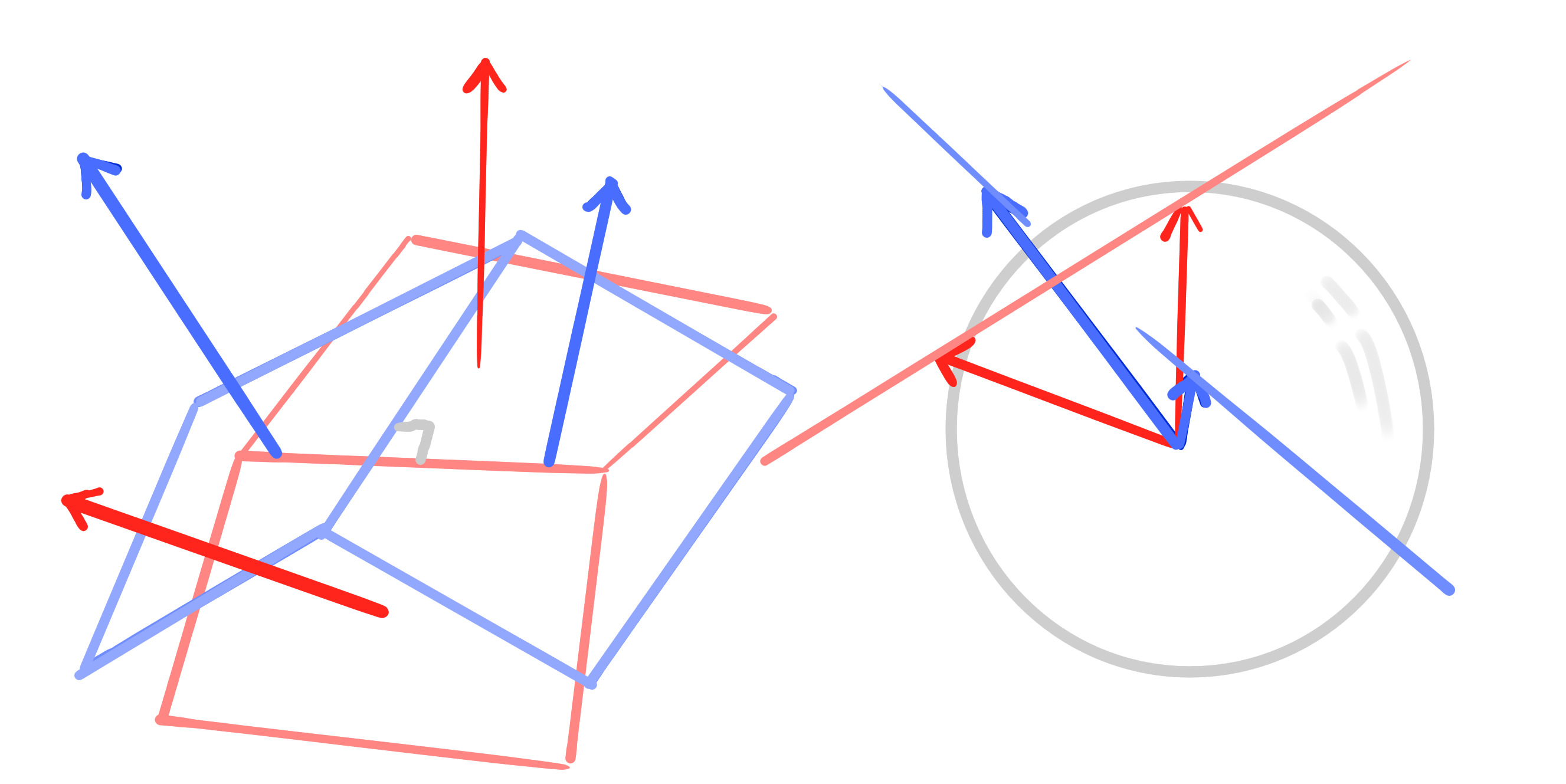}
    \put(91,31){$\unis \subset \eucl \subset \RP^3$}
    \put(75,18){$\ori$}
    \put(1,42){\color{blue}$u(v)$}
    \put(37,40){\color{blue}$u(v')$}
    \put(-3,19){\color{red}$u(f)$}
    \put(28,48){\color{red}$u(f')$}
    \put(63,39){\color{blue}$n(v)$}
    \put(76,27){\color{blue}$n(v')$}
    \put(52.9,28.4){\color{red}$n(f)$}
    \put(77,36){\color{red}$n(f')$}
  \end{overpic}
  \caption{
    Left: Unit normals $u$ of the planes of an orthogonal bi*net $b$ at a cross $(v,f,v',f') \in C$.
    Right: Normals $n$ centered at the origin $\ori$ and rescaled such that ${n(v) \vee n(v')}$ and ${n(f) \vee n(f')}$ are polar with respect to the unit sphere $\unis$.
    The normals $n$ constitute the normal binet of $b$.
  }
  \label{fig:normal-binet}
\end{figure}
Two lines that are polar with respect to the unit sphere are also orthogonal.
Thus, normal binets are also orthogonal binets, which we put into a small lemma.
\begin{lemma}
  A normal binet is an orthogonal binet.
\end{lemma}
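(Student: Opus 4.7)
The plan is to unpack the two defining properties of a normal binet and derive the cross-orthogonality condition of Definition~\ref{def:orthogonal-binets}. Fix a cross $(v,f,v',f') \in C$. By Definition~\ref{def:normal-binet}\ref{def:normal-binet1}, applied to the four incidences $v \inc f$, $v \inc f'$, $v' \inc f$, $v' \inc f'$, all four pairs
\[
  (n(v), n(f)),\quad (n(v), n(f')),\quad (n(v'), n(f)),\quad (n(v'), n(f'))
\]
are polar with respect to $\unis$. Since polarity is bilinear, I would then observe that every point of the line $n(v) \vee n(v')$ is polar to every point of $n(f) \vee n(f')$; equivalently, each of the two lines lies in the polar subspace of the other, so the two lines are mutually polar with respect to $\unis$.

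The remaining step is to translate polarity with respect to the unit sphere into Euclidean orthogonality. I would verify this directly in the projective model of $\eucl$: writing the four points in affine coordinates as $p_1, p_2, q_1, q_2 \in \R^3 \setminus \{\ori\}$ and using the bilinear form of $\unis$ given by $\sca{(x,1),(y,1)}_\unis = \sca{x,y} - 1$, polarity of the four pairs becomes $\sca{p_i, q_j} = 1$ for $i,j \in \{1,2\}$. Subtracting $i=1$ from $i=2$ yields $\sca{p_2 - p_1, q_j} = 0$, and then subtracting $j=1$ from $j=2$ gives $\sca{p_2 - p_1, q_2 - q_1} = 0$. Thus the direction vectors of the two lines are Euclidean-orthogonal, which is exactly the orthogonality condition of Definition~\ref{def:orthogonal-binets}.

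There is essentially no main obstacle here; the lemma is a direct consequence of the fact that the polarity of $\unis$ refines Euclidean orthogonality on pairs of finite lines. The only mild subtlety is regularity: since $n$ takes values in $\eucl \setminus \{\ori\}$, all image points are finite and the lines $n(v) \vee n(v')$ and $n(f) \vee n(f')$ lie in $\eucl$, so the condition in Definition~\ref{def:orthogonal-binets} is meaningful whenever the two points of each pair are distinct, and is vacuous otherwise.
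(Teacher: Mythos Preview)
Your proof is correct and follows the same idea as the paper: lines that are polar with respect to $\unis$ are in particular Euclidean-orthogonal, hence a polar binet with respect to $\unis$ is an orthogonal binet. The paper states this fact in one sentence without the coordinate verification, whereas you spell out the computation $\sca{p_i,q_j}=1 \Rightarrow \sca{p_2-p_1,q_2-q_1}=0$ explicitly; both arguments are the same in substance.
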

Interestingly, normal binets exist only for orthogonal bi*nets.
This establishes that orthogonal bi*nets may be viewed as a discretization of Gauß-orthogonal parametrizations,
that is parametrizations for which the third fundamental form is diagonal, or equivalently,
parametrizations for which the Gauß map is orthogonal.
\begin{theorem}
  \label{th:orthonormals}
  Let $b: D \rightarrow \pl$ be a regular bi*net.
  Then there exists a normal net $n: D \rightarrow \eucl \setminus \{O\}$ of $b$ if and only if $b$ is an orthogonal bi*net.
\end{theorem}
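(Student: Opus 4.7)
The plan is to mirror Theorem~\ref{th:mobiusortho}, but with unit normals playing the role of sphere centers and with polarity taken with respect to $\unis$ instead of $\mobq$. Throughout, for each plane $b(d)$ choose a unit normal vector $u(d) \in \R^3$, so that $(\ori \vee (\ori + u(d))) \orth b(d)$. Any normal binet has the form $n(d) = \sigma(d)^{-1} u(d)$ for some nowhere-zero $\sigma : D \to \R$, since condition~\ref{def:normal-binet2} forces $n(d)$ to point in the $u(d)$-direction from $\ori$.

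For the easy implication ($n$ exists $\Rightarrow$ $b$ orthogonal), I would use the following observation. By condition~\ref{def:normal-binet1}, polarity of $n(d)$ and $n(d')$ with respect to $\unis$ translates, in affine coordinates, to $\sca{n(d), n(d')} = 1$ for all incident $d, d' \in D$. Now for a cross $(v,f,v',f') \in C$, the line $b(v) \cap b(v')$ is parallel to $n(v) \times n(v')$ and $b(f) \cap b(f')$ is parallel to $n(f) \times n(f')$, so the vector identity $(a \times b) \cdot (c \times d) = \sca{a,c}\sca{b,d} - \sca{a,d}\sca{b,c}$ applied with $a = n(v)$, $b = n(v')$, $c = n(f)$, $d = n(f')$ yields $1\cdot 1 - 1 \cdot 1 = 0$, hence orthogonality.

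The converse is the substantive direction and is where I expect the real work. Writing $n(d) = \sigma(d)^{-1} u(d)$, the polarity condition becomes
\[
  \sca{u(d), u(d')} = \sigma(d)\sigma(d') \qquad \text{for all incident } d, d' \in D.
\]
So I must show that such a $\sigma : D \to \R \setminus \{0\}$ exists precisely when $b$ is orthogonal. I would define $\sigma$ by choosing $\sigma(d_0)$ arbitrarily nonzero at some basepoint $d_0 \in D$ and propagating along the incidence graph via $\sigma(d') := \sca{u(d), u(d')}/\sigma(d)$. By the regularity assumption (Definition~\ref{def:bistarneteuclidean}), no two incident planes are orthogonal, so all $\sca{u(d), u(d')} \neq 0$ and $\sigma$ stays nonzero. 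The only thing to verify is path-independence. Viewing $D$ as the vertices of the bipartite incidence graph between $V$ and $F$ (which is itself isomorphic to a square lattice and hence simply connected), it suffices to check closure around a single cross. Going around $(v,f,v',f')$, the multiplicative cocycle condition reduces to
\[
  \sca{u(v), u(f)}\sca{u(v'), u(f')} = \sca{u(v), u(f')}\sca{u(v'), u(f)},
\]
which is exactly $(u(v) \times u(v')) \cdot (u(f) \times u(f')) = 0$, i.e.\ the orthogonality of $b$ at that cross.

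The main obstacle is the sign issue hidden in the logarithmic/multiplicative propagation: for the above recursion to start, one needs $\sca{u(v_0), u(f_0)}$ of a compatible sign with one's choice of $\sigma(v_0)$, and at crosses one must check that signs cohere globally. I would handle this by first noting that the cross identity forces $\mathrm{sgn}(\sca{u(v),u(f)})\mathrm{sgn}(\sca{u(v'),u(f')}) = \mathrm{sgn}(\sca{u(v),u(f')})\mathrm{sgn}(\sca{u(v'),u(f)})$, which is a $\Z/2$-cocycle condition on the incidence graph. Simple connectedness then allows me to flip the orientations of the $u(d)$ by a suitable $\epsilon : D \to \{\pm 1\}$ so that all incident inner products become positive, after which the propagation above produces a globally well-defined $\sigma$, hence the desired normal binet $n$.
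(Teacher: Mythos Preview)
Your argument is correct. The coordinate route via $n(d)=\sigma(d)^{-1}u(d)$ and the multiplicative cocycle
\[
  \sca{u(v),u(f)}\,\sca{u(v'),u(f')}=\sca{u(v),u(f')}\,\sca{u(v'),u(f)}
\]
is exactly the content of the paper's Coordinates~\ref{coo:normallift}, and the forward implication via the identity $\sca{a\times b,\,c\times d}=\sca{a,c}\sca{b,d}-\sca{a,d}\sca{b,c}$ is valid.

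The paper's own proof is synthetic rather than algebraic: for the forward direction it argues via the planes $A=\ori\vee n(v)\vee n(v')$ and $A'=\ori\vee n(f)\vee n(f')$, observing $A\orth A'$ from polarity and $\ell\orth A$, $\ell'\orth A'$ from condition~\ref{def:normal-binet2}; for the converse it propagates $n$ geometrically by $n(d')=U(d')\cap n(d)^\pol$ (with $U(d)=\ori\vee n(d)$) and checks closure at a cross by showing the line $L=n(f)^\pol\cap n(f')^\pol$ lies in the plane $U(v)\vee U(v')$. Your approach is more computational and closer to how one would implement this; the paper's is coordinate-free and makes the role of regularity (non-orthogonality of incident planes ensuring $U(d')\cap n(d)^\pol$ lies in $\eucl$) slightly more transparent.

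One remark: your final paragraph about signs is unnecessary. The multiplicative propagation $\sigma(d')=\sca{u(d),u(d')}/\sigma(d)$ is perfectly well-defined for real nonzero values of either sign, and the cocycle identity guarantees path-independence without any gauge adjustment of the $u(d)$. No positivity is required anywhere, since the polarity condition $\sca{n(d),n(d')}=1$ is satisfied as soon as $\sigma(d)\sigma(d')=\sca{u(d),u(d')}$, regardless of sign.
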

\begin{proof}\ \linebreak
  ($\Rightarrow$) 
  Assume that $n$ is a normal binet of $b$ and consider a cross $(v,f,v',f') \in C$.
  Let
  \begin{align}
    A &\coloneqq n(v) \vee n(v') \vee \ori, & A' &\coloneqq n(f) \vee n(f') \vee \ori,\\
    \ell &\coloneqq b(v) \cap b(v'), & \ell' &\coloneqq b(f) \cap b(f').
  \end{align}
  Due to the polarity condition on $n$,
  we observe that $A \orth A'$.
  The orthogonality condition of $n$ with respect to $b$
  implies that $\ell \orth A$ and $\ell' \orth A'$.
  Combined, we obtain that $\ell \orth \ell'$ and therefore $b$ is an orthogonal bi*net.
  \newline
  ($\Leftarrow$) 
  Assume that $b$ is an orthogonal bi*net.
  By the orthogonality condition, for $d\in D$ the lines
  \[
    U(d) \coloneqq \ori \vee n(d)
  \]
  are determined by $b(d)$.
  Moreover, for $d\inc d'$ the point $n(d)$ determines $n(d')$ by
  \[
    n(d') = U(d') \cap n(d)^\pol.
  \]
  This intersection always exists in $\eucl$ since, by Definition~\ref{def:bistarneteuclidean},
  the two planes $b(d)$ and $b(d')$ are not orthogonal.
  Therefore, for one $d\in D$ we may choose $n(d)$ on $U(d) \setminus \{\ori\}$, and then the whole image of $n$ is determined.

  It remains to check that in this manner $n$ is well-defined around a cross ${(v,f,v',f') \in C}$ (compare Figure~\ref{fig:normal-binet}).
  We use that $U(v)$ and $U(v')$ are distinct and also that $U(f)$ and $U(f')$ are distinct as a consequence of the regularity of *nets in $\eucl$ (Definition~\ref{def:starneteuclidean}).
  Assume $n(v),n(f),n(f')$ are already determined such that $n(f),n(f') \pol n(v)$.
  We show that $n(f)^\pol, n(f')^\pol$ and $U(v')$ intersect in exactly one point.
  Since $b$ is an orthogonal bi*net, the plane $A = U(v) \vee U(v')$ is orthogonal to the line $L' = n(f) \vee n(f')$.
  On the other hand, by construction, the line $L = n(f)^\pol \cap n(f')^\pol$ is orthogonal to the line $L'$ and contains $n(v)$.
  Therefore $L$ must lie in $A$ and thus intersects $U(v')$, which means $n(v') = L \cap U(v')$ is well-defined.
\end{proof}

From the proof of Theorem \ref{th:orthonormals} follows that each orthogonal bi*net has a 1-parameter family of normal binets.
The distance of one point $n(d)$ to the origin can be chosen arbitrarily
(the length of one normal vector of one plane can be chosen arbitrarily)
and then the remaining points of $n$ are uniquely determined.

\begin{coordinates}
  \label{coo:normallift}
  Let $b: D\rightarrow \pl$ be an orthogonal bi*net and let $u: D \rightarrow \unis$
  be a corresponding unit-normal binet,
  that is a binet such that $|u(d)| = 1$ and $(u(d) \vee \ori) \orth b(d)$ for all $d \in D$.

  The vectors describing the points of a normal binet $n : D \rightarrow \R^3 \setminus \{O\}$ must be proportional to $u$
  \[
    n(d) = \frac{u(d)}{\sigma(d)},
  \]
  with some $\sigma(d) \neq 0$ and satisfy the polarity condition
  \[
    \sca{n(d), n(d')} = 1.
  \]
  Thus, a normal binet of $b$ exists if and only there exists a function
  $\sigma: D \rightarrow \R \setminus \{0\}$, such that for all incident $d,d' \in D$ holds
  \begin{align}
    \sca{u(d), u(d')} = \sigma(d) \sigma(d').
  \end{align}
  Such a function $\sigma$ exist if only if around every cross $(v,f,v',f') \in C$ holds
  \begin{align}
    \sca{u(v), u(f)} \sca{u(v'), u(f')} = \sca{u(v'), u(f)} \sca{u(v), u(f')}.
    \label{eq:normallift}
  \end{align}

  On the other hand, the two lines $b(v) \cap b(v')$ and $b(f) \cap b(f')$ are orthogonal if and only if
  \[
    \begin{aligned}
      0 &= \sca{ u(v) \times u(v'), u(f) \times u(f')}\\
      &= \sca{u(v), u(f)} \sca{u(v'), u(f')} - \sca{u(v'), u(f)} \sca{u(v), u(f')}.
    \end{aligned}
  \]
  
\end{coordinates}

Let us also briefly discuss the following additional property of normal binets.

\begin{lemma} \label{lem:normalconjugate}
	A normal binet is a conjugate binet.
\end{lemma}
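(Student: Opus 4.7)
The plan is to reduce the conjugacy condition directly to the polarity condition from Definition~\ref{def:normal-binet}~\ref{def:normal-binet1}. Since a conjugate binet is exactly a binet whose restrictions to $V$ and $F$ are each conjugate nets, I need to show that for every quad of $\Z^2$ (a face $f \in F$) the four images $n(v_1), \ldots, n(v_4)$ of its boundary vertices lie in a common plane, and similarly for every quad of the dual graph (a vertex $v \in V$), the four images $n(f_1), \ldots, n(f_4)$ of the incident faces lie in a common plane.

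First I would fix a face $f \in F$ with boundary vertices $v_1, v_2, v_3, v_4 \in V$. Each $v_i$ is incident to $f$ in $D$, so the polar binet condition gives
\[
  n(v_i) \pol n(f) \quad \text{with respect to } \unis, \qquad i = 1,2,3,4.
\]
Working in the standard affine coordinates of $\eucl \subset \RP^3$ in which the unit sphere is $\{x : \sca{x,x} = 1\}$, the polarity relation with respect to $\unis$ becomes the affine equation $\sca{x, n(f)} = 1$. Since $n(f) \neq \ori$, this equation describes a genuine affine plane $\Pi_f \subset \eucl$, and all four points $n(v_i)$ lie in $\Pi_f$. Hence the restriction $n|_V$ sends each quad into a plane, so $n|_V$ is a conjugate net.

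The symmetric argument works verbatim with the roles of $V$ and $F$ swapped: for any $v \in V$, the four faces $f_1,\ldots,f_4$ incident to $v$ satisfy $n(f_j) \pol n(v)$, and in affine coordinates this places $n(f_1),\ldots,n(f_4)$ on the plane $\{x : \sca{x, n(v)} = 1\}$. Thus $n|_F$ is a conjugate net as well, and by Definition~\ref{def:conjugatebinet} the normal binet $n$ is a conjugate binet.

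There is no substantial obstacle; the result is essentially a direct consequence of polarity with respect to a quadric being a linear incidence condition. The only care required is translating the projective polarity condition to the affine statement ``lies on the polar plane of $n(f)$,'' which is straightforward in the coordinates of Coordinates~\ref{coords:euclidean} since $n$ takes values in $\eucl \setminus \{\ori\}$ and therefore each polar hyperplane meets $\eucl$ in a finite affine plane.
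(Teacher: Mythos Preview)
Your proof is correct and follows essentially the same approach as the paper: both arguments use that the polarity condition $n(d_i) \pol n(d)$ with respect to $\unis \subset \RP^3$ places the four points $n(d_i)$ in the polar plane $n(d)^\pol$, yielding the planarity required for a conjugate binet. Your translation to affine coordinates is a slight elaboration but not a genuine difference; the paper works directly in $\RP^3$ where the polar of a point is automatically a plane.
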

\proof{
	We have defined normal binets as polar binets with respect to the unit sphere $\unis \subset \RP^3$. Thus if $d_1,d_2,d_3,d_4 \in D$ are incident to some $d\in D$, then the points $n(d_i)$ are contained in the plane $n(d)^\pol$. Therefore $n$ is a polar binet.\qed
}

\begin{remark}
  \label{rem:koebe}
  With this observation it becomes quite clear that normal binets are a generalization of Koebe polyhedra as mentioned in Example~\ref{itm:examplekoebe} in the introduction (see Figure~\ref{fig:examples}). A Koebe polyhedron is a conjugate net whose edges are tangent to the unit sphere $\unis \subset \eucl$. It is always accompanied by a dual Koebe polyhedron, which has edges that intersect the primal edges orthogonally in the points of contact to the unit sphere. Thus each pair of primal and dual Koebe polyhedra constitutes a polar binet with respect to the unit sphere. In fact, in \cite{bhssminimal} Koebe polyhedra are interpreted as normal nets for certain discretizations of minimal surfaces. We give a few more details in Remark~\ref{rem:minimal}.
\end{remark}

\begin{remark}
	Let us discuss normal binets in the case of non-regular bi*nets.
	\begin{enumerate}
		\item Assume $b$ is a regular orthogonal bi*net except that $b(d) \orth b(d')$ for some incident $(d,d') \in D$, which violates the regularity condition of bi*nets (Definition~\ref{def:bistarneteuclidean}).
		In this case, a normal binet $n: D \rightarrow \eucl \setminus \{\ori\}$ does not exist,
		because $n(d)^\perp$ does not intersect the line $n(d') \vee \ori$ in $\eucl$.
		But an intersection point does exist in $\RP^3 \supset \eucl$.
		Thus, a normal binet $n: D \rightarrow \RP^3 \setminus \{\ori\}$ may indeed exist.
		However, if we choose $n(d)$ at infinity,
		for \emph{all} $d'' \in D$ incident to $d$ the lines $n(d'')\vee \ori$ need to be orthogonal to the line $n(d) \vee \ori$.
		This puts an \emph{additional} constraint on the orthogonal bi*net $b$.
		While we exclude this special case from the general definition of orthogonal bi*nets,
		it may still be practical to allow in some applications.

		\item Assume $b$ is a regular orthogonal bi*net except that $b(d) \parallel b(d')$ for some incident $d,d' \in D$, which violates the regularity condition of bi*nets (Definition~\ref{def:bistarneteuclidean}).
		This implies that $n(d) = n(d') \in \S^2$. If this happens for only one pair of incident $d,d'$, this is not yet a problem. 
		But if there is more than one pair, a normal binet only exists if additional, non-local constraints are satisfied by the orthogonal bi*net, and we did not investigate this case further.
		\item Assume $b$ is a regular orthogonal bi*net except that $b(v) \parallel b(v')$ for some adjacent $(v,v') \in E$, which violates the regularity condition of *nets (Definition~\ref{def:starneteuclidean}).
		Again, this implies that $n(d) = n(d')$, but not necessarily in $\S^2$. As a result, in this case the violation of regularity is not an obstruction to the existence of a normal binet.\qedhere
	\end{enumerate}
  
\end{remark}

\begin{lemma}
	\label{lem:regnormalnet}
	If $b$ is a regular orthogonal bi*net, then every normal binet $n$ of $b$ is a regular binet.
\end{lemma}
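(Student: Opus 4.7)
The plan is to verify both parts of the regularity definition for binets (Definition~\ref{def:binet}): the distinctness condition $n(d) \neq n(d')$ for all incident $d \inc d'$, and the regular net condition that the $n$-images of any three corners of each face span a plane. Throughout, I would use the coordinate description from Coordinates~\ref{coo:normallift}, writing $n(d) = u(d)/\sigma(d)$ where $u(d)$ is a unit normal of $b(d)$ and $\sigma : D \to \R \setminus \{0\}$. This lets me translate statements about the points $n(d)$ directly into statements about the unit normals $u(d)$ and hence about the planes $b(d)$.

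For the distinctness conditions, if $n(d) = n(d')$ for incident $d \inc d'$, then $u(d)$ and $u(d')$ are parallel vectors, so the planes $b(d)$ and $b(d')$ would be parallel, contradicting the regularity assumption on bi*nets (Definition~\ref{def:bistarneteuclidean}). For adjacent $v, v' \in V$ (and analogously $f, f' \in F$), the same argument combined with the remark following Definition~\ref{def:starneteuclidean}, which asserts that adjacent planes of a regular *net in $\eucl^n$ are never parallel, rules out $n(v) = n(v')$.

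For the regular net condition on $b|_V$, take a face of $\Z^2$ with corners $v_1, v_2, v_3, v_4 \in V$. By Lemma~\ref{lem:normalconjugate}, $n$ is a conjugate binet, so the four points $n(v_1), \ldots, n(v_4)$ already lie in a common plane, and I only need to rule out collinearity of any three of them. Assume $n(v_i), n(v_j), n(v_k)$ are collinear in $\eucl$; writing the affine dependence $n(v_j) = \alpha n(v_i) + \beta n(v_k)$ with $\alpha + \beta = 1$ and substituting $n(v) = u(v)/\sigma(v)$ yields a linear dependence of $u(v_i), u(v_j), u(v_k)$ in $\R^3$. Hence these three unit normals span at most a $2$-dimensional subspace $W \subset \R^3$, and the three planes $b(v_i), b(v_j), b(v_k)$ are all parallel to the line $W^\perp$. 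Three planes sharing a common parallel direction cannot intersect in exactly one point of $\eucl$ (the intersection is either empty or a line), contradicting the regular *net condition on $b|_V$ (Definition~\ref{def:starneteuclidean}). The argument for $b|_F$ is analogous.

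The proof is essentially a sequence of direct consequences of the regularity built into the definitions of regular bi*nets and regular *nets in $\eucl$; the only slightly delicate point is translating the collinearity of three image points into the existence of a common parallel direction for the three corresponding tangent planes, after which the regular *net condition yields the contradiction immediately.
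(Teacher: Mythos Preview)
Your proof is correct and follows essentially the same approach as the paper's own proof. Both arguments reduce the regularity conditions on $n$ to statements about the normal directions of the planes $b(d)$: the paper phrases this via the lines $U(d) = \ori \vee n(d)$ through the origin, while you use the coordinate description $n(d) = u(d)/\sigma(d)$, but the content is identical. Your treatment of the third step (collinearity of three $n$-points forces a common parallel direction of the three planes, contradicting that they meet in a single point) is more explicit than the paper's one-line assertion, but it is exactly the mechanism the paper is invoking.
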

\proof{	
	Note that for incident $d,d'\in D$ the lines $U(d), U(d')$ (as in the proof of Theorem \ref{th:orthonormals}) intersect only in $\ori$, since $b(d) \neq b(d')$ in a regular bi*net. Also, for any $d \in D$ holds that $n(d) \neq \ori$. Therefore, we also obtain that $n(d) \neq n(d')$. An analogous argument shows that for adjacent $v,v' \in V$ (or $f,f' \in F$) the points $b_\mobq(v), b_\mobq(v')$ are different. Additionally, the condition that three planes of a face in $b$ intersect in exactly one point in $\eucl$, implies that any three points of a face in $n$ span a plane. Thus all the regularity conditions of Definitions \ref{def:starneteuclidean} and \ref{def:bistarneteuclidean} are satisfied.\qed
}

\section{Laguerre geometry}
\label{sec:laguerre}

First, recall the projective model of dual Euclidean geometry \cite[Appendix A]{blptlaguerre}.
For the 3-dimensional Euclidean space
\[
  \eucl = \RP^3 \setminus E^\infty,
\]
with plane at infinity $E^\infty \subset \RP^3$, the \emph{dual Euclidean space} is given by
\[
  \eucl^* = (\RP^3)^* \setminus \{ B \},
\]
where the point $B \coloneqq \star E^\infty$ is the dual of the plane at infinity.
By duality, every point $X \in \eucl^*$ corresponds to a Euclidean plane $\star X \subset \eucl$.
The absolute quadric $\star\euclq$ of dual Euclidean space has signature $\texttt{(+++0)}$.
It is an imaginary cone with real apex $B$, and corresponds dually to the absolute conic $\euclq$ in $E^\infty$ of Euclidean space.

\begin{coordinates}
  Using Coordinates~\ref{coords:euclidean} for the Euclidean space, we obtain by duality
  \[
    B = \star E^\infty = [0, 0, 0, 1].
  \]
  The dual absolute quadric is represented by
  \[
    \sca{x,x}_{\star \euclq} = x_1^2 + x_2^2 + x_3^3, \qquad x \in \R^4.
  \]
\end{coordinates}

Secondly, recall the projective model of Laguerre geometry \cite{blaschkevl,blptlaguerre}. 
Let $\sca{\cdot, \cdot}_{\blac}$ be a symmetric bilinear form of signature $\texttt{(+++-0)}$, and
\[
  \blac = \set{[x] \in \RP^4}{ \sca{x,x}_{\blac} = 0},
\]
the corresponding quadric in $\RP^4$, which we call the \emph{Blaschke cylinder} (see Figure~\ref{fig:laguerre-geometry}).
\begin{coordinates}
  In homogeneous coordinates of $\RP^4$ we choose
  \[
    \sca{x,x}_{\blac} = x_1^2 + x_2^2 + x_3^2 - x_4^2.
  \]
  And thus, in affine coordinates $x_4 = 1$, the Blaschke cylinder is the 3-cylinder:
  \[
    x_1^2 + x_2^2 + x_3^2 = 1.
  \]
\end{coordinates}

\begin{figure}[H]
  \centering
  \begin{overpic}[width=0.98\textwidth]{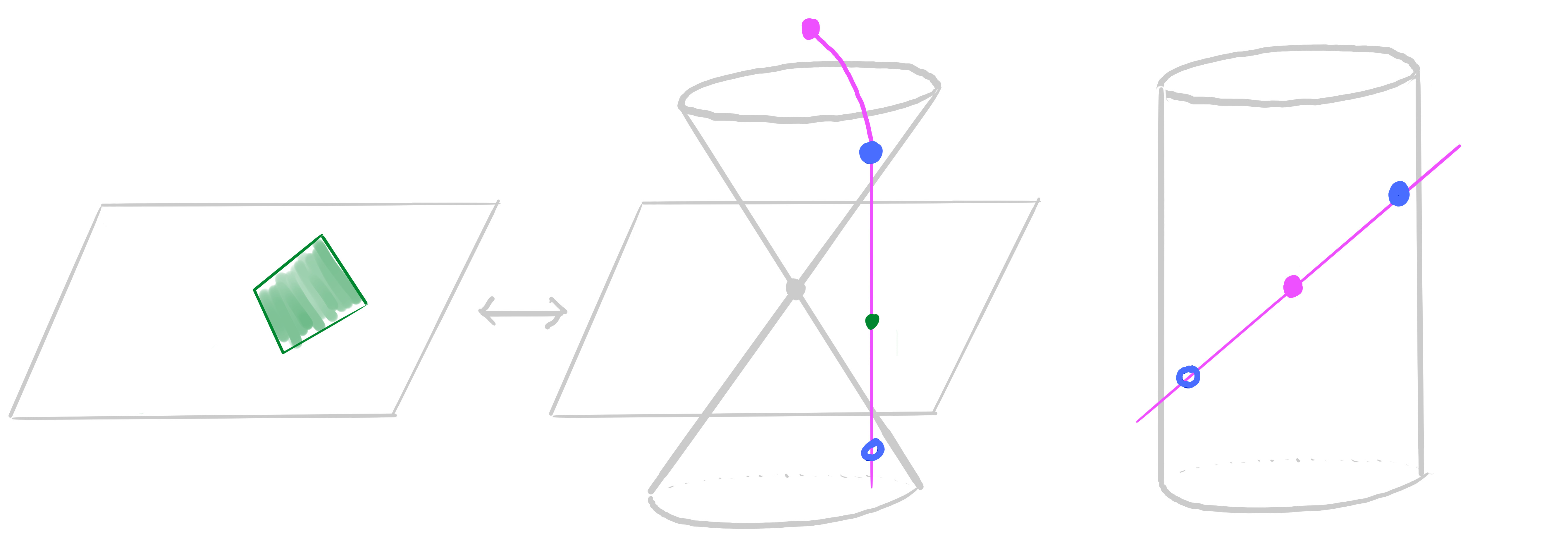}
    \put(2,8.4){$\eucl \subset \RP^3$}
    \put(20,11){$\color{darkgreen} \star X$}
    \put(29,11.2){\small duality}
    \put(36.5,8.4){$\eucl^* \subset S_{\eucl^*}$}
    \put(60.4,28){$\blac \subset \RP^4$}
    \put(47.4,15){$\PB$}
    \put(48.2,31.5){$\color{magenta}\PM$}
    \put(56.2,12.7){$\color{darkgreen}X$}
    \put(56.2,23.4){$\color{blue}X_1$}
    \put(56.2,4.5){$\color{blue}X_2$}
    \put(91,28){$\blac \subset \RP^4$}
    \put(81,17){$\color{magenta}\PM$}
    \put(86.8,23.2){$\color{blue}X_1$}
    \put(76,7){$\color{blue}X_2$}
  \end{overpic}
  \caption{
    The representation of oriented planes in the Blaschke cylinder model of Laguerre geometry.
    Left: A plane $\star X$ in Euclidean space $\eucl$.
    Middle: The dual point $X$ of the plane $\star X$ in dual Euclidean space $\eucl^*$
    and its two lifts $X_1, X_2$ to the Blaschke cylinder $\blac$.
    Right: Affine coordinates of $\RP^4$ for which the hyperplane $S_{\eucl^*}$ is the hyperplane at infinity and $\blac$ appears as a cylinder.
  }
  \label{fig:laguerre-geometry}
\end{figure}

We now embed the dual Euclidean space $\eucl^*$ into $\RP^4$ in the following way.
Let us denote the apex of $\blac$ by $\PB$,
and let $\PM \in \blac^-$ be a point inside the Blaschke cylinder, that is a point with signature $\texttt{(-)}$.
Define $S_{\eucl^*} \coloneqq M^\pol$.
We identify
\[
  \eucl^* \cong S_{\eucl^*} \setminus \{ B \}.
\]
with the 3-dimensional dual Euclidean space.
The restriction of $\sca{\cdot,\cdot}_\blac$ to $S_{\eucl^*}$ has signature $\texttt{(+++0)}$
and defines the absolute quadric $\star \euclq$ of dual Euclidean geometry.

Consider the central projection to $S_{\eucl^*}$ with center $M$
\begin{align}
	\pi_{\eucl^*}&: \RP^4 \setminus \{ \PM \} \rightarrow S_{\eucl^*}, \quad X \mapsto  (X \vee \PM) \cap S_{\eucl^*}.
\end{align}
Its restriction to the Blaschke cylinder is a map
\[
  \blac \setminus \{\PB\} \rightarrow \eucl^*
\]
which is two-to-one,
that is a double cover of the dual Euclidean space $\eucl^*$.
Indeed, in Laguerre geometry, two different points $X_1, X_2 \in \blac$ on the Blaschke cylinder
that are mapped to the same point $X = \pi_{\eucl^*}(X_1) = \pi_{\eucl^*}(X_2)$
correspond to the two possible orientations of the plane $\star X \subset \eucl$.
This choice of orientation can be done in a consistent way.
If we denote the space of \emph{oriented Euclidean planes} by
\[
  \opl \coloneqq \text{Oriented planes}(\eucl),
\]
this gives rise to the map
\[
  \xi_\opl : \blac \setminus \{\PB\} \rightarrow \opl.
\]
The corresponding non-oriented planes are given by the map
\[
  \xi_\pl : \RP^4 \setminus \ell_\blac \rightarrow \pl,\quad
  X \mapsto \star\pi_{\eucl^*}(X) \subset \eucl,
\]
which, for later purposes, we define on the entire space except the \emph{axis} of the Blaschke cylinder
\[
  \ell_\blac \coloneqq \PB \vee \PM.
\]
In this way, every point in $\RP^4$ except for the line $\ell_\blac$ has an associated (non-oriented) plane in $\eucl$.
\begin{coordinates}
  By the previous choice of $\sca{\cdot,\cdot}_{\blac}$ we have
  \begin{align}
    \PB = [0,0,0,0,1].
  \end{align}
  We choose
  \[
    \PM = [0,0,0,1,0], \qquad
    S_{\eucl^*} = \set{[x] \in \RP^4}{ x_4 = 0}.
  \]
  The central projection to $S_{\eucl^*}$ is given by
  \begin{align}
    \pi_{\eucl^*}([x]) &= [x_1, x_2, x_3, 0, x_5].
  \end{align}
  A point $[u, 1, h] \in \blac$, with $u \in \S^2$, $h \in \R$ represents
  an oriented plane $\xi_\opl([u,1,h])$ in $\eucl$ given by the (non-oriented) plane
  \[
    \xi_\pl([u,1,h]) = \set{y \in \R^3}{\sca{u,y} + h = 0} \subset \R^3 \simeq \eucl,
  \]
  and orientation determined by the choice of the normal direction $u$.
  The point $[u, -1, h]$ represents the same plane with opposite orientation.
\end{coordinates}

\subsection*{Spheres in Laguerre geometry}
We denote the space of \emph{oriented Euclidean spheres} by
\[
  \osp \coloneqq \text{Oriented spheres}(\eucl).
\]
Let $H \subset \RP^4$ be a hyperplane that does not contain the apex $\PB$ of the Blaschke cylinder $\blac$,
that is a hyperplane of signature $\texttt{(+++-)}$.
Then the projection
\[
  \xi_\opl(H\cap \blac),
\]
is the 2-parameter family of oriented planes touching a fixed oriented sphere in $\eucl$, which we denote by
\[
  \xi_\osp(H).
\]
Thus, hyperplanes that do not contain $\PB$ correspond to oriented spheres in $\eucl$.
If the hyperplane $H$ contains the point $M$, then $\xi_\opl(H\cap \blac)$ consists of all oriented planes through a point in $\eucl$,
and thus $\xi_\osp(H)$ describes a \emph{null-sphere}, which has no orientation.

We now use this correspondence to embed the unit sphere $\unis$ into $\blac$.
Let $S_\unis \subset \RP^4$ be the hyperplane of signature $\texttt{(+++-)}$
through the point $M$ such that $\xi_\osp(S_\unis)$ is the null-sphere at the origin $\ori \in \eucl$.
Thus, every point in $X \in S_\unis \cap \blac$ corresponds to an oriented plane $\xi_\opl(X)$ containing $\ori$ and with normal vector in $\unis$.
In this way, we identify
\[
  \unis \cong S_\unis \cap \blac.
\]
Thus, if we consider the central projection to $S_\unis$ with center $B$
\[
  \pi_{\unis} : \RP^4 \setminus \{\PB\} \rightarrow S_\unis, \quad X \mapsto (X \vee \PB) \cap S_\unis,
\]
its restriction to the Blaschke cylinder is a map
\[
  \blac \setminus \{\PB\} \rightarrow \unis.
\]
And by the above identification, we interpret the image point $\pi_\unis(X)$ as the unit normal direction of the oriented plane $\xi_\opl(X)$,
which in particular determines the choice of orientation.
Consequently, we obtain a decomposition of a point $X \in \blac \setminus \{\PB\}$ into its two projections
\[
  \pi_{\eucl^*}(X), \qquad
  \pi_{\unis}(X),
\]
which correspond to the (non-oriented) plane $\xi_\pl(X) \subset \eucl$ and its unit normal direction,
and together can be identified with the oriented plane
\[
  \xi_\opl(X) = \left(\xi_\pl(X), \pi_{\unis}(X)\right).
\]
Note, that all points on a generator of $\blac$ are mapped to the same point under the map~$\pi_\unis$.
Thus, a generator corresponds to a family of parallel oriented planes with matching orientation.

\begin{coordinates}
  A hyperplane
  \[
    \star [c,r,1] = \set{[x] \in \RP^4}{c_1x_1 + c_2x_2 + c_3x_3 + rx_4 + x_5 = 0} \subset \RP^4,
  \]
  which does not contain the apex $B = [0,0,0,0,1]$,
  corresponds to an oriented sphere $\xi_\osp(\star [c,r,1])$ with center $c \in \R^3$ and signed radius $r \in \R$,
  where positive sign encodes outside orientation.

  The intersection with the Blaschke cylinder
  \[
    \star [c,r,1] \cap \blac = \set{[u,1,h] \in \blac}{\sca{c,u} + r + h = 0},
  \]
  corresponds to all oriented hyperplanes $\xi_\opl([u,1,h])$ in oriented contact with the oriented sphere $\xi_\osp(\star [c,r,1])$.
  In particular, the hyperplane
  \[
    S_{\unis} = \set{[x] \in \RP^4}{x_5 = 0} \ni M,
  \]
  corresponds to the null-sphere at the origin $\ori = (0,0,0) \in \eucl \cong \R^3$.
  
  Every point on 
  \[
    \unis \cong S_{\unis} \cap \blac = \set{[x] \in S_\opl}{x_1^2 + x_2 ^2 + x_3 ^3 = x_4^2},
  \]
  is identified with the unit normal direction of an oriented plane through the origin.

  The two projections of a point $[u,1,h] \in \blac$
  \[
    \pi_{\eucl^*}([u,1,h]) = [u, 0, h],\qquad
    \pi_{\unis}([u,1,h]) = [u, 1, 0],
  \]
  yield a point in dual Euclidean space $[u,h] \in \eucl^*$ and a unit normal direction $[u,1] \in \unis$.
\end{coordinates}

\begin{remark}
  In the same way that hyperplanes in $\RP^4 \setminus \{\PB\}$ can be identified with oriented spheres,
  2-dimensional planes in $\RP^4  \setminus \{\PB\}$ can be identified
  with \emph{oriented cones} (of revolution) in $\eucl$.
\end{remark}

\subsection*{Angled planes}

The \emph{outside} of the Blaschke cylinder is given by
\[
  \blac^+ = \set{[x] \in \RP^4}{ \sca{x,x}_{\blac} > 0}.
\]
For a point $X \in \blac^+$, the polar hyperplane $X^\pol$ is a 3-dimensional space of signature $\texttt{(++-0)}$,
which therefore must contain the apex $\PB \in \blac$.
Thus, unlike in Möbius geometry, there is no identification of hyperplanar sections of $\blac$ with points in $\blac^+$ via polarity.

Still, for our purposes it is practical to have a geometric description of the points in $\blac^+$,
which we present in the following.
The two projections of a point $X \in \blac^+$
\[
  \pi_{\eucl^*}(X), \qquad
  \pi_{\unis}(X),
\]
yield a point in dual Euclidean space,
that corresponds to a (non-oriented) Euclidean plane $\xi_\pl(X)$,
and a point outside the unit sphere, to which we associate the corresponding polar circle on $\unis$,
which we call the \emph{normal circle} $\xi_\nci(X)$.
It is given by the map
\[
  \xi_\nci : \blac^+ \rightarrow \circlesof{\unis},\qquad
  X \mapsto \pi_{\unis}(X)^\perp \cap \unis.
\]

Note that we can define $\xi_\nci$ analogously for $X\in \blac$.
In this case the normal circle $\xi_\nci(X)$ is a circle of radius 0 coinciding with the point $\pi_{\unis}(X)$.
Moreover, we identified a point $X \in \blac$ with an oriented plane given by
$\left(\xi_\pl(X), \pi_{\unis}(X)\right)$.
Analogously, we identify a point $X \in \blac^+$ with the pair $\left(\xi_\pl(X), \xi_{\nci}(X)\right)$,
which we call an \emph{angled Euclidean plane} (see Figure~\ref{fig:angled-planes}).
Thus, we introduce the space of angled Euclidean planes, as the space of (non-oriented) Euclidean planes together with a normal circle, with axis orthogonal to the plane, and denote it by
\[
  \apl \coloneqq \text{Angled planes}(\eucl).
\]
This gives rise to the map
\[
  \xi_\apl : \blac^+ \rightarrow \apl,\qquad
  X \mapsto \left(\xi_\pl(X), \xi_{\nci}(X)\right).
\]

In analogy with Möbius geometry, we consider the plane $\xi_\pl(X)$ to be the \emph{center} of the angled plane,
while the normal circle $\xi_\nci(X)$ -- or the corresponding intersection angle -- corresponds to the radius of a sphere.

We may also choose a normal vector $u$ (an orientation) of the center plane $\xi_\pl(X)$. Then the polar complement of $X$ intersected with $\blac\setminus\{\PB\}$ corresponds to all oriented planes that have a normal vector with a fixed angle $\varphi$ to $u$. For the other orientation of $\xi_\pl(X)$ the fixed angle is $\pi - \varphi$.

\begin{figure}[H]
  \centering
  \begin{overpic}[width=0.98\textwidth]{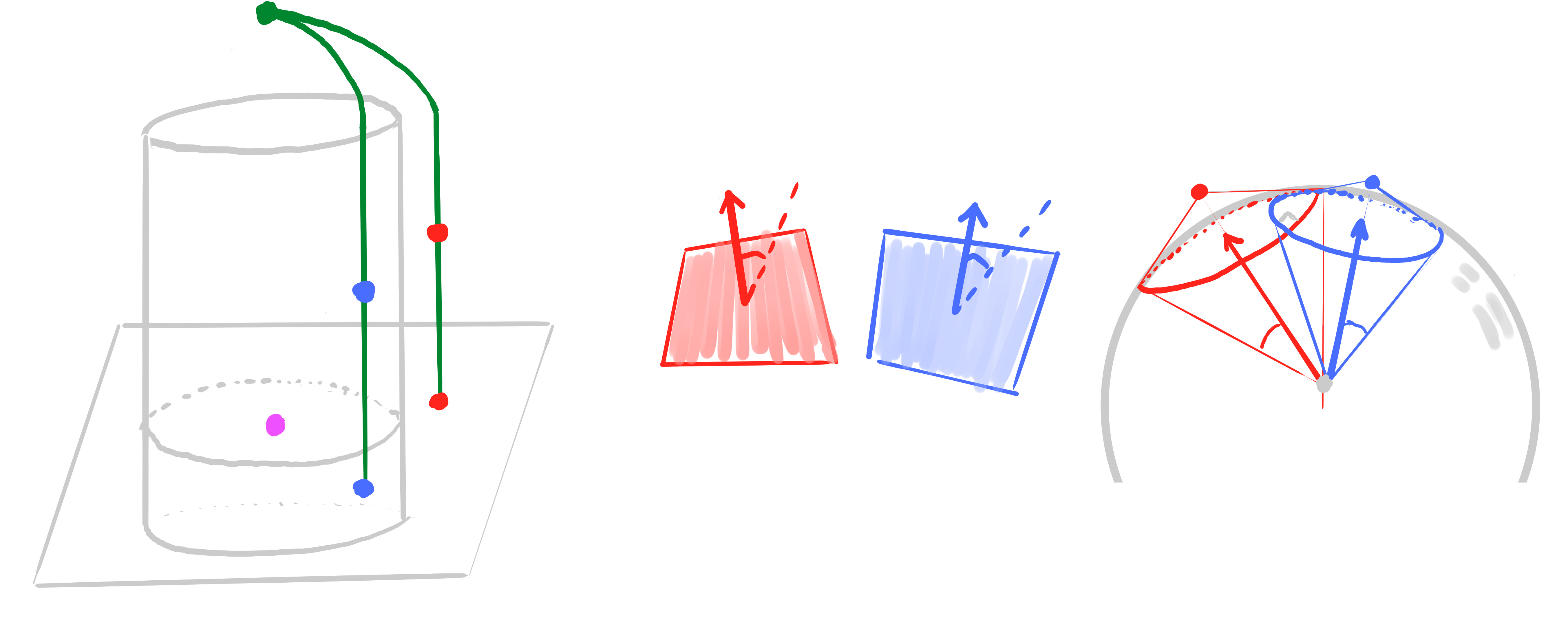}
    \put(1,32){$\blac\subset\RP^4$}
    \put(3,3){$S_\unis$}
    \put(11,15){$\unis$}
    \put(14,38){$\color{darkgreen}\PB$}
    \put(14,11.8){$\color{magenta}\PM$}
    \put(20,20){$\color{blue}X$}
    \put(29,24){$\color{red}X'$}
    \put(14.5,7.5){$\color{blue}\pi_\unis(X)$}
    \put(29,13.6){$\color{red}\pi_\unis(X')$}
    \put(56,12.2){$\color{blue}\xi_\pl(X)$}
    \put(44,13.5){$\color{red}\xi_\pl(X')$}
    \put(62,27.2){$\color{blue}u$}
    \put(45.5,28){$\color{red}u'$}
    \put(63.5,21.5){$\color{blue}\varphi$}
    \put(49,22){$\color{red}\varphi'$}
    \put(97,21){$\unis$}
    \put(86,29.5){$\color{blue}\pi_\unis(X)$}
    \put(74,29){$\color{red}\pi_\unis(X')$}
    \put(87.5,24.5){$\color{blue}u$}
    \put(75.7,22.7){$\color{red}u'$}
    \put(86.2,19.7){$\color{blue}\varphi$}
    \put(79,18){$\color{red}\varphi'$}
  \end{overpic}
  \caption{
    Angled planes in Laguerre geometry.
    A point $X \in \blac^+$ outside the Blaschke cylinder corresponds to an angled plane $\xi_\apl(X)$, which can be represented by the center plane $\xi_\pl(X)$ and the normal circle $\xi_\nci(X)$, or equivalently, by the choice of a unit normal vector $u$ and an angle $\varphi$.
    Polarity of points $X, X' \in \blac^+$ corresponds to the orthogonality of the normal circles $\xi_\nci(X), \xi_\nci(X')$.
  }
  \label{fig:angled-planes}
\end{figure}

Note that every point in $\RP^4$ is polar to $\PB$ and $\pi_{\S^2}$ is a central projection with center $\PB$.
Thus, for $X, X' \in \RP^4 \setminus \{\PB\}$,
\[
  X \perp X' \quad \Leftrightarrow \quad \pi_{\S^2}(X) \perp \pi_{\S^2}(X').
\]
In particular, polarity of two points $X, X' \in \blac^+$ outside the Blaschke cylinder
corresponds to the orthogonality of the normal circles of the two corresponding angled planes (see Figure~\ref{fig:angled-planes}).

\begin{proposition}
  \label{prop:angled-planes-polarity}
  Two points $X, X' \in \blac^+$ are polar with respect to $\blac$ if and only if
  the two normal circles $\xi_\nci(X)$ and $\xi_\nci(X')$ are orthogonal.
\end{proposition}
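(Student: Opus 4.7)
The proof plan is to reduce the claim to the analogous Möbius-geometric statement (Proposition~\ref{prop:orthogonal-spheres}) applied inside the hyperplane $S_\unis$. The observation preceding the proposition does almost all of the conceptual work: since the apex $\PB$ lies in the kernel of $\sca{\cdot,\cdot}_\blac$ (this is why $\blac$ has signature $\texttt{(+++-0)}$ and is a cone with apex $\PB$), every point of $\RP^4$ is polar to $\PB$. Because $\pi_\unis$ is the central projection from $\PB$ onto $S_\unis$, for any $X,X' \in \RP^4 \setminus \{\PB\}$ the polarity $X \pol X'$ with respect to $\blac$ is equivalent to $\pi_\unis(X) \pol \pi_\unis(X')$ inside $S_\unis$ with respect to the restricted bilinear form. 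This is the first step I would spell out.

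Next, I would identify what this restricted polarity means geometrically. By construction $S_\unis = M^\pol$ is a hyperplane not containing $\PB$ and of signature $\texttt{(+++-)}$, so the restriction of $\sca{\cdot,\cdot}_\blac$ to $S_\unis$ turns $(S_\unis, \unis)$ into a Möbius-type model in $\RP^3$, with $\unis = S_\unis \cap \blac$ playing the role of the Möbius quadric. For $X \in \blac^+$ the projection $\pi_\unis(X)$ is a point outside $\unis$ in $S_\unis$ (i.e.\ lies in the $\texttt{(+)}$-region), and the definition $\xi_\nci(X) = \pi_\unis(X)^\pol \cap \unis$ is exactly the assignment ``point outside the quadric $\mapsto$ polar section'' that was used in the Möbius setting to identify points outside $\mobq$ with spheres.

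Now I would invoke Proposition~\ref{prop:orthogonal-spheres} \emph{applied to the sub-model} $(S_\unis, \unis)$: two points outside $\unis$ are polar with respect to $\unis$ if and only if the two corresponding polar sections -- which here are precisely the two normal circles $\xi_\nci(X)$ and $\xi_\nci(X')$ on $\unis$ -- are orthogonal. Combining with the reduction from step one, $X \pol X'$ with respect to $\blac$ iff $\pi_\unis(X) \pol \pi_\unis(X')$ with respect to $\unis$ iff $\xi_\nci(X) \orth \xi_\nci(X')$, which is exactly the claim.

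I expect no serious obstacle: the only subtlety is making sure that the reduction step is stated carefully enough that the reader sees that polarity truly descends to the quotient by the kernel direction $\PB$, and that the Möbius-type Proposition~\ref{prop:orthogonal-spheres} can be reused verbatim in the three-dimensional sub-model $(S_\unis, \unis)$ rather than being reproved. If desired, one could alternatively give a one-line coordinate verification using $[u,1,h]$-type coordinates, showing $\sca{X,X'}_\blac = \sca{u,u'} - 1$ up to scale and comparing with the standard condition for two circles on $\unis$ to be orthogonal.
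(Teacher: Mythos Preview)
Your proposal is correct and follows essentially the same approach as the paper: the paper states the reduction $X \pol X' \Leftrightarrow \pi_\unis(X) \pol \pi_\unis(X')$ (via $\PB$ being in the kernel) immediately before the proposition and treats the proposition as a direct consequence, with the coordinate box afterwards supplying the spherical-Pythagorean verification you also mention. Your write-up is slightly more explicit in invoking Proposition~\ref{prop:orthogonal-spheres} on the sub-model $(S_\unis,\unis)$, but this is the same argument.
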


Another way to obtain a geometric description of an angled plane $\xi_\apl(X)$ with $X \in \blac^+$
is to consider the three-parameter family of hyperplanes that contain $X$.
The corresponding oriented spheres intersect the (non-oriented) Euclidean plane $\xi_\pl(X) \subset \eucl$
in a constant angle determined by the normal circle $\xi_\nci(X)$ in the following way.

\begin{proposition}
  Let $X \in \blac^+$ be a point outside the Blaschke cylinder and $H$ a hyperplane of $\RP^4$ that contains $X$ but not the apex $\PB$.
  Then the normal vector of the oriented sphere $\xi_\osp(H)$ at a point in the intersection with the (non-oriented) Euclidean plane $\xi_\pl(X)$
  lies on the normal circle $\xi_\nci(X)$.
\end{proposition}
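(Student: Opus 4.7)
The plan is to reduce the statement to a single linear identity using the coordinate description of Laguerre geometry introduced in Section~\ref{sec:laguerre}. First I would normalize $X$ so that $X_4 = 1$, writing $X = [v, 1, X_5]$ with $v \in \R^3$, $|v|^2 > 1$. In these coordinates the center plane and the normal circle become
\[
  \xi_\pl(X) = \set{y \in \R^3}{\sca{v, y} + X_5 = 0}, \qquad
  \xi_\nci(X) = \set{w \in \unis}{\sca{v, w} = 1},
\]
where the second description uses that $\xi_\nci(X)$ is the polar of $\pi_\unis(X) = [v, 1, 0]$ on $\unis$. Because $H$ does not contain $\PB$, it can be written as $H = \set{Y \in \RP^4}{\sca{C, (Y_1, Y_2, Y_3)} + R Y_4 + Y_5 = 0}$ for some $C \in \R^3$ and $R \in \R$, so that $\xi_\osp(H)$ is the oriented sphere with center $C$ and signed radius $R$.

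The key observation is then that the condition $X \in H$ translates to
\[
  \sca{C, v} + R + X_5 = 0.
\]
For any point $y \in \xi_\pl(X) \cap \xi_\osp(H)$, the signed unit normal of the oriented sphere at $y$ is $w \coloneqq (y - C)/R$, which lies on $\unis$ since $|y - C| = |R|$. Combining $\sca{v, y} = -X_5$ (from $y \in \xi_\pl(X)$) with the displayed identity gives $\sca{v, y - C} = R$, and hence $\sca{v, w} = 1$, so $w \in \xi_\nci(X)$, as required.

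The computation is essentially a one-liner, so I expect no real obstacle beyond orientation bookkeeping: one has to check that dividing by the \emph{signed} radius $R$, rather than $|R|$, picks out the unit vector matching the orientation of $\xi_\osp(H)$, consistently with the sign conventions for $\xi_\osp$ and $\xi_\opl$ used in the coordinate box. A parallel coordinate-free picture is also available: the intersection $\xi_\pl(X) \cap \xi_\osp(H)$ is a circle whose sphere-normals form a cone with apex $C$, axis along $u = v/|v|$ and half-angle $\varphi$ with $\cos\varphi = 1/|v|$, which is exactly the angle characterizing $\xi_\nci(X)$ as a circle on $\unis$ with axis $u$.
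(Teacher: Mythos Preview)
Your proof is correct and takes a genuinely different route from the paper. The paper argues synthetically: for a point $P$ in the intersection, it takes the oriented tangent plane $\vec E$ of $\xi_\osp(H)$ at $P$, lifts it to $\hat E\in\blac$, and observes that the null-sphere hyperplane $\hat P$ through $P$ and $H$ are tangent along $\hat E$, so $\hat P\cap H\subset \hat E^\perp$; since $X\in\hat P\cap H$ this gives $X\perp\hat E$ and hence $\pi_\unis(X)\perp\pi_\unis(\hat E)$. Your argument bypasses all of this and reduces the statement to the single linear identity $\sca{C,v}+R+X_5=0$ coming from $X\in H$, which together with $\sca{v,y}=-X_5$ immediately yields $\sca{v,(y-C)/R}=1$.

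What each buys: the paper's proof is coordinate-free and makes transparent why polarity with respect to $\blac$ is the ``right'' relation; it also fits the Klein--Erlangen spirit of the surrounding text. Your approach is more elementary and essentially a one-line computation once the coordinate dictionary is in place; it also makes the angle interpretation explicit via $\cos\varphi=1/|v|$. One small point to tidy up: normalizing $X_4=1$ excludes the case $X_4=0$ (which corresponds to $\varphi=\pi/2$, i.e.\ the normal circle is a great circle). This is easily handled either by using the paper's normalization $X=[u,\cos\varphi,h]$ with $|u|=1$, in which the same computation gives $\sca{u,(y-C)/R}=\cos\varphi$ with no exceptional case, or by treating $X_4=0$ separately (there $\sca{u,y-C}=0$ directly). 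Your orientation bookkeeping is fine: the signed normal $(y-C)/R$ matches the convention that the tangent plane at the contact point $y=C+Ru$ has normal $u$.
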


\begin{proof}
  Consider a point $P \in \xi_\osp(H) \cap \xi_\pl(X)$,
  and the oriented plane $\vec E$ through $P$ in oriented contact with $\xi_\osp(H)$.

  Let $\hat{E} \in \blac$ be the point such that $\xi_\opl(\hat{E}) = \vec E$,
  and let $\hat{P}$ be the hyperplane in $\RP^4$ such that $\xi_\osp(\hat{P}) = P$.

  That $\vec E$ is the oriented tangent plane of $\xi_\osp(H)$ in $P$ means that $\hat{P} \cap \blac$ and $H \cap \blac$ are tangent in $\hat{E}$,
  and in particular,
  \[
    \hat{P} \cap H \subset \hat{E}^\perp.
  \]
  Because $\hat{P}$ corresponds to a point, $\hat{P}$ must contain $\PM$.
  And since $P$ is on $\xi_\pl(X)$, $\hat{P}$ must contain $X$.
  Thus, $X \in \hat{P} \cap H$, and therefore
  \[
    X \perp \hat{E},
  \]
  which is equivalent to
  \[
    \pi_\unis(X) \perp \pi_\unis(\hat{E}).
  \]
  The normal vector $\pi_{\unis}(\hat{E})$ of $\vec E$ is the normal vector of $\xi_\osp(H)$ at $P$.
  Thus, $\pi_{\unis}(\hat{E})$ lies on the normal circle $\xi_\ci(X)$.
\end{proof}

As a result of the proposition, we obtain a correspondence between an angled plane $\xi_\apl(X)$ for $X \in \blac^+$
and the set of all oriented spheres that intersect the plane $\xi_\pl(X)$ in a constant angle,
which is determined by the normal circle $\xi_\nci(X)$.

\begin{coordinates}
  A point $[u, \cos\varphi, h] \in \blac^+$ with $u \in \S^2$, $\varphi \in [0,\pi)$, $h \in \R$
  represents an angled plane $\xi_\apl([u, \cos\varphi, h])$ with center plane
  \[
    \xi_{\pl}([u, \cos\varphi, h]) = \xi_{\pl}([u, \pm 1, h]),
  \]
  and normal circle
  \[
    \xi_{\nci}([u, \cos\varphi, h]) = [u, \cos\varphi, 0]^\perp \cap \S^2,
  \]
  which is a circle on $\S^2$ with spherical radius $\varphi$. Moreover, by choosing one of the oriented planes $\xi_{\opl}([u, \pm 1, h])$, we see that the polar complement of $[u, \cos\varphi, h]$ intersected with $\blac \setminus \{\PB\}$ corresponds to all oriented planes
	\begin{align}
		\set{[\tilde u, 1, \tilde h] \in \blac}{ \sca{u, \tilde u} = \cos\varphi}.
	\end{align}  
	The normal vectors of these oriented planes have a fixed angle $\varphi$ to the normal vector of the oriented plane $\xi_{\opl}([u, + 1, h])$, or equivalently a fixed angle $\pi - \varphi$ to the normal vector of the oriented plane $\xi_{\opl}([u, - 1, h])$.

  Note that in affine coordinates $x_5 = 1$, the points corresponding to the angled planes with a fixed angle $\varphi$
  lie on a cylinder concentric to $\blac$.

  A hyperplane $\star [c,r,1]$ containing the point $[u, \cos\varphi, h]$ must satisfy
  \[
    \sca{c, u} + r\cos\varphi + h = 0,
  \]
  and thus corresponds to a sphere with normal vectors that have angle $\varphi$ with $\xi_{\opl}([u, + 1, h])$.
  Two points $[u, \cos\varphi, h], [u', \cos\varphi', h'] \in \blac^+$ are polar if and only if
  \[
    \sca{u, u'} = \cos\varphi \cos\varphi',
  \]
  which by the spherical Pythagorean theorem is equivalent to the orthogonality of the two spherical circles
  with center $u, u'$ and radii $\varphi, \varphi'$.
\end{coordinates}

For a point $X \in \blac^-$ inside the Blaschke cylinder,
the polar hyperplane does not intersect $\blac$ in any real points apart from $\PB$.
Yet the projection $\pi_{\eucl^*}(X) \in \eucl^*$ still yields a real point in dual Euclidean space.
The point $X$, can be interpreted as an \emph{imaginary angled plane} with real center and imaginary normal circle. The imaginary normal circle is an imaginary circle in $\S^2$ analogous to imaginary spheres as discussed in Section~\ref{sec:moebius}.

\begin{remark}
  Let $X \in \blac^- \setminus \ell_\blac$ be a point inside the Blaschke cylinder
  and $H$ a hyperplane of $\RP^4$ that contains $X$ but not the apex $\PB$.
  Let $E = \xi_{\pl}(X)$ and $\vec S = \xi_\osp(H)$,
  and consider the line $L$ through the center of $\vec S$ perpendicular to $E$.
  Let $\cone$ be the oriented cone with apex $E \cap L$ in oriented contact with $\vec S$.
  We call the (signed) opening angle of $\cone$ the \emph{widest angle} of $\vec S$ on $E$.

  Every hyperplane $H' \subset \RP^4 \setminus \{\PB\}$ that contains $X$
  corresponds to an oriented sphere $\xi_\osp(H')$ that has the same widest angle on $E$ as $\vec S = \xi_\osp(H)$.
  Thus, there is a correspondence between an imaginary angled plane $\xi_\apl(X)$ and the set of all oriented spheres
  with constant widest angle on $\xi_\pl(X)$.
\end{remark}

If we extend the set of angled planes $\apl$ by the set of oriented planes and imaginary angled planes,
the map $\xi_\apl$ can be extended to the entire space excluding the axis of the Blaschke cylinder $\ell_\blac$
\[
  \xi_\apl : \RP^4\setminus \ell_\blac \rightarrow \apl.
\]
The orthogonality of angled planes is extended to imaginary angled planes, by the orthogonality of their (possibly imaginary) normal circles, and in this way Proposition~\ref{prop:angled-planes-polarity} generalizes in the following way.
\begin{proposition}
  \label{prop:orthogonal-circles-ext}
  Two points $X, X' \in \RP^4\setminus \ell_\blac$ are polar with respect to $\blac$ if and only if
  the two corresponding (possibly imaginary) normal circles $\xi_\nci(X)$ and $\xi_\nci(X')$ are orthogonal.
\end{proposition}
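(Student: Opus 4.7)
The plan is to reduce the claim to Proposition~\ref{prop:angled-planes-polarity} (which handles the case $X, X' \in \blac^+$) by passing to the projection $\pi_\unis$, and then to extend the polarity/orthogonality correspondence to imaginary circles in exact analogy with the treatment of imaginary spheres in Section~\ref{sec:moebius}.

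First I would use the fact, noted in the paragraph preceding Proposition~\ref{prop:angled-planes-polarity}, that $\PB$ lies in the kernel of $\sca{\cdot, \cdot}_\blac$ (since $\blac$ is a cone with apex $\PB$). Consequently $\PB$ is polar to every point of $\RP^4$, and for any $X \in \RP^4 \setminus \{\PB\}$ the image $\pi_\unis(X) = (X \vee \PB) \cap S_\unis$ represents the same projective polarity class as $X$ relative to $\blac$. In particular, for $X, X' \in \RP^4 \setminus \ell_\blac$,
\[
  X \pol X' \quad\Longleftrightarrow\quad \pi_\unis(X) \pol \pi_\unis(X'),
\]
where the right-hand polarity is taken in $S_\unis$ with respect to the restriction of $\sca{\cdot,\cdot}_\blac$, which has signature $\texttt{(+++-)}$ and defines the quadric $\unis$. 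The hypothesis $X \notin \ell_\blac$ ensures that $\pi_\unis(X) \neq \PM$, so no degenerate case arises.

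Next I would invoke the definition $\xi_\nci(X) = \pi_\unis(X)^\pol \cap \unis$, so that the problem is reduced to showing: two points $Y, Y' \in S_\unis$ (with $Y, Y' \neq \PM$) are polar with respect to $\unis$ if and only if the (possibly imaginary) circles $Y^\pol \cap \unis$ and $Y'^\pol \cap \unis$ are orthogonal. This is the one-dimension-down analogue of Proposition~\ref{prop:orthogonal-spheres-ext} for the 3-dimensional projective model $S_\unis \simeq \RP^3$ with quadric $\unis$ in place of $\mobq$. For $Y, Y' \in \unis^+$ (outside the sphere) it is the classical statement that polarity with respect to a sphere coincides with orthogonality of the polar circles, and the case $X, X' \in \blac^+$ of Proposition~\ref{prop:angled-planes-polarity} is precisely this. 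For the remaining cases one adopts the extended definition of orthogonality of circles along the lines of Remark~\ref{rem:orthogonal-imaginary-sphere}: an imaginary circle is represented by its polar plane, and a (real or imaginary) circle is declared orthogonal to an imaginary circle if and only if the real circle intersects the polar representative of the imaginary one in a great circle of the representative. With this definition, the equivalence with polarity extends without change from the real case.

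The only real work is bookkeeping at the boundary between real and imaginary normal circles, which I would handle by writing everything in the coordinates of Coordinates~\ref{coo:mobiusspheres} (adapted to $S_\unis$): with $Y = [c, \tau]$ and corresponding spherical data $(u, \cos\varphi)$ as in the Blaschke-cylinder coordinates, polarity $\sca{Y,Y'}_\blac = 0$ becomes $\sca{u,u'} = \cos\varphi \cos\varphi'$, which by the spherical Pythagorean theorem is exactly the orthogonality relation between the two spherical circles (of center $u, u'$ and spherical radii $\varphi, \varphi'$), whether the radii are real or imaginary. The main obstacle, such as it is, lies in verifying that the extended notion of orthogonality of imaginary circles is genuinely the right one to make the equivalence unconditional; once this is fixed as in Remark~\ref{rem:orthogonal-imaginary-sphere}, the proposition follows immediately from the two displayed reductions above.
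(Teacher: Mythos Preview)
Your proposal is correct and follows essentially the same approach as the paper. The paper does not give a formal proof of this proposition; the argument is the reduction via $\pi_\unis$ stated just before Proposition~\ref{prop:angled-planes-polarity} (polarity with respect to $\blac$ is preserved under projection from the apex $\PB$), together with the explicit convention that orthogonality of imaginary normal circles is \emph{defined} so as to extend the real case, exactly as you outline.
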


\begin{remark}
  Let $X \in \blac^+, X'\in \blac^-$.
  Then the circle $\xi_\nci(X)$ is orthogonal to the imaginary normal circle $\xi_\nci(X')$
  if and only if $\xi_\nci(X)$ intersects the real representative of $\xi_\nci(X')$ in opposite points on $\xi_\nci(X')$
  (see also Remark~\ref{rem:orthogonal-imaginary-sphere}).
\end{remark}

\begin{coordinates}
  A point $[u, (\sin\alpha)^{-1}, h] \in \blac^-$ with $u \in \S^2$, $\alpha \in (-\tfrac{\pi}{2},0) \cup (0,\tfrac{\pi}{2})$, $h \in \R$
  represents an imaginary angled plane $\xi_\apl([u, (\sin\alpha)^{-1}, h])$
  with real center plane and imaginary normal circle.
  
  A hyperplane $\star [c,r,1]$ containing the point $[u, (\sin\alpha)^{-1}, h]$ must satisfy
  \[
    \sca{c, u} + \frac{r}{\sin\alpha} + h = 0,
  \]
  and thus corresponds to a sphere of \emph{widest angle} $\alpha$ from the center plane.
\end{coordinates}

\section{Laguerre lift of orthogonal bi*nets}
\label{sec:laguerrelift}

We use the projective model of Laguerre geometry $\blac \subset \RP^4$,
and embed the dual Euclidean space $\eucl^* \subset S_{\eucl^*} \subset \RP^4$ as described in Section~\ref{sec:laguerre}.

We identified points in $\RP^4 \setminus \ell_\blac$ with angled planes in $\eucl$
by means of the map $\xi_\apl$.
Thus, the points of a binet in $\RP^4$ can be represented in terms of angled planes.
The polarity of two points representing two angled planes can be
described in terms of the orthogonality of their normal circles,
which are given by the map $\xi_\nci$.
This leads to an \emph{orthogonal circle representation} of polar binets in Laguerre geometry (see Figure~\ref{fig:principal-cones}, left).
\begin{lemma}
  Let $b_\blac : D \rightarrow \RP^4 \setminus \ell_\blac$ be a polar binet
  with respect to the Blaschke cylinder $\blac \subset \RP^4$.
  Let
  \[
    b_\apl \coloneqq \xi_\apl \circ b_\blac,
    \qquad
    b_\nci \coloneqq \xi_\nci \circ b_\blac,
  \]
  be the binet of its corresponding angled planes
  and its orthogonal circle representation respectively.
  Let
  \[
    b_{\eucl^*} \coloneqq \pi_{\eucl^*} \circ b_\blac,
  \]
  be the projection of $b_\blac$ to the dual Euclidean space $\eucl^*$.
  Then
  \begin{enumerate}
  \item
    the two circles $b_\nci(d)$ and $b_\nci(d')$ intersect orthogonally for all incident $d, d' \in D$,
  \item
    $\star b_{\eucl^*}(d)$ is the center plane of $b_\apl(d)$ for all $d \in D$.\qedhere
  \end{enumerate}
\end{lemma}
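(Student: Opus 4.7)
The plan is to verify each claim by unpacking the definitions of the maps $\xi_\apl$, $\xi_\nci$, and $\pi_{\eucl^*}$ established in Section~\ref{sec:laguerre}, and then for part~(i) invoking Proposition~\ref{prop:orthogonal-circles-ext} directly. This lemma is the Laguerre-geometric analogue of Lemma~\ref{lem:orthogonal-sphere-representation}, and the proof follows the same two-line pattern.

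For part~(i), I would start from the hypothesis that $b_\blac$ is a polar binet with respect to $\blac$, so that $b_\blac(d) \pol b_\blac(d')$ for all incident $d, d' \in D$. Since the image of $b_\blac$ lies in $\RP^4 \setminus \ell_\blac$, Proposition~\ref{prop:orthogonal-circles-ext} applies (which handles both the real and the imaginary normal circle cases in a unified way) and gives the orthogonality of the circles $b_\nci(d) = \xi_\nci(b_\blac(d))$ and $b_\nci(d') = \xi_\nci(b_\blac(d'))$. No further computation is required.

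For part~(ii), I would simply recall that $\xi_\apl$ was defined as the pair $\xi_\apl(X) = \left( \xi_\pl(X), \xi_\nci(X) \right)$, where the center plane is $\xi_\pl(X) = \star \pi_{\eucl^*}(X)$. Evaluating at $X = b_\blac(d)$ yields $\xi_\pl(b_\blac(d)) = \star \pi_{\eucl^*}(b_\blac(d)) = \star b_{\eucl^*}(d)$, which is exactly the center plane of $b_\apl(d)$.

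There is essentially no obstacle in this proof; the statement is almost tautological once Proposition~\ref{prop:orthogonal-circles-ext} and the decomposition $\xi_\apl = (\xi_\pl, \xi_\nci)$ are in hand. The only minor point to note is that the hypothesis $b_\blac(d) \in \RP^4 \setminus \ell_\blac$ is exactly what is needed for all three maps $\xi_\apl$, $\xi_\nci$, and $\pi_{\eucl^*}$ to be well-defined on the image of $b_\blac$, so the compositions defining $b_\apl$, $b_\nci$, and $b_{\eucl^*}$ make sense without further assumption.
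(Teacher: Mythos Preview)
Your proposal is correct and follows essentially the same approach as the paper: part~(i) is an immediate application of Proposition~\ref{prop:orthogonal-circles-ext}, and part~(ii) is just unpacking the definition $\xi_\pl = \star \circ \pi_{\eucl^*}$ from Section~\ref{sec:laguerre}. The paper's proof is the same two-line argument, stated slightly more tersely.
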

\begin{proof}\
  \begin{enumerate}
  \item
    Follows from Proposition~\ref{prop:orthogonal-circles-ext}.
  \item
    As discussed in Section~\ref{sec:laguerre}, for $X \in \RP^4 \setminus \ell_\blac$ the point $\pi_{\eucl^*}(X)$ is the dual of the center plane of the angled plane $\xi_\apl(X)$.\qedhere
  \end{enumerate}
\end{proof}

Thus, the orthogonal circle representation of a polar binet in Laguerre geometry
can be viewed as an $\unis$ analogue of the orthogonal sphere representation in Möbius geometry.
Note however, that the orthogonal circle representation does not
contain the full information of the polar binet.

Instead of the normal circles of the angled planes, we can also look at the projection $\pi_\unis$,
which yields the poles of the planes of the normal circles with respect to $\unis$.
\begin{lemma}
  \label{lem:laguerre-lift-projection}
  Let $b_\blac : D \rightarrow \RP^4 \setminus \ell_\blac$ be a polar binet
  with respect to the Blaschke cylinder $\blac \subset \RP^4$.
  Then the dual of its projection
  \[
    b:  D \rightarrow \pl, \quad b = \star (\pi_{\eucl^*} \circ b_\blac),
  \]
  is an orthogonal bi*net,
  and the projection
  \[
    n \coloneqq \pi_\unis \circ b_\blac,
  \]
  is a normal net of $b$.
\end{lemma}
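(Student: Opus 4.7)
The plan is to verify the two defining conditions of a normal binet for $n$ and then invoke Theorem~\ref{th:orthonormals} to obtain that $b$ is an orthogonal bi*net.

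First I would establish polarity of $n$ with respect to $\unis$. The key observation is that the apex $\PB$ of the Blaschke cylinder lies in the radical of $\sca{\cdot,\cdot}_\blac$, so it is polar to every point of $\RP^4$. Hence the central projection $\pi_\unis$ with center $\PB$ preserves the polarity relation: for any $X, X' \in \RP^4 \setminus \{\PB\}$, one has $X \pol X'$ with respect to $\blac$ if and only if $\pi_\unis(X) \pol \pi_\unis(X')$ with respect to the restricted quadric $\blac \cap S_\unis = \unis$ — this is the same principle already used just before Proposition~\ref{prop:angled-planes-polarity}. Applied to incident $d, d' \in D$, this gives $n(d) \pol n(d')$ with respect to $\unis$, which is condition~\ref{def:normal-binet1} of Definition~\ref{def:normal-binet}.

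Next I would verify the orthogonality $(\ori \vee n(d)) \orth b(d)$ for every $d \in D$ using the coordinates of Section~\ref{sec:laguerre}. Writing $b_\blac(d) = [u, s, h]$ with $u \in \R^3$, $s, h \in \R$, the hypothesis $b_\blac(d) \notin \ell_\blac$ forces $u \neq 0$. Then $\pi_{\eucl^*}([u, s, h]) = [u, 0, h]$ is dual to the plane $b(d) = \set{y \in \R^3}{\sca{u, y} + h = 0}$ with normal direction $u$, while $\pi_\unis([u, s, h]) = [u, s, 0]$ represents the point $n(d) = u/s$ in the affine chart of $S_\unis$. Thus $\ori \vee n(d)$ is the line through the origin in direction $u$, manifestly orthogonal to $b(d)$. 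This gives condition~\ref{def:normal-binet2}.

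Having confirmed that $n$ is a normal binet of $b$, Theorem~\ref{th:orthonormals} then forces $b$ to be an orthogonal bi*net. The main obstacle I expect is not the geometric content but the bookkeeping of regularity: Theorem~\ref{th:orthonormals} is stated for regular bi*nets, and one has to argue from appropriate regularity of $b_\blac$ — together with its image avoiding $\ell_\blac$ — that $b$ satisfies Definitions~\ref{def:starneteuclidean} and \ref{def:bistarneteuclidean}, in particular that incident pairs of planes are neither parallel nor orthogonal. I expect this to be a routine but slightly fiddly genericity check analogous to Lemma~\ref{lem:regnormalnet}.
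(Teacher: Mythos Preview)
Your proposal is correct and follows essentially the same route as the paper: first verify the two conditions of Definition~\ref{def:normal-binet} for $n$ (polarity via the fact that $\pi_\unis$ preserves $\blac$-polarity, and the orthogonality $(\ori \vee n(d)) \orth b(d)$), then deduce that $b$ is an orthogonal bi*net from Theorem~\ref{th:orthonormals}. The only difference is how the regularity issue you flag is handled: rather than checking that $b$ is a regular bi*net, the paper simply invokes the arguments of the ($\Rightarrow$) direction in the proof of Theorem~\ref{th:orthonormals} directly, since that direction uses only the normal-binet conditions and not the regularity hypothesis.
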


\begin{proof}
	We prove that $n$ is a normal binet of $b$ first. Note that the projection $\pi_\unis$ preserves polarity, therefore $n$ is a polar binet. Moreover, by definition  for $d\in D$ the plane $b(d)$ is orthogonal to the normal $n(d)$. Therefore the conditions of Definition~\ref{def:normal-binet} are satisfied. Additionally, using the arguments of the ($\Rightarrow$) direction in the proof of Theorem~\ref{th:orthonormals}, $b$ is indeed an orthogonal bi*net.
\end{proof}

Analogous to the corresponding relation in Möbius geometry,
we obtain a close relation between polar binets in Laguerre geometry and orthogonal bi*nets in $\eucl$ (see Figure~\ref{fig:laguerre-lift}).

\begin{figure}[H]
  \centering
  \begin{overpic}[width=0.4\textwidth,trim={0 220 0 0},clip]{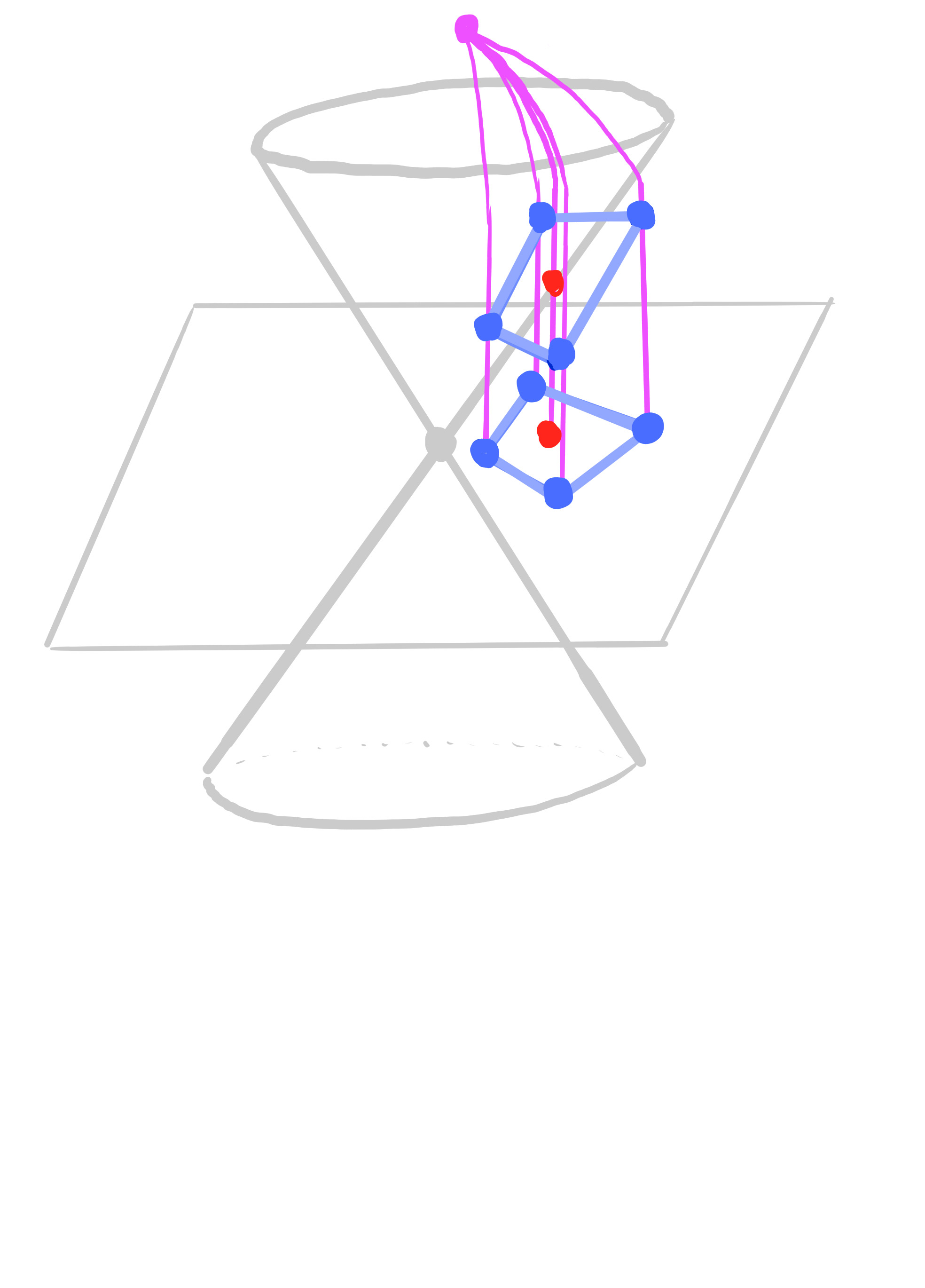}
    \put(71,9){$\blac \subset \RP^4$}
    \put(8,25){$\eucl^* \subset S_{\eucl^*}$}
    \put(42,88){$\color{magenta}\PM$}
    \put(72,44){$\color{blue}\star b(d)$}
    \put(72,67){$\color{blue}b_\blac(d)$}
  \end{overpic}
  \caption{
    Laguerre lift $b_\blac$ of a bi*net $b$.
    A point $\star b(d) \in \eucl^*$ is lifted to a point $b_\blac(d) \in \RP^4$
    on the line $\star b(d) \vee \PM$.
  }
  \label{fig:laguerre-lift}
\end{figure}
\begin{definition}[Laguerre lift]
  Let $b: D \rightarrow \pl$ be a bi*net.
  Then a binet $b_\blac: D \rightarrow \RP^4 \setminus \ell_\blac$
  is called a \emph{Laguerre lift} of $b$ if
  \begin{enumerate}
  \item
    $b_\blac$ is a polar binet with respect to $\blac$,
  \item
    and $b$ is the dual of the projection of $b_\blac$, that is, $\star (\pi_{\eucl^*} \circ b_\blac) = b$.\qedhere
  \end{enumerate}
\end{definition}

Lemma~\ref{lem:laguerre-lift-projection} implies that the condition that $b$
is an orthogonal bi*net is necessary for the existence of a Laguerre lift $b_\blac$.
The following theorem shows that it is also sufficient,
and thus guarantees the existence of a Laguerre lift for an orthogonal bi*net.

\begin{theorem}
  \label{thm:laguerre-lift}
  Let $b : D \rightarrow \pl$ be a regular bi*net.
  Then a Laguerre lift $b_\blac : D \rightarrow \RP^4 \setminus \ell_\blac$ exists
  if and only if $b$ is a regular orthogonal bi*net.
\end{theorem}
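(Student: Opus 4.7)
The plan is to mirror the proof of Theorem~\ref{th:mobiusortho}, with the normal binet supplied by Theorem~\ref{th:orthonormals} playing the role of the orthogonal sphere representation, and with the axis $\ell_\blac$ playing the role of the single point $\PB$ in the Möbius case.

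The forward direction is essentially immediate from Lemma~\ref{lem:laguerre-lift-projection}: if $b_\blac$ is a polar binet with respect to $\blac$, then $b = \star(\pi_{\eucl^*} \circ b_\blac)$ is an orthogonal bi*net. The corresponding regularity claims would be verified by tracking dimensions under the projection $\pi_{\eucl^*}$, in the same spirit as Lemma~\ref{lem:regnormalnet}.

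For the reverse direction, assume $b$ is a regular orthogonal bi*net. Theorem~\ref{th:orthonormals} produces a normal binet $n : D \to \eucl \setminus \{\ori\}$ of $b$, determined up to a one-parameter rescaling. I would define $b_\blac(d)$ as the unique point in $\RP^4 \setminus \ell_\blac$ whose two central projections satisfy $\pi_{\eucl^*}(b_\blac(d)) = \star b(d)$ and $\pi_\unis(b_\blac(d)) = n(d)$, where the latter equality uses the natural identification $\eucl \setminus \{\ori\} \hookrightarrow S_\unis \setminus \{\PM\}$. Concretely, $b_\blac(d)$ is the intersection of the two fibers $\star b(d) \vee \PM$ and $n(d) \vee \PB$, which is a single well-defined point off $\ell_\blac = \PB \vee \PM$ since $\PB \neq \PM$ and $n(d) \neq \ori$. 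By construction $\star(\pi_{\eucl^*} \circ b_\blac) = b$, so it only remains to check polarity.

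The key step is the verification that $b_\blac$ is a polar binet with respect to $\blac$. The crucial observation is that because the apex $\PB$ is the degenerate direction of the Blaschke form and is also the center of the projection $\pi_\unis$, polarity of two points in $\RP^4 \setminus \{\PB\}$ with respect to $\blac$ is equivalent to polarity of their $\pi_\unis$-images with respect to the unit sphere $\unis \subset S_\unis$. Applied to incident $d, d' \in D$, this reduces the condition $b_\blac(d) \perp b_\blac(d')$ to $n(d) \perp n(d')$ with respect to $\unis$, which is precisely the defining polarity of the normal binet. Thus $b_\blac$ is a Laguerre lift of $b$, and the construction inherits the one-parameter freedom from the rescaling of $n$.

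The principal subtlety, rather than a real obstruction, is justifying this reduction of $\blac$-polarity to $\unis$-polarity through $\pi_\unis$. In coordinates it amounts to the transparent identity $\sca{b_\blac(d), b_\blac(d')}_\blac = \sca{u(d), u(d')} - \sigma(d)\sigma(d')$, whose vanishing is exactly the normal binet condition $\sca{u(d), u(d')} = \sigma(d)\sigma(d')$ from Coordinates~\ref{coo:normallift}. Once this reduction is in hand, the existence and well-definedness of the Laguerre lift follows directly from Theorem~\ref{th:orthonormals}, without the need for an extra well-definedness argument around a cross.
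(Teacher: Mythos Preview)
Your proposal is correct and follows essentially the same route as the paper: both directions use Lemma~\ref{lem:laguerre-lift-projection} and Theorem~\ref{th:orthonormals}, and the Laguerre lift is constructed as the intersection $(\PB \vee n(d)) \cap (\PM \vee \star b(d))$. Your argument is in fact more explicit than the paper's, which simply states the formula for $b_\blac(d)$ without spelling out why polarity holds; you correctly identify and justify the key reduction (already noted in Section~\ref{sec:laguerre}) that $\blac$-polarity is equivalent to $\unis$-polarity of the $\pi_\unis$-images, since $\PB$ is the degenerate direction of the Blaschke form.
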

\begin{proof}\ \linebreak
  ($\Rightarrow$) 
  Lemma~\ref{lem:laguerre-lift-projection} implies that $b$ is an orthogonal bi*net.
  \newline
  ($\Leftarrow$) 
  By Theorem~\ref{th:orthonormals}, the regular orthogonal bi*net $b$ admits a normal binet $n$.
  The Laguerre lift $b_\blac$ is then uniquely determined by the two conditions
  $
  \pi_\unis \circ b_\blac = n
  $
  and
  $
  \pi_{\eucl^*} \circ b_\blac  = \star b
  $.
  More specifically,
  \[
    b_\blac(d) = (\PB \vee n(d)) \cap (\PM \vee \star b(d)).\qedhere
  \]
\end{proof}

\begin{remark}
  It is also possible to prove Theorem~\ref{thm:laguerre-lift}
  in a way analogous to Theorem \ref{th:mobiusortho}.
  Yet, once Theorem~\ref{th:orthonormals} on normal binets is established,
  the given proof is even simpler.
\end{remark}

The proof of Theorem \ref{thm:laguerre-lift} uses Theorem \ref{th:orthonormals}, which shows that for each orthogonal bi*net there is a 1-parameter family of normal binets. Hence, there is a 1-parameter family of Laguerre lifts for each orthogonal bi*net.

\begin{coordinates}
  Let $b: D \rightarrow \pl$ be an orthogonal bi*net
  let $n: D \rightarrow \R^3 \setminus \{\ori\}$ be a normal binet,
  and $u: D \rightarrow \unis$ be a unit-normal binet (see Coordinates~\ref{coo:normallift}).
  Thus,
  \[
    \abs{u(d)} = 1,
    \quad\text{and}\quad
    n(d) = \frac{u(d)}{\sigma(d)}
  \]
  with some function $\sigma(d) \neq 1$.
  Let the function $h: D \rightarrow \R$ be determined by
  \[
    b(d) = \set{x \in \eucl}{\sca{u(d), x} + h(d) = 0}.
  \]
  Then the Laguerre lift $b_\blac$ of $b$ is given by
  \[
    b_\blac(d) = [u(d), \sigma(d), h(d)] = [n(d), 1, \tfrac{h(d)}{\sigma(d)}].
  \]  
  The planes are recovered by the projection
  \[
    \star b(d) \cong (\pi_{\eucl^*} \circ b_\blac)(d) = [u(d), 0 ,h(d)],
  \]
  and the normal binet by
  \[
    n(d) \cong (\pi_{\unis} \circ b_\blac)(d) = [u(d), \sigma(d), 0] = [n(d), 1, 0].
  \]
  Moreover, in the case $\abs{\sigma(d)} < 1$ of real angled planes, their angle $\varphi(d)$ --
  which is the spherical radius of the normal circles -- is given by $\sigma(d) = \cos\varphi(d)$.
  Note that -- using the function $\sigma$ -- a different choice for the Laguerre lift is represented by the simple transformation
  \[
    \begin{aligned}
      &\sigma(v) \rightarrow \alpha\sigma(v), &&v \in V,\\
      &\sigma(f) \rightarrow \alpha^{-1}\sigma(f),  &&f \in F,
    \end{aligned}
  \]
  with $\alpha \in \R\setminus\{0\}$ (see Coordinates~\ref{coo:normallift}).
\end{coordinates}

\begin{lemma}
	If $b$ is a regular orthogonal bi*net, then every Laguerre lift $b_\blac$ of $b$ is a regular binet.
\end{lemma}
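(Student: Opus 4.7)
The plan is to mirror the structure of Lemma~\ref{lem:regularmobiuslift} and Lemma~\ref{lem:regnormalnet}, using the explicit formula
\[
b_\blac(d) = (\PB \vee n(d)) \cap (\PM \vee \star b(d))
\]
derived in the proof of Theorem~\ref{thm:laguerre-lift}. The two central projections $\pi_{\eucl^*}$ (with center $\PM$) and $\pi_\unis$ (with center $\PB$) then recover $\star b$ and $n$ respectively, and both are defined everywhere on the image of $b_\blac$, since that image avoids $\ell_\blac = \PB \vee \PM$.

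First I would establish the distinctness conditions of Definition~\ref{def:binet}. For incident $d,d' \in D$, the regularity of the bi*net $b$ gives $b(d) \neq b(d')$, so the projections $\star b(d) = \pi_{\eucl^*}(b_\blac(d))$ and $\star b(d') = \pi_{\eucl^*}(b_\blac(d'))$ are distinct, which forces $b_\blac(d) \neq b_\blac(d')$. For adjacent $v,v' \in V$ (and symmetrically for adjacent faces) I would invoke the *net regularity of Definition~\ref{def:starneteuclidean}: if $b(v) = b(v')$, then two of the four planes around any face containing the edge $(v,v')$ coincide and the planes of three vertices of that face would intersect at best in a line rather than a single point, contradicting regularity.

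Next I would verify the plane-spanning condition of Definition~\ref{def:net}. Fix three vertices $d_1,d_2,d_3$ of a common face. By *net regularity the three planes $b(d_i)$ meet in a single point of $\eucl$, which by projective duality is equivalent to the three dual points $\star b(d_i)$ spanning a plane in $\eucl^*$ (and in particular being pairwise distinct). Since central projection cannot raise the dimension of a subspace, if $b_\blac(d_1), b_\blac(d_2), b_\blac(d_3)$ were collinear their $\pi_{\eucl^*}$-images would lie in a line or a single point, contradicting the plane-spanning of the $\star b(d_i)$.

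The expected main obstacle is the degenerate sub-case in which the line through the three supposedly collinear lifts passes through the projection center $\PM$, so that $\pi_{\eucl^*}$ collapses the entire line to a single point rather than to a line. This is where the ``dimension non-increasing'' intuition needs to be made explicit: the collapse produces three coincident projections, which already contradicts the distinctness of the $\star b(d_i)$ established above, so the argument closes. The symmetric statement for three faces around a common vertex follows by the dual argument on the dual graph, so every regularity condition of Definitions~\ref{def:net} and~\ref{def:binet} is satisfied by $b_\blac$.
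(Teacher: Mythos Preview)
Your proof is correct but takes a different route from the paper. The paper projects via $\pi_\unis$ (center $\PB$) onto the normal binet $n$: by Lemma~\ref{lem:regnormalnet} the normal binet $n$ is already a regular binet, and since $b_\blac(d)$ lies on the line $\PB \vee n(d)$ but is never $\PB$, distinctness (and, implicitly, plane-spanning) of the $n(d)$ forces the same for the $b_\blac(d)$. You instead project via $\pi_{\eucl^*}$ (center $\PM$) onto the dual points $\star b(d)$, deducing everything directly from the bi*net regularity of $b$ and projective duality, without routing through the normal binet. The paper's version is shorter because Lemma~\ref{lem:regnormalnet} has already done the work, whereas your version is more self-contained and makes the duality between ``three planes meet in a single point'' and ``three dual points span a plane'' explicit; both projections are equally natural given that $b_\blac(d)$ is constructed as the intersection of a line through $\PB$ and a line through $\PM$.
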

\proof{	
	Due to Lemma~\ref{lem:regnormalnet}, any normal net $n$ of $b$ is a regular binet. For every $d\in D$, $b_\blac(d)$ is on the line $B\vee n(d)$, but not equal to $B$. Therefore if $n(d)$, $n(d')$ for $d,d'\in D$ do not coincide, neither do $b_\blac(d)$, $b_\blac(d')$.
\qed
}

\subsection*{Invariance}
Let $T$ be a Laguerre transformation and $b$ an orthogonal bi*net. A priori, it is not clear how $T$ should act on $b$, since $b$ maps to $\pl$ -- not to $\opl$. Moreover, no matter how we choose orientations for the planes of $b$, the Laguerre transformation of the so oriented planes will in general not be an orthogonal bi*net. However, if we choose a Laguerre lift $b_\blac$ we may apply the corresponding projective transformation $\tilde T$ of $\RP^4$ to $b_\blac$. In this case, $\tilde T \circ b_\blac$ is again a polar binet and therefore the projection $\star (\pi_{\eucl^*} \circ \tilde T \circ b_\blac)$ is again an orthogonal bi*net. Since each orthogonal bi*net has a 1-parameter family of Laguerre lifts, the action of a Laguerre transformation on an orthogonal bi*net is not unique. However, if we consider an orthogonal bi*net $b$ together with a normal binet $n$, which uniquely determines the Laguerre lift $b_\blac$, the action of Laguerre transformations is unique.

\section{Line bicongruences} \label{sec:linebi}

In preparation of the following sections,
we recall the notion of (discrete) line congruences \cite{dsmlinecongruence,ddgbook},
and then introduce a corresponding notion of line bicongruences.
The definition of line congruences that we use for our purposes here,
is a discretization of torsal parametrizations of smooth line congruences (two-parameter families of lines) \cite{pwshading}.
\begin{definition}[Line congruence]
  A \emph{line congruence} is a map $\ell: V \rightarrow \linesof{\RP^n}$ such that the lines of adjacent vertices intersect in a point.
\end{definition}
\begin{remark}\label{rem:focalcongruence}
  For each of the two directions of $V = \Z^2$, the points of intersection of the lines of $\ell$ form a conjugate net,
  called a \emph{focal net} of $\ell$.
\end{remark}
The identification of the faces $F$ with the vertices of the dual graph (see Remark~\ref{rem:dual-graph})
yields an analogous definition for line congruences on $F$.
Thus, we define line bicongruences on $D$ as pairs of line congruences on $V$ and $F$.
\begin{definition}[Line bicongruence]
  A \emph{line bicongruence} is a map $\ell: D \rightarrow \linesof{\RP^n}$ such that the restrictions to $V$ and $F$ are line congruences.
\end{definition}
\begin{remark}
  \label{rem:focal-binets}
  For each direction, the two focal nets (see Remark~\ref{rem:focalcongruence}) of a line bicongruence restricted to $V$ and restricted to $F$ respectively,
  together form a conjugate binet, which we call a \emph{focal binet} of $\ell$.
\end{remark}
In analogy to the definition of polar binets, we introduce polar line bicongruences
as line bicongruences in which incident lines are polar with respect to a given quadric.
\begin{definition}[Polar line bicongruence]
  Let $Q \in \RP^n$ be a quadric.
  A \emph{polar line bicongruence}  is a line bicongruence $\ell: D \rightarrow \linesof{\RP^n}$, such that
  \[
    \ell(d) \pol \ell(d') \qquad   \text{for all incident}~ d,d'\in D.\qedhere
  \]
\end{definition}

\section{Principal binets}  \label{sec:principal}
Away from umbilic points, a parametrization of a smooth surface is a (principal) curvature line paramatrization
if it is a conjugate line paramatrization and an orthogonal parametrization.
We combine the conditions of conjugate binets and orthogonal binets to define principal binets.
Recall that every regular conjugate binet $b$ defines an associated conjugate bi*net $\square b$ (see Section~\ref{sec:conjugate-bi-star-nets}).
\begin{definition}[Principal binets]\
  \nobreakpar
  \begin{enumerate}
  \item
    A \emph{principal binet} is a binet $b: D \rightarrow \eucl$ that is both conjugate and orthogonal.
  \item
    A principal binet $b$ is \emph{regular} if $b$ is a regular binet and $\square b$ is a regular bi*net.	\qedhere
  \end{enumerate}
\end{definition}

Similarly, we combine the conditions of conjugate bi*nets and orthogonal bi*nets to define principal bi*nets.
Recall that every regular conjugate bi*net $b$ defines an associated binet $\square^* b$.
\begin{definition}[Principal bi*nets]\
  \nobreakpar
  \begin{enumerate}
  \item
    A \emph{principal bi*net} is a binet $b: D \rightarrow \pl$ that is both conjugate and orthogonal.
  \item
    A principal bi*net $b$ is \emph{regular} if $b$ is a regular bi*net and $\square^* b$ is a regular binet.\qedhere
  \end{enumerate}
\end{definition}

By imposing regularity on both $b$ and $\square b$,
the $\square$-operator yields a close relation between principal binets and principal bi*nets.
This relation corresponds to the fact that in the smooth case
a conjugate line parametrization (second fundamental from diagonal)
is orthogonal (first fundamental form diagonal)
if and only if it is Gauß-orthogonal (third fundamental form diagonal).
\begin{lemma}\
  \nobreakpar
  \begin{enumerate}
  \item
    If $b$ is a regular principal binet then $\square b$ is a regular principal bi*net.
  \item
    If $b$ is a regular principal bi*net then $\square^*b$ is a regular principal binet.\qedhere
  \end{enumerate}
\end{lemma}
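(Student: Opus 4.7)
The plan is to exploit the simple duality between the $\square$ and $\square^*$ constructions at the level of a single cross. By the discussion in Section~\ref{sec:conjugate-bi-star-nets}, if $b$ is a regular principal binet then $\square b$ is already a conjugate bi*net (and is regular by definition of a regular principal binet); the only new thing to establish is the orthogonality of $\square b$. Analogously, in (ii) the only new content is orthogonality of $\square^* b$. The entire argument will be carried out on a single cross $(v,f,v',f') \in C$.

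The key geometric observation, which I will establish first, is the following pair of coincidences:
\[
  \square b(v) \cap \square b(v') = b(f) \vee b(f'),
  \qquad
  \square b(f) \cap \square b(f') = b(v) \vee b(v').
\]
For the first equation, the faces of $\Z^2$ incident to both $v$ and $v'$ are precisely $f$ and $f'$, so by the definition of $\square$ both $b(f)$ and $b(f')$ lie in $\square b(v)$ and in $\square b(v')$; hence their join is contained in the intersection $\square b(v) \cap \square b(v')$, which is a line by regularity of $\square b$. The regularity of $b$ (Definition~\ref{def:binet}) guarantees $b(f) \neq b(f')$, so the inclusion is an equality. The second identity is proved the same way using that $v$ and $v'$ are the vertices incident to both $f$ and $f'$. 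With these identifications, the orthogonality condition for the bi*net $\square b$ at the cross $(v,f,v',f')$,
\[
  \square b(v) \cap \square b(v') \ \orth\  \square b(f) \cap \square b(f'),
\]
reads exactly as the orthogonality condition for the binet $b$, namely
$b(f) \vee b(f') \orth b(v) \vee b(v')$. Since $b$ is principal, this is satisfied, which proves (i).

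For part (ii) I would argue dually. Since $b$ is a regular principal bi*net, $\square^* b$ is a regular conjugate binet by Section~\ref{sec:conjugate-bi-star-nets}. Consider again a cross $(v,f,v',f')$. The points $\square^* b(v)$ and $\square^* b(v')$ are the intersections of the four planes of $b$ around $v$, resp.\ $v'$; in particular both lie in the two common planes $b(f)$ and $b(f')$, so
\[
  \square^* b(v),\ \square^* b(v') \in b(f) \cap b(f').
\]
By regularity of the bi*net $b$ (Definition~\ref{def:bistarneteuclidean}), $b(f)$ and $b(f')$ are neither parallel nor identical, so $b(f) \cap b(f')$ is a line, and the regularity of $\square^* b$ implies $\square^* b(v) \neq \square^* b(v')$. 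Thus
$\square^* b(v) \vee \square^* b(v') = b(f) \cap b(f')$, and symmetrically $\square^* b(f) \vee \square^* b(f') = b(v) \cap b(v')$. Hence the orthogonality condition for the binet $\square^* b$ coincides with the orthogonality condition of the bi*net $b$.

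The main subtlety in carrying out this plan is verifying that the chain of inclusions really collapses to equalities — i.e.\ that the two planes in question are distinct (so their intersection is a line and not a plane) and that the two relevant points are distinct (so they span a line rather than coincide). This is precisely what the definition of regularity for principal binets and principal bi*nets was tailored to provide; once unpacked, the argument is local on each cross and uses neither $\Z^2$-global information nor anything beyond the basic definition of the $\square$ and $\square^*$ operators.
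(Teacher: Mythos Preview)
Your proof is correct and follows essentially the same approach as the paper: establish the identities $\square b(v)\cap\square b(v')=b(f)\vee b(f')$ and $\square b(f)\cap\square b(f')=b(v)\vee b(v')$ at each cross, so that the orthogonality conditions for $b$ and $\square b$ literally coincide, and argue part~(ii) dually. Your version is in fact more explicit than the paper's about why the regularity assumptions make these identities well-defined equalities rather than mere inclusions.
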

\begin{proof}\
  \begin{enumerate}
  \item
    Assume $b$ is a principal binet and let $(v,f,v',f') \in C$ be a cross. By definition of $\square$ we have that 
    \begin{align}
      \square b(v) \cap \square b(v') &= b(f) \vee b(f'),\\
      \square b(f) \cap \square b(f') &= b(v) \vee b(v').
    \end{align}
    Therefore the orthogonality conditions of $b$ and $\square b$ coincide.
    Note that the regularity of $b$ and $\square b$ ensures that both sides of these two equations are well-defined. 
    Moreover, in the regular case $\square^* \circ \square = \mbox{id}$, therefore the regularity of $b^*$ is also ensured.
  \item is proven analogously.\qedhere
  \end{enumerate}
\end{proof}

As a result of this lemma, principal binets and principal bi*nets always come in pairs~$(b, \square b)$. 

\begin{remark}
  \label{rem:parallel-normal-binet}
  Consider the normal binet $n$ (see Section~\ref{sec:normalbinets}) of a principal binet $b$ and recall that,
  due to Lemma~\ref{lem:normalconjugate}, normal binets are conjugate binets.
  Since  $n(d)$ is orthogonal to both $\square b(d)$ and $\square n(d)$, the two planes $\square b(d)$ and $\square n(d)$ are parallel.
  For a pair of conjugate binets, the condition of parallel corresponding faces is equivalent the condition of parallel corresponding edges.
  In fact, it is not hard to see that one could replace Condition~\ref{def:normal-binet1} in Definition~\ref{def:normal-binet}
  by requiring parallel edges of $b$ and $n$, and polarity for $n$ for some initial incident $d_0, d'_0 \in D$.
  With this parallel normal binet, the curvature theory for discrete surfaces based on mesh parallelity is applicable to principal binets \cite{bpwcurvature}.
\end{remark}

\begin{remark}\label{rem:minimal}
  Now that we have introduced principal binets, let us continue to discuss (see Remark~\ref{rem:koebe}) how normal binets generalize pairs of primal and dual Koebe polyhedra (see Figure~\ref{fig:examples}, left/middle).
  Recall that Koebe polyhedra are conjugate nets with edges tangent to the unit sphere. The authors in \cite{bhssminimal} consider the centers of certain touching spheres to constitute an \emph{S-minimal net}, which is obtained as Christoffel dual of the primal Koebe polyhedron. Moreover, the S-minimal net is a conjugate net with face-normals given by the dual Koebe polyhedron. Additionally, each face of an S-minimal net comes with a circle that is orthogonal to the incident spheres. It was also shown that the centers of these circles constitute a conjugate net with dual combinatorics. The face-normals of these circle center nets are given by the primal Koebe polyhedron. Moreover, the edges of the S-minimal net and the circle center net are orthogonal. Thus the combination of the S-minimal net and the circle center net is a principal binet with normal binet given by the pair of the primal and dual Koebe polyhedra.
	
	There is also a second way to construct a principal binet from a pair of primal and dual Koebe polyhedra, by applying the Christoffel dualization construction of \cite{bhssminimal} twice: once for the primal and once for the dual Koebe polyhedron. This construction has the advantage of being more symmetric. However, this ``twin'' construction has not been considered in the literature to date (see Figure~\ref{fig:examples}, right).
\end{remark}

\subsection*{Normal bicongruences}

At every vertex or face $d \in D$ of a conjugate binet $b : D \rightarrow \eucl$
a unique normal line $N(d)$ is defined by the two conditions
\[
  b(d) \in N(d)
  \quad\text{and}\quad
  \square b(d) \orth N(d).
\]
Now assume that $b$ is a principal binet and consider a cross $(v,f,v',f') \in C$.
Since $b(v) \vee b(v')$ is orthogonal to $b(f) \vee b(f')$,
the two lines $N(v)$ and $N(v')$ are in the plane that is orthogonal to $b(f) \vee b(f')$
and that contains $b(v) \vee b(v')$.
Thus, $N(v)$ and $N(v')$ intersect, and $N$ defines a line bicongruence.
Consequently, along a parameter line of $\Z^2$, consecutive lines of a line bicongruence intersect, so that the lines belonging to such a parameter line may be interpreted as a discretization of a developable surface. The intersection points along this parameter line correspond to the line of striction.
Thus, the fact that $N$ is a line bicongruence discretizes the fact that the normal lines along a curvature line of a smooth surface
trace out a developable surface.
\begin{definition}[Normal bicongruence]\label{def:normalbicongruence}
  Let $b: D \rightarrow \eucl$ be a regular conjugate binet.
  A \emph{normal bicongruence} $N: D\rightarrow \linesof{\eucl}$ of $b$ is a line bicongruence,
  such that
  \[
    b(d) \in N(d)
    \quad\text{and}\quad
    \square b(d) \orth N(d)
    \quad
    \text{for all}~
    d \in D.\qedhere
  \]
\end{definition}
We have seen that principal binets define a normal bicongruence.
In fact, the existence of a normal bicongruence characterizes principal binets.
\begin{theorem}
  Let $b: D \rightarrow \eucl$ be a regular conjugate binet.
  Then $b$ has a normal bicongruence $N: D\rightarrow \linesof{\eucl}$
  if and only if $b$ is a principal binet.
\end{theorem}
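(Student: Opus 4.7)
The plan is to verify each direction of the equivalence, with the forward direction essentially a formalization of the discussion preceding Definition~\ref{def:normalbicongruence}, and the converse following from a short geometric argument around a cross.

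For the $(\Rightarrow)$ direction, assume $b$ is a regular principal binet. For each $d \in D$, regularity ensures $\square b(d)$ is a well-defined plane and $b(d) \in \square b(d)$, so there is a unique line $N(d) \ni b(d)$ with $N(d) \orth \square b(d)$. It remains to show that $N$ is a line bicongruence, that is $N(v) \cap N(v') \neq \emptyset$ for adjacent $v, v' \in V$ (and symmetrically on $F$). Pick a cross $(v,f,v',f') \in C$. Because $f,f' \inc v,v'$, both $b(f)$ and $b(f')$ lie in $\square b(v)$ and in $\square b(v')$. As $b$ is orthogonal, $b(v) \vee b(v')$ is orthogonal to $b(f) \vee b(f')$, so the plane $\Pi$ through $b(v) \vee b(v')$ orthogonal to $b(f) \vee b(f')$ contains both lines $N(v)$ and $N(v')$ (each passes through one of $b(v), b(v')$ and is orthogonal to $b(f) \vee b(f')$, since $b(f)\vee b(f') \subset \square b(v), \square b(v')$). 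Two distinct coplanar lines in $\Pi$ must intersect, giving the required incidence of $N(v)$ and $N(v')$. The same argument applied to the faces yields the congruence property on $F$.

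For the $(\Leftarrow)$ direction, assume that $b$ is a regular conjugate binet admitting a normal bicongruence $N$. Since $b$ is conjugate by hypothesis, only orthogonality remains. Fix again a cross $(v,f,v',f') \in C$. By definition of $N$, the line $N(v)$ is orthogonal to the plane $\square b(v)$, and as above this plane contains the line $b(f) \vee b(f')$; hence $N(v) \orth b(f) \vee b(f')$, and likewise $N(v') \orth b(f) \vee b(f')$. Because $N$ is a line bicongruence, $N(v)$ and $N(v')$ intersect, so together they span a plane (or coincide into a single line) $\Pi'$ which is orthogonal to $b(f) \vee b(f')$. Since $b(v) \in N(v) \subset \Pi'$ and $b(v') \in N(v') \subset \Pi'$, the line $b(v) \vee b(v')$ lies in $\Pi'$, and is therefore orthogonal to $b(f) \vee b(f')$. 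This is precisely the orthogonality condition at the cross, and as the cross was arbitrary, $b$ is an orthogonal binet, hence principal.

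The only subtlety to watch is the degenerate configurations: if $N(v) = N(v')$, the argument still yields that $b(v) \vee b(v')$ (or the point $b(v) = b(v')$) lies on a line orthogonal to $b(f) \vee b(f')$, which is compatible with the orthogonality condition. The regularity assumption on $b$ (and on $\square b$ through the need to define $N$ via $\square b(d)$) is what ensures all the spans and intersections invoked above are well-defined, so no further case analysis is required. The main obstacle, such as it is, is simply to make precise that $b(f) \vee b(f') \subset \square b(v) \cap \square b(v')$, which follows from the incidence pattern of a cross together with the definition of $\square b$ as the plane spanned by the points of incident vertices.
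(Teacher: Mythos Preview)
Your proof is correct and follows essentially the same approach as the paper: the forward direction reproduces the discussion preceding Definition~\ref{def:normalbicongruence} (both normals lie in the plane through $b(v)\vee b(v')$ orthogonal to $b(f)\vee b(f')$), and the converse matches the paper's argument that $b(f)\vee b(f')$ is orthogonal to the plane $N(v)\vee N(v')$ and hence to $b(v)\vee b(v')$. The only small imprecision is the claim that two distinct coplanar lines must intersect---this is true projectively (allowing intersection at infinity), which is the implicit convention the paper also uses.
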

\begin{proof}
  Let $b: D \rightarrow \eucl$ be a conjugate binet.
  For $d \in D$ the line $N(d)$ is uniquely determined by the incidence and the orthogonality condition of Definition \ref{def:normalbicongruence}.
  Assume that adjacent lines of $N$ intersect.
  At every cross $(v,f,v',f') \in C$,
  the line $b(f) \vee b(f')$ is orthogonal to the plane $N(v) \vee N(v')$
  and thus in particular orthogonal to $b(v) \vee b(v')$.
\end{proof}

\begin{remark}
  The normal bicongruence gives rise to a canonical definition of a 2-parameter family of parallel binets (see Figure~\ref{fig:focal-binet}, left),
  that is binets with the same normal bicongruence (compare Remark~\ref{rem:parallel-normal-binet}).
  Furthermore, for each parameter direction,
  the points of intersection of the normal bicongruence yield a \emph{focal binet} (see Figure~\ref{fig:focal-binet}, right), which is a conjugate binet
  (see Remark~\ref{rem:focal-binets}).
  In the smooth case, the focal surfaces obtained from the normal lines of a principal parametrization
  come as conjugate line parametrizations with one family of geodesic parameter lines.
\end{remark}
\begin{figure}[H]
  \centering
  \includegraphics[width=0.45\textwidth]{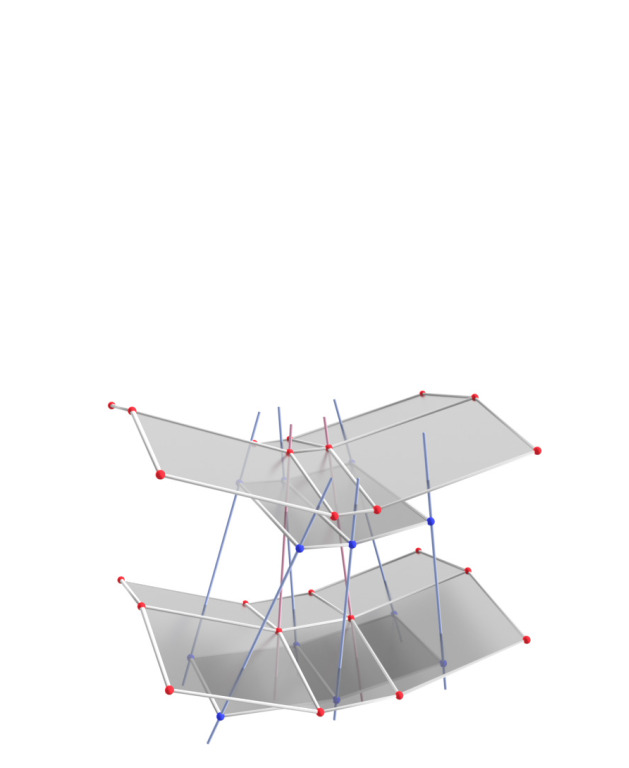}
  \includegraphics[width=0.45\textwidth]{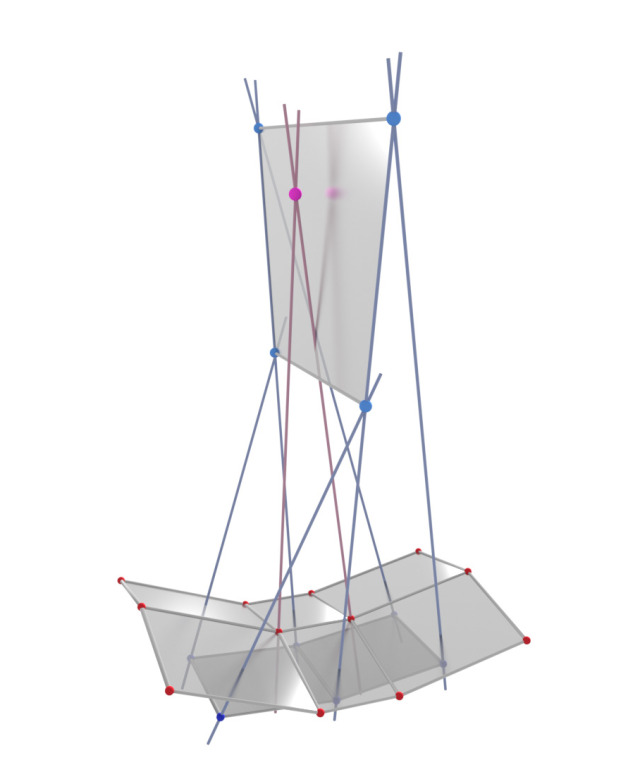}
  \caption{
    Left: Parallel binet of a principal binet.
    Right: Focal binet of principal binet.
  }
  \label{fig:focal-binet}
\end{figure}
\begin{question}
 	In what sense do the focal binets obtained from principal binets constitute semi-geodesic binets?  For circular and conical nets, this has been studied in \cite{hsssemigeodesic}.  
\end{question}

\subsection*{Möbius lift}
Since a principal binet is a special case of an orthogonal binet, let us discuss properties of the Möbius lift of a principal binet. We begin with the specialization of Lemma~\ref{lem:moebius-lift-projection}.

\begin{lemma}\label{lem:principalmobiuslift}
	Let $b_\mobq : D \rightarrow \RP^4 \setminus \PB^\pol$ be a conjugate and polar binet with respect to the Möbius quadric $\mobq \subset \RP^4$.
	Then its projection
	\[
	b \coloneqq \pi_\eucl \circ b_\mobq,
	\]
	to Euclidean space $\eucl$ is a principal binet.
\end{lemma}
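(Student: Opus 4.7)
The plan is to verify the two defining conditions of a principal binet for $b$ separately: orthogonality and conjugacy.

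For orthogonality, there is nothing new to prove: by assumption $b_\mobq$ is a polar binet with respect to $\mobq$, so Lemma~\ref{lem:moebius-lift-projection} applies directly and yields that $b = \pi_\eucl \circ b_\mobq$ is an orthogonal binet in $\eucl$.

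The remaining task is conjugacy of $b$. Since $b_\mobq$ is a conjugate binet, the four points of any quad of $b_\mobq$ (whether in the restriction to $V$ or to $F$) lie in a $2$-dimensional projective plane $P \subset \RP^4$. I would then invoke the fact that the stereographic projection $\pi_\eucl$ is the restriction of the central projection from $\PB$ onto the hyperplane $S_\eucl$, which is a projective map. Consequently $\pi_\eucl(P) \subset (\PB \vee P) \cap S_\eucl$. A short dimension count shows this intersection is at most $2$-dimensional: if $\PB \notin P$ then $\PB \vee P$ is a $3$-dimensional subspace distinct from $S_\eucl$ (since $\PB$ lies in the former but not the latter), hence it meets the hyperplane $S_\eucl$ in a $2$-plane; if $\PB \in P$ then $\PB \vee P = P$ is itself $2$-dimensional, so its intersection with $S_\eucl$ is at most a line. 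In either case, the images of the four quad vertices lie in a plane in $\eucl$, so the projected quad is planar.

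Since conjugacy holds on the restrictions to both $V$ and $F$, $b$ is a conjugate binet; combined with orthogonality this makes $b$ a principal binet. There is no substantive obstacle: the argument reduces to the standard fact that projective maps send $k$-planes to $k$-planes (and to lower-dimensional subspaces in the degenerate case when the center of projection lies in the source plane), together with the previously established projection lemma for orthogonality.
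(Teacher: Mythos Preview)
Your proof is correct and follows essentially the same approach as the paper: invoke Lemma~\ref{lem:moebius-lift-projection} for orthogonality, and use that the central projection $\pi_\eucl$ sends coplanar points to coplanar points for conjugacy. Your dimension-count treatment of the degenerate case $\PB \in P$ is more explicit than the paper's one-line remark, but the substance is identical.
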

\proof{
	Due to Lemma~\ref{lem:moebius-lift-projection} the projection $b$ is an orthogonal binet. Moreover, if four points are coplanar before projection they are also coplanar after projection. Thus, $b$ is a conjugate binet and therefore a principal binet as well.\qed
}

Indeed, conjugate polar binets are (generically) in bijection with principal binets, as captured by the following specialization of Theorem~\ref{th:mobiusortho}.

\begin{theorem}
  \label{thm:conjugatemobiuslift}
  Let $b : D \rightarrow \eucl$ be a regular orthogonal binet.
  Then its Möbius lift $b_\mobq: D \rightarrow \RP^4 \setminus \PB^\pol$ is a conjugate binet if and only if $b$ is a principal binet
\end{theorem}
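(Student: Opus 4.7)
The plan is to treat the two implications separately, with $(\Leftarrow)$ carrying the main content.

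For $(\Rightarrow)$, I would appeal directly to Lemma~\ref{lem:principalmobiuslift}. By definition of a Möbius lift, $b_\mobq$ is already polar with projection $b$; if we additionally assume that $b_\mobq$ is conjugate, then $b_\mobq$ is a conjugate polar binet and Lemma~\ref{lem:principalmobiuslift} immediately yields that $b$ is principal.

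For $(\Leftarrow)$, assume $b$ is principal. Existence of a Möbius lift follows from the orthogonality of $b$ via Theorem~\ref{th:mobiusortho}, and in the standard coordinates takes the form $b_\mobq(d) = [b(d) + e_0 + 2\rho(d) e_\infty]$ for some function $\rho : D \to \R$ satisfying $\sca{b(d), b(d')} = \rho(d) + \rho(d')$ at every incident pair $d \inc d'$. To verify that $b_\mobq$ is conjugate, I would check coplanarity in $\RP^4$ of the four lifts around each face $f \in F$ with corners $v_1, \dots, v_4 \in V$, and then run the symmetric argument around each vertex $v \in V$ with incident faces $f_1, \dots, f_4 \in F$.

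The key step is as follows. Since $b|_V$ is conjugate, the four points $b(v_i)$ are coplanar in $\eucl$, so there are scalars $\lambda_i$, not all zero, with $\sum_i \lambda_i b(v_i) = 0$ and $\sum_i \lambda_i = 0$. Linear dependence of the four lifts in $\R^5$ would follow from the one additional identity $\sum_i \lambda_i \rho(v_i) = 0$. Using that each $v_i \inc f$ gives $\rho(v_i) = \sca{b(v_i), b(f)} - \rho(f)$, one computes
\[
\sum_i \lambda_i \rho(v_i) \;=\; \sca{\sum_i \lambda_i b(v_i), b(f)} \,-\, \rho(f) \sum_i \lambda_i \;=\; 0.
\]
This is the missing linear dependence and proves coplanarity around $f$. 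The dual argument around a vertex $v$ runs identically, using the conjugacy of $b|_F$ together with $\rho(f_j) = \sca{b(v), b(f_j)} - \rho(v)$.

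I do not foresee a serious obstacle: once the coordinate form of the Möbius lift is in play, the entire $(\Leftarrow)$ direction is a short computation combining the affine dependence provided by conjugacy of $b$ with the defining relation of $\rho$ provided by polarity. The conceptual point, which is worth highlighting in the write-up, is that the Möbius lift bundles precisely the data needed to promote the planarity of four face points in $\eucl$ to the coplanarity of their lifts in $\RP^4$, and this promotion succeeds exactly when $b$ is conjugate.
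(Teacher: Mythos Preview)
Your proof is correct, and the $(\Rightarrow)$ direction agrees with the paper verbatim. The $(\Leftarrow)$ direction, however, takes a genuinely different route from the paper's own argument.

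The paper argues synthetically: the four lifts $b_\mobq(v_i)$ lie in the $3$-space $\square b(f) \vee \PB$ (since each lift sits on the line $b(v_i) \vee \PB$ and the $b(v_i)$ are coplanar) and simultaneously in the $3$-space $b_\mobq(f)^\pol$ (by polarity of the Möbius lift). These two hyperplanes of $\RP^4$ are distinct because $b_\mobq(f) \notin \PB^\pol$, so their intersection is a $2$-plane containing all four lifts. Your argument instead works in the explicit coordinates $b_\mobq(d) = [b(d) + e_0 + 2\rho(d)e_\infty]$: you take an affine dependence $\sum \lambda_i b(v_i) = 0$, $\sum \lambda_i = 0$ coming from the conjugacy of $b$, and then verify the one remaining coordinate identity $\sum \lambda_i \rho(v_i) = 0$ directly from the relation $\rho(v_i) = \sca{b(v_i), b(f)} - \rho(f)$.

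Both arguments are short; yours is perhaps the more transparent computation and makes clear exactly which two ingredients (conjugacy of $b$ and the additive defining relation for $\rho$) are being combined. The paper's version has the advantage of staying within its declared coordinate-free projective framework and of making the resulting plane $(\square b(f) \vee \PB) \cap b_\mobq(f)^\pol$ explicit, which is useful later when the same plane $\square b_\mobq(f)$ is used to define the circles $\circl(f)$.
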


\begin{proof}\ \linebreak
 ($\Rightarrow$) 
  Follows from Lemma~\ref{lem:principalmobiuslift}.
  \newline
  ($\Leftarrow$) 
  Consider a quad $f\in F$ consisting of the four vertices $v_1,v_2,v_3,v_4\in V$.
  Since $b$ is a conjugate binet, the four points $b(v_1), b(v_2), b(v_3), b(v_4)$
  lie in the plane $\square b (f)$.
  The four points $b_\mobq(v_1), b_\mobq(v_2), b_\mobq(v_3), b_\mobq(v_4)$ lie in
  the span of $\square b(f)$ and $\PB$, wich is 3-dimensional,
  and in the polar space of $b_\mobq(f)$, which is also 3-dimensional.
  Thus, their intersection 
  $
    (\square b(f) \vee \PB) \cap b_\mobq(f)^\pol
  $
  is a plane.
  The same argument holds for four faces adjacent to a common vertex. Hence, the Möbius lift is a conjugate binet.
\end{proof}

Note that Lemma~\ref{lem:regularmobiuslift} also covers the case of principal binets, that is the Möbius lift of a regular principal binet is a regular conjugate polar binet, as there are no additional regularity constraints for conjugate binets.

The fact that $b_\mobq$ is a conjugate binet implies an additional Möbius geometric structure of principal binets.
For any vertex or face $d \in D$ the projection of the planar section
\begin{align}
	 \circl(d) \coloneqq \pi_\eucl(\square b_\mobq(d) \cap \mobq), \label{eq:orthocircle}
\end{align}
is a (possibly imaginary) circle (see Figure~\ref{fig:principal-binet-circles}).
\begin{figure}[H]
  \centering
  \includegraphics[width=0.3\textwidth, trim={800 0 300 0},clip]{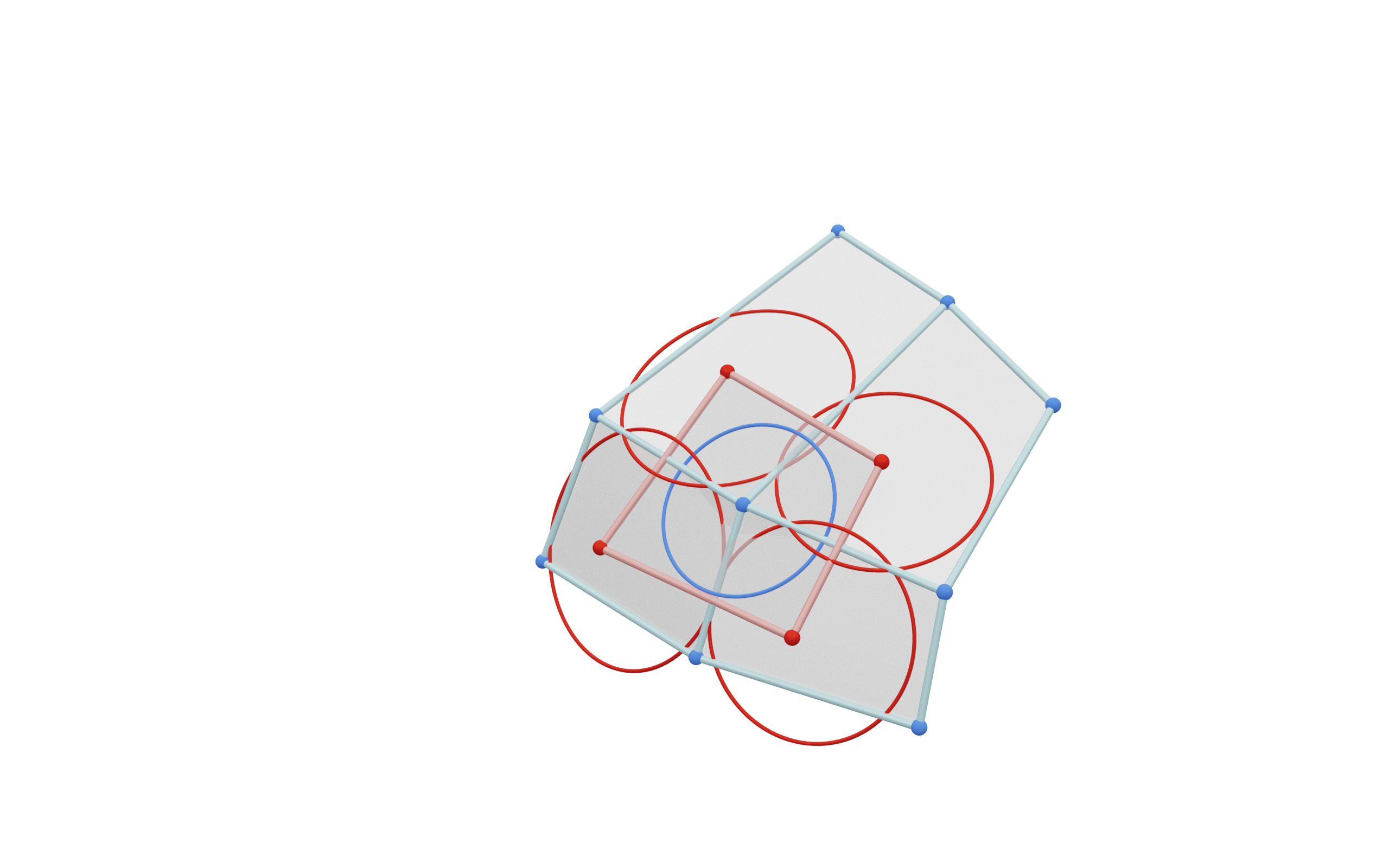}
  \includegraphics[width=0.3\textwidth, trim={800 0 300 0},clip]{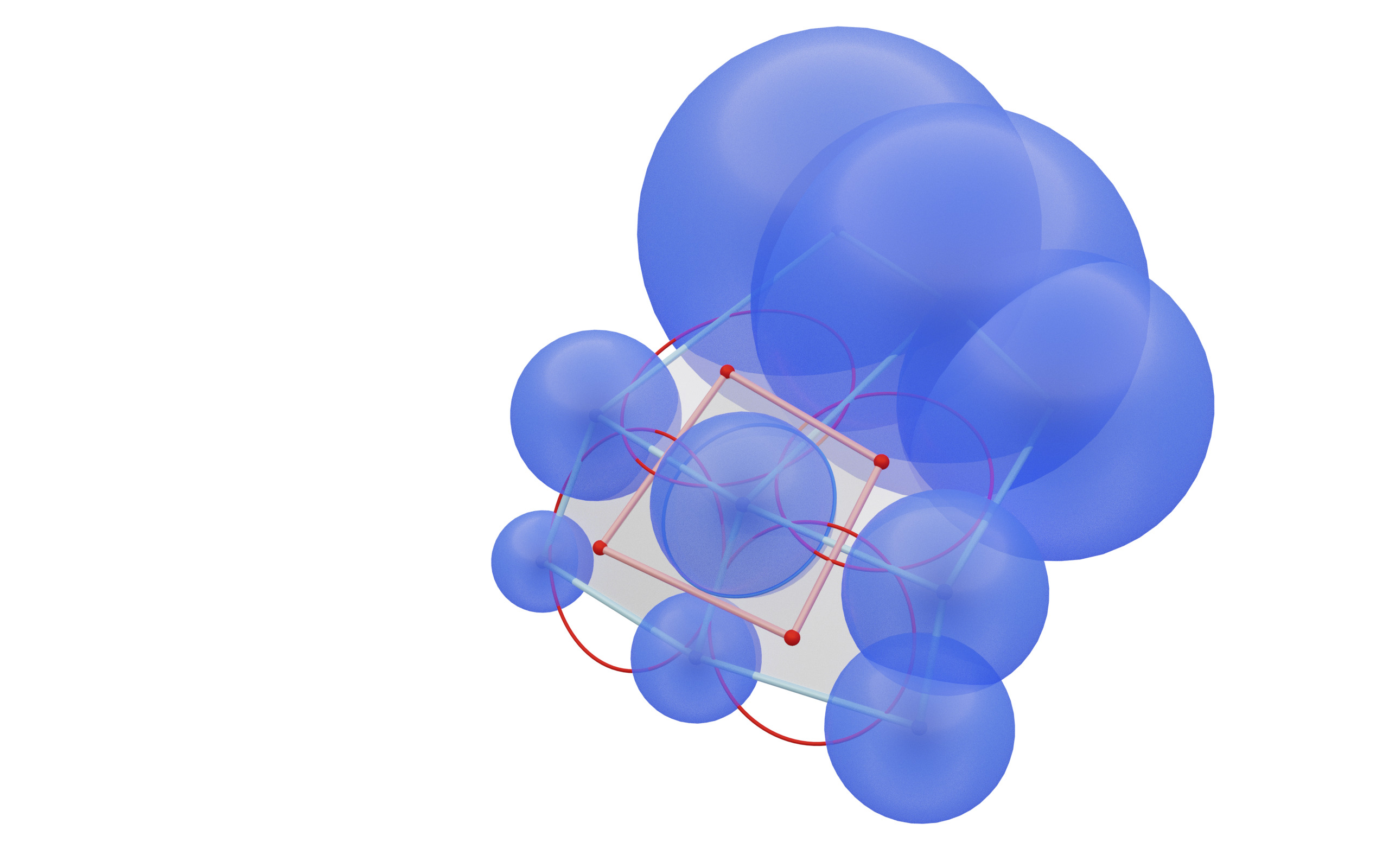}
  \includegraphics[width=0.3\textwidth, trim={800 0 300 0},clip]{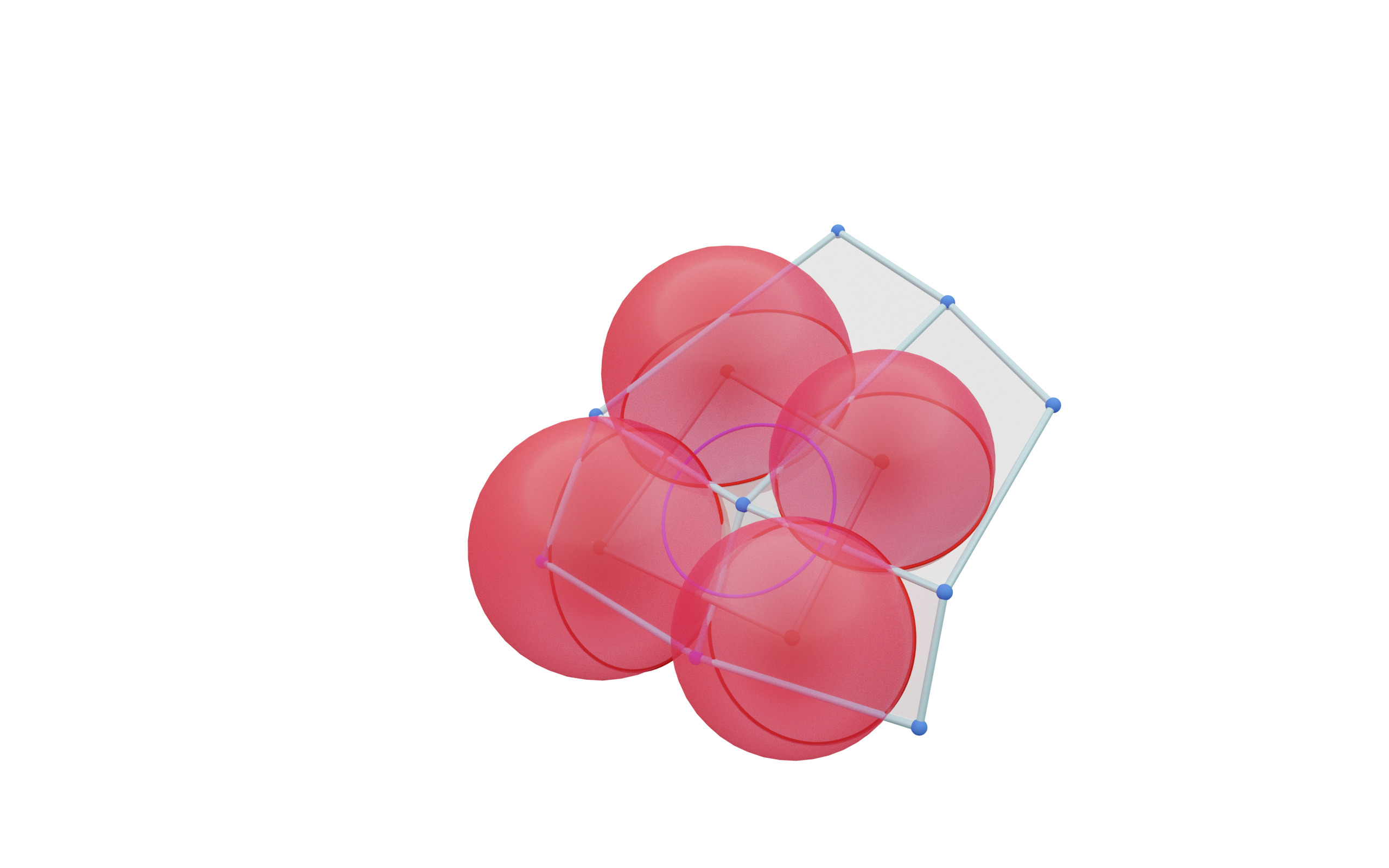}
  \caption{
    Circles of a principal binet.
    The circle $\circl(d)$ is contained in the sphere $b_\sp(d)$ of the orthogonal sphere representation for $d \in D$.
    Furthermore $\circl(d)$ is orthogonal to the sphere $b_\sp(d')$ for incident $d, d'$.
  }
  \label{fig:principal-binet-circles}
\end{figure}
Let $d_1, d_2, d_3, d_4 \in D$ be the four vertices (or faces) incident to $d$.
Then, the circle $\circl(d)$ is contained in every sphere that is orthogonal
to all four spheres  $b_\sp(d_1), b_\sp(d_2), b_\sp(d_3), b_\sp(d_4)$.
Equivalently, $\circl(d)$ is orthogonal to all four spheres,
and, in particular, $\circl(d)$ is contained in $b_\sp(d)$.
\begin{remark}
  The circles $\circl(d)$ may be thought of as an analogue of the circles appearing in S-isothermic nets \cite{bhssminimal}.
\end{remark}
\begin{proposition}
  \label{prop:circle-axis}
  Let $b : D \rightarrow \eucl$ be a regular principal binet,
  and $N : D \rightarrow \linesof{\eucl}$ be its normal bicongruence.
  For every vertex or face $d \in D$
  \begin{enumerate}
  \item
    the circle $\circl(d)$ lies in the plane $\square b(d)$,
  \item
    and the axis of $\circl(d)$ is the normal line $N(d)$.\qedhere
  \end{enumerate}
\end{proposition}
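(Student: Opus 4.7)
The plan is to leverage the Möbius lift $b_\mobq$ of the principal binet $b$, which exists and is a conjugate polar binet by Theorem~\ref{thm:conjugatemobiuslift} (together with Theorem~\ref{th:mobiusortho}). The key Möbius-geometric facts I will use are: (a) stereographic projection sends a plane in $\RP^4$ not containing $\PB$ to a plane in $\eucl$ of the same dimension; (b) the sphere $b_\sp(d) = \xi_\sp(b_\mobq(d))$ is the projection $\pi_\eucl(b_\mobq(d)^\pol \cap \mobq)$ and has Euclidean center $b(d)$.

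For part (i), I will argue that $\pi_\eucl(\square b_\mobq(d)) = \square b(d)$. Indeed, $\square b_\mobq(d)$ is the projective plane spanned by the four lifts $b_\mobq(d')$ with $d' \inc d$; its image under $\pi_\eucl$ is a Euclidean plane containing the four points $b(d') = \pi_\eucl(b_\mobq(d'))$, which by regularity span the plane $\square b(d)$. Hence
\[
	\circl(d) \;=\; \pi_\eucl(\square b_\mobq(d) \cap \mobq) \;\subset\; \pi_\eucl(\square b_\mobq(d)) \;=\; \square b(d).
\]

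For part (ii), the main additional ingredient is the inclusion $\circl(d) \subset b_\sp(d)$ noted in the paragraph above the proposition. I will derive it cleanly from polarity: since $b_\mobq$ is a polar binet, $b_\mobq(d') \pol b_\mobq(d)$ for every $d' \inc d$, so every spanning point of $\square b_\mobq(d)$ lies in the polar hyperplane $b_\mobq(d)^\pol$. Therefore $\square b_\mobq(d) \subset b_\mobq(d)^\pol$, and projecting the intersections with $\mobq$ gives $\circl(d) \subset b_\sp(d)$.

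Since $\circl(d)$ lies on a sphere of center $b(d)$ and, by (i), in the plane $\square b(d)$, its Euclidean center is the orthogonal foot of $b(d)$ on $\square b(d)$. Thus the axis of $\circl(d)$ is the unique line through $b(d)$ orthogonal to $\square b(d)$, which is precisely $N(d)$ by the definition of the normal bicongruence. The only mildly subtle point will be handling degeneracies: if $\circl(d)$ is imaginary or $b(d) \in \square b(d)$ so that the foot of perpendicular coincides with $b(d)$, the axis is still well-defined as the line through $b(d)$ perpendicular to $\square b(d)$, so the identification with $N(d)$ goes through unchanged. I do not expect any further obstacle — once the polarity inclusion $\square b_\mobq(d) \subset b_\mobq(d)^\pol$ is observed, everything reduces to elementary Euclidean geometry of a sphere-plane intersection.
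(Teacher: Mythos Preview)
Your proof is correct and follows essentially the same approach as the paper, which compresses your two inclusions into the single identity $\circl(d) = \square b(d) \cap b_\sp(d)$ and then reads off the axis. Your version just spells out the justifications (that $\pi_\eucl(\square b_\mobq(d)) = \square b(d)$ and that polarity gives $\square b_\mobq(d) \subset b_\mobq(d)^\pol$) which the paper leaves implicit.
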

\begin{proof}
  With the orthogonal sphere representation $b_\sp = \xi_\sp \circ b_\mobq$ this circle can also be described as
  \[
    \circl(d) = \square b(d) \cap b_\sp(d).
  \]
  Thus, the axis of $\circl(d)$ is the unique line that contains the center $b(d)$ and is orthogonal to $\square b(d)$.
\end{proof}

\subsection*{Laguerre lift}

Since a principal binet is also a special case of an orthogonal bi*net, let us discuss properties of the Laguerre lift of a principal binet. We begin with the specialization of Lemma~\ref{lem:laguerre-lift-projection}.

\begin{lemma}
  \label{lem:principallaguerrelift}
  Let $b_\blac : D \rightarrow \RP^4 \setminus \ell_\blac$ be a polar binet
  with respect to the Blaschke cylinder $\blac \subset \RP^4$.
  Then the dual of its projection
  \[
    b:  D \rightarrow \pl, \quad b = \star (\pi_{\eucl^*} \circ b_\blac),
  \]
  is a principal bi*net.
\end{lemma}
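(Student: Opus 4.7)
The plan is to verify both defining properties of a principal bi*net for $b$, namely orthogonality and conjugacy. Orthogonality is immediate from Lemma~\ref{lem:laguerre-lift-projection}, which already shows that the dual of the projection of any polar binet in $\RP^4$ with respect to $\blac$ is an orthogonal bi*net in $\eucl$, and which additionally furnishes a normal binet $n = \pi_\unis \circ b_\blac$ of $b$. The remaining work is therefore to establish conjugacy of $b$ as a bi*net.

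To prove conjugacy, I would fix a face $f \in F$ with vertices $v_1, v_2, v_3, v_4 \in V$ and show that the four planes $b(v_i) \subset \eucl$ meet at a common point; the symmetric argument around each vertex $v \in V$ then yields the corresponding statement for $b|_F$. Dually, this amounts to showing that the four points $\star b(v_i) = \pi_{\eucl^*}(b_\blac(v_i))$ are coplanar in $S_{\eucl^*} \cong \RP^3$. The polarity of $b_\blac$ places all four $b_\blac(v_i)$ in the polar hyperplane $b_\blac(f)^\pol \subset \RP^4$, which has projective dimension $3$. The key step is to show that the four points actually lie in a common $2$-plane $\pi_f \subset b_\blac(f)^\pol$; once this is in hand, the central projection from $\PM$ sends $\pi_f$ to a subspace of $S_{\eucl^*}$ of dimension at most $2$ (a $2$-plane if $\PM \notin \pi_f$, a line otherwise), giving the desired coplanarity of the four projected points.

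The hardest part I anticipate is justifying the coplanarity of the four $b_\blac(v_i)$ inside $\RP^4$. Unlike the situation in $\RP^3$, where polarity alone forces four incident points into a $2$-plane (the polar space being $2$-dimensional, as exploited in Lemma~\ref{lem:normalconjugate} for normal binets), the polar hyperplane in $\RP^4$ is $3$-dimensional and coplanarity does not come for free from the polar binet condition. Following the Möbius-side analogue in Theorem~\ref{thm:conjugatemobiuslift} and Lemma~\ref{lem:principalmobiuslift}, I would expect the argument to effectively use that $b_\blac$ is a conjugate binet, so that the four points $b_\blac(v_i)$ genuinely span a $2$-plane; this additional structure is presumably built into the hypothesis. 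With coplanarity in place, the projection argument outlined above closes the proof, and combining with the orthogonality from Lemma~\ref{lem:laguerre-lift-projection} gives that $b$ is a principal bi*net.
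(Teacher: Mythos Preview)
Your approach matches the paper's proof: orthogonality comes from Lemma~\ref{lem:laguerre-lift-projection}, and conjugacy of the bi*net follows because the central projection $\pi_{\eucl^*}$ preserves coplanarity. Your suspicion about the hypothesis is well founded: the paper's own proof tacitly assumes that $b_\blac$ is a \emph{conjugate} polar binet (``the projection $\pi_{\eucl^*}$ preserves conjugacy''), parallel to the explicit hypothesis in Lemma~\ref{lem:principalmobiuslift}; without this, polarity alone places the four points $b_\blac(v_i)$ only in the $3$-dimensional hyperplane $b_\blac(f)^\pol$, which is not enough to force coplanarity of their images in $S_{\eucl^*}\cong\RP^3$. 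So the word ``conjugate'' appears to have been accidentally dropped from the statement, and your reading of what is ``presumably built into the hypothesis'' is exactly right.
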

\proof{
	As in the proof of Lemma~\ref{lem:principalmobiuslift}, the projection $\pi_{\eucl^*}$ preserves conjugacy, which in turn translates to the conjugacy of the bi*net $b$. Combined with Lemma~\ref{lem:laguerre-lift-projection} the claim follows. \qed
}

Note that the part about normal binets of Lemma~\ref{lem:laguerre-lift-projection} does not specialize in any sense to principal bi*nets: there is no way to distinguish the normal binet of a principal bi*net from the normal binet of an orthogonal bi*net.

We define the Laguerre lift $b_\blac$ of a principal binet $b$ to be the Laguerre lift of principal bi*net $\square b$.
As in the case of the Möbius lift, the conjugacy of the Laguerre lift is characterizing for principal binets.

\begin{theorem}
  Let $b : D \rightarrow \eucl$ be a regular orthogonal bi*net.
  Then its Laguerre lift $b_\blac : D \rightarrow \RP^4 \setminus \ell_\blac$ is a conjugate binet if and only if $b$ is a conjugate bi*net.
\end{theorem}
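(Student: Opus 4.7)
The plan is to mirror the argument used for Theorem~\ref{thm:conjugatemobiuslift} on the Möbius side, replacing the central projection $\pi_\eucl$ from $\PB$ with the central projection $\pi_{\eucl^*}$ from $\PM$, and invoking projective duality $\eucl \leftrightarrow \eucl^*$ to translate between ``four planes concur in a point'' (conjugate bi*net) and ``four dual points are coplanar'' (conjugate net in the dual space). Throughout, the polarity of $b_\blac$ with respect to $\blac$ will play the same role as polarity with respect to $\mobq$ did in the Möbius argument.

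For the ($\Rightarrow$) direction, assume $b_\blac$ is a conjugate binet. For a face $f\in F$ with boundary vertices $v_1,\dots,v_4$, the four points $b_\blac(v_i)\in\RP^4$ span a 2-plane. Applying the linear map $\pi_{\eucl^*}$ (central projection from $\PM$) sends this 2-plane to a 2-plane in $S_{\eucl^*}\cong\RP^3$, so the four points $\pi_{\eucl^*}(b_\blac(v_i))=\star b(v_i)$ are coplanar in $\eucl^*$. By projective duality this means the four planes $b(v_i)$ share a common point in $\eucl$, which is precisely the conjugate *net condition on $V$. The same argument for the four faces around a vertex gives the conjugate *net condition on $F$, so $b$ is a conjugate bi*net.

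For the ($\Leftarrow$) direction, assume $b$ is a conjugate bi*net. Fix a face $f$ with boundary vertices $v_1,\dots,v_4$. Since the four planes $b(v_i)$ concur in a common point $p\in\eucl$, their duals $\star b(v_i)$ all lie in the 2-plane $\star p \subset \eucl^*$. By construction of the Laguerre lift, each $b_\blac(v_i)$ lies on the line $\PM\vee\star b(v_i)$, hence all four lifts lie in the 3-dimensional subspace $H\coloneqq \PM\vee\star p\subset\RP^4$. On the other hand, polarity of $b_\blac$ forces $b_\blac(v_i)\in b_\blac(f)^\pol$, which is another 3-dimensional subspace. Generically the intersection $H\cap b_\blac(f)^\pol$ is a 2-plane, and the four lifts therefore lie in a common plane, giving conjugacy of $b_\blac$ on quads of $V$. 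The symmetric argument (vertices $v$ surrounded by four faces $f_1,\dots,f_4$) handles quads of $F$.

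The one step that needs care is the genericity claim that $H$ and $b_\blac(f)^\pol$ are distinct 3-dimensional subspaces, so that their intersection is exactly 2-dimensional. This is the analogue of the implicit genericity in the Möbius proof of Theorem~\ref{thm:conjugatemobiuslift}. The equality $H=b_\blac(f)^\pol$ would force a degenerate configuration relative to $\PM$ (for instance $b_\blac(f)\in\PM^\pol=S_{\eucl^*}$, meaning the corresponding angled plane is centered at the origin, together with a coincidence of the two subspaces). Ruling this out amounts to invoking the regularity of $b$ as a regular orthogonal bi*net, which by Theorem~\ref{thm:laguerre-lift} and the regularity lemma for Laguerre lifts guarantees that $b_\blac$ has no such degenerate incidences; I would expect this to go through exactly as in the Möbius case, and to be the only non-routine check in the argument.
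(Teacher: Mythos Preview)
Your proposal is correct and follows essentially the same route as the paper's proof: both directions mirror the Möbius argument (Theorem~\ref{thm:conjugatemobiuslift}) with $\pi_{\eucl^*}$ and $\PM$ in place of $\pi_\eucl$ and $\PB$. For ($\Leftarrow$) the paper likewise places the four lifts in the 3-space $\PM \vee \star p$ (written there as $\star b(f) \vee \PM$, identifying $p$ with the intersection point $\square^* b(f)$) and in the polar hyperplane $b_\blac(f)^\pol$, and then asserts their intersection is a plane.

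Your genericity worry about $H \neq b_\blac(f)^\pol$ can in fact be resolved cleanly rather than left as an implicit regularity assumption. Since $\blac$ has apex $\PB$, every point is polar to $\PB$, so $\PB \in b_\blac(f)^\pol$ always. On the other hand $H \cap S_{\eucl^*} = \star p$, and $\PB = \star E^\infty \in \star p$ would force $p \in E^\infty$; but the regularity of a bi*net in $\eucl$ (Definition~\ref{def:starneteuclidean}) requires the common intersection point $p$ of the four planes to lie in $\eucl$, hence $\PB \notin H$. Thus the two 3-spaces are distinct and their intersection is exactly a 2-plane, with no appeal to ``generic position'' needed. This is the Laguerre analogue of the (equally unstated) fact in the Möbius proof that $\PB \in H$ but $\PB \notin b_\mobq(f)^\pol$.
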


\begin{proof}\ \linebreak
 ($\Rightarrow$) 
  Follows from Lemma~\ref{lem:principallaguerrelift}.
  \newline
  ($\Leftarrow$) 
  Consider a quad $f\in F$ consisting of the four vertices $v_1,v_2,v_3,v_4\in V$.
  Since $\square b$ is a conjugate bi*net,
  the four planes $\square b(v_1), \square b(v_2), \square b(v_3), \square b(v_4)$ intersect in a point. Correspondingly the four points $\star\square b(v_1), \star\square b(v_2), \star\square b(v_3), \star\square b(v_4) \in \eucl^*$ lie in the plane $\star b(f)$.
  Thus, the four points of the lift $b_\blac(v_1), b_\blac(v_2), b_\blac(v_3), b_\blac(v_4)$ lie in
  the span of $\star b(f)$ and $\PM$, which is 3-dimensional,
  and in the polar space of $b_\blac(f)$, which is also 3-dimensional.
  Thus, their intersection 
  $
    (\star b(f) \vee \PM) \cap b_\blac(f)^\pol
  $
  is a plane.
  The same argument holds for four faces adjacent to a common vertex. Hence, the Laguerre lift is a conjugate binet.
\end{proof}

Analogously to the Möbius lift, the fact that $b_\blac$ is a conjugate binet provides additional Laguerre geometric structure for principal binets.
For any vertex or face $d \in D$, the oriented planes corresponding to the planar section
\begin{align}
  \cone(d) \coloneqq
  \xi_\opl(\square b_\blac(d) \cap \blac),
  \label{eq:principalcone}
\end{align}
are a 1-parameter family of oriented planes touching an oriented cone in $\eucl$.
We identify $\cone(d)$ with this oriented cone (see Figure~\ref{fig:principal-cones}).
Note that the normals of the oriented tangent planes of $\cone(d)$ are contained in the normal circle
\[
  b_\nci(d) = \xi_\nci \circ b_\blac(d)
  = \pi_\unis(b_\blac(d))^\pol \cap \unis
  = \pi_{\unis}(\square b_\blac(d) \cap \blac),
\]
from the orthogonal circle representation $b_\nci$ of $b$.
In particular, the axis of $\cone(d)$ and the axis of $b_\nci(d)$ are parallel.
\begin{figure}[H]
  \centering
  \includegraphics[width=0.38\textwidth]{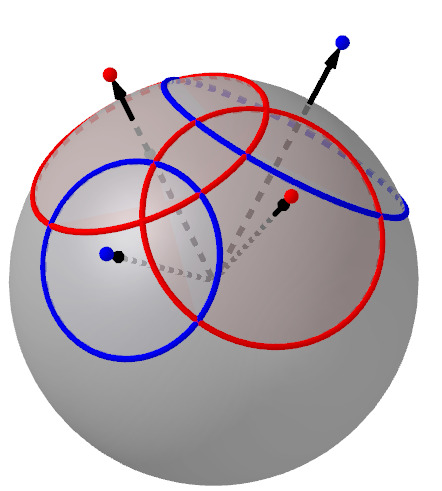}
  \hspace{0.05\textwidth}
  \includegraphics[width=0.5\textwidth]{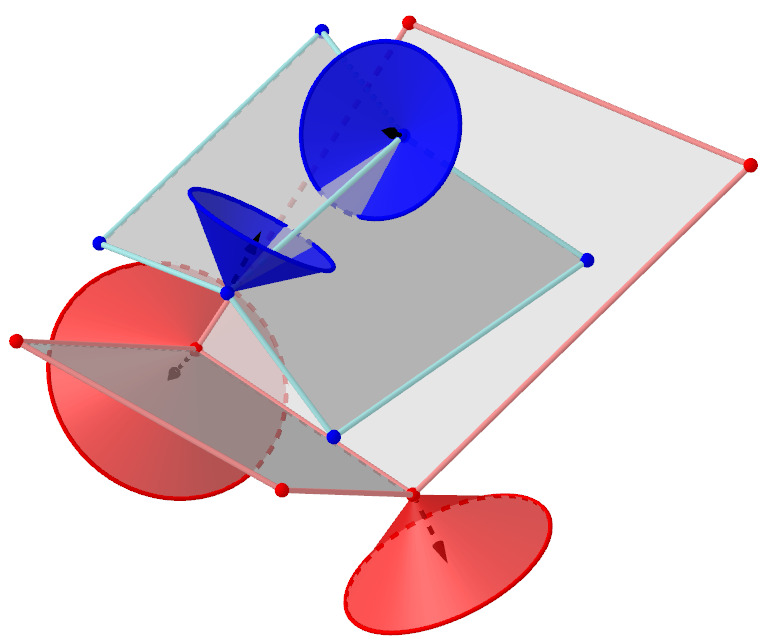}
  \caption{
    Left: Normal binet and orthogonal circle representation of a principal binet.
    Right: The corresponding cones $\cone$.
    The normals of the oriented tangent planes of $\cone(d)$ are contained in the corresponding normal circle $\xi_\nci(d)$ from the orthogonal circle representation.
  }
  \label{fig:principal-cones}
\end{figure}

\begin{proposition}
  \label{prop:cone-axis}
  Let $b : D \rightarrow \eucl$ be a regular principal binet,
  and let $N : D \rightarrow \linesof{\eucl}$ be its normal bicongruence.
  For every vertex or face $d \in D$
  \begin{enumerate}
  \item
    the apex of the oriented cone $\cone(d)$ is $b(d)$,
  \item
    and the axis of $\cone(d)$ is the normal line $N(d)$.\qedhere
  \end{enumerate}
\end{proposition}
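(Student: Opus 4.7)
The strategy is to unpack the Laguerre-geometric description of $\cone(d)$ in terms of its tangent planes and to read off the apex and axis from the structure of the conjugate binet $b_\blac$ in $\RP^4$, which is the Laguerre lift of the principal bi*net $\square b$.

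For (i), I would first recall that by~\eqref{eq:principalcone} every point of the conic $\square b_\blac(d)\cap\blac$ projects under $\xi_\opl$ to an oriented tangent plane of $\cone(d)$. Since $b_\blac$ is a conjugate binet (by the theorem preceding this proposition), $\square b_\blac(d)$ is spanned by the four points $b_\blac(d')$ with $d'\inc d$, all of which lie on $\blac$. Hence the four oriented planes $\xi_\opl(b_\blac(d'))$, whose underlying non-oriented planes are exactly the $\square b(d')$, are among the tangent planes of $\cone(d)$. From $\square b(d') = \spa\{b(d'') : d''\inc d'\}$ and $d\inc d'$ one reads off $b(d)\in \square b(d')$ for every $d' \inc d$. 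Thus $b(d)$ is a common point of four distinct tangent planes of $\cone(d)$. The regularity assumption that $\square b$ is a regular bi*net (together with Definition~\ref{def:starneteuclidean}) forces the intersection of any three of these planes to consist of a single point; since every tangent plane of the cone contains the apex, this common point must be the apex, i.e., $b(d)$.

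For (ii), I would use the explanation given just before the proposition: the unit normals of the oriented tangent planes of $\cone(d)$ all lie on the spherical circle $b_\nci(d)=\pi_\unis(b_\blac(d))^\pol\cap\unis$. The axis of this circle on $\unis$, and therefore the axis of the cone itself, has direction $\pi_\unis(b_\blac(d))$. By Lemma~\ref{lem:laguerre-lift-projection} applied to $b_\blac$, the map $n\coloneqq \pi_\unis\circ b_\blac$ is a normal binet of $\square b$, so $n(d)$ is orthogonal to $\square b(d)$. Combining this with part (i), the axis of $\cone(d)$ is the line through $b(d)$ orthogonal to $\square b(d)$, which by Definition~\ref{def:normalbicongruence} is precisely the normal line $N(d)$.

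The main subtlety is bookkeeping the various projections ($\pi_\unis$, $\pi_{\eucl^*}$, $\xi_\opl$, $\xi_\nci$) and keeping in mind that $\square b_\blac$ is the $\square$-operator applied to the binet $b_\blac$ in $\RP^4$, not to any Euclidean object; once these identifications are unpacked, both parts reduce to properties already established for Laguerre lifts, normal binets, and the conjugacy of $b_\blac$.
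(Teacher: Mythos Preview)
Your argument for part~(ii) is correct and matches the paper's proof. The gap is in part~(i): you assert that the four points $b_\blac(d')$ with $d'\inc d$ ``all of which lie on $\blac$''. This is false in general. The Laguerre lift $b_\blac$ takes values in $\RP^4\setminus\ell_\blac$, not in $\blac$; only in the circular-conical special case (Theorem~\ref{thm:circular-conical-laguerre-lift}) does the restriction to $V$ land on the Blaschke cylinder. Consequently $\xi_\opl(b_\blac(d'))$ is not even defined, and the planes $\square b(d')$ are \emph{not} tangent planes of $\cone(d)$: the tangent planes are $\xi_\pl(X)$ for $X$ on the conic $\square b_\blac(d)\cap\blac$, whereas the $b_\blac(d')$ merely span that plane in $\RP^4$ without lying on the conic.

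The paper avoids this by never singling out four tangent planes. Instead it projects the \emph{entire} conic: since $\pi_{\eucl^*}(b_\blac(d'))=\star\square b(d')$ and the incidence $b(d)\in\square b(d')$ dualizes to $\star\square b(d')\in\star b(d)$, the whole projected plane $\pi_{\eucl^*}(\square b_\blac(d))$ sits inside $\star b(d)$. Hence every point of $\square b_\blac(d)\cap\blac$ projects into $\star b(d)$, so every (non-oriented) tangent plane of $\cone(d)$ passes through $b(d)$, which forces the apex to be $b(d)$. Your intuition that conjugacy of $b_\blac$ is the key is right, but it enters through the projection $\pi_{\eucl^*}$ of the spanned plane, not through four points on $\blac$.
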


\begin{proof}\
  \nobreakpar
  \begin{enumerate}
  \item
    The (non-oriented) planes
    $
    \xi_\pl(\star b(d))
    $
    are all the planes in $\eucl$ through the point $b(d)$.
    Since the projection of the planar section
    \[
      \pi_{\eucl^*}(\square b_\blac(d) \cap \blac) \subset \star b(d),
    \]
    lies in the plane $\star b(d)$,
    the corresponding (non-oriented) planes
    \[
      \xi_\pl \circ \pi_{\eucl^*}(\square b_\blac(d) \cap \blac),
    \]
    all go through the point $b(d)$.
    And thus the apex of $\cone(d)$ is the point $b(d)$.
  \item
    The normal circle $b_\nci(d)$ from the orthogonal circle representation is a circle in $\unis$
    whose axis coincides with the normal of $\square b(d)$.
    Therefore, the axis of $\cone(d)$ is the line through $b(d)$ that is normal to $\square b(d)$,
    which is the normal line $N(d)$.\qedhere
  \end{enumerate}
\end{proof}

\section{Lie geometry}
\label{sec:lie}

Recall the projective model of Lie geometry \cite{blaschkevl, cecillie}.
Let $\sca{\cdot, \cdot}_{\lieq}$ be a symmetric bilinear form of signature $\texttt{(++++--)}$, and
\[
\lieq = \set{[x] \in \RP^5}{ \sca{x,x}_{\lieq} = 0}
\]
the corresponding quadric in $\RP^5$, which we call the \emph{Lie quadric}.
\begin{coordinates}
	In homogeneous coordinates of $\RP^5$ we choose
	\[
	\sca{x,x}_{\blac} = x_1^2 + x_2^2 + x_3^2 + x_4^2 - x_5^2 - x_6^2.
	\]
	And thus, in affine coordinates $x_5 = 1$, the Lie quadric is:
	\[
	x_1^2 + x_2^2 + x_3^2 + x_4^2 - x_6^2 = 1.
	\]
\end{coordinates}

We now embed Möbius geometry and Laguerre geometry into Lie geometry in the following way (see Figure~\ref{fig:lie-geometry}).
Let $\PM \in \lieq^-$ be a point inside the Lie quadric, that is a point with signature $\texttt{(-)}$.
Then
\[
  \mobq \coloneqq \lieq \cap \PM^\pol,
\]
is a quadric of signature $\texttt{(++++-)}$, which we identify with the \emph{Möbius quadric}.
Let $\PB \in \mobq$ be a point on the Möbius quadric, that is a point with signature $\texttt{(0)}$ and $\PB \pol \PM$.
Then
\[
  \blac \coloneqq \lieq \cap \PB^\pol,
\]
is a quadric of signature $\texttt{(+++-0)}$, which we identify with the \emph{Blaschke cylinder}.

\begin{figure}[H]
  \centering
  \begin{overpic}[width=0.4\textwidth]{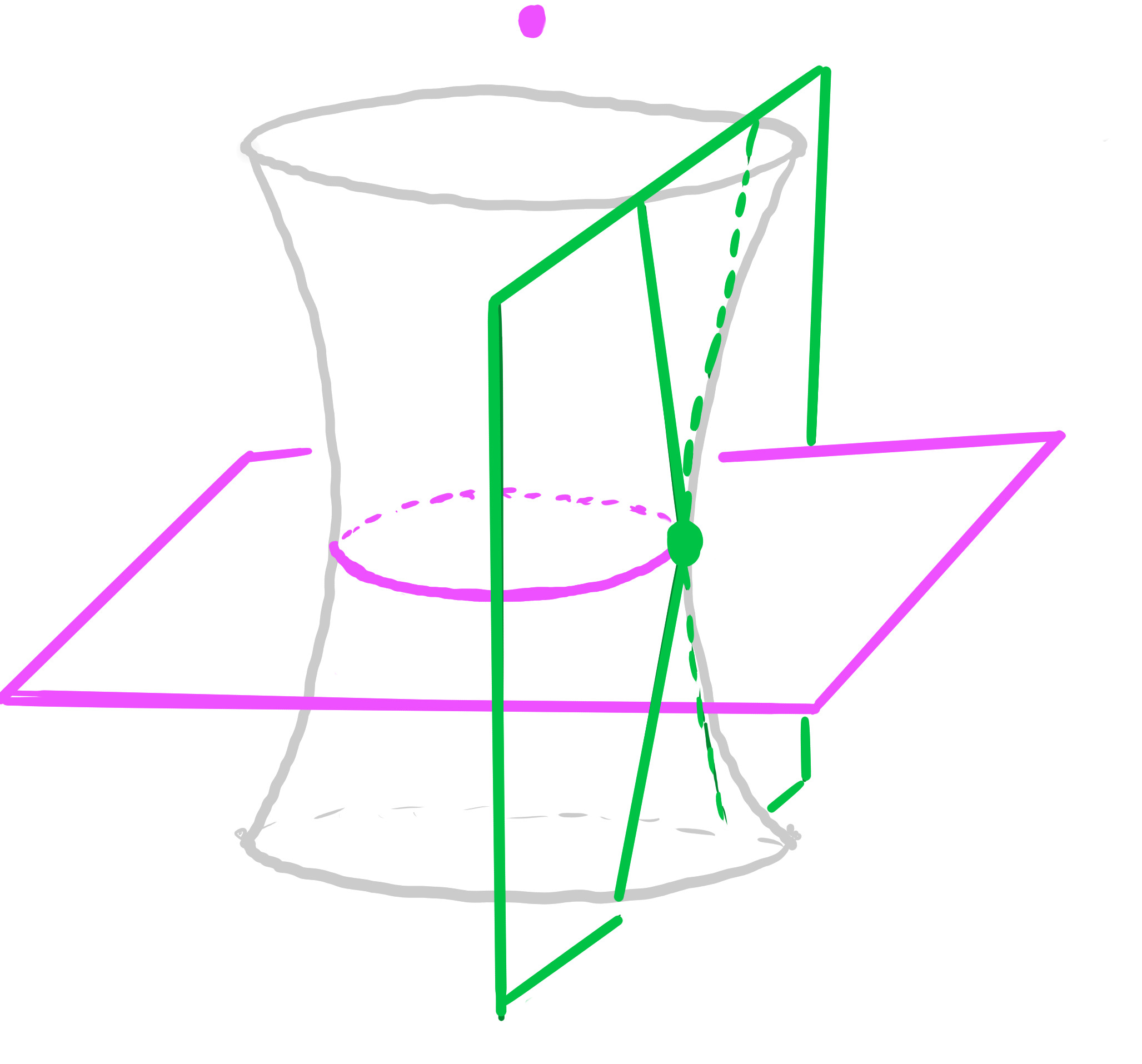}
    \put(2,82){$\lieq \subset \RP^5$}
    \put(48,89){$\color{magenta}\PM$}
    \put(89,46){$\color{magenta}\PM^\pol$}
    \put(33,35){$\color{magenta}\mobq$}
    \put(62,43){$\color{darkgreen}\PB$}
    \put(46,2){$\color{darkgreen}\PB^\pol$}
    \put(52,60){$\color{darkgreen}\blac$}
  \end{overpic}
  \caption{
    The projective model of Lie geometry with Möbius geometry and Laguerre geometry as subgeometries.
  }
  \label{fig:lie-geometry}
\end{figure}

As explained in Section~\ref{sec:moebius}, we embed the 3-dimensional Euclidean space $\eucl$ into $M^\pol$:
\[
  \eucl \subset S_\eucl \subset M^\pol,
\]
where $S_\eucl \subset M^\pol$ is a hyperplane that does not contain $\PB$.
We generalize the central projection $\pi_\eucl$ as
\[
  \pi_{\eucl} : \RP^5 \setminus (\PB \vee \PM) \rightarrow S_\eucl,\qquad
  X \mapsto (X \vee \PB \vee \PM) \cap S_\eucl,
\]
whose restriction to $\PM^\pol$ coincides with the previous definition.
In particular, its restriction to $\mobq$ is the stereographic projection.
We generalize the map $\xi_\sp$ into the set of spheres as
\[
  \xi_\sp: \RP^5 \setminus \{\PM\} \rightarrow \sp, \qquad
  X \mapsto \pi_\eucl( X^\pol \cap \mobq ),
\]
whose restriction to $M^\pol$ again coincides with the previous definition.
Therefore, every point in $\RP^5 \setminus \{\PM\}$ defines a sphere in $\eucl$,
and $\pi_\eucl(X)$ is still the center of $\xi_\sp(X)$ for all $X\in \RP^5  \setminus \{\PM\}$.
For $X \in \RP^5  \setminus \{\PM\}$ and $X' \in \mobq$ the incidence of the point $\pi_\eucl(X')$ lying on the sphere $\xi_\sp(X)$ is described by
\[
  \pi_\eucl(X') \in \xi_\sp(X)
  \quad\Leftrightarrow\quad
  X \pol X'.
\]

Note that if $X \in \PB^\pol$, then $X^\pol$ contains $\PB$ and therefore $\xi_\sp(X)$ is a plane in $\eucl$.
Thus, the restriction of $\xi_\sp$ to $\PB^\pol$ is a map into the set of Euclidean planes $\pl \subset \sp$.
As explained in Section~\ref{sec:laguerre},
we embed the dual 3-dimensional Euclidean space $\eucl^*$ into $\PB^\pol$ as
\[
  \eucl^* \subset S_{\eucl^*} \subset \PB^\pol,
\]
where $S_{\eucl^*} \coloneqq \PB^\pol \cap \PM^\pol$.
For a point $X \in \eucl^*$, the corresponding dual plane in the embedding of the Euclidean space $\eucl$ is given by polarity
\[
  \star X = X^\pol \cap \eucl.
\]
The central projection $\pi_{\eucl^*}$ is generalized to
\[
  \begin{aligned}
    \pi_{\eucl^*}&: \PB^\pol \setminus \{ \PM \} \rightarrow S_{\eucl^*}, \quad X \mapsto  (X \vee \PM) \cap S_{\eucl^*}.
  \end{aligned}
\]
Thus, for a point $X \in \PB^\pol$, we obtain a corresponding plane
\[
  \xi_\sp(X) = \xi_\pl \circ \pi_{\eucl^*}(X) = \star \pi_{\eucl^*}(X),
\]
and with $X' \in \mobq$ the incidence of the point $\pi_\eucl(X')$ lying on the plane $\xi_\sp(X)$ is again described by polarity
\[
  \pi_\eucl(X') \in \star \pi_{\eucl^*}(X)
  \quad\Leftrightarrow\quad
  X \pol X'.
\]

The restriction of $\xi_\sp$ to the Lie quadric is two-to-one, that is a double cover of the space of spheres.
This gives rise to a map
\[
  \xi_\osp : \lieq \rightarrow \osp,
\]
into the space $\osp$ of \emph{oriented Euclidean spheres} (where we include oriented planes and non-oriented points).
Specifically, the restriction of $\xi_\osp$ to $\blac$ is a map into the set of oriented Euclidean planes $\opl \subset \osp$.
In this representation the oriented contact of oriented spheres is given by polarity with respect to the Lie quadric $\lieq$.
\begin{proposition}
  \label{prop:liepolarity}
  Two points $X, X' \in \lieq$ are polar with respect to $\lieq$
  if and only if the two corresponding oriented spheres $\xi_\osp(X)$ and $\xi_\osp(X')$ are in oriented contact.
\end{proposition}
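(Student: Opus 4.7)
The plan is to verify the claim by a direct bilinear computation in coordinates extending those of Coordinates~\ref{coo:mobiusspheres}. First I would fix a basis $e_1, e_2, e_3, e_\infty, e_0, e_6$ of $\R^6$ where $e_1, e_2, e_3$ carry the standard Euclidean form, $e_\infty, e_0$ are isotropic with $\sca{e_\infty, e_0}_\lieq = -\tfrac{1}{2}$, $\sca{e_6, e_6}_\lieq = -1$, and all remaining pairings vanish. A signature count confirms $\sca{\cdot, \cdot}_\lieq$ has signature $\texttt{(++++--)}$. Choosing $\PM = [e_6]$ realizes $\mobq = \lieq \cap \PM^\pol$ with the signature-$\texttt{(++++-)}$ form on $\spa\{e_1, e_2, e_3, e_\infty, e_0\}$ exactly as in Section~\ref{sec:moebius}, and then $\PB = [e_\infty]$ gives $\blac = \lieq \cap \PB^\pol$ as in Section~\ref{sec:laguerre}.

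Second, I would parametrize oriented spheres by writing, for an oriented sphere $\vec{S}$ with center $c \in \R^3$ and signed radius $r \in \R$, the candidate lift
\[
  x(\vec{S}) \coloneqq c + e_0 + (|c|^2 - r^2)\, e_\infty + r\, e_6,
\]
which generalizes \eqref{eq:Möbius-lift-spheres}. A one-line computation yields $\sca{x, x}_\lieq = |c|^2 - (|c|^2 - r^2) - r^2 = 0$, so $[x] \in \lieq$, and for $r = 0$ the formula reduces to the Möbius lift of the unoriented sphere, which identifies $\xi_\osp([x]) = \vec{S}$. For a second oriented sphere with center $c'$ and signed radius $r'$, expanding term by term and using the pairings above gives
\[
  \sca{x, x'}_\lieq
  = \sca{c, c'} - \tfrac{1}{2}(|c|^2 - r^2) - \tfrac{1}{2}(|c'|^2 - r'^2) - r r'
  = -\tfrac{1}{2}\bigl(|c - c'|^2 - (r - r')^2\bigr).
\]
Hence $X \pol X'$ is equivalent to $|c - c'| = |r - r'|$, which is precisely the Euclidean condition for $\vec{S}$ and $\vec{S}'$ to touch in a single point at which their outward normals, encoded by the signs of $r$ and $r'$, coincide, i.e., oriented contact.

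It remains to handle the degenerate cases in which $\xi_\osp(X)$ or $\xi_\osp(X')$ is a non-oriented point or an oriented plane. Non-oriented points correspond to $r = 0$, and then the polarity condition reduces to incidence of $c$ with $\xi_\sp(X')$, which is consistent with Proposition~\ref{prop:orthogonal-spheres-ext}. Oriented planes correspond to points on $\blac \subset \lieq$, which I would parametrize as $u + e_6 - 2h\, e_\infty$ with $u \in \S^2$ and signed offset $h \in \R$ chosen to match the Laguerre orientation convention established via $\pi_\unis$ in Section~\ref{sec:laguerre}; repeating the same bilinear computation reduces oriented contact between a sphere and an oriented plane to the signed distance equation $\sca{u, c'} + h + r' = 0$, and between two oriented planes to $\sca{u, u'} = 1$ together with the matching offset relation. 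The main obstacle is purely bookkeeping: ensuring that the sign attached to $e_6$, the sign of the radius $r$, and the chosen Laguerre orientation of a plane are mutually compatible so that the three incidence relations (sphere-sphere, sphere-plane, plane-plane) all collapse to the single bilinear identity above.
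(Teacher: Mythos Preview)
Your approach is correct and essentially coincides with the paper's: the paper does not give a formal proof but relegates the verification to a Coordinates box, where it writes the same lift $x = c + e_0 + (\abs{c}^2 - r^2)e_\infty + re_6$ and computes that $\sca{x,x'}_\lieq = 0$ is equivalent to the signed inversive distance being $1$, i.e., $(r-r')^2 = \abs{c-c'}^2$. Your treatment is in fact more thorough, since you also address the degenerate cases (null-spheres and oriented planes) explicitly, which the paper omits; your caveat about sign bookkeeping is apt, as the plane--sphere pairing comes out as $\sca{u,c'} + h - r' = 0$ rather than $+r'$ with your chosen lift, but this is precisely the orientation-convention issue you already flagged.
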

In particular, for a point $X \in \lieq$ the set $\xi_\osp(X^\pol \cap \blac)$
consists of all planes in oriented contact with the sphere $\xi_\osp(X)$.

\begin{coordinates}
  We choose 
  \[
    \begin{aligned}
      \PM &= [0,0,0,0,0,1], \\
      \PB &= [0,0,0,1,1,0], \\
      S_\eucl &= \set{[x] \in \RP^5}{ x_4 = x_6 = 0}, \\
      S_{\eucl^*} &= \set{[x] \in \RP^5}{ x_4 = x_5,\ x_6 = 0}.
    \end{aligned}
  \]
  Then the embedding of the Euclidean space is given by
  \[
    \begin{aligned}
      E^\infty &= S_\eucl \cap \PB^\pol = \set{[x] \in \RP^5}{ x_4 = x_5 = x_6 = 0}, \\
      \eucl &= S_\eucl \setminus E^\infty = \set{[x] \in \RP^5}{ x_4 = x_6 = 0, x_5 \neq 0},	
    \end{aligned}
  \]
  with the projection
  \[
    \pi_{\eucl}([x]) = [x_1, x_2, x_3, 0, x_5 - x_4, 0].
  \]

  For the lift to $\RP^5$ we introduce the basis of $\R^6$
  \[
    e_1,\quad
    e_2,\quad
    e_3,\quad
    e_\infty = \tfrac{1}{2}(e_5 + e_4),\quad
    e_0 = \tfrac{1}{2}(e_5 - e_4),\quad
    e_6.
  \]
  Then an oriented sphere with center $c \in \R^3$ and signed radius $r \in \R$
  corresponds to the point $[x] \in \lieq \setminus B^\pol$ with
  \[
    x = c + e_0 + (\abs{c}^2 - r^2)e_\infty + re_6.
  \]
  Vice versa, from a point $[x] \in \lieq \setminus B^\pol$
  the center and signed radius of the corresponding oriented sphere $\xi_\osp([x])$ are recovered by
  \[
    c = \pi_\eucl([x]), \qquad
    r = \frac{\sca{x,e_6}_\lieq}{2\sca{x, e_\infty}_\lieq}.
  \]
  For two points
  \[
    \begin{aligned}
      x &= c + e_0 + (\abs{c}^2 - r^2) e_\infty + re_6,\\
      x' &= c' + e_0 + (\abs{c'}^2 - r'^2) e_\infty + r'e_6.
    \end{aligned}
  \]
  the polarity $\sca{x,x'}_\lieq = 0$ is equivalent to
  \[
    \frac{r^2 + r'^2 - \abs{c - c}^2}{2rr'} = 1 = \cos 0.
  \]
  The left-hand side is the \emph{signed inversive distance} and thus,
  this condition describes the oriented contact of $\xi_\osp([x])$ and $\xi_\osp([x'])$.
\end{coordinates}

\subsection*{Angled spheres}

We now discuss points that do not lie on the Lie quadric.
The \emph{outside} of the Lie quadric is given by
\[
  \lieq^+ = \set{[x] \in \RP^5}{\sca{x,x}_\lieq > 0}.
\]
For a point $X \in \lieq^+$, the polar hyperplane $X^\pol$ is a 4-dimensional subspace of signature $\texttt{(+++--)}$.
The intersection of  $X^\pol$ with the Lie quadric gives us a geometric interpretation for the point $X$
in terms of the oriented spheres
\[
  \xi_\osp(X^\pol \cap \lieq).
\]
The line $X \vee M$ has signature $\texttt{(+-)}$ and thus intersects $\lieq$ in two points $X^+, X^-$,
which correspond to two coinciding oriented spheres $\xi_\osp(X^+)$ and $\xi_\osp(X^-)$ with opposite orientation,
or one non-oriented sphere $\xi_\sp(X) = \xi_\sp(X^+) = \xi_\sp(X^-)$.

For each $X$ there is some fixed angle $\varphi$, such that for every $X' \in \lieq \cap X^\pol$ the normal vector of $\xi_\osp(X')$ at a point of intersection with $\xi_\sp(X)$ has angle $\varphi$ with the normal vector of $\xi_\osp(X^+)$ at this point. Equivalently, the normal vector has angle $\pi -\varphi$ with the normal vector of $\xi_\osp(X^-)$ at this point.

In particular, the subset $\xi_\osp(X^\pol \cap \mobq)$ consists of all points on the non-oriented sphere $\xi_\sp(X)$,
while the subset $\xi_\osp(X^\pol \cap \blac)$ consists of all oriented planes that intersect $\xi_\sp(X^+)$ in a fixed angle.
In the case that $X \notin \PB^\pol$ the subset $\xi_\osp(X^\pol \cap \blac)$ may equivalently be described as all oriented planes which are in oriented contact with an oriented sphere $\xi_\osp(X)$ concentric with $\xi_\sp(X)$.

We extend the map $\xi_\osp$ correspondingly to a map
\[
  \xi_\osp : \RP^5 \setminus \PB^\pol \rightarrow \osp, \qquad
  X \mapsto \xi_\osp(Y),
\]
where the point $Y$ is the intersection point of $X \vee B$ with $\lieq$ that is not $B$.
Then $\xi_\osp(X)$ is the oriented sphere that is in oriented contact with all oriented planes $\xi_\osp(X^\pol \cap \blac)$.
Thus, we introduce the space of \emph{angled spheres} as pairs of a non-oriented sphere and a concentric oriented sphere of smaller radius
\[
  \asp \coloneqq \text{Angled spheres}(\eucl).
\]
This gives rise to the map
\[
  \xi_\asp : \lieq^+ \setminus \PB^\pol \rightarrow \asp,\qquad
  X \mapsto \left(\xi_\sp(X), \xi_{\osp}(X)\right).
\]

\begin{coordinates}
  A point
  \[
    [x] = \left[ c + e_0 + (\abs{c}^2 - r^2)e_\infty + r\cos\varphi\ e_6 \right] \in \lieq^+
  \]
  with $c \in \R^3$, $r \in \R$, $\varphi \in [0,\pi)$ represents an angled sphere
  with sphere $\xi_\sp([x])$ and angle $\varphi$.

  Indeed, for a point
  \[
    [x'] = \left[ c' + e_0 + (\abs{c'}^2 - r'^2)e_\infty + r' e_6 \right] \in \lieq,
  \]
  which represents an oriented sphere $\xi_\osp([x'])$, the polarity condition $\sca{x,x'}_\lieq = 0$ is equivalent to
  \[
    \frac{r^2 + r'^2 - \abs{c - c'}^2}{2rr'} = \cos\varphi.
  \]
  Thus $\xi_\sp([x'])$ intersects $\xi_\sp([x])$ in the angle $\varphi$.
\end{coordinates}

We may view the set of angled planes $\apl$ as a (degenerate) subset of angled spheres $\asp$, and extend the map $\xi_\asp$ to all of $\lieq^+ \setminus (\PB \vee \PM)$.
Furthermore, for points $X \in \lieq^-$ the oriented sphere $\xi_\osp(X)$ has greater radius than the non-oriented sphere $\xi_\sp(X)$,
or the non-oriented sphere $\xi_\sp(X)$ may become imaginary.
The corresponding pairs can be interpreted as \emph{imaginary angled spheres}.
If we extend the set $\asp$ accordingly, we can extend the map $\xi_\asp$ to a map
\[
  \xi_\asp : \RP^5 \setminus (\PB \vee \PM) \rightarrow \asp.
\]
\begin{remark}
  Angled spheres were already discussed by Blaschke \cite{blaschkevl}, as \emph{sphere-complexes} both in Laguerre geometry
  and in Lie geometry.
  This includes the description of angled spheres in terms of pairs of concentric non-oriented and oriented spheres, see \cite[Figure~44]{blaschkevl}.
  The non-imaginary case of angled spheres is also known as \emph{turbines}, see \cite{kasnerlie, narasingalie}.
\end{remark}

\subsection*{Pencils of angled spheres}

Furthermore, we discuss lines in $\RP^5$ not contained in the Lie quadric.
By the correspondence of points outside the Lie quadric with angled spheres,
lines outside the Lie quadric therefore correspond to \emph{pencils of angled spheres}.
We now give three geometric characterizations of lines in $\RP^5$ with respect to $\lieq$ (see Figure~\ref{fig:liepencil}).
Let $\ell \in \linesof{\RP^5}$ be a line of signature $\texttt{(+-)}$.

The first way to describe $\ell$ is via the two planes
\[
  E_M = (\ell^\pol \cap M^\pol),
  \qquad
  E_B = (\ell^\pol \cap B^\pol).
\]
Vice versa, two planes $E_M \subset M^\pol$, $E_B \subset B^\pol$ correspond to a line $\ell$
if and only if they intersect in a line in $M^\pol \cap B^\pol$.
The plane $E_M$ corresponds to a circle $\circl \coloneqq \pi_\eucl(E_M \cap \mobq)$ in Möbius geometry.
The plane $E_B$ corresponds to an oriented cone $\cone \coloneqq \xi_\opl(E_B \cap \blac)$ in Laguerre geometry.
The circle $\circl$ and the oriented cone $\cone$ are coaxial with axis $\pi_\eucl(\ell)$.
We call this pair a \emph{coaxial circle-cone pair}.
The angled spheres of the pencil consist of all spheres with center on the axis $\pi_\eucl(\ell)$ that contain the circle $\circl$,
where the angle is given by the intersection angle with $\cone$.

The second way to describe $\ell$ is via the two points
\[
  P_M = \ell \cap M^\pol,
  \qquad
  P_B = \ell \cap B^\pol,
\]
that span $\ell$.
The point $P_M$ corresponds to a non-oriented sphere $\xi_\sp(P_M)$ in Möbius geometry.
It is centered at the apex of $\cone$ and contains the circle $\circl$.
The point $P_B$ corresponds to an angled plane $\xi_\apl(P_B)$ in Laguerre geometry.
Its center plane is the plane of the circle $\circl$, while the angle is given by the intersection angle of the center plane with $\cone$.


The third way to describe $\ell$ is via the two points  
\[
  \{ Q_1, Q_2 \} =   \ell \cap \lieq,
\]
that span $\ell$.
These points correspond to two oriented spheres $\xi_\osp(Q_1), \xi_\osp(Q_2)$ in Lie geometry.
These are the two spheres that contain the circle $\circl$ and are in oriented contact with the cone $\cone$.

\begin{figure}[H]
  \includegraphics[width=220pt]{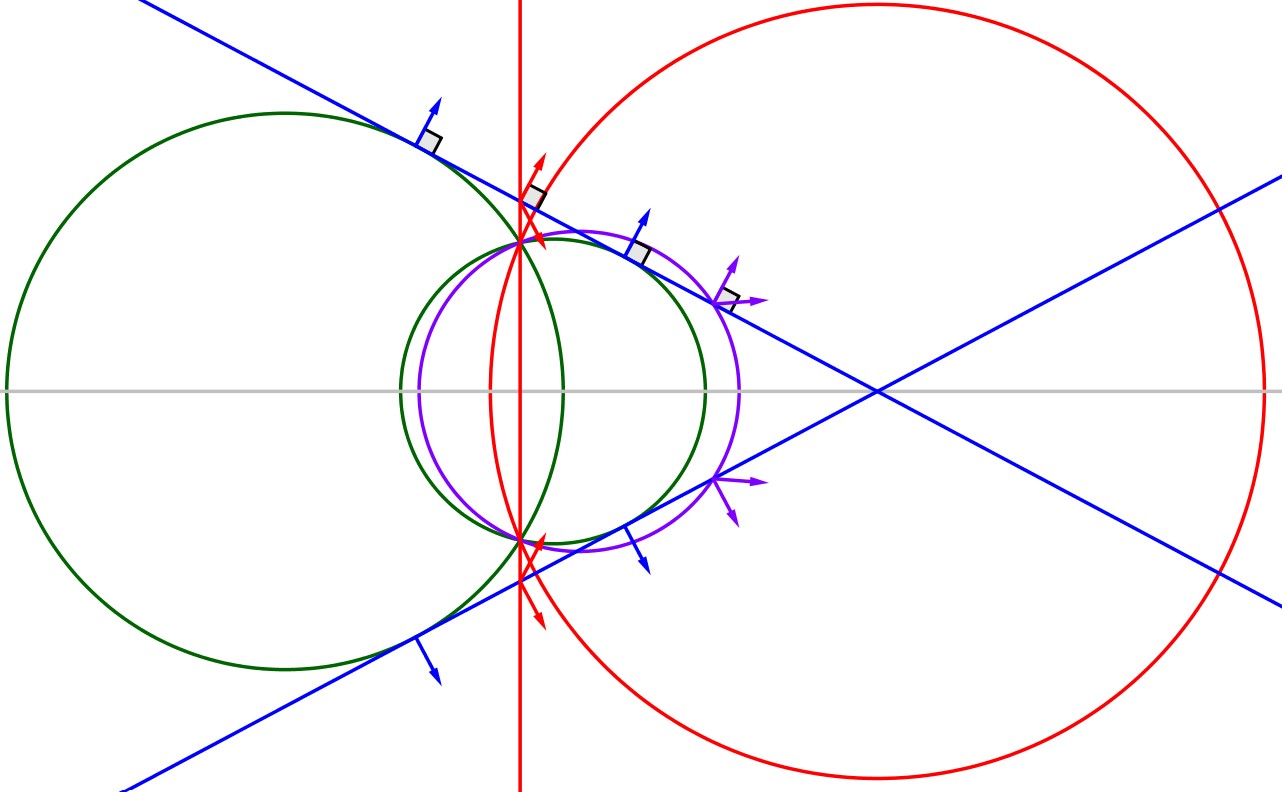}
  \includegraphics[width=220pt]{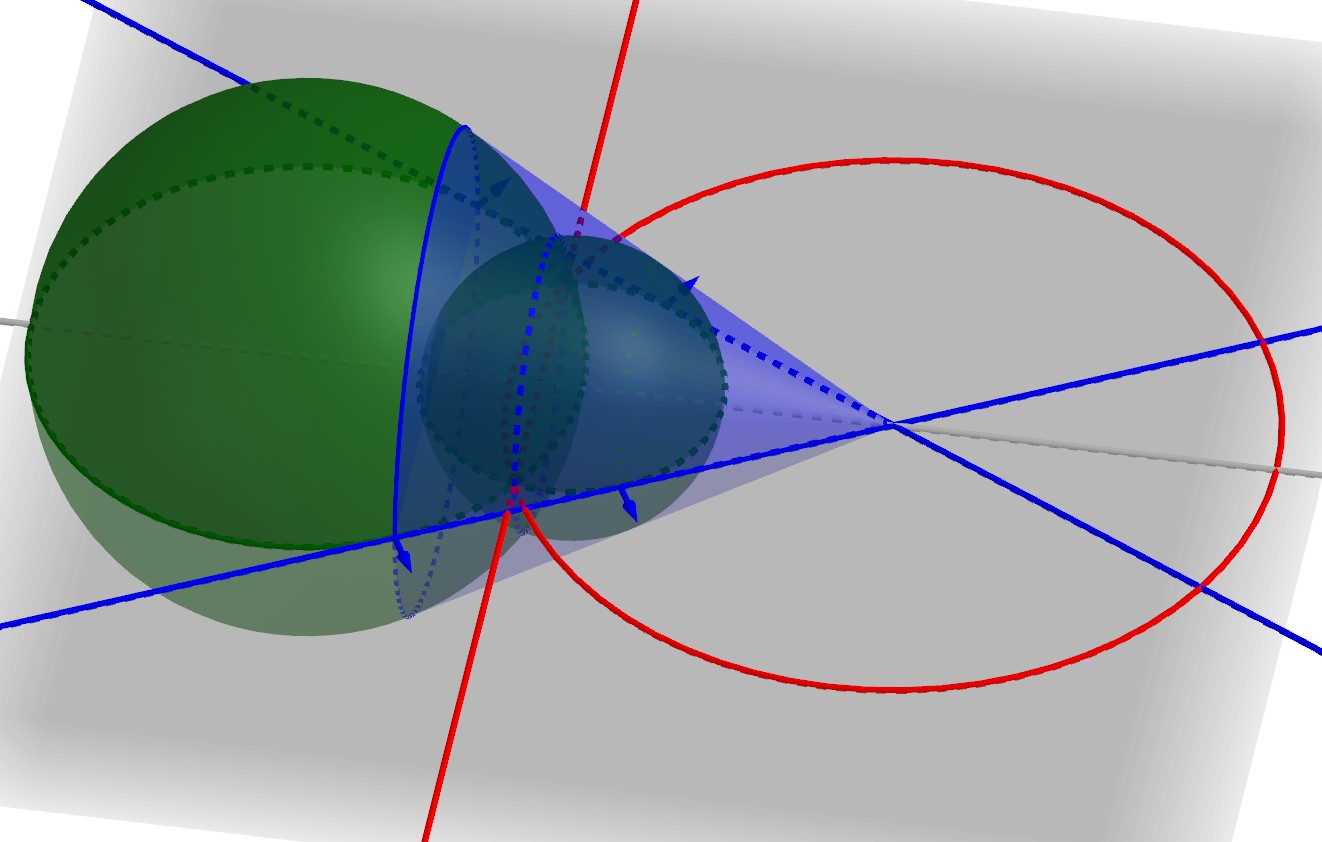}\\
  \includegraphics[width=220pt]{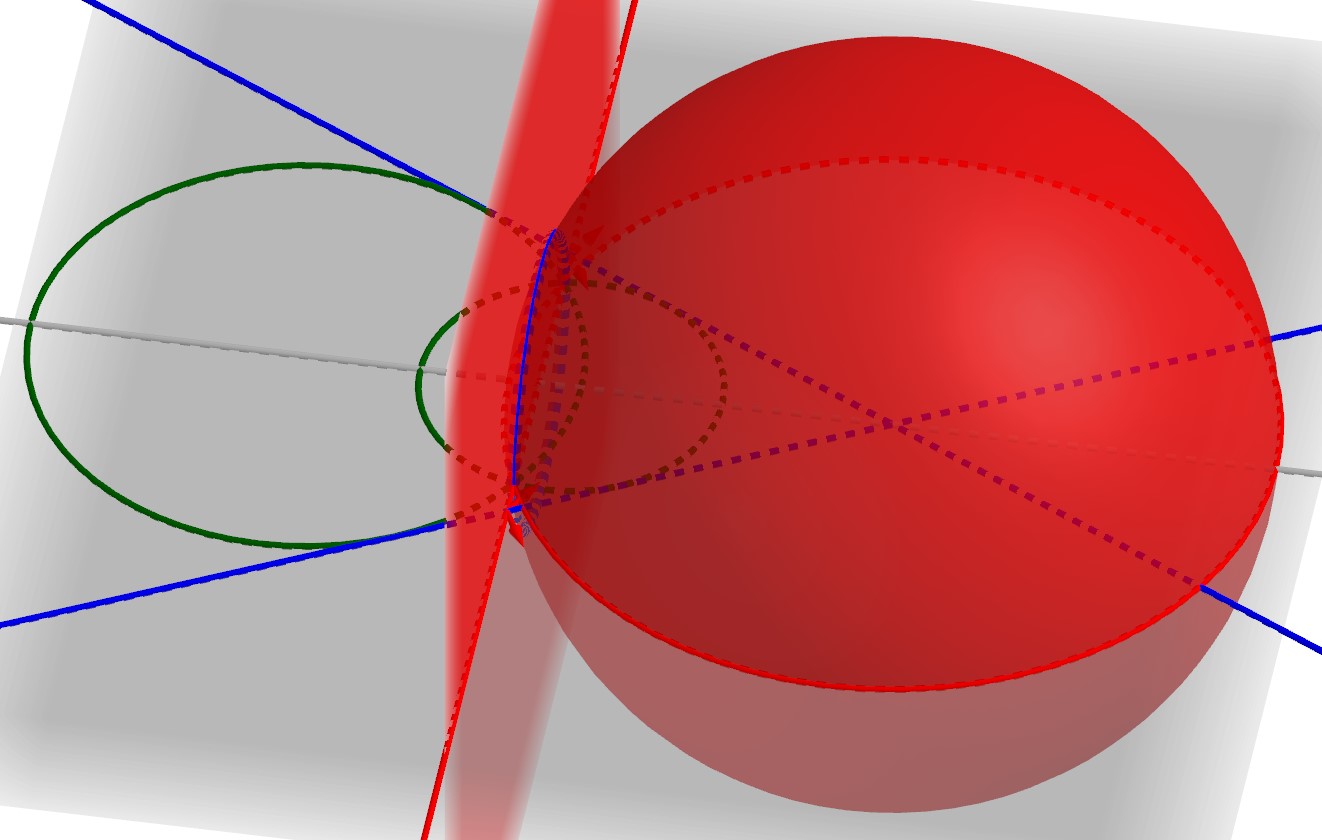}
  \includegraphics[width=220pt]{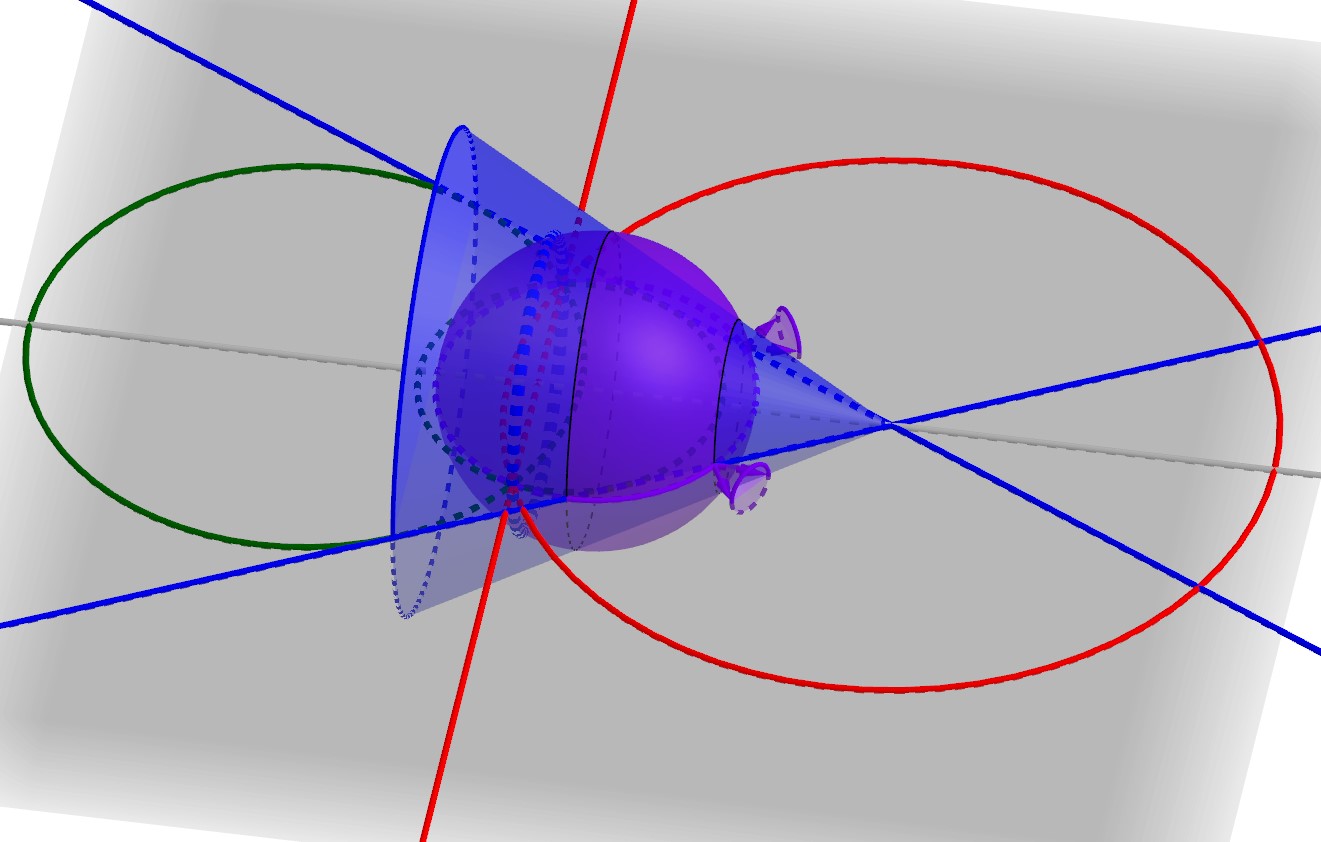}
  \caption{
    Representation of a pencil of angled spheres,
    which corresponds to a line $\ell$ in $\RP^5$.
    In blue the intersections of $\ell^\pol$ with $M^\pol$ and $B^\pol$, corresponding to a circle and an oriented cone, respectively.
    In red the intersections of $\ell$ with $M^\pol$ and $B^\pol$, corresponding to a sphere and an angled plane, respectively.
    In green the intersections of $\ell$ with $\lieq$, corresponding to two oriented spheres.
    In violet an arbitrary point on $\ell$, corresponding to an angled sphere in the pencil described by $\ell$.}
  \label{fig:liepencil}
\end{figure}

\section{Lie lift of principal binets}\label{sec:lielift}

We are now prepared to define the Lie lift of a principal binet.
Recall, that a principal binet comes as a pair of a binet and a bi*net.
Thus, both the Möbius lift and the Laguerre lift are well-defined.
We join both lifts to obtain the Lie lift (see Figure~\ref{fig:lie-lift}).

\begin{definition}[Lie lift]\label{def:lielift}
  Let $b: D \rightarrow \eucl$ be a principal binet.
  A line bicongruence $b_\lieq: D \rightarrow \linesof{\RP^5}$ is a \emph{Lie lift} of $b$ if
  \begin{enumerate}
	  \item the binet $b_\mobq: D \rightarrow M^\pol, \quad d \mapsto b_\lieq(d) \cap M^\pol$ is a Möbius lift of $b$,
	  \item and the binet $b_\blac: D \rightarrow B^\pol, \quad d \mapsto b_\lieq(d) \cap B^\pol$ is a Laguerre lift of $\square b$.\qedhere
  \end{enumerate}
\end{definition}

\begin{figure}[H]
  \centering
  \begin{overpic}[width=0.4\textwidth]{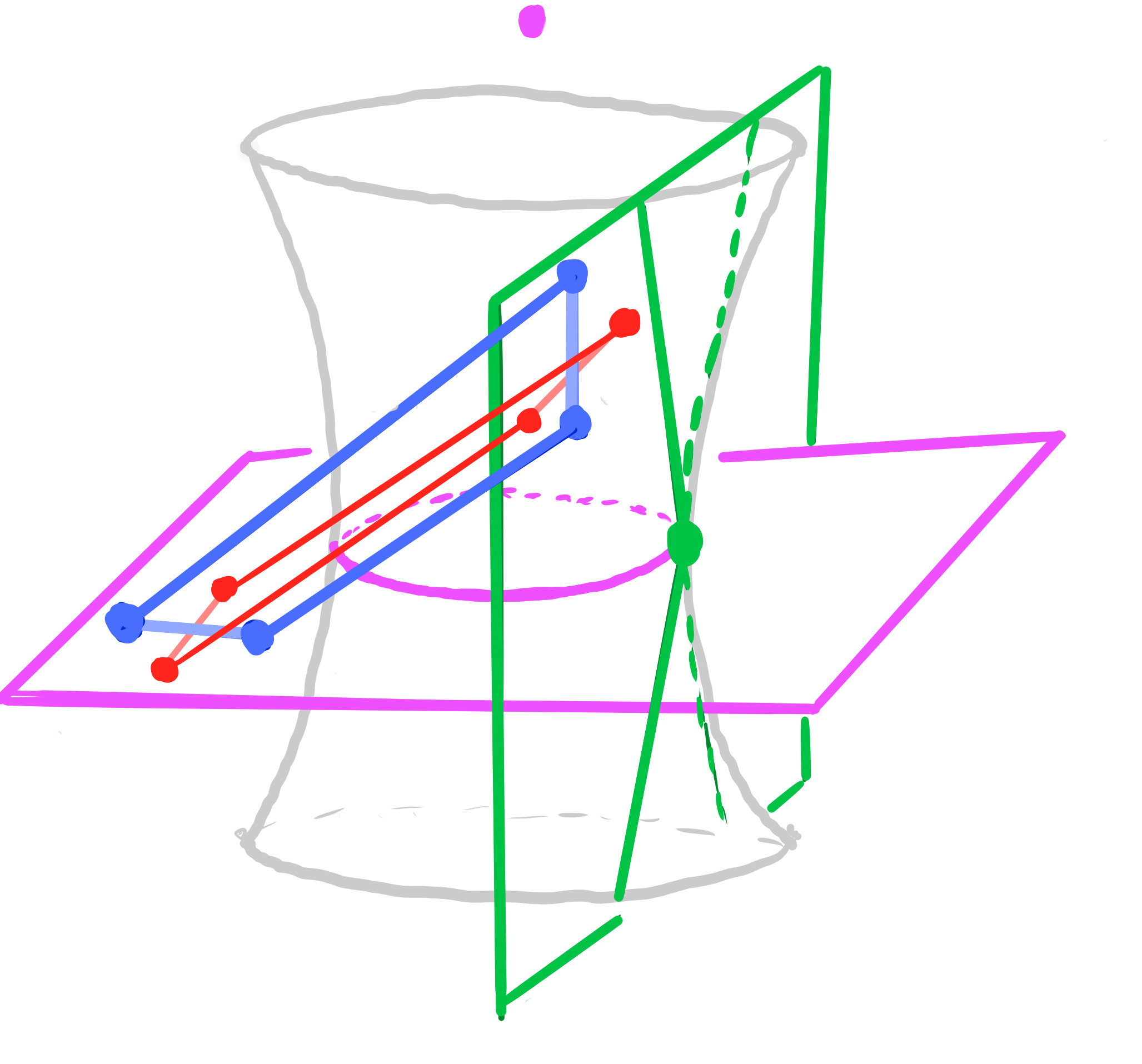}
    \put(2,82){$\lieq \subset \RP^5$}
    \put(48,89){$\color{magenta}\PM$}
    \put(89,46){$\color{magenta}\PM^\pol$}
    \put(33,35){$\color{magenta}\mobq$}
    \put(62,43){$\color{darkgreen}\PB$}
    \put(46,2){$\color{darkgreen}\PB^\pol$}
    \put(51,23){$\color{darkgreen}\blac$}
    \put(5,34){$\color{blue}b_\mobq$}
    \put(52,68){$\color{blue}b_\blac$}
    \put(32,61){$\color{blue}b_\lieq$}
  \end{overpic}
  \caption{
    Lie lift $b_\lieq$ of a principal binet $b$.
    The lines $b_\lieq$ are spanned by points of the Möbius lift $b_\mobq$ and the Laguerre lift $b_\blac$.
  }
  \label{fig:lie-lift}
\end{figure}

\begin{lemma} \label{lem:polarlielift}
  The Lie lift $b_\lieq$ of a principal binet $b$ is a polar bicongruence.
\end{lemma}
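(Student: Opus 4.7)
The Lie lift of $b$ is defined pointwise as the line $b_\lieq(d) = b_\mobq(d) \vee b_\blac(d)$, so two things must be established: that $b_\lieq(d) \pol b_\lieq(d')$ for all incident $d, d' \in D$, and that the restrictions of $b_\lieq$ to $V$ and to $F$ are line congruences.

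For the polarity at an incident pair, it suffices to verify polarity of the four pairs formed from the four spanning points. The diagonal pairs $b_\mobq(d) \pol b_\mobq(d')$ and $b_\blac(d) \pol b_\blac(d')$ are inherited from the Möbius and Laguerre lifts being polar binets, using that $\sca{\cdot,\cdot}_\lieq$ restricts on $M^\pol$ to the Möbius form and on $B^\pol$ to the Blaschke form. The cross polarities $b_\mobq(d) \pol b_\blac(d')$ admit a clean geometric interpretation: $b_\mobq(d)$ represents a sphere with center $b(d)$, $b_\blac(d')$ represents an angled plane with center plane $\square b(d')$, and the $\lieq$-polarity of these two points is equivalent to the center $b(d)$ lying on the plane $\square b(d')$. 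Using the coordinate formulas from the introduction, this amounts to the identity
\[
    \sca{b_\mobq(d), b_\blac(d')}_\lieq = \sca{b(d), u(d')} + h(d'),
\]
whose right-hand side vanishes precisely by the defining incidence $b(d) \in \square b(d')$, which holds for every incident pair by construction of $\square b$.

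For the line bicongruence property, fix a cross $(v,f,v',f') \in C$. By the polarities just verified, applied to all four incidences in the cross, each of the four vertex points $b_\mobq(v), b_\blac(v), b_\mobq(v'), b_\blac(v')$ is polar to each of the four face points $b_\mobq(f), b_\blac(f), b_\mobq(f'), b_\blac(f')$. Let $r_V$ and $r_F$ denote the linear dimensions of the two spans in $\R^6$; then the face span lies in the polar complement of the vertex span, so $r_V + r_F \leq 6$. Regularity of the principal binet together with Lemma~\ref{lem:regularmobiuslift} and its Laguerre counterpart guarantees that $b_\lieq(v) \neq b_\lieq(v')$ and $b_\lieq(f) \neq b_\lieq(f')$ as projective lines, since each Möbius and Laguerre lift is a regular binet and neither $b_\mobq(d), b_\blac(d)$ generically lies in the opposite hyperplane. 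Two distinct projective lines in $\RP^5$ span at least a $3$-dimensional linear subspace, forcing $r_V, r_F \geq 3$. The inequalities combine to $r_V = r_F = 3$, so each quadruple of points is coplanar in a common projective plane of $\RP^5$. Any two distinct lines in a projective plane meet, whence $b_\lieq(v) \cap b_\lieq(v') \neq \emptyset$ and, simultaneously, $b_\lieq(f) \cap b_\lieq(f') \neq \emptyset$.

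The main obstacle is the cross polarity: it depends on correctly matching the sign conventions in the embeddings of $\eucl$ into $M^\pol$ and of the oriented planes into $B^\pol$, so that $\lieq$-polarity of $b_\mobq(d)$ and $b_\blac(d')$ faithfully reflects the primal-dual incidence $b(d) \in \square b(d')$ of the conjugate binet and its associated bi*net. Once this identification is in place, the proof closes by a clean rank count, and it is a pleasant feature of the argument that the line congruence conditions on $V$ and on $F$ are forced to hold together by the single polarity pattern built into a cross.
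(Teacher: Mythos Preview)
Your argument is correct, but it proves more than the lemma demands and takes a different route for the key step.

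By Definition~\ref{def:lielift}, a Lie lift is \emph{by hypothesis} a line bicongruence, so the only thing the lemma asserts is the polarity $b_\lieq(d) \pol b_\lieq(d')$ for incident $d,d'$. Your second half, the rank count $r_V + r_F \le 6$ combined with $r_V, r_F \ge 3$, is therefore superfluous here; it is precisely the argument the paper gives separately as the $(\Leftarrow)$ direction of Theorem~\ref{th:principallielift}, where the existence of the Lie lift is established. The paper deliberately separates the two results so that the polarity computation (which does not use the intersection property) can be reused inside the existence proof. Your version merges the lemma and the theorem into one argument, which is fine logically but obscures this reuse.

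For the cross polarity $b_\mobq(d) \pol b_\blac(d')$, you compute directly in coordinates and reduce to the incidence $b(d) \in \square b(d')$. The paper instead argues coordinate-free: lift $b(d)$ to its preimage $X \in \mobq$ under stereographic projection, use the incidence $b(d) \in \square b(d')$ together with the point--plane polarity relation from Section~\ref{sec:lie} to get $X \pol b_\blac(d')$, observe $B \pol b_\blac(d')$ since $b_\blac(d') \in B^\pol$, and conclude that the entire line $X \vee B \ni b_\mobq(d)$ is polar to $b_\blac(d')$. Your coordinate identity encodes the same geometric fact but relies on matching the embeddings of the Möbius and Laguerre $\RP^4$'s into $\RP^5$, which the paper never spells out explicitly; the coordinate-free version sidesteps that bookkeeping entirely.
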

\proof{
  Consider two incident $d,d' \in D$. We have
  \begin{align}
	b_\lieq(d) = b_\mobq(d) \vee b_\blac(d), \quad b_\lieq(d') = b_\mobq(d') \vee b_\blac(d'),
  \end{align}
  and by definition of the Möbius and Laguerre lifts, $b_\mobq(d) \pol b_\mobq(d')$ and $b_\blac(d) \pol b_\blac(d')$.
  It remains to show that $b_\mobq(d) \pol b_\blac(d')$ (or equivalently $b_\blac(d) \pol b_\mobq(d')$).
  Let $X \in \mobq \subset M^\pol$ be the inverse stereographic projection of $b(d)$, that is $\pi_\eucl(X) = b(d)$.
  Then the Möbius lift of $b(d)$ satisfies $b_\mobq(d) \in L \coloneqq X \vee B$.
  The point $b(d)$ is contained in the plane $\square b(d')$.
  This is a special case of the relation of oriented contact described in Proposition~\ref{prop:liepolarity},
  and therefore $X \pol b_\blac(d')$.
  Moreover, since $b_\blac(d') \in \blac \subset \PB^\pol$, we have $\PB \pol b_\blac(d')$.
  Together the last two observations imply that $L \pol b_\blac(d')$, and therefore in particular $b_\mobq(d) \pol b_\blac(d')$.\qed
}

\begin{lemma}
  \label{lem:lie-lift-projection}
  Let $b_\lieq : D \rightarrow \RP^4 \setminus \PB^\pol$ be a polar bicongruence with respect to the Lie quadric $\lieq \subset \RP^5$.
  Then there is a principal binet $b$ such that 
  \begin{enumerate}
	  \item the binet $b_\mobq: D \rightarrow M^\pol, \quad d \mapsto b_\lieq(d) \cap M^\pol$ is a Möbius lift of $b$,
	  \item and the binet $b_\blac: D \rightarrow B^\pol, \quad d \mapsto b_\lieq(d) \cap B^\pol$ is a Laguerre lift of $\square b$.\qedhere
  \end{enumerate}  
\end{lemma}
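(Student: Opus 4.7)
The plan is to construct the principal binet $b$ by intersecting each line of $b_\lieq$ with the two hyperplanes $\PM^\pol$ and $\PB^\pol$, recovering what will be the candidate Möbius and Laguerre lifts, and then projecting to $\eucl$. Concretely, I first set
\[
  b_\mobq(d) := b_\lieq(d) \cap \PM^\pol, \qquad
  b_\blac(d) := b_\lieq(d) \cap \PB^\pol,
\]
which are single points under the regularity assumption that $b_\lieq(d)$ lies in neither hyperplane. Since $b_\lieq(d) \pol b_\lieq(d')$ for incident $d \inc d'$, the contained points are also polar, so $b_\mobq$ is a polar binet with respect to $\mobq$ and $b_\blac$ is a polar binet with respect to $\blac$.

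The key structural step is to show that $b_\mobq$ is a conjugate binet. Fix a face $f \in F$ with its four incident vertices $v_1, \ldots, v_4$. From the polarity $b_\lieq(f) \pol b_\lieq(v_i)$ the four lines $b_\lieq(v_i)$ all lie in the $3$-dimensional projective subspace $b_\lieq(f)^\pol$. Intersecting with $\PM^\pol$ produces four points in the $2$-plane $\PM^\pol \cap b_\lieq(f)^\pol$, so $b_\mobq(v_1), \ldots, b_\mobq(v_4)$ are coplanar. The dual argument with four faces around a vertex gives conjugacy on the face side, and the same reasoning with $\PB^\pol$ shows that $b_\blac$ is also a conjugate binet. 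Defining $b := \pi_\eucl \circ b_\mobq$, Lemma~\ref{lem:moebius-lift-projection} yields that $b$ is orthogonal, and since central projection preserves planarity, $b$ is conjugate as well, hence principal. By construction $b_\mobq$ satisfies the defining conditions of a Möbius lift of $b$.

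The main obstacle is the remaining consistency check: one must verify that $b_\blac$ is the Laguerre lift of the specific bi*net $\square b$ (rather than of some other bi*net), i.e., that $\square b(d) = \star \pi_{\eucl^*}(b_\blac(d))$ for every $d \in D$. My plan is to chase polarities in the embedded Lie framework of Section~\ref{sec:lie}. For a face $f$ and an incident vertex $v$, the polarity $b_\lieq(v) \pol b_\lieq(f)$ combined with $b_\blac(f) \in \PB^\pol$ shows that $b_\blac(f)$ is polar to the entire line $b_\mobq(v) \vee \PB$. This line also contains the inverse stereographic lift $P(v) \in \mobq$ of $b(v)$, hence $b_\blac(f) \pol P(v)$. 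The incidence relation from Section~\ref{sec:lie}, which states that for $X \in \PB^\pol$ and $X' \in \mobq$ one has $\pi_\eucl(X') \in \star \pi_{\eucl^*}(X)$ if and only if $X \pol X'$, then places $b(v) = \pi_\eucl(P(v))$ in the plane $\star \pi_{\eucl^*}(b_\blac(f))$. Running this for all four vertices around $f$, and symmetrically for the four faces around a vertex, puts the relevant quadruples of points of $b$ inside the planes cut out by $b_\blac$, so by the regularity of $b$ these planes must coincide with $\square b(d)$, completing the identification.
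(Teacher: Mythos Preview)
Your proof is correct and follows essentially the same approach as the paper. The only cosmetic differences are that the paper phrases the conjugacy step via the auxiliary line $A(d)=(b_\lieq(d)\vee M)\cap M^\pol$ and its polar plane, whereas you use the equivalent formulation $b_\lieq(f)^\pol\cap M^\pol$ directly; and for the consistency check that $b_\blac$ is the Laguerre lift of $\square b$ (not some other bi*net), the paper simply points back to the polarity chase in the proof of Lemma~\ref{lem:polarlielift} run in reverse, while you spell out that same chase explicitly.
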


\proof{
	Consider some $d\in D$ and define the line
	\begin{align}
		A(d) \coloneqq (b_\lieq(d) \vee M) \cap M^\pol,
	\end{align}
	which contains $b_\mobq(d)$.
	Since $b_\lieq$ is a polar line congruence we see that $b_\mobq(d')$ is polar to $A(d)$, whenever $d'$ is incident to $d$.
	Let $d_1,d_2,d_3,d_4 \in D$ be incident to $d\in D$, then it follows that the four points $b_\mobq(d_i)$ are polar to $A(d)$. Consequently, the four points $b_\mobq(d_i)$ are contained in a plane. Hence, $b_\mobq$ is a conjugate polar binet, and thus, by Lemma~\ref{lem:principalmobiuslift}, the Möbius lift of some principal binet $b$. Analogously, we find that $b_\blac$ is the Laguerre lift of some principal binet $b'$. It remains to show that $b = b'$. For this we need to show that if $d \sim d'$ then $b(d) \in \square b'(d')$. This is equivalent to $b(d)$ being polar to $b'(d')$. This can be concluded by taking the arguments in the proof of Lemma~\ref{lem:polarlielift} in reverse order. As a result, $b$ coincides with $b'$ and thus the claim is proven.\qed
}

\begin{theorem}\label{th:principallielift}
	There is a Lie lift $b_\lieq$ of a regular conjugate binet $b: D \rightarrow \pl$ if and only if $b$ is a principal binet.
\end{theorem}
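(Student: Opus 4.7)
The plan is to use the existence theorems for the Möbius lift (Theorem \ref{th:mobiusortho}) and the Laguerre lift (Theorem \ref{thm:laguerre-lift}), together with the polarity analysis of Lemma \ref{lem:polarlielift}, to construct the Lie lift as the pointwise join of these two lifts and then verify the line bicongruence property.

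For the forward direction, assuming a Lie lift $b_\lieq$ exists, Definition \ref{def:lielift} produces a Möbius lift of $b$ as $d \mapsto b_\lieq(d) \cap \PM^\pol$. Theorem \ref{th:mobiusortho} then forces $b$ to be orthogonal, and combined with the conjugacy hypothesis, $b$ is principal.

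For the reverse direction, assume $b$ is a regular principal binet. Orthogonality gives a Möbius lift $b_\mobq$ by Theorem \ref{th:mobiusortho}, which is a conjugate polar binet by Theorem \ref{thm:conjugatemobiuslift}. Since $\square b$ is a regular principal bi*net, Theorem \ref{thm:laguerre-lift} supplies a Laguerre lift $b_\blac$ of $\square b$, which is similarly a conjugate polar binet by the companion result. I would then define the candidate Lie lift by $b_\lieq(d) \coloneqq b_\mobq(d) \vee b_\blac(d)$. The intersection conditions $b_\lieq(d) \cap \PM^\pol = b_\mobq(d)$ and $b_\lieq(d) \cap \PB^\pol = b_\blac(d)$ are immediate from $b_\mobq(d) \in \PM^\pol \setminus \PB^\pol$ and $b_\blac(d) \in \PB^\pol \setminus \PM^\pol$, which hold for any finite Euclidean point by construction of the two lifts.

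The hard part will be verifying that $b_\lieq$ is actually a line bicongruence, i.e., that lines at adjacent $v, v' \in V$ and at adjacent $f, f' \in F$ do intersect. At a cross $(v,f,v',f') \in C$, the same polarity reasoning as in the proof of Lemma \ref{lem:polarlielift} shows that each of the four points $b_\mobq(v), b_\blac(v), b_\mobq(v'), b_\blac(v')$ is polar to each of the four points $b_\mobq(f), b_\blac(f), b_\mobq(f'), b_\blac(f')$. Writing
\[
  W \coloneqq b_\mobq(f) \vee b_\blac(f) \vee b_\mobq(f') \vee b_\blac(f') \subset \RP^5,
\]
the four $v$-side points all lie in $W^\pol$, of projective dimension $4 - \dim W$. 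A case analysis on $\dim W \in \{0,1,2,3\}$ rules out the degenerate values using regularity: $\dim W \leq 1$ forces $b_\lieq(f) = b_\lieq(f')$, in particular $b_\mobq(f) = b_\mobq(f')$, contradicting regularity of $b_\mobq$ on $F$ via Lemma \ref{lem:regularmobiuslift}; dually, $\dim W = 3$ collapses $W^\pol$ to a line whose unique intersection with $\PM^\pol$ must be both $b_\mobq(v)$ and $b_\mobq(v')$, again a contradiction. The remaining case $\dim W = 2$ puts $b_\lieq(f), b_\lieq(f')$ in the plane $W$ (so they meet) and simultaneously $b_\lieq(v), b_\lieq(v')$ in the plane $W^\pol$ (so they meet as well). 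The genuinely delicate step is to exclude the special-position scenarios $\PM \in W$ and $\PB \in W^\pol$ in the collapse arguments above; these must be handled by invoking the genericity that $b_\blac(f), b_\blac(f') \notin \PM^\pol$ inherent in the Laguerre lift of a regular principal bi*net, and dually $b_\mobq(v), b_\mobq(v') \notin \PB^\pol$ in the Möbius lift.
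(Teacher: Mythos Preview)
Your proposal is correct and follows essentially the same strategy as the paper: take any Möbius lift $b_\mobq$ and any Laguerre lift $b_\blac$, set $b_\lieq(d) = b_\mobq(d) \vee b_\blac(d)$, and use the polarity established in Lemma~\ref{lem:polarlielift} to control dimensions at a cross.

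The only substantive difference is in the dimension-counting step, where the paper's argument is more symmetric and avoids the special-position worries you raise. The paper sets $E \coloneqq b_\lieq(v) \vee b_\lieq(v')$ and $E' \coloneqq b_\lieq(f) \vee b_\lieq(f')$ (your $W$ is $E'$), and observes that the \emph{same} regularity argument via Lemma~\ref{lem:regularmobiuslift} gives $b_\mobq(v) \neq b_\mobq(v')$ and $b_\mobq(f) \neq b_\mobq(f')$, hence $\dim E, \dim E' \geq 2$. Polarity yields $E \subset (E')^\pol$, so $\dim E + \dim E' \leq 4$, forcing $\dim E = \dim E' = 2$ directly. This sidesteps your case analysis entirely: you never need to intersect $W^\pol$ with $\PM^\pol$, and the scenarios $\PM \in W$ or $\PB \in W^\pol$ simply do not arise. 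Your asymmetric treatment (ruling out $\dim W \leq 1$ on the $f$-side and $\dim W = 3$ via the $v$-side collapse) can be made to work, but the symmetric bound is both shorter and cleaner.
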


\proof{
  If $b$ is not an orthogonal binet, then due to Theorem \ref{th:mobiusortho} the binet $b$ cannot have a Möbius lift,
  and therefore no Lie lift either.
  For the reverse direction, we show that any Möbius lift $b_\mobq$ of a principal binet $b$
  together with any Laguerre lift $b_\blac$ of $\square b$ define a line bicongruence
  \begin{align}
    b_\lieq: D \rightarrow \RP^5,\quad b_\lieq(d) = b_\mobq(d) \vee b_\blac(d).
    \label{eq:lielift}
  \end{align}
  Consider a cross $(v,f,v',f') \in C$ and define the two subspaces of $\RP^5$,
  \[
    E \coloneqq b_\lieq(v) \vee b_\lieq(v'), \qquad
    E' \coloneqq b_\lieq(f) \vee b_\lieq(f').
  \]
  Recall that Lemma~\ref{lem:regularmobiuslift} states that the Möbius lift of a regular orthogonal binet is regular, and therefore $b_\mobq(v) \neq b_\mobq(v')$.
  Since $b_\mobq(v) \in b_\lieq(v)$ and $b_\mobq(v') \in b_\lieq(v')$,
  we find that $b_\lieq(v) \neq b_\lieq(v')$.
  With the same reasoning we see that $b_\lieq(f) \neq b_\lieq(f')$ as well.
  Hence, $\dim E, \dim E' > 1$.
  Moreover, note that the proof of Lemma~\ref{lem:polarlielift} does not need the intersection property of line bicongruences,
  therefore the polarity statement holds for any $b_\lieq$ as defined in Equation~\eqref{eq:lielift}.
  This implies $E' \pol E$.
  Therefore $\dim E + \dim E' \leq 4$.
  Together with the previous inequality, this implies $\dim E = \dim E' = 2$, and thus $b_\lieq(v)$ intersects $b_\lieq(v')$ in a point.
  Analogously, $b_\lieq(f)$ intersects $b_\lieq(f')$ in a point as well.\qed
}

The proof of Theorem \ref{th:principallielift} shows that each principal binet has a 2-parameter family of Lie lifts.
One of the parameters comes from the 1-parameter family of Möbius lifts
and the other parameter comes from the 1-parameter family of Laguerre lifts. 

Let us turn to some geometric properties of the Lie lift (see Figure~\ref{fig:principal-coaxial}).
Each line $b_\lieq(d)$ of the Lie lift may geometrically be interpreted as a pencil of angled spheres
which is described by a coaxial circle-cone pair (see Section~\ref{sec:lie}).
The corresponding circle is given by
\[
  \circl(d) = \pi_\eucl(b_\lieq(d)^\pol \cap \mobq),
\]
which coincides with the circle \eqref{eq:orthocircle} introduced for the Möbius lift of principal binets,
and by Proposition~\ref{prop:circle-axis} is contained in the plane $\square b(d)$.
The corresponding cone is given by
\[
  \cone(d) = \xi_\opl(b_\lieq(d)^\pol \cap \blac),
\]
which coincides with the oriented cone \eqref{eq:principalcone} introduced for the Laguerre lift of principal binets,
and by Proposition~\ref{prop:cone-axis} has apex $b(d)$.
The line
\[
  N(d) = \pi_\eucl(b_\lieq(d)),
\]
contains all centers of the spheres in the angled sphere pencil and is also the common axis of $\circl(d)$ and $\cone(d)$.
Thus, $N(d)$ is the normal line of the principal binet $b$ at $d$, that is $b(d) \in N(d)$ and $N(d) \perp \square b(d)$
(see also Propositions~\ref{prop:circle-axis} and~\ref{prop:cone-axis}),
and we obtain the following proposition.
\begin{proposition}
  Let $b_\lieq$ be the Lie lift of a principal binet $b$.
  Then the projection $N \coloneqq \pi_\eucl \circ b_\lieq$
  is the normal bicongruence of $b$ (see Definition~\ref{def:normalbicongruence}).
\end{proposition}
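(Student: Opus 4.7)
The plan is to reduce the statement to the geometric description of lines in $\RP^5$ given in Section~\ref{sec:lie}, combined with Propositions~\ref{prop:circle-axis} and~\ref{prop:cone-axis}. The content to verify is that for each $d \in D$ the line $\pi_\eucl(b_\lieq(d))$ is the normal line of $b$ at $d$, i.e., contains $b(d)$ and is orthogonal to $\square b(d)$.

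First, I would argue that the projection $\pi_\eucl \circ b_\lieq$ is well-defined and yields a line in $\eucl$ for each $d$. Since $b$ is (regular) principal, Lemma~\ref{lem:regularmobiuslift} and its Laguerre counterpart ensure that $b_\mobq(d) \in \mobq^+$ and $b_\blac(d) \notin \blac$ in general position, so $b_\lieq(d)$ has signature $\texttt{(+-)}$, avoids the axis $\PB \vee \PM$, and meets $\PM^\pol$ and $\PB^\pol$ transversally in $b_\mobq(d)$ and $b_\blac(d)$ respectively.

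Next, I would invoke the description from Section~\ref{sec:lie}: any such line $\ell \subset \RP^5$ corresponds to a pencil of angled spheres realized as a coaxial circle-cone pair whose common axis in $\eucl$ is precisely $\pi_\eucl(\ell)$. Applying this to $\ell = b_\lieq(d)$, the circle is
\[
  \pi_\eucl(b_\lieq(d)^\pol \cap \mobq) = \circl(d),
\]
matching the circle \eqref{eq:orthocircle} of the Möbius lift, and the oriented cone is
\[
  \xi_\opl(b_\lieq(d)^\pol \cap \blac) = \cone(d),
\]
matching the cone \eqref{eq:principalcone} of the Laguerre lift. Hence $\pi_\eucl(b_\lieq(d))$ is the common axis of $\circl(d)$ and $\cone(d)$.

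Finally, I would quote Proposition~\ref{prop:circle-axis}(ii) (equivalently Proposition~\ref{prop:cone-axis}(ii)), which identifies this common axis with the normal line of $b$ at $d$ — the line through $b(d)$ perpendicular to $\square b(d)$ — and hence with the normal bicongruence of Definition~\ref{def:normalbicongruence}. This gives $N(d) = \pi_\eucl(b_\lieq(d))$ for every $d \in D$, as required. The only possible obstacle is checking the genericity needed for the coaxial-pair description to apply at every $d$; this is where the regularity assumptions on $b$ (inherited through Lemmas~\ref{lem:regularmobiuslift} and the Laguerre analogue) are invoked, but no new argument is needed beyond the projective-model identifications already set up.
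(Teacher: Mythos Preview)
Your proposal is correct and follows essentially the same route as the paper: the paper's argument, given in the paragraph immediately preceding the proposition, identifies $\pi_\eucl(b_\lieq(d))$ with the common axis of the coaxial circle-cone pair $(\circl(d),\cone(d))$ via the pencil-of-angled-spheres description from Section~\ref{sec:lie}, and then invokes Propositions~\ref{prop:circle-axis} and~\ref{prop:cone-axis} to recognize this axis as the normal line at $d$. Your added care about the signature of $b_\lieq(d)$ and the regularity assumptions is reasonable housekeeping but not a different argument.
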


\begin{figure}[H]
  \centering
  \includegraphics[width=0.6\textwidth]{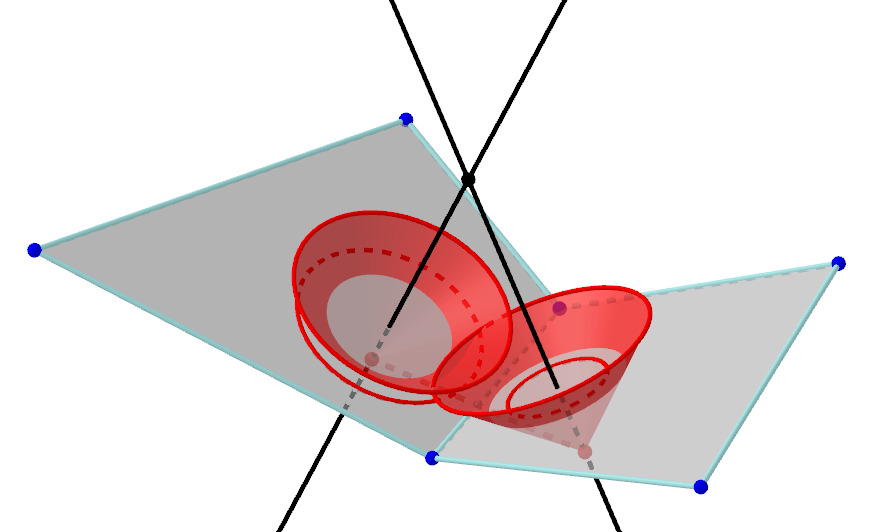}
  \caption{
    Coaxial cirlce-cone pairs of a principal binet.
  }
  \label{fig:principal-coaxial}
\end{figure}

\subsection*{Invariance}
Let $T$ be a Lie transformation and $\tilde T$ the corresponding projective transformation of $\RP^5$.
Similar to the Möbius and Laguerre invariance,
instead of applying the Lie transformation $T$ to a principal binet $b$,
we may apply $\tilde T$ to the Lie lift $b_\lieq$.
The image $\tilde T \circ b_\lieq$ is again a polar bicongruence
and its sections with $\PM^\pol$ and $\PB^\pol$ yield the points and planes of a principal binet.

\section{Principal curvature spheres} \label{sec:curvaturespheres}

We briefly introduce the notion of principal curvature spheres for principal binets (see Figure~\ref{fig:principal-sphere}).

\begin{definition}[Principal curvature spheres]
  Let $b$ be a principal binet with Möbius lift $b_\mobq$.
  Let $(v, f, v', f') \in C$ be a cross of $D$, and let
  \[
    H(v, v') \coloneqq \square b_\mobq(v) \vee \square b_\mobq(v'),\qquad
    H(f, f') \coloneqq \square b_\mobq(f) \vee \square b_\mobq(f'),
  \]
  be the two 3-dimensional subspaces in $\RP^4$ spanned by the two pairs of adjacent faces of the cross.
  Then we call the spheres
  \[
    \cs(v,v') \coloneqq \xi_\sp(H(v,v')^\pol), \qquad
    \cs(f,f') \coloneqq \xi_\sp(H(f,f')^\pol),
  \]
  the \emph{principal curvature spheres}
  of the edges $(v, v')$ and $(f, f')$ respectively.
\end{definition}
The definition of the principal curvature spheres is invariant under Möbius transformations.
It exhibits the following geometric properties.
\begin{proposition}
  \label{prop:curvature-spheres}
  Let $b$ be a principal binet with Möbius lift $b_\mobq$ and orthogonal sphere representation $b_\sp$.
  Let $(v, f, v', f') \in C$ be a cross of $D$.
  Then the following properties hold for the principal curvature sphere $\cs(v, v')$:
  \begin{enumerate}
  \item
    \label{prop:curvature-spheres1}
    $\cs(v, v')$ is orthogonal to the six spheres $b_\sp(\tilde f)$,
    where $\tilde f \in F$ is any face incident to $v$ or $v'$,
  \item
    \label{prop:curvature-spheres2}
    $\cs(v, v')$ contains the two circles $\circl(v)$ and $\circl(v')$, as defined in \eqref{eq:orthocircle},
  \item
    \label{prop:curvature-spheres3}
    the intersection of the lines of the normal congruence $N(v) \cap N(v')$
    is the center of $\cs(v,v')$.
  \end{enumerate}
  Analogous properties hold for the principal curvature sphere $\cs(f, f')$.
\end{proposition}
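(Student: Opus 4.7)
My plan is to use the Möbius-geometric description directly: since $\RP^4$ is 4-dimensional, the polar of the 3-dimensional subspace $H(v,v')$ is a single point $X \in \RP^4$, and $\cs(v,v') = \xi_\sp(X)$. By Proposition~\ref{prop:orthogonal-spheres-ext}, orthogonality of $\cs(v,v')$ to another sphere $\xi_\sp(Y)$ is equivalent to $X \pol Y$, which in turn is equivalent to $Y \in H(v,v')$. All three items will follow from identifying the right subspaces inside $H(v,v')$.

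\emph{Part \ref{prop:curvature-spheres1}:} The plane $\square b_\mobq(v)$ is, by definition, the span of the points $b_\mobq(\tilde f)$ for the four faces $\tilde f$ incident to $v$; similarly for $v'$. Thus $H(v,v') = \square b_\mobq(v) \vee \square b_\mobq(v')$ contains $b_\mobq(\tilde f)$ for each of the six faces $\tilde f$ incident to either $v$ or $v'$. Polarity of $X$ with these points therefore gives the orthogonality of $\cs(v,v')$ with each $b_\sp(\tilde f)$ via Proposition~\ref{prop:orthogonal-spheres-ext}.

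\emph{Part \ref{prop:curvature-spheres2}:} The circle $\circl(v) = \pi_\eucl(\square b_\mobq(v) \cap \mobq)$ lies on the sphere $\cs(v,v') = \xi_\sp(X)$ if and only if $\square b_\mobq(v) \subset X^\pol$, which is equivalent to $X \in \square b_\mobq(v)^\pol$, i.e., to $\square b_\mobq(v) \subset H(v,v')$. This last inclusion holds by the very definition of $H(v,v')$. The same reasoning applies to $\circl(v')$.

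\emph{Part \ref{prop:curvature-spheres3}:} By Proposition~\ref{prop:circle-axis}, the normal line $N(v)$ is the axis of $\circl(v)$. Since $\circl(v)$ lies on the sphere $\cs(v,v')$ by Part~\ref{prop:curvature-spheres2}, this axis passes through the center of $\cs(v,v')$; likewise $N(v')$ passes through the center of $\cs(v,v')$. The normal bicongruence provides the intersection $N(v) \cap N(v')$ in a single point (the two lines do not coincide generically, as the planes $\square b(v), \square b(v')$ differ at a regular cross), and this point must be the common center.

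The only real subtlety is the dimension count for $H(v,v')$: one must verify that $\dim H(v,v') = 3$ under the regularity hypotheses, so that $H(v,v')^\pol$ really is a point rather than a higher-dimensional subspace. This follows from the fact that $\square b_\mobq(v)$ and $\square b_\mobq(v')$ are distinct planes in $\RP^4$ (distinct tangent data at adjacent vertices of a regular conjugate polar binet), meeting in the line $b_\mobq(f) \vee b_\mobq(f')$; this is the main obstacle to watch for, while the remainder is a direct translation between polarity in $\RP^4$ and orthogonality/incidence in $\eucl$. The argument for $\cs(f,f')$ is entirely symmetric, with the roles of $V$ and $F$ exchanged.
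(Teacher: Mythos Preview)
Your proof is correct and follows essentially the same approach as the paper: each item is reduced to the containment of the relevant points or planes in $H(v,v')$, and Part~\ref{prop:curvature-spheres3} is derived from Part~\ref{prop:curvature-spheres2} via Proposition~\ref{prop:circle-axis}. Your version is in fact more detailed than the paper's terse proof, and your explicit dimension count for $H(v,v')$ is a useful addition that the paper leaves implicit.
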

\begin{proof}\
  \begin{enumerate}
  \item
    The points $b_\mobq(\tilde f)$ are contained in the subspace $H(v, v')$.
  \item
    The planes $\square b_\mobq(v)$ and $\square b_\mobq(v')$ are contained in the subspace $H(v,v')$.
  \item
    The line $N(v)$ is the axis of the circle $\circl(v)$,
    and contains all centers of spheres through $\circl(v)$.
    Since $S(v,v')$ contains $\circl(v)$ and $\circl(v')$, its center is given by the intersection of $N(v)$ and $N(v')$.\qedhere
  \end{enumerate}
\end{proof}

\begin{figure}[H]
  \centering
  \includegraphics[width=0.49\textwidth]{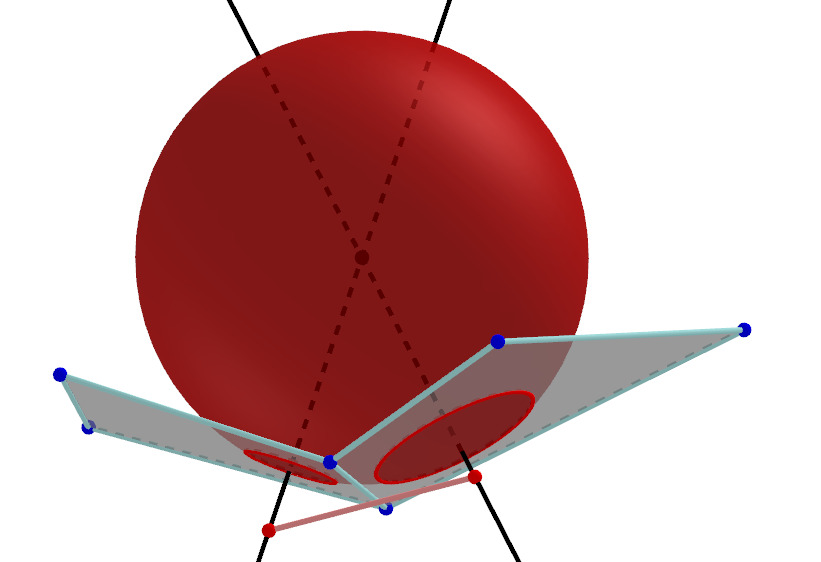}
  \includegraphics[width=0.49\textwidth]{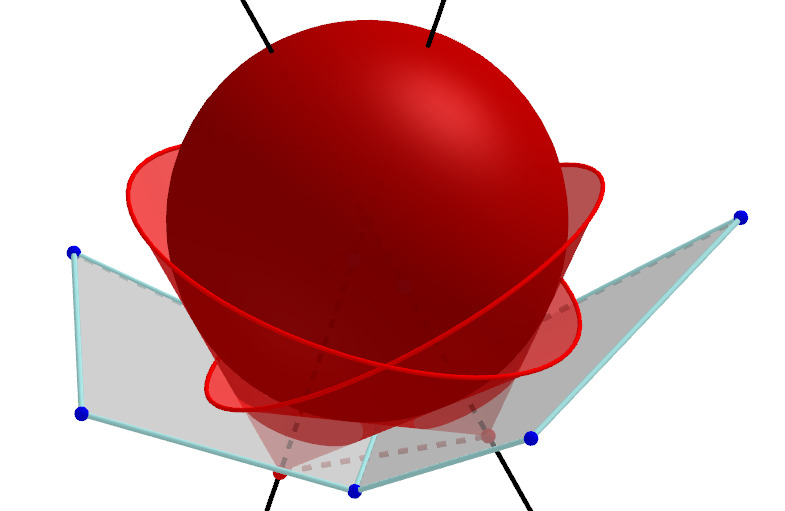}
  \caption{
    Left: (Möbius) Principal curvature sphere. Right: Laguerre curvature sphere.
  }
  \label{fig:principal-sphere}
\end{figure}

Property \ref{prop:curvature-spheres3} of Proposition~\ref{prop:curvature-spheres} is the discrete analog
of the centers of the principal curvature spheres being the focal points of the normal congruence of a surface,
while properties \ref{prop:curvature-spheres1} and \ref{prop:curvature-spheres2} describe the tangentiality to the surface.
\begin{remark}
  A corresponding notion of \emph{mean curvature sphere} may be introduced in the following way.
  Choose an orientation for the two principal curvature spheres $\cs(v,v')$ and $\cs(f,f')$ at a cross
  that represent the orientation of the discrete surface.
  Then the mean curvature sphere at this cross may be defined as the sphere $\tilde \cs$ in the pencil of spheres spanned by $\cs(v,v')$ and $\cs(f,f')$
  such that inversion in $\tilde \cs$ maps $\cs(v,v')$ to $\cs(f,f')$ while reversing their orientation.
\end{remark}

We also define a Lie lift of the principal curvature spheres to angled spheres.
\begin{definition}[Lie lift of principal curvature spheres]
  Let $b$ be a principal binet with Lie lift $b_\lieq$.
  Let $(v, f, v', f') \in C$ be a cross of $D$, and let
  \[
    S(v,v') \coloneqq b_\lieq(v) \cap b_\lieq(v'),\qquad
    S(f,f') \coloneqq b_\lieq(f) \cap b_\lieq(f'),
  \]
  be the two intersection points of adjacent lines of $b_\lieq$.
  Then we call the two corresponding angled spheres
  \[
    \acs(v,v') \coloneqq \xi_\asp(S(v,v')),\qquad
    \acs(f,f') \coloneqq \xi_\asp(S(f,f')),\qquad
  \]
  the \emph{Lie lift of principal curvature spheres} of the edge $(v,v')$ and $(f,f')$ respectively.
\end{definition}
They define a Lie lift of the previously defined principal curvature spheres in the sense
that the non-oriented spheres of the angled spheres coincide with the principal curvature sphere.
\begin{proposition}
  Let $b$ be a principal binet with Möbius lift $b_\mobq$ and corresponding Lie lift $b_\lieq$.
  Let $(v, f, v', f') \in C$ be a cross of $D$.
  Then
  \[
    \xi_\sp(S(v,v')) = \cs(v,v'),\qquad
    \xi_\sp(S(f,f')) = \cs(f,f').\qedhere
  \]
\end{proposition}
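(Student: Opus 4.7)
The goal is to show that $\xi_\sp(S(v,v')) = \cs(v,v')$ where the left-hand side uses the Lie-extended definition $\xi_\sp(X) = \pi_\eucl(X^\pol \cap \mobq)$ (polar in $\RP^5$) while the right-hand side is $\xi_\sp(H(v,v')^\pol)$ with polars taken in $\RP^4 = \PM^\pol$. The plan is to reduce both sides to intersections with the Möbius quadric and then identify the relevant subspaces via the polar structure of the Lie lift.

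First I unpack the definitions. Since $H(v,v')$ is a $3$-dimensional subspace of $\RP^4$, its polar in $\RP^4$ is a point, and $(H(v,v')^\pol)^\pol = H(v,v')$ (polarity in $\RP^4$). Hence
\[
  \cs(v,v') = \pi_\eucl(H(v,v') \cap \mobq),
  \qquad
  \xi_\sp(S(v,v')) = \pi_\eucl(S(v,v')^\pol \cap \mobq),
\]
the latter polar being in $\RP^5$. Since $\mobq \subset \PM^\pol$, it suffices to prove $S(v,v')^\pol \cap \PM^\pol = H(v,v')$.

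The containment $H(v,v') \subset S(v,v')^\pol$ is where the polar bicongruence structure enters. For every face $\tilde f \in F$ incident to $v$ (and analogously for $v'$), the Lie lift satisfies $b_\lieq(\tilde f) \pol b_\lieq(v)$ by Lemma~\ref{lem:polarlielift}. Since $b_\mobq(\tilde f) \in b_\lieq(\tilde f)$ and $S(v,v') \in b_\lieq(v)$, polarity of the two lines gives $b_\mobq(\tilde f) \pol S(v,v')$. By definition $\square b_\mobq(v)$ is spanned by the four points $b_\mobq(\tilde f)$ for $\tilde f \sim v$, so $\square b_\mobq(v) \subset S(v,v')^\pol$, and likewise $\square b_\mobq(v') \subset S(v,v')^\pol$. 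Taking the join yields $H(v,v') \subset S(v,v')^\pol$. Moreover $H(v,v') \subset \PM^\pol$ tautologically, since $b_\mobq(\tilde f) \in \mobq \subset \PM^\pol$.

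The remaining step is a dimension count. Under the regularity of the Möbius lift $b_\mobq$ (Theorem~\ref{thm:conjugatemobiuslift}), the two planes $\square b_\mobq(v), \square b_\mobq(v')$ share precisely the line $b_\mobq(f) \vee b_\mobq(f')$, so $\dim H(v,v') = 3$ in $\RP^5$. On the other hand $S(v,v')^\pol$ and $\PM^\pol$ are both hyperplanes in $\RP^5$, and they are distinct as soon as $S(v,v') \neq \PM$; since $S(v,v') \in b_\lieq(v)$ and the Möbius lift $b_\mobq(v) \in b_\lieq(v)$ lies in $\mobq$ (hence far from $\PM$), this non-equality follows from the regularity/non-degeneracy of the Lie lift. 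Hence $S(v,v')^\pol \cap \PM^\pol$ also has dimension $3$, and the containment proved above forces equality. Intersecting with $\mobq$ and applying $\pi_\eucl$ gives the claim. The argument for $\cs(f,f')$ is entirely symmetric: interchange the roles of vertices and faces throughout. The main obstacle is really just bookkeeping of the dimension argument and justifying $S(v,v') \neq \PM$; the geometric content is wholly packaged in the polar bicongruence property of $b_\lieq$.
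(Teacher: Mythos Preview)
Your approach is exactly the paper's: reduce to $S(v,v')^\pol \cap \PM^\pol = H(v,v')$ and then intersect with $\mobq$. The paper compresses this into one sentence (``It is easy to see that $H(v,v') = S(v,v')^\pol \cap M^\pol$''), so your containment-plus-dimension argument is a reasonable unpacking.

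There is one recurring slip. You twice assert that points of the M\"obius lift lie on $\mobq$: once when justifying $H(v,v') \subset \PM^\pol$ (``since $b_\mobq(\tilde f) \in \mobq$''), and again when arguing $S(v,v') \neq \PM$ (``the M\"obius lift $b_\mobq(v) \in b_\lieq(v)$ lies in $\mobq$''). This is false in general: $b_\mobq$ takes values in $\PM^\pol \setminus \PB^\pol$, not in $\mobq$. The first instance is harmless, since $b_\mobq(\tilde f) \in \PM^\pol$ already holds by definition of the M\"obius lift. The second instance leaves the argument for $S(v,v') \neq \PM$ unjustified. A correct argument: if $\PM \in b_\lieq(v)$, then since $\PM \in \PB^\pol$ (because $\PB \pol \PM$) and $b_\blac(v) \in \PB^\pol$, the entire line $b_\lieq(v)$ would lie in $\PB^\pol$, forcing $b_\mobq(v) \in \PB^\pol$, which contradicts $b_\mobq : D \to \RP^4 \setminus \PB^\pol$.
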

\begin{proof}
  It is easy to see that $H(v,v') = S(v,v')^\pol \cap M^\pol$,
  and thus
  \[
    \xi_\sp(S(v,v')) = \pi_\eucl(S(v,v')^\pol \cap \mobq) = \pi_\eucl(H(v,v') \cap \mobq) = \cs(v,v').\qedhere
  \]
\end{proof}
\begin{remark}
  The angled sphere of the Lie lift of a principal curvature sphere
  may be described by a pair of concentric spheres (see Section~\ref{sec:lie}),
  a non-oriented sphere which coincides with the principal curvature sphere $\cs(v,v')$,
  and an oriented sphere $\ocs(v,v') \coloneqq \xi_\osp(S(v,v'))$.
  The oriented sphere $\ocs(v,v')$ has the property
  that it is in oriented contact with two cones $\cone(v)$ and $\cone(v')$ along a circle each.
  This sphere may be viewed as a Laguerre geometric version of the discrete principal curvature sphere (see Figure~\ref{fig:principal-sphere}, right).
\end{remark}

\section{Circular-conical binets} \label{sec:circularconical}

In this section we discuss the special case of circular nets \cite{cdscircular, bobenkocircular}
and conical nets \cite{lpwywconical},
two classical discretizations of curvature line parametrizations.
\begin{definition}[Circular nets]
  \label{def:circular-net}
  A \emph{circular net} is a net $g: V \rightarrow \eucl$ such that the four image points of each quad are contained in a circle.
\end{definition}

Traditionally conical nets are introduced as nets.
In our context it is more convenient to view them as *nets.
Given a conical *net $h$, the traditional conical net is given by $\square^*h$.
\begin{definition}[Conical nets]
  \label{def:conical-net}
  A \emph{conical *net} is a *net $h: V \rightarrow \pl$ such that there exists an orientation of all planes
  for which the four planes of each quad are in oriented contact with a cone (of revolution).
\end{definition}

There is a well-known \emph{reflection construction}
to obtain a circular net from a conical *net and vice-versa (see Figure~\ref{fig:circular-conical-binet}) \cite{bsorganizing, pwconical}.
\begin{enumerate}
	\item
	\label{conical-from-circular}
	A conical *net $h$ is obtained from a circular net $g$ by reflecting an initial plane $h(v_0)$ through the point $g(v_0)$
	about the planes that are spanned by adjacent circle-axes of $g$.
	The composition of the four reflections incident to a face is the identity,
	and thus this construction yields a well-defined plane per vertex.
        These planes constitute the conical *net $h$.
	The four planes corresponding to four vertices incident to a face intersect in a common point on the circle-axis.
	These points constitute the traditional conical net $\square^*h$.
	\item
	\label{circular-from-conical}
	A circular net $g$ is obtained from a conical *net $h$ by reflecting an initial point $g(v_0)$ in the plane $h(v_0)$
	about the planes that are spanned by adjacent cone-axes of $h$.
	The composition of the four reflections incident to a vertex is the identity,
	and thus this construction yields a well-defined point per vertex.
	These points constitute the circular net $g$.
\end{enumerate}
The two constructions are symmetric in the following sense:
A conical *net $h$ can be obtained from a circular net $g$ by construction \ref{conical-from-circular}
if and only if $g$ can be obtained from $h$ by construction \ref{circular-from-conical}.
Indeed, this holds since the set of reflection planes coincide.
\begin{figure}[H]
  \centering
  \includegraphics[width=0.3\textwidth]{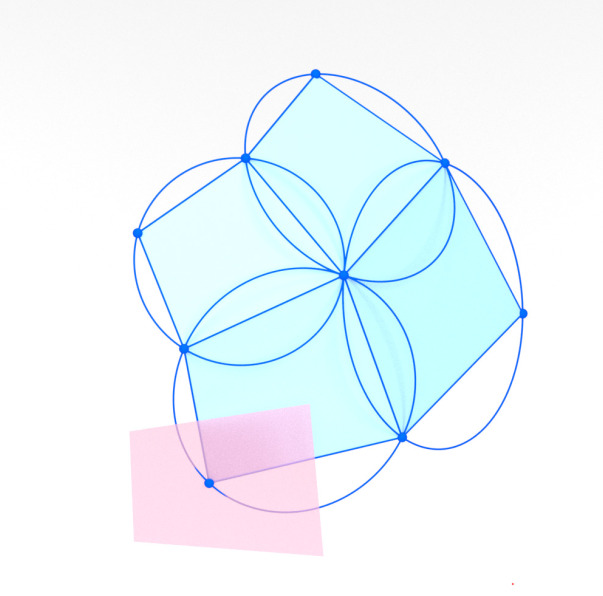}
  \includegraphics[width=0.3\textwidth]{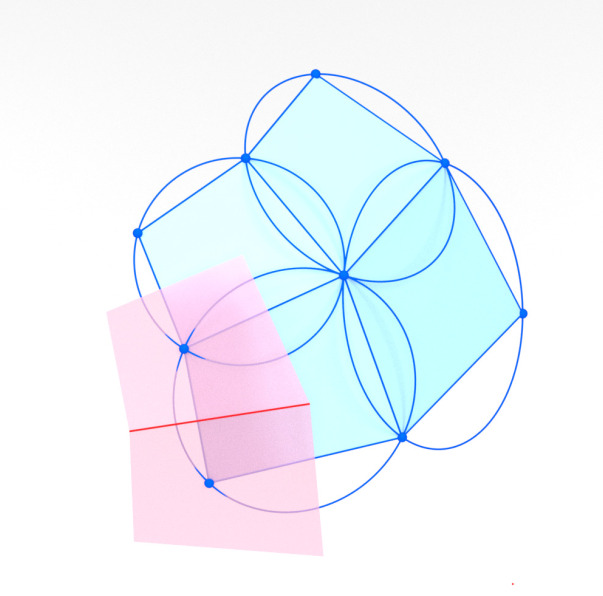}
  \includegraphics[width=0.3\textwidth]{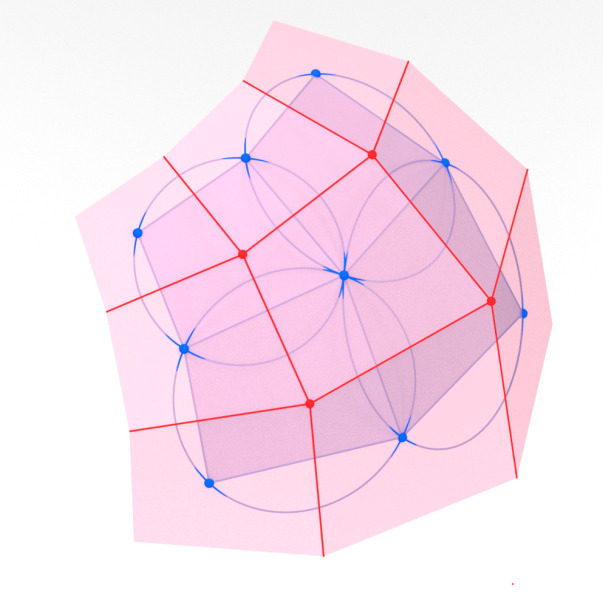}
  \caption{
    Obtaining a conical net from a circular net by the reflection construction.
    The resulting circular-conical binet is an orthogonal binet.
  }
  \label{fig:circular-conical-binet}
\end{figure}

Every pair of circular net $g$ and conical *net $h$ related by the reflection construction defines a principal binet.
Indeed, $g$ is a conjugate net, $h$ is a conjugate *net,
and the lines of the edges of $g$ are orthogonal to the lines of the edges of $\square^* h$.
\begin{definition}[Circular-conical binets]
	A principal binet $b: D \rightarrow \eucl$ is called a \emph{circular-conical binet} if
	\begin{enumerate}
		\item the restriction to $V$ is a circular net,
		\item the restriction of $\square b$ to $V$ is a conical *net,
		\item and these restrictions are related by the reflection construction described above.  \qedhere
	\end{enumerate}
\end{definition}

Let us discuss how the various lifts we introduced specialize in the case of circular-conical binets.
Firstly, there exists a canonical Möbius lift $b_\mobq$ of circular-conical binets $b$
such that its restriction to $V$ lies on $\mobq$ while the faces of its restriction to $F$ are tangential to $\mobq$.
Geometrically this means that the spheres $b_\sp(v)$ of the corresponding orthogonal sphere representation have radius zero for $v \in V$,
and that the circles $\circl(v)$ also have radius zero for $v \in V$.
On the other hand, the circles $\circl(f)$ for $f \in F$ coincide with the circles in the definition of circular nets.

\begin{theorem}
	\label{thm:circular-conical-moebius-lift}
	Let $b: D \rightarrow \eucl$ be a circular-conical binet.
	Then there exists a (canonical) Möbius lift $b_\mobq$ such that the restriction of $b_\mobq$ to $V$ is contained in $\mobq$.
\end{theorem}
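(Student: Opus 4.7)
My strategy is to construct the desired Möbius lift explicitly by exploiting the geometry of circular-conical binets, and then verify that it satisfies the two defining properties of a Möbius lift. For each $v \in V$, I take $b_\mobq(v)$ to be the unique point on $\mobq$ whose image under stereographic projection is $b(v)$, which is equivalent to setting $b_\sp(v)$ to be the point-sphere (radius zero) at $b(v)$. For each $f \in F$, I take $b_\mobq(f)$ to represent the sphere $b_\sp(f)$ centered at $b(f)$ that passes through the circle $C_f$ on which the four boundary vertices $b(v)$, $v \inc f$, lie (by the circular condition).

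\textbf{Existence of $b_\sp(f)$.} A sphere centered at the prescribed point $b(f)$ containing the circle $C_f$ exists if and only if $b(f)$ lies on the axis of $C_f$, namely the line through the center of $C_f$ perpendicular to the plane $\square b(f)$. Here I invoke the geometric content of the reflection construction: in a circular-conical binet, the cone at $f$ (with apex $b(f)$ and tangent to the four planes $\square b(v)$, $v \inc f$) is a cone of revolution whose axis is perpendicular to $\square b(f)$, and the circle $C_f$ is the orthogonal cross-section of this cone by the plane $\square b(f)$. Consequently $b(f)$ is equidistant from the four points $b(v)$, $v \inc f$, and the sought sphere $b_\sp(f)$ is uniquely defined as the sphere centered at $b(f)$ of radius $\sqrt{|b(f) - O_f|^2 + \rho_f^2}$, where $O_f$ and $\rho_f$ denote the center and radius of $C_f$.

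\textbf{Verification.} It remains to check that the resulting map $b_\mobq : D \to \RP^4 \setminus \PB^\pol$ is a Möbius lift of $b$. The projection condition $\pi_\eucl \circ b_\mobq = b$ is immediate: for $v \in V$ the stereographic projection of the inverse lift returns $b(v)$, and for $f \in F$ the center of $b_\sp(f)$ is $b(f)$. For the polarity condition, consider any incident pair $v \inc f$. By Proposition~\ref{prop:orthogonal-spheres-ext}, $b_\mobq(v) \pol b_\mobq(f)$ is equivalent to the orthogonality of $b_\sp(v)$ and $b_\sp(f)$, and since $b_\sp(v)$ is a point-sphere at $b(v)$, this in turn reduces to the incidence $b(v) \in b_\sp(f)$. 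But $b(v) \in C_f \subset b_\sp(f)$ by construction, so polarity holds throughout. Thus $b_\mobq$ is a Möbius lift with $b_\mobq|_V \subset \mobq$, and it is canonical in the sense that within the one-parameter family of Möbius lifts of $b$ (via the $\rho$-rescaling of Section~\ref{sec:mobiuslift}) it is the unique representative whose restriction to $V$ lies on the Möbius quadric.

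\textbf{Main obstacle.} The key substantive step is the geometric lemma that the apex $b(f)$ lies on the axis of the circle $C_f$, or equivalently, that $b(f)$ is equidistant from the four boundary points $b(v)$, $v \inc f$. This identification of the cone axis with the circle axis is the essential compatibility between the circular and conical conditions carried by the reflection construction; once it is in place the Möbius-geometric computation is forced. If this fact is not already proved in the present paper in a form I can cite, it will need to be established separately (either directly from the reflection construction, or by observing that the orientation of the conical planes around $f$ forces the common angle with the cone axis and hence perpendicularity of the axis to $\square b(f)$).
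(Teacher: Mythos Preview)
Your proposal is correct and follows essentially the same approach as the paper's proof: define $b_\sp(v)$ as the point-sphere at $b(v)$ and $b_\sp(f)$ as the sphere centered at $b(f)$ containing the circle $C_f$, then observe that this is an orthogonal sphere representation. The key geometric fact you flag as the main obstacle---that $b(f)$ lies on the axis of $C_f$---is exactly what the paper invokes, and it is established earlier in the paper's description of the reflection construction (item~\ref{conical-from-circular}), where it is noted that the four conical planes incident to a face intersect in a point on the circle-axis.
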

\begin{proof}
  For each $f \in F$ the point $b(f)$ lies on the axis of the circle
  that contains the four points of the incident vertices $v_1, v_2, v_3, v_4 \inc f$.
  Thus, there exists a sphere $b_\sp(f)$ with center $b(f)$ that contains this circle,
  and therefore, contains the four points $b(v_1)$, $b(v_2)$, $b(v_3)$, $b(v_4)$.
  For $v \in V$ we choose the spheres $b_\sp(v) = b(v)$ as spheres of radius zero.
  As a consequence, $b_\sp$ is an orthogonal sphere representation that corresponds to a Möbius lift $b_\mobq$ with $b_\mobq(v) \in \mobq$ for all $v \in V$.
\end{proof}
\begin{theorem}
  \label{thm:canonical-moebius-faces}
  The canonical Möbius lift $b_\mobq$ of a circular-conical binet $b$
  satisfies that the faces of its restriction to $F$ are tangent to $\mobq$.
\end{theorem}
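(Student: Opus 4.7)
The plan is to reduce the tangency assertion to a purely Euclidean incidence, and then verify that incidence from the reflection construction defining circular-conical binets.

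For $v\in V$, the corresponding face of $b_\mobq|_F$ is the $2$-plane $\square b_\mobq(v) = \spa\{b_\mobq(f_i) : f_i \inc v\}$, which is well-defined because $b_\mobq|_F$ is a conjugate net by Theorem~\ref{thm:conjugatemobiuslift}. Since $b_\mobq$ is a polar binet with respect to $\mobq$, we have $b_\mobq(f_i) \in b_\mobq(v)^\pol$ for every $f_i \inc v$, and hence $\square b_\mobq(v) \subset b_\mobq(v)^\pol$. Because the canonical lift satisfies $b_\mobq(v) \in \mobq$, the hyperplane $b_\mobq(v)^\pol$ is the tangent hyperplane of $\mobq$ at $b_\mobq(v)$, and the induced bilinear form on it has signature $\texttt{(+++0)}$ with unique null direction $b_\mobq(v)$. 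A $2$-plane contained in $b_\mobq(v)^\pol$ is therefore either positive-definite---with empty real intersection with $\mobq$---or has signature $\texttt{(++0)}$ and is tangent to $\mobq$ at $b_\mobq(v)$, according to whether it contains that null direction. The task thus reduces to showing $b_\mobq(v) \in \square b_\mobq(v)$.

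I would translate this into an incidence in $\eucl$. By Proposition~\ref{prop:circle-axis} the circle $\circl(v) = \pi_\eucl(\square b_\mobq(v)\cap\mobq)$ lies in the Euclidean plane $\square b(v)$, and since in the canonical lift $b_\sp(v) = \{b(v)\}$, it is also contained in the single point $\{b(v)\}$. Hence $\circl(v) \subset \square b(v) \cap \{b(v)\}$ is non-empty precisely when $b(v) \in \square b(v)$, in which case $\square b_\mobq(v)\cap\mobq = \{b_\mobq(v)\}$ and the plane is tangent at $b_\mobq(v)$. A direct coordinate check confirms this: using $\rho(v) = \tfrac{1}{2}|b(v)|^2$ and $\rho(f_i) = \sca{b(f_i),b(v)} - \tfrac{1}{2}|b(v)|^2$ (the latter because each sphere $b_\sp(f_i)$ passes through $b(v)$), the affine relations $\sum_i \alpha_i = 1$ and $\sum_i \alpha_i b(f_i) = b(v)$ automatically force the required identity on the $e_\infty$-component of the lift.

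It remains to establish $b(v) \in \square b(v)$ at every vertex of a circular-conical binet, which I would do inductively via the reflection construction. The initial conical plane $\square b(v_0)$ is chosen to pass through $b(v_0)$, and the plane at an adjacent vertex $v_1$ is obtained from $\square b(v_0)$ by reflection across the plane spanned by the two circle axes adjacent to the edge $(v_0,v_1)$. Both axes are perpendicular to the common chord $b(v_0)b(v_1)$ and their respective circle centres are equidistant from its endpoints, so both axes lie in---and generically span---the perpendicular bisector plane of that chord. Reflection across this plane swaps $b(v_0) \leftrightarrow b(v_1)$, and therefore carries the incidence $b(v_0) \in \square b(v_0)$ to $b(v_1) \in \square b(v_1)$. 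Propagating along the lattice yields $b(v) \in \square b(v)$ at every vertex, and hence the tangency of every face of $b_\mobq|_F$ to $\mobq$. The main subtlety is the identification of the reflection plane with the perpendicular bisector plane of each edge, which rests on the elementary fact that the axis of any circle through two points lies in the perpendicular bisector plane of the chord joining them.
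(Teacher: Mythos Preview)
Your proof is correct and follows essentially the same route as the paper's: both argue that the face plane $\square b_\mobq(v)$ lies in the tangent hyperplane $b_\mobq(v)^\pol$ (by polarity) and contains the isotropic point $b_\mobq(v)$, hence is tangent to $\mobq$; and both reduce the containment $b_\mobq(v)\in\square b_\mobq(v)$ to the Euclidean incidence $b(v)\in\square b(v)$.

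The difference is one of completeness. The paper simply asserts that ``the points $b(f_1),\dots,b(f_4)$ lie in a plane which contains the point $b(v)$'' and that this survives in the lift, treating both as evident from the construction. You actually prove the first statement inductively via the reflection construction (correctly identifying the reflection plane with the perpendicular bisector of the edge, since both adjacent circle axes lie in it), and you verify the second by a coordinate computation. Your argument via the circle $\circl(v)$ is a detour---it muddles the direction of implication, since a priori $\circl(v)$ could be imaginary---but the subsequent coordinate check rescues it and is valid on its own. A cleaner alternative for that step is the paper's implicit one: since $b(v)\in\square b(v)$, the line $b(v)\vee B$ lies in $\square b(v)\vee B$, and intersecting with $b_\mobq(v)^\pol$ (which contains $b_\mobq(v)$ because $b_\mobq(v)\in\mobq$) gives $b_\mobq(v)\in(\square b(v)\vee B)\cap b_\mobq(v)^\pol=\square b_\mobq(v)$.
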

\begin{proof}
  For a given vertex $v\in V$ there are four faces $f_1,f_2,f_3,f_4 \in F$ incident to $v$.
  The points $b(f_1), b(f_2), b(f_3), b(f_4)$ lie in a plane which contains the point $b(v)$.
  In the lift the points $b_\mobq(f_1), b_\mobq(f_2), b_\mobq(f_3), b_\mobq(f_4)$ still lie in a plane
  which contains the point $b_\mobq(v)$.
  On the other hand, this plane is contained in $b_\mobq(v)^\pol$ and thus constitutes a tangent plane of $\mobq$.
\end{proof}

Similarly, there exists a canonical Laguerre lift of circular-conical binets
such that its restriction to $V$ lies on $\blac$ while the faces of its restriction to $F$ are tangential to $\blac$.
Geometrically this means that the angled planes $b_\apl(v)$ become oriented planes (angle equal to zero) for $v \in V$
where the corresponding non-oriented plane is given by $\square b(v)$.
Equivalently, the normal circles $b_\nci(v)$ of the corresponding orthogonal circle representation $b_\nci$ have radius zero for $v \in V$,
and thus the cones $\cone(v)$ degenerate to oriented planes for $v \in V$, which coincide with the oriented planes $b_\apl(v)$ of the conical *net.
On the other hand, the cones $\cone(f)$ for $f \in F$ coincide with the cones in the definition of conical *nets.
\begin{theorem}
	\label{thm:circular-conical-laguerre-lift}
	Let $b: D \rightarrow \eucl$ be a circular-conical binet.
	Then there exists a (canonical) Laguerre lift $b_\blac$ such that the restriction of $b_\blac$ to $V$ is contained in $\blac$.
\end{theorem}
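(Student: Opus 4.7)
The plan is to imitate the proof of Theorem~\ref{thm:circular-conical-moebius-lift} in Laguerre-dual form, using the conical *net structure of $\square b|_V$ in place of the circular net structure of $b|_V$. In the Möbius case the canonical orthogonal sphere representation had radius zero on $V$, putting the vertex lifts on $\mobq$; dually, I want the canonical normal-circle representation to have spherical radius zero on $V$, which makes the angled plane at each vertex a genuine oriented plane and thereby places its Laguerre representative on $\blac$.

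First I would unpack the conical *net condition on $\square b|_V$: by Definition~\ref{def:conical-net} there exist orientations of the planes $\square b(v)$, $v \in V$, with unit normals $u(v) \in \unis$, such that around every face $f \in F$ the four oriented planes at $v_i \inc f$ are in oriented contact with a common cone $\cone(f)$. Let $a(f) \in \unis$ be the oriented axis of $\cone(f)$ and let $\varphi(f)$ be the angle that its oriented tangent-plane normals make with $a(f)$. The oriented contact condition then reads
\[
  \sca{u(v), a(f)} = \cos\varphi(f) \qquad \text{for every incident } v \inc f,
\]
that is, the four normals $u(v_1), \ldots, u(v_4)$ lie on the spherical circle on $\unis$ centered at $a(f)$ of spherical radius $\varphi(f)$.

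Using this, I would construct a distinguished normal binet of $\square b$ (in the sense of Definition~\ref{def:normal-binet}) by setting
\[
  n(v) \coloneqq u(v), \qquad n(f) \coloneqq \frac{a(f)}{\cos\varphi(f)}.
\]
The polarity condition $\sca{n(v), n(f)} = 1$ for $v \inc f$ is immediate from the displayed identity, and the orthogonality condition $(\ori \vee n(d)) \orth \square b(d)$ holds at each $v \in V$ since $u(v)$ is a unit normal of $\square b(v)$, and at each $f \in F$ since $\square b(f)$ is the plane of the circular-net circle at $f$, whose axis coincides with the cone axis of $\cone(f)$ through the apex $b(f)$ and hence is parallel to $a(f)$. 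Theorem~\ref{th:orthonormals} then certifies that $n$ is a bona fide normal binet of $\square b$.

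Finally, the construction in the proof of Theorem~\ref{thm:laguerre-lift} produces from $n$ a Laguerre lift $b_\blac$ of $\square b$, which by definition is a Laguerre lift of $b$. Reading off the coordinate formula $b_\blac(d) = [u(d), \sigma(d), h(d)]$ with $n(d) = u(d)/\sigma(d)$, the choice $n(v) = u(v)$ forces $\sigma(v) = 1$, and thus
\[
  \sca{b_\blac(v), b_\blac(v)}_\blac = \abs{u(v)}^2 - 1 = 0,
\]
so $b_\blac(v) \in \blac$ as required. The only nontrivial step is the key identity $\sca{u(v), a(f)} = \cos\varphi(f)$, which is just the analytic form of the oriented contact condition in the conical *net definition; the main care needed is to fix consistent signs and orientations across the four reflections of the circular-conical construction so that a single globally well-defined $u: V \to \unis$ is obtained.
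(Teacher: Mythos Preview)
Your proof is correct and follows essentially the same approach as the paper: take the normal circles at vertices to have spherical radius zero (equivalently, put the normal binet on $\unis$ at vertices via $n(v)=u(v)$), and observe that the normal circle at each face is then the circle of unit normals of the conical-net cone $\cone(f)$. Your version is more explicit than the paper's—you write out the polarity check $\sca{n(v),n(f)}=1$ and the coordinate verification $\sigma(v)=1\Rightarrow b_\blac(v)\in\blac$—but the underlying idea is identical.
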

\begin{proof}
  For such a lift to exist we need that for $v\in V$ the normal circle $b_\nci(v)$
  in the orthogonal circle representation has radius zero, i.e., is a point.
  Moreover, for $f\in F$ the circle  $b_\nci(f)$ needs to pass through the four points $b_\nci(v_1)$, $b_\nci(v_2)$, $b_\nci(v_3)$, $b_\nci(v_4)$
  of the four vertices adjacent to $f$.
  This circle exists since it corresponds to the normals of the cone of the conical net.
\end{proof}
\begin{theorem}
  The canonical Laguerre lift $b_\blac$ of a circular-conical binet $b$
  satisfies that the faces of its restriction to $F$ are tangent to $\blac$.
\end{theorem}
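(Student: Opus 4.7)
The plan is to mirror the proof of Theorem~\ref{thm:canonical-moebius-faces} in the Laguerre setting. Fix a vertex $v \in V$ with four incident faces $f_1, f_2, f_3, f_4 \in F$, and let $E$ be the plane spanned by $b_\blac(f_1), b_\blac(f_2), b_\blac(f_3), b_\blac(f_4)$. I will realize $E$ as the intersection of two three-dimensional subspaces of $\RP^4$ that both contain the point $b_\blac(v) \in \blac$; one of the two subspaces will be the tangent hyperplane $b_\blac(v)^\pol$, and the tangency of $E$ to $\blac$ at $b_\blac(v)$ will then follow.

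Since $v$ is a vertex of every face $f_i$, we have $b(v) \in \square b(f_i)$, and dually $\star\square b(f_i) \in \star b(v)$, where $\star b(v) \subset \eucl^* \subset \RP^4$ is a two-plane. Because each $b_\blac(f_i)$ lies on the line $\PM \vee \star\square b(f_i)$ by definition of the Laguerre lift, we conclude $b_\blac(f_i) \in H \coloneqq \PM \vee \star b(v)$, a three-dimensional subspace of $\RP^4$. The polarity of the polar binet $b_\blac$ also gives $b_\blac(f_i) \pol b_\blac(v)$, so that $E \subset H \cap b_\blac(v)^\pol$.

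The key circular-conical input is the contact-element property $b(v) \in \square b(v)$: the conical *net plane at $v$ passes through the corresponding circular-net point. This follows from the reflection construction, since the initial plane may be chosen through $b(v_0)$ and each reflection preserves incidence of the vertex point with the plane. Dualizing, $\star\square b(v) \in \star b(v)$, so $b_\blac(v) \in \PM \vee \star\square b(v) \subset H$; and $b_\blac(v) \in b_\blac(v)^\pol$ because $b_\blac(v) \in \blac$. A dimension count, using that $\PM \in H$ while $\PM \notin b_\blac(v)^\pol$ (since the canonical lift keeps $b_\blac(v)$ off $\PM^\pol = S_{\eucl^*}$), gives $\dim(H \cap b_\blac(v)^\pol) = 2$, and hence $E = H \cap b_\blac(v)^\pol \ni b_\blac(v)$. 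Because $b_\blac(v) \in \blac$ and $E \subset b_\blac(v)^\pol$, this shows $E$ is tangent to $\blac$ at $b_\blac(v)$.

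The main potential obstacle is the clean justification of the contact-element property $b(v) \in \square b(v)$, which is the one circular-conical ingredient beyond the general polar-binet polarity; once recorded (perhaps as a short preparatory observation about the reflection construction), the remainder is formal linear-algebra bookkeeping that parallels the Möbius case verbatim.
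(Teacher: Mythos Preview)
Your proof is correct and follows exactly the route the paper intends: it is the Laguerre dual of the proof of Theorem~\ref{thm:canonical-moebius-faces}, with the $3$-space $\PM \vee \star b(v)$ playing the role of $\PB \vee \square b(v)$, and the contact-element property $b(v) \in \square b(v)$ (coming from the reflection construction) serving as the one circular-conical ingredient in both cases. The dimension count and the tangency conclusion are handled cleanly.
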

\begin{proof}
  Analogous to the proof of Theorem~\ref{thm:canonical-moebius-faces}.
\end{proof}

Together the canonical Möbius lift and the canonical Laguerre lift of a circular-conical binet
constitute a canonical Lie lift which satisfies that its restriction to $V$ is contained in $\lieq$.
\begin{theorem}
	\label{thm:circular-conical-lie-lift}
	Let $b: D \rightarrow \eucl$ be a circular-conical binet.
	Then there exists a (canonical) Lie lift $b_\lieq$ such that the restriction of $b_\lieq$ to $V$ is contained in $\lieq$.
\end{theorem}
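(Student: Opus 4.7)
The plan is to construct $b_\lieq$ as the join of the two canonical lifts already built: take the canonical Möbius lift $b_\mobq$ from Theorem~\ref{thm:circular-conical-moebius-lift} and the canonical Laguerre lift $b_\blac$ from Theorem~\ref{thm:circular-conical-laguerre-lift}, and set
\[
  b_\lieq(d) \coloneqq b_\mobq(d) \vee b_\blac(d).
\]
By the proof of Theorem~\ref{th:principallielift} this join is automatically a line bicongruence and satisfies the two conditions of Definition~\ref{def:lielift}, so it is a Lie lift of $b$. The only new claim to verify is that $b_\lieq(v) \subset \lieq$ for every $v \in V$.

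For this I would use the elementary projective-quadric fact that, given $X, Y \in \lieq$, the entire line $X \vee Y$ lies in $\lieq$ if and only if $X \pol Y$ (along the line $\lambda x + \mu y$ the quadratic form reduces to $2\lambda\mu\sca{x,y}_\lieq$). Now for $v \in V$ we already know from the two canonical constructions that $b_\mobq(v) \in \mobq \subset \lieq$ and $b_\blac(v) \in \blac \subset \lieq$. Thus it suffices to check the single polarity relation $b_\mobq(v) \pol b_\blac(v)$ with respect to $\lieq$.

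To see this polarity, interpret both points geometrically via $\xi_\osp$ on $\lieq$. The point $b_\mobq(v) \in \mobq$ corresponds to $b(v) \in \eucl$ viewed as a null-sphere of radius zero. The point $b_\blac(v) \in \blac$ corresponds to the oriented plane underlying $\square b(v)$ (the angled-plane angle has degenerated to zero, so only an orientation remains). By Proposition~\ref{prop:liepolarity}, polarity in $\lieq$ between a null-sphere and an oriented plane is exactly the statement that the point lies on the plane. Since $b(v) \in \square b(v)$ by definition of $\square b$ on the principal binet $b$, we conclude $b_\mobq(v) \pol b_\blac(v)$, and hence $b_\lieq(v) \subset \lieq$.

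The only delicate step is the geometric identification in the last paragraph; everything else is bookkeeping with earlier results. In particular, one must be sure that the embeddings of Möbius geometry and Laguerre geometry into $\lieq$ fixed in Section~\ref{sec:lie} are the ones making Proposition~\ref{prop:liepolarity} apply to the mixed pair $(b_\mobq(v), b_\blac(v))$, and that the canonical Laguerre lift indeed sends $v$ to the point on $\blac$ representing $\square b(v)$ with zero normal-circle radius (so that it really is an oriented plane, not an angled plane with nontrivial angle). Both of these are direct from the constructions in Theorems~\ref{thm:circular-conical-moebius-lift} and~\ref{thm:circular-conical-laguerre-lift}, and should be dispatched with a short line of coordinates using the bases introduced in the Coordinates box of Section~\ref{sec:lie}.
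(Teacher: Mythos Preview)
Your proof is correct and follows essentially the same approach as the paper's own proof: both take $b_\lieq$ as the vertex-wise join of the canonical Möbius and Laguerre lifts, observe that $b_\mobq(v)\in\mobq\subset\lieq$ and $b_\blac(v)\in\blac\subset\lieq$, and then deduce the required polarity $b_\mobq(v)\pol b_\blac(v)$ from the fact that the null-sphere $b(v)$ and the oriented plane $\square b(v)$ are in oriented contact (i.e., lie in the same contact element). Your write-up is slightly more explicit about the isotropic-line criterion, but the substance is identical.
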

\begin{proof}
  The canonical Lie lift is simply the vertex-wise span of the canonical Möbius lift $b_\mobq$ and the canonical Laguerre lift $b_\blac$. Therefore, for each $v \in V$ the points $b_\mobq(v)$ and $b_\blac(v)$ are contained in $\lieq$. It remains to show that $b_\blac(v)$ is polar to $b_\mobq(v)$. However, this follows immediately since they represent oriented spheres that are in the same contact element (defined by $b(v)$ and $\square b(v)$).
\end{proof}
A (discrete) line congruence which is contained in $\lieq$ is also called a \emph{(discrete) isotropic line congruence}.
The observation that circular-conical binets correspond to isotropic line congruences was already made in \cite{bsorganizing,ddgbook},
where they are treated as \emph{contact element nets} without the binet interpretation.
Moreover, the converse is also true: every isotropic line congruence is the restriction to $V$ of the canonical Lie lift of a circular-conical binet.

\begin{remark}
  \label{rem:circular-conical-freedom}
  In each of the two reflection constructions \ref{conical-from-circular} and \ref{circular-from-conical}
  there are 2 degrees of freedom for the initial value
  since we require the initial plane to contain a point, or the initial point to lie in a plane, respectively.
  This condition may be dropped, which leads to a reflection construction with 3 degrees of freedom.
  The resulting pair of net and *net still define a principal binet, and
  the canonical Möbius and Laguerre lifts do also still exist.
  However, together those lifts do not define an isotropic line congruence since the canonical Möbius and Laguerre lifts are not polar per vertex anymore.
  Thus, in general the canonical Lie lift does not exist in this case.
\end{remark}

Consider again a circular-conical binet $b$.
For an edge $(v, v')$ with $v \in V$ the Lie lift of the principal curvature sphere $\acs(v,v')$ corresponds to the intersection of
two adjacent isotropic lines $b_\lieq(v)$ and $b_\lieq(v')$.
The corresponding principal curvature sphere $\cs(v,v')$ is the common sphere of the two contact elements at $v$ and $v'$.
This means that it contains the two points $b(v)$ and $b(v')$ and is tangent to the two planes $\square b(v)$ and $\square b(v')$.
The corresponding Laguerre geometric principal curvature sphere $\ocs(v,v')$ has the same radius (and center) in this case.
This principal curvature sphere for circular-conical binets was already considered in \cite[Figure~3.9]{ddgbook}.

For a dual edge $(f, f')$ with $f \in F$ the Lie lift of the principal curvature sphere $\acs(f,f')$ corresponds to the intersection of
the two (generally) non-isotropic lines $b_\lieq(f)$ and $b_\lieq(f')$.
The corresponding principal curvature sphere $\cs(f,f')$ is the sphere containing the two circles $\circl(v)$ and $\circl(v')$ of the circular net.
The corresponding Laguerre geometric principal curvature sphere is a concentric oriented sphere $\ocs(f,f')$
which is in oriented contact with the two cones $\cone(f)$ and $\cone(f')$ of the conical *net along a circle, respectively.
The principal curvature sphere $\cs(f,f')$ also occurs in the literature \cite[Figure~3.2]{ddgbook}.

\section{Spaces of binets} \label{sec:spaces}

In this section, we discuss the spaces of the various special cases of binets.

A conjugate binet is the union of a conjugate net defined on $V$ and a conjugate net defined on $F$. Hence, the space of conjugate binets is the space of conjugate binets squared. For conjugate nets it is well known \cite{dsqnet}, that we may prescribe arbitrary data on the coordinate axes, and then still have 2-dimensional freedom for each point not on the coordinate axes.

Next, we discuss orthogonal binets.
Assume $g: F \rightarrow \eucl$ is an arbitrary net, let us try to find an orthogonal binet $b: D \rightarrow \eucl$ that restricts to $g$ on $F$. We may choose arbitrary (generic) points for $b$ on the coordinate axes of $V$. The remaining points of $b$ on $V$ away from the coordinate axes cannot be chosen arbitrarily anymore.
Instead, we can propagate the initial data quad by quad in the following way.
If we know three points $b(i,j), b(i+1,j), b(i, j+1)$ of a quad in the positive quadrant of $\Z^2$
the condition on $b(i+1,j+1)$ is that the lines
\begin{align}
	b(i+1,j+1) \vee b(i+1,j), \quad b(i+1,j+1) \vee b(i,j+1).
\end{align}
are orthogonal to the corresponding lines of $g$.
Thus, we can choose each of these lines in a plane and $b(i+1,j+1)$ in the intersection of these planes,
which leaves one degree of freedom.
The propagation works analogously for the other quadrants of $\Z^2$.

Thus, the notion of an orthogonal binet is not a very rigid notion.
In particular, an orthogonal binet $b : D \rightarrow \eucl$ remains an orthogonal binet
if one applies a translation to either of its restrictions to $V$ or to $F$ separately. Since one of these restrictions does not determine the other restriction, we view
 a binet -- and in particular an orthogonal binet --
as a pair of nets which \emph{together} constitutes \emph{one} discrete surface.

We may also discuss the space of orthogonal binets via their Möbius lifts. Since Möbius lifts are a special case of polar binets -- and we encounter other special cases as well -- we begin by discussing the space of polar binets in general.

Assume $g: F \rightarrow \RP^n$ is an arbitrary net, and let us try to find a polar binet $b: D \rightarrow \RP^n$ that restricts to $g$ on $F$. The condition for $b$ at $v\in V$ is that
\begin{equation}
  \label{eq:polar-binet-construction}
  b(v) \in b(f_1)^\pol \cap b(f_2)^\pol \cap b(f_3)^\pol \cap b(f_4)^\pol
  = \left(b(f_1) \vee b(f_2) \vee b(f_3) \vee b(f_4)\right)^\pol,
\end{equation}
where $f_1,f_2,f_3,f_4 \in F$ are the four faces incident to $v$.
Generically, $b(f_1), b(f_2), b(f_3), b(f_4)$ span a 4-dimensional subspace.
Thus, if $ n \geq 4$, then there is an at least $(n-4)$-dimensional freedom for each point $b(v)$.
In particular, if $n = 4$, generically, the polar binet $b$ is uniquely determined by $g$.

If $n = 3$, it is not generally possible to find a polar binet $b$ that restricts to $g$.
Instead, this is possible if and only if $b(f_1), b(f_2), b(f_3), b(f_4)$ span a plane,
i.e., if $g$ is a conjugate net (see Section \ref{sec:conjugatebinets}).

In the cases $n=1,2$ the conditions on $g$ become even more restrictive, but since we do not need these cases here, we do not discuss them further.

Now, let us consider Möbius lifts of orthogonal binets. Since a Möbius lift is a polar binet in $\RP^4$, our discussion of the space of polar binets for $n=4$ applies.
This means we can construct a polar binet $b_\mobq$ in $\RP^4$ with respect to the Möbius quadric $\mobq$ by choosing $b_\mobq$ arbitrarily on $F$
and then, generically, $b_\mobq$ is completely determined.
In terms of the corresponding spheres $b_\sp$, each sphere on $V$ is the unique sphere orthogonal to four incident spheres on $F$.
By Lemma \ref{lem:moebius-lift-projection}, the projection $\pi_\eucl \circ b_\mobq$ is an orthogonal binet.
In contrast, we saw that if we fix the values of an orthogonal binet on $F$, there is still one degree of freedom for each value on $V$.

Next, we discuss the space of orthogonal bi*nets and the differences to the space of orthogonal binets. Recall that every orthogonal bi*net defines a normal binet.
A normal binet is a polar net with respect to $\unis$ in $\eucl \subset \RP^3$,
thus our discussion of the space of polar binets applies in the case $n=3$.
This has two immediate consequences.
First, this means that the restriction of a normal binet to $V$ (or $F$) uniquely determines the whole normal binet.
Second, the normal binet is always a conjugate binet, see Lemma~\ref{lem:normalconjugate}.

As a result, it is not possible to choose an orthogonal bi*net arbitrarily on $F$ and then complete it on $V$. Instead, the restriction to $F$ must be chosen such that the restriction of the corresponding normal binet to $F$ is a conjugate net. Consequently, the entire normal binet is uniquely determined, and then there is a 1-dimensional freedom for each plane of the orthogonal bi*net on $V$ (because the normal vectors are predetermined).

The notion of a principal binet is a  more rigid notion than the notion of orthogonal or conjugate binets.
Assume $g: F \rightarrow \eucl$ is an arbitrary conjugate net, and let us try to find a principal binet $b: D \rightarrow \eucl$ that restricts to $g$ on $F$.
We may choose arbitrary (generic) points for $b$ on the coordinate axes of $V$,
and propagate the initial data quad by quad.
The orthogonality condition on $b(i+1,j+1)$ is that the lines
\begin{align}
	b(i+1,j+1) \vee b(i+1,j), \quad b(i+1,j+1) \vee b(i,j+1),
\end{align}
are orthogonal to the corresponding lines of $g$, and the conjugate condition is that 
\begin{align}
	b(i+1,j+1) \in  b(i+1,j) \vee b(i,j) \vee b(i,j+1).
\end{align}
This determines each point in $V$ away from the coordinate axes uniquely. This is in contrast to orthogonal binets, for which there is a 1-dimensional degree of freedom per vertex and conjugate binets for which there is a 2-parameter freedom per vertex.

\begin{remark}
  Furthermore, the space of principal binets is significantly larger than the space of circular-conical binets. In particular, for principal binets the restriction to $F$ may be an arbitrary conjugate binet, which means there are 2 degrees of freedom per face. On the other hand, for circular-conical binets the restriction to $F$ needs to be a conical net, which means there is only 1 degree of freedom per face.
\end{remark}

Let us briefly discuss the difference of the Möbius lift of a principal binet compared to the Möbius lift of an orthogonal binet.
Previously, we discussed the space of Möbius lifts, which are polar binets with respect to $\mobq$ in $\RP^4$.
However, this discussion was based on the genericity assumption that the space
\begin{align}
  b_\mobq(v_1) \vee b_\mobq(v_2) \vee b_\mobq(v_3) \vee b_\mobq(v_4)
\end{align}
is 3-dimensional. On the other hand, in the principal case this space is 2-dimensional due to Theorem~\ref{thm:conjugatemobiuslift}.
Thus, $b_\mobq$ is a conjugate binet and the 2-dimensional space above is given by $\square b_\mobq(f)$.
Hence, the space of Möbius lifts of principal binets is that of binets in $\RP^4$ that are conjugate and polar with respect to $\mobq$. Assume $g: F \rightarrow \RP^4$ is an arbitrary conjugate net, and let us try to find a polar and conjugate binet $p: D \rightarrow \RP^4$ that restricts to $g$ on $F$. The condition for $p$ at $v\in V$ is that $b(v)$ is in the line
\begin{align}
	p(f_1)^\pol \cap p(f_2)^\pol \cap p(f_3)^\pol \cap p(f_4)^\pol,
\end{align}
and also that $p(v)$ is in the four adjacent planes $\square p(f_1), \Box p(f_2), \square p(f_3), \square p(f_4)$. Therefore, we may choose $b(v)$ arbitrarily on the coordinate axes of $\Z^2$, and then for $v\in V$ not on the coordinate axes the value $b(v)$ is already determined. Effectively, the principal condition eliminates the 1-dimensional freedom per vertex that we had in the case of orthogonal binets.

The case of Laguerre lifts of principal binets is analogous to that of Möbius lifts of principal binets, since the Laguerre lift of a principal binet is a conjugate and polar binet with respect to $\blac$ in $\RP^4$. Thus, the restriction of the Laguerre lift to $F$ is a conjugate net. This implies that the normal binet is a conjugate binet, which is therefore not an additional requirement. Consequently, with the same arguments as for the Möbius lift of a principal binet we see that the restriction to $F$ determines the restriction to $V$ uniquely.

Finally, we discuss the space of line bicongruences that are Lie lifts of a principal binet. That means we consider the space of polar line bicongruences $\ell: D \rightarrow \RP^5$. Assume $g: F \rightarrow \RP^5$ is an arbitrary line congruence, and let us try to find a polar line bicongruence $\ell: D \rightarrow \RP^5$ that restricts to $g$ on $F$. The condition is that for every $v\in V$
\begin{align}
  \ell(v) \pol \ell(f_1), \ell(f_2), \ell(f_3), \ell(f_4),
\end{align}
where $f_1,f_2,f_3,f_4 \in F$ are the four faces incident to $v$. Generically, there is only one such line, therefore $g$ uniquely determines $\ell$. This is a straight-forward observation, but also a strong statement. Hence, we phrase it in a theorem.

\begin{theorem}
	Every line congruence $g: F \rightarrow \linesof{\RP^5}$ defines a principal binet $b: D \rightarrow \eucl$ via
	\begin{align}
		b(f) &= \pi_{\eucl}(\PM^\pol \cap g(f)),\\
		b(v) &= \pi_{\eucl}(\PM^\pol \cap g(f_1)^\pol \cap g(f_2)^\pol \cap g(f_3)^\pol \cap g(f_4)^\pol),
	\end{align}
	for all $f\in F, v\in V$ such that $f_1,f_2,f_3,f_4$ are incident to $v$.
\end{theorem}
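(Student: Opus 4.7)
The plan is to extend $g$ from $F$ to all of $D$ to obtain a polar line bicongruence $\ell$ on $D$, and then to invoke Lemma~\ref{lem:lie-lift-projection} to produce the desired principal binet. Concretely, set $\ell(f) \coloneqq g(f)$ for every $f \in F$ and, for each vertex $v \in V$ with incident faces $f_1, f_2, f_3, f_4$, define
\[
  \ell(v) \coloneqq g(f_1)^\pol \cap g(f_2)^\pol \cap g(f_3)^\pol \cap g(f_4)^\pol = \bigl(g(f_1) \vee g(f_2) \vee g(f_3) \vee g(f_4)\bigr)^\pol.
\]
The four faces around $v$ form a 4-cycle in $F = (\Z^2)^*$, so cyclically consecutive lines $g(f_i), g(f_{i+1})$ are adjacent in the line congruence $g$ and therefore meet. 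Consequently the four lines $g(f_i)$ are determined by (at most) four focal intersection points, so their projective span has dimension at most $3$ and the polar subspace $\ell(v)$ has dimension at least $1$. In the generic case the span is exactly $3$-dimensional and $\ell(v)$ is a line.

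Next I verify that $\ell$ is a polar line bicongruence with respect to $\lieq$. The polarity $\ell(v) \pol \ell(f)$ for incident $v, f$ is built into the definition of $\ell(v)$, and $\ell|_F = g$ is a line congruence by hypothesis, so it only remains to check that for adjacent $v, v' \in V$ the lines $\ell(v), \ell(v')$ intersect. If $v, v'$ share an edge, there are exactly two faces $f, f' \in F$ containing that edge, and these two faces are incident to both $v$ and $v'$ and adjacent in $F$. Since $g$ is a line congruence, $g(f)$ and $g(f')$ meet, hence $g(f) \vee g(f')$ is a plane and its polar
\[
  (g(f) \vee g(f'))^\pol = g(f)^\pol \cap g(f')^\pol
\]
is a 2-dimensional plane in $\RP^5$ that contains both $\ell(v)$ and $\ell(v')$ by construction. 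Two lines lying in a common plane must intersect, so $\ell(v) \cap \ell(v') \neq \emptyset$. Therefore $\ell$ is a polar line bicongruence.

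Finally, applying Lemma~\ref{lem:lie-lift-projection} to $\ell$ produces a principal binet $b$ whose Möbius lift satisfies $b_\mobq(d) = \ell(d) \cap \PM^\pol$, and hence $b(d) = \pi_\eucl(b_\mobq(d)) = \pi_\eucl(\ell(d) \cap \PM^\pol)$. Substituting $d = f$ gives $b(f) = \pi_\eucl(\PM^\pol \cap g(f))$, and substituting $d = v$ together with the definition of $\ell(v)$ gives exactly the stated formula for $b(v)$.

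The main obstacle in the argument is the second paragraph: one must leverage the cyclic face-adjacency structure around a vertex together with the line-congruence property of $g$ to ensure both that $\ell(v)$ is genuinely a line and that adjacent such lines meet. Once this geometric input is in place, the rest of the proof is a direct invocation of Lemma~\ref{lem:lie-lift-projection} and a substitution.
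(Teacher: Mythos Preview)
Your proof is correct and follows exactly the approach the paper takes: the paper presents this theorem as an immediate consequence of the preceding discussion (extending $g$ to a polar line bicongruence by taking $\ell(v)$ as the polar of the span of the four incident face-lines, then projecting), without spelling out a formal proof. Your argument in fact supplies more detail than the paper does, in particular the explicit verification that adjacent lines $\ell(v), \ell(v')$ lie in the common plane $(g(f)\vee g(f'))^\pol$ and hence intersect, which the paper leaves implicit.
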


\section{Multi-dimensional consistency} \label{sec:consistency}

Let us begin with the necessary definitions of combinatorics, which are the generalizations of vertices, edges, faces and crosses of $\Z^N$.
\begin{align*}
	V_N &\coloneqq V(\Z^N) = \Z^N,\\
	E_N &\coloneqq E(\Z^N) \simeq \cup_{i=1}^N \Z^N,\\
	F_N &\coloneqq F(\Z^N) \simeq \cup_{i=1}^N \cup_{j=1+1}^N \Z^N,\\
	D_N &\coloneqq V_N \cup F_N,\\
	C_N &\coloneqq \set{(v,f,v',f')}{v, v' \in V_N, \, f, f' \in F_N, \, v, v' \inc f, f'}.
\end{align*}
Note that for $\Z^2$, up to permutation there is only one cross per edge. In $\Z^N$ there are $\binom{2N-2}{2}$ crosses per edge up to permutation (see Figure~\ref{fig:crosses-types}).

\begin{figure}[H]
  \centering
  \includegraphics[width=0.3\textwidth, trim={0 90 0 20}, clip]{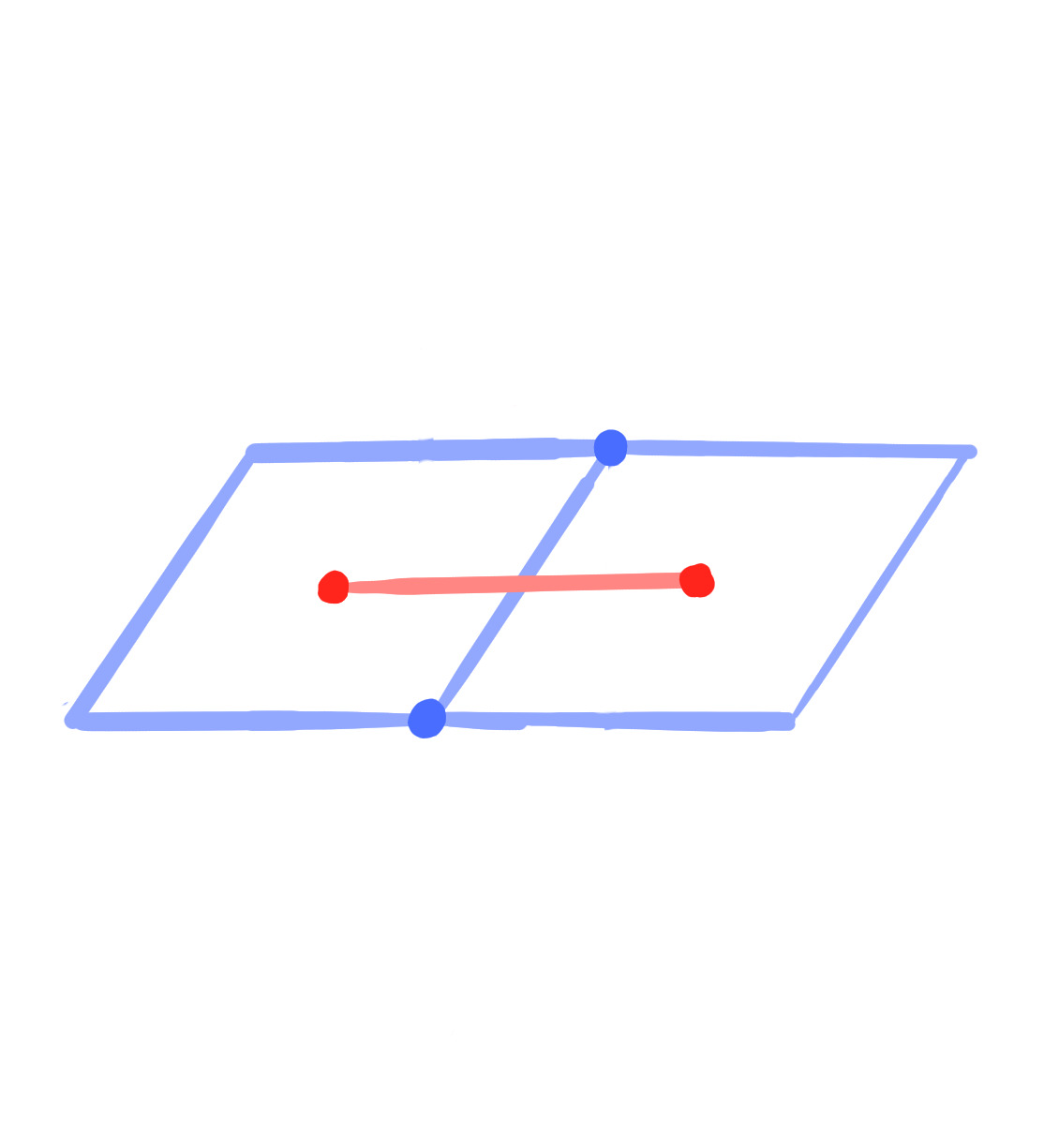}
  \includegraphics[width=0.3\textwidth, trim={0 90 0 20}, clip]{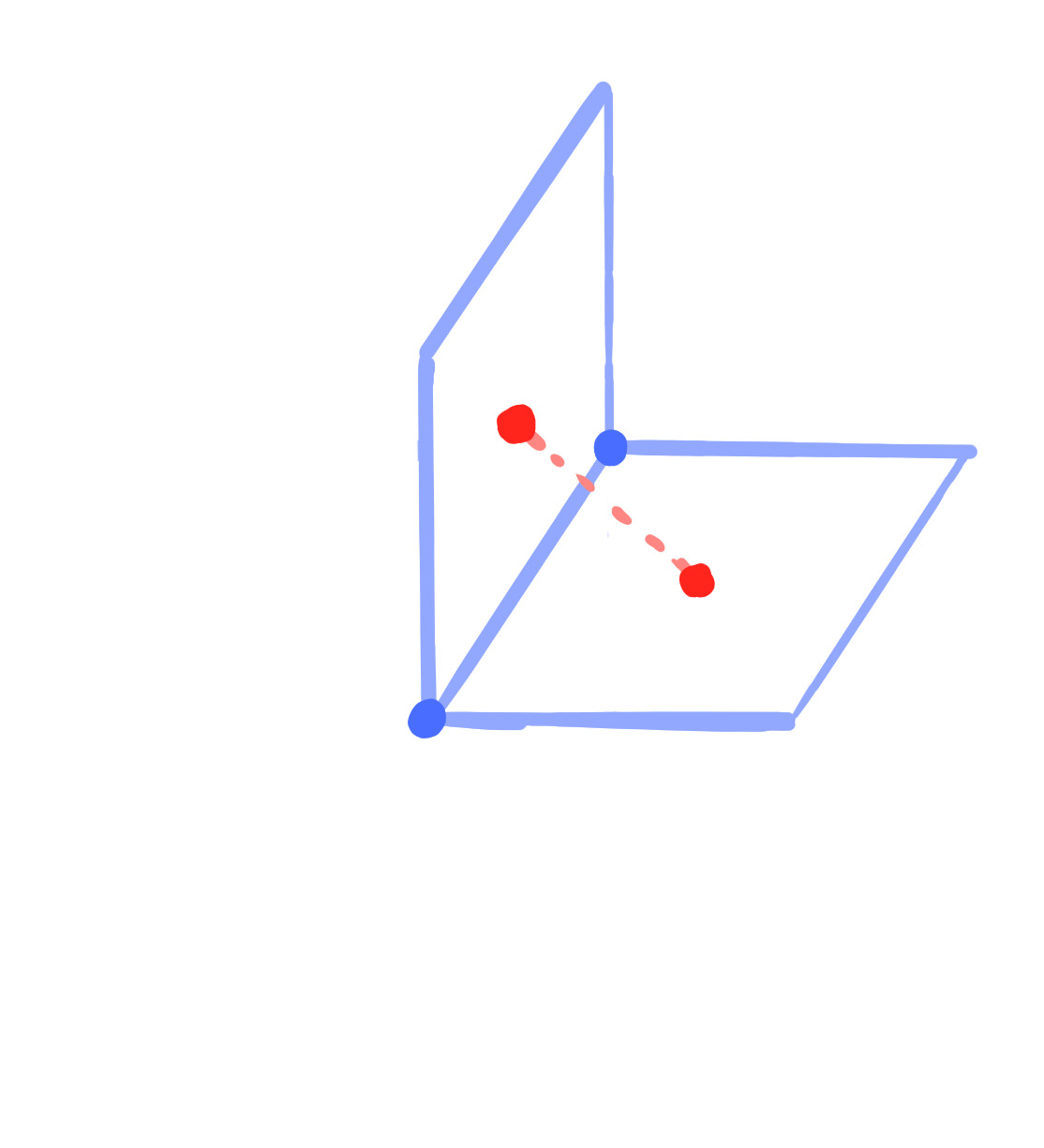}
  \caption{
    The two types of crosses in $\Z^3$.
    This amounts to a total of six crosses per edge of $\Z^3$.}
  \label{fig:crosses-types}
\end{figure}

Since we want to keep this section from becoming overly technical, we will not consider regularity assumptions on nets, *nets, binets and so forth in this section. Instead we assume that the objects involved are generic enough so that the respective constructions make sense.

Multi-dimensional consistency is a form of discrete integrability \cite{ddgbook}. In this section we show the multi-dimensional consistency of principal binets. Multi-dimensional consistency is about generalizing the definitions of objects on $\Z^2$ to $\Z^N$ with $N>2$. To do this, one needs to find a suitable definition for a generalization and then show that this definition does not come with contradictions, that is to show that the space of such generalized objects is non-empty. 

More specifically, a system on $\Z^N$ is a \emph{$k$-dimensional system} if it is determined by $(k-1)$-dimensional initial data, also known as Cauchy data. The initial data is usually given on the coordinate planes, which then uniquely determines the remaining data on $\Z^k$. Moreover, if $(k-1)$-dimensional initial data also uniquely and without contradictions determines solutions on $\Z^m$ for $m > k$, the system is called \emph{multi-dimensionally} consistent.

We begin by generalizing conjugate binets, which we previously defined on $D_2$ in Section \ref{sec:conjugatebinets} (see Figure~\ref{fig:conjugate-face-nets}).

\begin{definition}[Conjugate nets]
  A \emph{conjugate net} is a map $g: V_N \rightarrow \RP^n$ such that for every $f\in F_N$  the span $\bigvee_{v \inc f} g(v)$ is a plane. A \emph{conjugate face-net} $g: F_N \rightarrow \RP^n$ is a map such that for every $v\in V_N$  the span $\bigvee_{f \inc v} g(v)$ is a plane. A \emph{conjugate binet} is a map $b: D_N \rightarrow \RP^n$ such that the restriction to $V_N$ is a conjugate net and the restriction to $F_N$ is a conjugate face-net.
\end{definition}

\begin{figure}[H]
  \centering
  \raisebox{0.15\textwidth}{
    \begin{overpic}[width=0.2\textwidth]{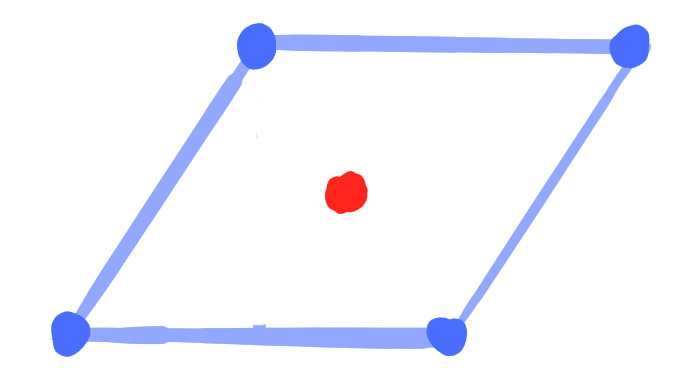}
      \put(53,22){$\color{red}f$}
    \end{overpic}
  }
  \begin{overpic}[width=0.4\textwidth]{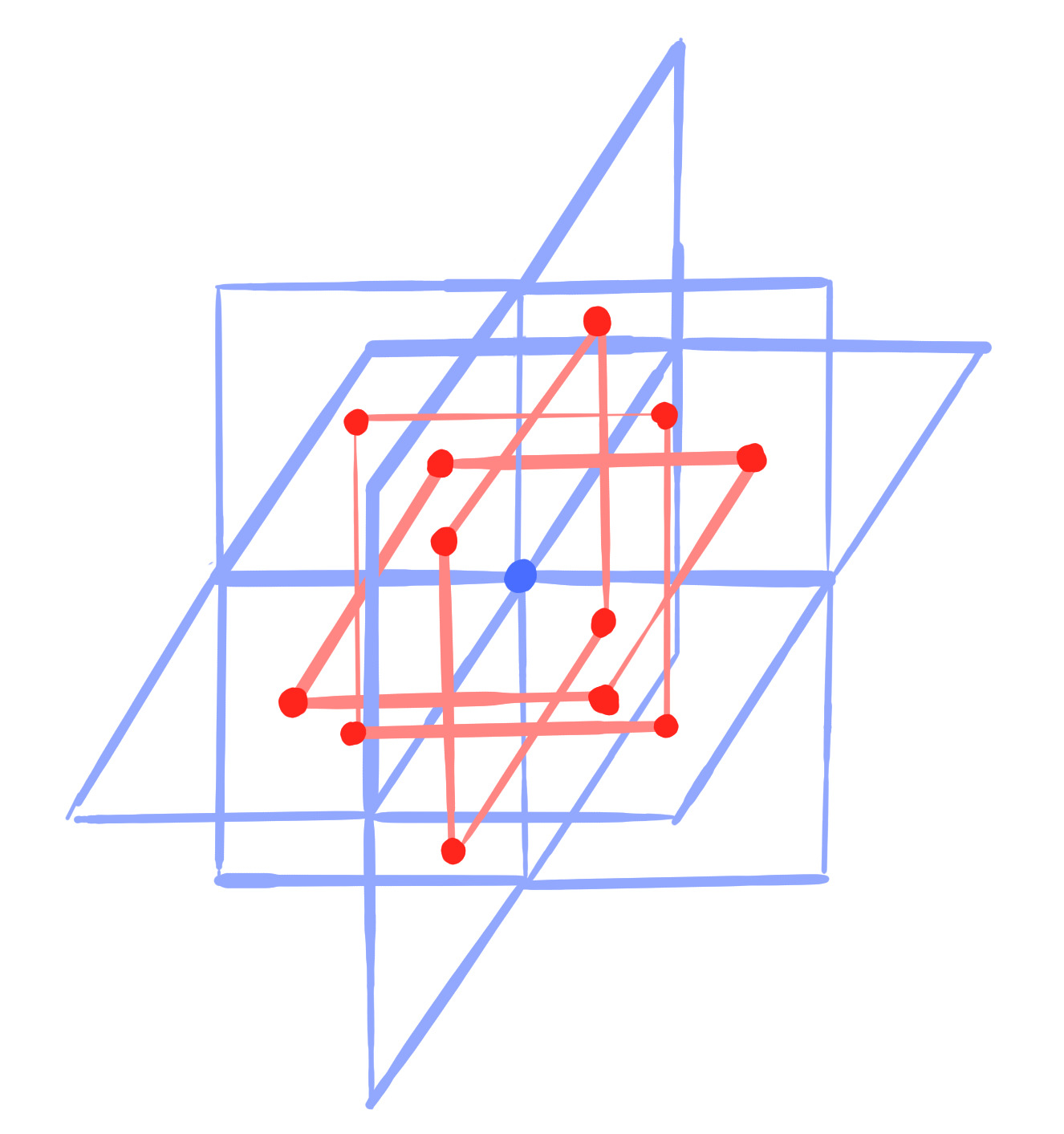}
    \put(49,50){$\color{blue}v$}
  \end{overpic}
  \caption{
    Combinatorics for the conditions of conjugate nets and conjugate face-nets in $\Z^3$
    Left: A face $f \in F_3$ has four incident vertices.
    The corresponding points of a conjugate net are coplanar.
    Right: A vertex $v \in V_3 = \Z^3$ has twelve incident faces.
    The corresponding points of a conjugate face-net are coplanar
  }
  \label{fig:conjugate-face-nets}
\end{figure}

The generalization of conjugate nets to $\Z^N$ is due to Doliwa and Santini \cite{dsqnet}, who also proved the multi-dimensional consistency of conjugate nets, as given by the next theorem.

\begin{theorem} \label{th:conjugatenetsconsistency}
	Conjugate nets are multi-dimensionally consistent 3D-systems.
\end{theorem}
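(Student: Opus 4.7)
The plan is to establish (i) that conjugate nets on $\Z^3$ constitute a 3D-system (uniquely determined by Cauchy data on the three 2D coordinate planes), and (ii) that the construction extends without contradiction to $\Z^N$ for all $N \geq 3$. The overall strategy follows the Doliwa--Santini argument already cited.

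For the 3D-system property, I would analyze a single elementary 3-cube of $\Z^3$ with vertices indexed by $\{0,1\}^3$. Suppose the seven values $g(\varepsilon)$ with $\varepsilon \neq (1,1,1)$ are prescribed in such a way that the three faces of the cube containing $g(0,0,0)$ are planar. Each of the three remaining faces contains $g(1,1,1)$ and is spanned by three already-determined points, so requiring these faces to be planar forces $g(1,1,1)$ to lie in the intersection of three distinct planes of $\RP^n$. For $n \geq 3$ these three planes, which pairwise share a line of the cube passing through $g(1,1,1)$, span a common 3-dimensional subspace and generically meet in a single point, so $g(1,1,1)$ is uniquely determined. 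Propagating this rule cube by cube from arbitrary conjugate initial data on the three 2D coordinate planes yields the unique extension to a conjugate net on $\Z^3$.

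For multi-dimensional consistency, the decisive step is 4D consistency. In an elementary 4-cube the vertex $p_{1111}$ lies in four different 3-faces, each of which applies the 3-cube rule above and predicts $p_{1111}$ as an intersection of three planes. The main obstacle is to verify that all four predictions coincide. I would handle this via a projective incidence argument: the four predictions force $p_{1111}$ into the projective subspace spanned by its six nearest neighbours inside the 4-cube, and within this subspace each of the four 3-face constraints cuts out a plane. Compatibility reduces to showing that these four planes share a common point, which follows from the planarity constraints already imposed on the lower faces of the 4-cube together with a dimension count; no new geometric condition is needed.

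Once 4D consistency has been established, consistency on $\Z^N$ for arbitrary $N \geq 4$ follows by induction on $N$: the propagation inside each elementary $N$-cube decomposes into sub-propagations inside its 3- and 4-sub-cubes, whose compatibility has already been verified.
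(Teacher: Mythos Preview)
The paper does not give its own proof of this theorem: it is quoted as a known result of Doliwa and Santini and used as a black box. So there is no in-paper argument to compare against, and your proposal amounts to sketching the Doliwa--Santini proof.

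Your outline has the correct architecture (3D cube rule, then 4D consistency, then induction to higher $N$), and the 3D part is essentially right. The gap is in the 4D consistency step, which is the entire content of the theorem and where your sketch becomes imprecise. First, $p_{1111}$ has four nearest neighbours in the 4-cube (the points $p_{ijk}$), not six; and each of the four shifted 3-cubes does not merely ``cut out a plane'' but determines $p_{1111}$ completely as a point. The correct incidence argument runs in the 4-space $V = p \vee p_1 \vee p_2 \vee p_3 \vee p_4$: each shifted 3-cube $C_i$ spans a \emph{hyperplane} $T_i \subset V$, and one must check that $T_i \cap T_j$ is exactly the shared face-plane $\Pi_{ij} = p_{ij} \vee p_{ijk} \vee p_{ijl}$. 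Then $q \coloneqq T_1 \cap T_2 \cap T_3 \cap T_4$ is a single point lying in all six $\Pi_{ij}$, hence coincides with each cube's prediction. Your phrase ``a dimension count'' points in this direction but does not supply the key identification $T_i \cap T_j = \Pi_{ij}$, without which the count does not close. So the proposal is on the right track but does not yet constitute a proof of the decisive step.
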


Before we show consistency for conjugate face-nets and conjugate binets, we also introduce conjugate *nets, conjugate face-*nets and conjugate bi*nets.
\begin{definition}[Conjugate *nets]
  A \emph{conjugate *net} is a map $g: V_N \rightarrow \planesof\RP^n$ such that adjacent planes intersect in a line, and such that for every $f\in F_N$  the intersection $\bigcap_{v \inc f} g(v)$ is a point. A \emph{conjugate face-*net} is a map $g: F_N \rightarrow \planesof{\RP^n}$ such that adjacent planes intersect in a line, and such that for every $v\in V_N$  the intersection $\bigcap_{f \inc v} g(v)$ is a point. A \emph{conjugate bi*net} is a map $b: D_N \rightarrow \RP^n$ such that the restriction to $V_N$ is a conjugate *net and the restriction to $F_N$ is a conjugate face-*net.
\end{definition}

If $b$ is a conjugate binet then $\square b$ is the conjugate bi*net consisting of the planes of $b$, analogous to the $D_2$ case given in Section~\ref{sec:conjugate-bi-star-nets}. Vice versa, if $b$ is a conjugate bi*net then $\square^* b$ is the conjugate binet consisting of the intersection points of $b$.

While the definitions of conjugate nets appear to be very symmetric with respect to $V_N \leftrightarrow F_N$, the set of vertices and faces is not isomorphic for $N>2$. Therefore conjugate face-nets are a priori different from conjugate nets. However, there is actually a close relation, that we outline in the following. Let $F_N^{ij}$ be the faces that are spanned by the $i,j$-directions in $\Z^N$, hence $F_N^{ij} \simeq \Z^N$ (see Figure~\ref{fig:face-identification}).
We make this identification explicit via
\begin{align}
	F_N^{ij} \ni f = \{r, r+e_1, r + e_1 +e_2, r+e_2\} \Leftrightarrow r \in \Z^N.
\end{align}

\begin{figure}[H]
  \centering
  \includegraphics[width=0.35\textwidth]{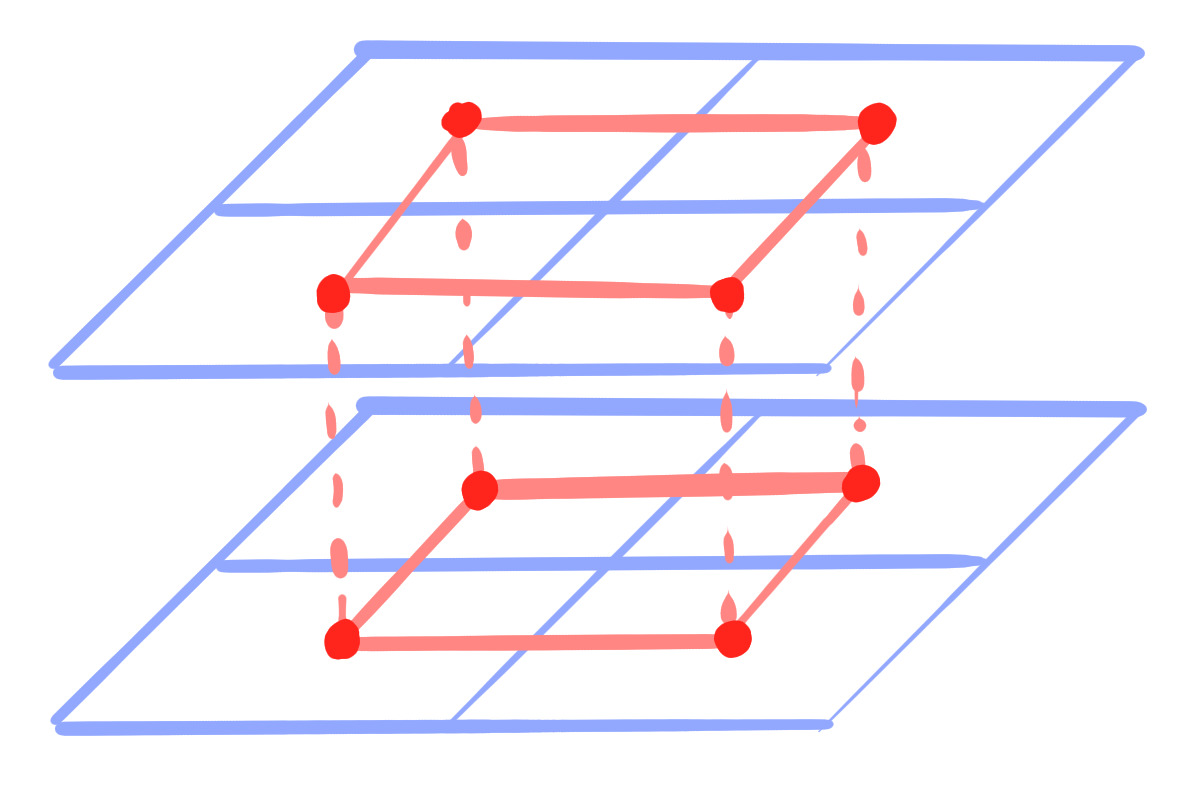}
  \caption{
    Identification of $F_3^{12}$  with $\Z^3$.
    In blue two layers of $12$-faces of $\Z^3$.
    In red $F_3^{12} \simeq \Z^3$.
    The dotted edges show the missing edges for the identification with $\Z^3$.
  }
  \label{fig:face-identification}
\end{figure}

For a map $g$ defined on $F_N$, we denote by $g^{ij}$ its restriction to the $ij$-faces. So for $r\in \Z^n$ we denote by $g^{ij}(r)$ the value of $g$ on the face $\{r, r+e_i, r + e_i +e_j, r+e_j\}$.

\begin{lemma}\label{lem:conjugatefacetoconjugate}
	The restriction $g^{ij}: \Z^N \rightarrow \RP^n$  of a conjugate face-net $g: F_N \rightarrow \RP^n$ to $F_N^{ij} \simeq \Z^N$ is a conjugate net.
\end{lemma}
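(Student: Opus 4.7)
To verify that $g^{ij}$ is a conjugate net on $\Z^N$, I would check that for every face of the identified $\Z^N\simeq F_N^{ij}$ in direction $(k,l)$, the four corresponding $g^{ij}$-values are coplanar. Such a face corresponds to four $ij$-faces of $F_N$ whose base vertices form the quad $\{r, r+e_k, r+e_k+e_l, r+e_l\}$ in $V_N$. The central tool is the plane
\[
  \pi_v := \bigvee_{f\inc v} g(f),
\]
supplied at every $v\in V_N$ by the conjugate face-net condition. I then split into cases according to $\lvert\{k,l\}\cap\{i,j\}\rvert$.

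If $\{k,l\}=\{i,j\}$, all four $ij$-faces contain the common vertex $r+e_i+e_j$, so the four values lie in the single plane $\pi_{r+e_i+e_j}$. If exactly one of $k,l$ lies in $\{i,j\}$---without loss of generality $k=i$ and $l\notin\{i,j\}$---then the $ij$-faces based at $r$ and $r+e_i$ share the two vertices $r+e_i$ and $r+e_i+e_j$, so their $g$-values both lie on the line $L_1:=\pi_{r+e_i}\cap\pi_{r+e_i+e_j}$; symmetrically the pair based at $r+e_l$ and $r+e_i+e_l$ lies on the parallel line $L_2:=\pi_{r+e_i+e_l}\cap\pi_{r+e_i+e_j+e_l}$. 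The $g$-value of the bridging $jl$-face based at $r+e_i$ lies in all four of $\pi_{r+e_i},\pi_{r+e_i+e_j},\pi_{r+e_i+e_l},\pi_{r+e_i+e_j+e_l}$, hence in $L_1\cap L_2$; two concurrent lines span a common plane containing all four required values.

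The remaining case $\{k,l\}\cap\{i,j\}=\emptyset$ (which forces $N\geq 4$) is the main obstacle, since no two of the four $ij$-faces share a vertex and the bridging argument completely breaks down. I would resolve it by passing to the induced conjugate *net $\hat g\colon V_N\to\planesof{\RP^n}$ defined by $\hat g(v):=\pi_v$ (so that $g=\square^*\hat g$), and then recognising $g^{ij}$ as the focal net in direction $e_j$ of the $e_i$-direction edge-line congruence $\ell(v):=\hat g(v)\cap\hat g(v+e_i)$, since $\ell(v)\cap\ell(v+e_j)$ is precisely the $g$-value of the $ij$-face based at $v$. Coplanarity of the four focal points at a $(k,l)$-face then follows from multi-dimensional consistency of focal nets of (conjugate *net) line congruences, which in turn reduces, via Theorem~\ref{th:conjugatenetsconsistency} applied to an appropriate $3$-dimensional slice, to the two-dimensional focal-net statement recalled in Section~\ref{sec:linebi}.
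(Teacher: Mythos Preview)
Your treatment of the first two cases is correct and essentially matches the paper's argument (same lines $L,L_3$ resp.\ $L_1,L_2$, same bridging point). The issue is in the third case $\{k,l\}\cap\{i,j\}=\emptyset$.

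Your reformulation via the conjugate $*$net $\hat g(v)=\pi_v$ and the edge-line congruence $\ell(v)=\hat g(v)\cap\hat g(v+e_i)$ is legitimate, and the identification $g^{ij}(r)=\ell(r)\cap\ell(r+e_j)$ is correct. However, the justification you then give does not go through. Theorem~\ref{th:conjugatenetsconsistency} is a statement about \emph{extending} a conjugate net from Cauchy data on the $2$-dimensional coordinate planes; it says nothing about an already-defined net being conjugate. For the focal net $p(r)=\ell(r)\cap\ell(r+e_j)$ you only know the $2$-dimensional result (Remark~\ref{rem:focalcongruence}) on slices containing the $j$-direction, i.e.\ on the $(j,k)$- and $(j,l)$-coordinate planes. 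Precisely the $(k,l)$-planarity you need is \emph{not} among these, so there is no Cauchy data to feed into Theorem~\ref{th:conjugatenetsconsistency}, and no way to compare the consistent extension with the actual focal net $p$.

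What is actually required is a direct proof that the focal net of a multi-dimensional line congruence is a conjugate net, and in the relevant $(j,k,l)$-cube this is a dimension count of exactly the same nature as the paper's argument: the four lines on the bottom $(k,l)$-face span a $3$-space $U_0$, the four top lines span a $3$-space $U_1$, and all eight lines together span only a $4$-space (this is where the line-congruence intersection conditions enter), so $U_0\cap U_1$ is a plane containing all four focal points. The paper carries out an equivalent computation directly in terms of the planes $\square g$, packaging them into the eight $3$-spaces $A_{abc}$ and the six lines $B_q$. So your focal-net viewpoint is a clean conceptual wrapper, but it does not let you bypass the dimension-juggling; you still owe that argument, and the appeal to Theorem~\ref{th:conjugatenetsconsistency} does not supply it.
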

\proof{
	Without loss of generality let $i=1, j=2$.
	Clearly, the $12$-faces of $g^{12}$ are planar since
	\begin{align}
		g^{12}(r),g^{12}(r+e_1),g^{12}(r+e_1+e_2),g^{12}(r+e_2) \in \square g(r+e_1+e_2) 
	\end{align}
	for all $r\in \Z^N$. Next, we consider $1k$-faces for $k > 2$. Without loss of generality, we assume $k=3$. For some $r\in \Z^N$, consider the two lines
	\begin{align}
		L &\coloneqq\square g(r+e_1) \cap \square g(r+e_1+e_2),\\
		L_3 &\coloneqq\square g(r+e_1+e_3) \cap \square g(r+e_1+e_2+e_3).
	\end{align}
	The two lines $L(r), L_3(r)$ intersect in $g^{23}(r)$ and therefore span a plane. The $13$-faces of $g^{12}$ are planar since
	\begin{align}
		g^{12}(r),g^{12}(r+e_1),g^{12}(r+e_1+e_3),g^{12}(r+e_3) \in L(r) \vee L_3(r).
	\end{align}
	The argument for the planarity of $23$-faces works analogously. It remains to check the planarity of $kl$-faces for $2 < k < l$, without loss of generality we assume $k=3, l = 4$. This requires a bit of dimension juggling. Note that for any $0<m\leq N, r\in \Z^N$ the span of two adjacent planes $\square g(r) \vee \square g(r+e_m)$ is a 3-space. Similarly, the four planes of $\square g$ around a quad span a 4-space, the eight planes around a cube span a 5-space, and the sixteen planes around a 4-cube span a 6-space.  Consider the eight 3-spaces
	\begin{align}
		A_{abc} &\coloneqq \square g(r+a \cdot e_1+b \cdot e_2 + c \cdot e_4) \vee \square g(r+a \cdot e_1+b \cdot e_2 + e_3 + c \cdot e_4),
	\end{align}
	where $a,b,c\in \{0,1\}$, or equivalently $(a,b,c) \in \Z_2^3$, and where we suppress the $r$ in the notation. Thus we view the eight 3-spaces $A_{abc}$ as associated to the vertices of a 3-cube. Let $q^{12} = (p,p_1,p_{12},p_2)$ be the vectors of a quad in the 3-cube $\Z_2^3$. The intersection $A_p \cap A_{p_1}$ is a plane, since the two spaces belong to a quad of the 4-cube. Consequently, $A_p \cap A_{p_1} \cap A_{p_2}$ is a line. For example, we have that 
	\begin{align}
		g^{12}(r) \vee g^{12}(r+e_3) = A_{000} \cap A_{100} \cap A_{010},\\
		g^{12}(r+e_4) \vee g^{12}(r+e_3+e_4) = A_{001} \cap A_{101} \cap A_{011}.
	\end{align}	
	Moreover, we also have that 
	\begin{align}
		g^{12}(r) \vee g^{12}(r+e_3) = A_{000} \cap A_{100} \cap A_{010} \cap A_{110},\\
		g^{12}(r+e_4) \vee g^{12}(r+e_3+e_4) = A_{001} \cap A_{101} \cap A_{011} \cap A_{111}.
	\end{align}	
	The same argument holds generally, therefore we have that 
	\begin{align}
		B_{q^{12}} \coloneqq A_p \cap A_{p_1} \cap A_{p_2} \cap A_{p_{12}},
	\end{align}	
	 is a line as well. Next, let $q^{13} \coloneqq (p,p_1,p_{13},p_3)$ be the vectors of another quad in the 3-cube $\Z_2^3$, which shares an edge with $q^{12}$. Since both lines $B_{q^{12}}$, $B_{q^{13}}$ are contained in the plane $A_p \cap A_{p_1}$, the two lines intersect in a point. Moreover, the intersection 
	 \begin{align}
	 	A_p \cap A_{p_1} \cap A_{p_2} \cap A_{p_3} = \bigcap_{m=1}^3 (A_p \cap A_{p_1}),
	 \end{align}
	 is a point, since it is the intersection of three planes in the 3-space $A_p$. As a result, the intersection $B_{q^{12}} \cap B_{q^{13}} \cap B_{q^{23}}$ is also a point. In other words, the intersections of the lines associated to three pairwise adjacent quads of the 3-cube intersect in a point. By iterating this statement, we obtain that the intersection of the lines of any two quads of the 3-cube intersect in a point. In particular, we obtain that the intersection
	 \begin{align}
	 	(g^{12}(r) \vee g^{12}(r+e_3)) \cap (g^{12}(r+e_4) \vee g^{12}(r+e_3+e_4))
	 \end{align}
	 is a point. This is equivalent to the planarity of the $34$-faces, which completes the proof.\qed
}

In the next step, we show the converse of the previous lemma, namely that every conjugate net defines a conjugate face-net.

\begin{lemma}\label{lem:conjugatetoconjugateface}
  For every conjugate net $g: F^{ij}_N \rightarrow \RP^n$ there exists a unique conjugate face-net $\lceil g \rceil: F_N \rightarrow \RP^n$,
  such that $\lceil g \rceil$ restricts to $g$ on $F^{ij}_N$. 
\end{lemma}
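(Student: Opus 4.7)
The plan is to assume without loss of generality that $i=1$, $j=2$ and to identify $F^{12}_N \simeq \Z^N$, so that $g$ becomes a conjugate net on $\Z^N$. Then $\lceil g\rceil$ will be defined on the remaining face types by propagating the conjugate face-net condition vertex by vertex from the known $12$-face-points.

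For each vertex $v \in V_N$, the four $12$-faces incident to $v$ correspond under this identification to the $12$-quad $v, v-e_1, v-e_1-e_2, v-e_2$. The conjugate net property of $g$ then produces a well-defined plane
\[
 P(v) \coloneqq g(v) \vee g(v-e_1) \vee g(v-e_1-e_2) \vee g(v-e_2) \subset \RP^n.
\]
Any conjugate face-net $\lceil g\rceil$ extending $g$ must satisfy $\bigvee_{f \inc v} \lceil g\rceil(f) = P(v)$, and therefore $\lceil g\rceil(f) \in P(v)$ for every $v \inc f$. This already gives uniqueness, and reduces existence to showing that $\bigcap_{v \inc f} P(v)$ is a single point for every $f \in F_N$.

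Next, for a face $f = \{r, r+e_1, r+e_1+e_k, r+e_k\}$ of type $\{1,k\}$ with $k > 2$ (and symmetrically for $\{2,k\}$), the two $1$-adjacent vertices $r$ and $r+e_1$ share the two $12$-faces at $r$ and $r-e_2$, so $P(r) \cap P(r+e_1)$ contains the line $L_1 = g(r) \vee g(r-e_2)$, and likewise $P(r+e_k) \cap P(r+e_1+e_k)$ contains $L_2 = g(r+e_k) \vee g(r-e_2+e_k)$. The four points defining $L_1$ and $L_2$ form a $2k$-quad of $\Z^N$ and are hence coplanar by the conjugate net property of $g$, so $L_1$ and $L_2$ will meet in the required unique point, which is then taken as $\lceil g\rceil(f)$.

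For a face $f = \{r, r+e_k, r+e_k+e_l, r+e_l\}$ of pure type $\{k,l\} \subset \{3, \ldots, N\}$, each $k$-edge of $f$ already carries two $1k$-face-points from the previous step, both lying in the $P$-planes at the edge's endpoints. These pairs should determine the two lines $P(r) \cap P(r+e_k)$ and $P(r+e_l) \cap P(r+e_k+e_l)$, and the intersection of these two lines will be $\lceil g\rceil(f)$. Verifying that this is a single point, and that it agrees with the point produced from the $l$-edges (or from $2k$-face-points rather than $1k$-face-points), is the main obstacle of the proof; the plan is to invoke the multidimensional consistency of conjugate nets (Theorem~\ref{th:conjugatenetsconsistency}) on the relevant $12kl$-subcubes of $\Z^N$. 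Once this is established, the conjugate face-net property of $\lceil g\rceil$ will be automatic, since by construction every face-point at $v$ lies in $P(v)$.
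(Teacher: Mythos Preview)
Your setup coincides with the paper's: both define $\lceil g\rceil(f)=\bigcap_{v\inc f} P(v)$ and handle the $1k$- and $2k$-faces via the same focal-point argument. Your explicit uniqueness paragraph is a nice addition that the paper leaves implicit.

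The gap is the $kl$-case with $k,l>2$, which you correctly flag as the main obstacle but do not actually resolve. Appealing to Theorem~\ref{th:conjugatenetsconsistency} is not quite the right tool: $g$ is already defined on all of $\Z^N$, and consistency by itself says nothing about the incidence of your two lines $P(r)\cap P(r+e_k)$ and $P(r+e_l)\cap P(r+e_k+e_l)$. The paper closes this step with a dimension count. The sixteen $g$-points feeding into the four planes form a $12kl$ four-cube of the conjugate net, and any four-cube of a conjugate net spans only a $4$-space (start from one vertex and its four neighbours; every further point is forced into that span by face planarity). In this $4$-space each three-cube spans a hyperplane, and each $P(v)$ --- being a $12$-face of the four-cube --- is the intersection of the two three-cube hyperplanes containing it; hence $\bigcap_v P(v)$ is the intersection of four hyperplanes in a $4$-space, a single point. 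Your bootstrap through $1k$-face-points can also be completed, but then the missing input is that these face-points themselves form a conjugate net on $F_N^{1k}\simeq\Z^N$ (they are a Laplace transform of $g$), so that your four chosen $1k$-face-points span a plane. Either route requires a concrete geometric argument beyond an appeal to consistency.
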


\proof{
	Note that $g: F^{ij}_N \rightarrow \RP^n$ already defines a *net $\square g: V_N \rightarrow \planesof{\RP^n}$. The completion of $g$ is given by
	\begin{align}
		\lceil g \rceil (f) \coloneqq \bigcap_{v \inc f} \square g(v).
	\end{align}
	We have to show that the four planes that define $\lceil g \rceil(f)$ intersect in a point. If that is true, then it is clear that $\lceil g \rceil$ is a conjugate face-net, since we construct it from planes. Without loss of generality, assume $i=1, j=2, k=3, l=4$. For a $12$-face $f$ the intersection point $\lceil g \rceil(f)$ is just $g(f)$. For a $13$-face $f$ at position $r$, the intersection point is a so called focal point, that is
	\begin{align}
		\lceil g^{13} \rceil(r) = (g(r) \vee g(r + e_2)) \cap  (g(r + e_3) \vee g(r+ e_2 + e_3)).
	\end{align}
	Since these two lines are in a common plane spanned by $g$, the intersection point exists. The argument is the same for $23$-faces. It remains to check $34$-faces. In this case we intersect four planes that are in a 4-dimensional cube (both combinatorially in $\Z^N$ and geometrically in $\RP^n$). Each plane is also the intersection of two 3-cubes in the 4-cube. As such the intersection of the four planes is the intersection of four 3-cubes in a 4-dimensional space, which is a point.\qed
}

As a result of Lemma \ref{lem:conjugatetoconjugateface}, there is indeed a bijection between conjugate nets and conjugate face-nets.

\begin{theorem} \label{th:conjugatefacenetsconsistency}
	Conjugate face-nets are multi-dimensionally consistent 3D-systems.
\end{theorem}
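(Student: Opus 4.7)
The plan is to deduce the consistency of conjugate face-nets from the known consistency of conjugate nets (Theorem \ref{th:conjugatenetsconsistency}) by exploiting the bijection established in Lemmas \ref{lem:conjugatefacetoconjugate} and \ref{lem:conjugatetoconjugateface}. First I would fix a distinguished pair of coordinate directions, say $(i,j) = (1,2)$. Combining the two lemmas, restriction to $F_N^{12}$ and the extension operator $\lceil \cdot \rceil$ set up a bijection
\[
  \{\text{conjugate face-nets}~g : F_N \rightarrow \RP^n\} \quad \longleftrightarrow \quad \{\text{conjugate nets}~g^{12}: F_N^{12} \simeq \Z^N \rightarrow \RP^n\}.
\]
Under this bijection, the value of $g$ on any face $f \in F_N$ is uniquely reconstructed from $g^{12}$ as the intersection $\bigcap_{v \inc f}\square g^{12}(v)$, as shown in the proof of Lemma \ref{lem:conjugatetoconjugateface}.

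Next I would translate initial data along this bijection. The 2D initial data for a 3D system on $F_N$ consists of the restriction of a conjugate face-net to the coordinate 2-planes of $F_N$, i.e.\ of face-nets on each $F_2^{kl} \subset F_N$. For the distinguished plane $(k,l) = (1,2)$ this datum is exactly a patch of $g^{12}$; for other pairs $(k,l)$, the $kl$-face values are focal points of adjacent planes of $\square g^{12}$, and knowing them is equivalent (by the explicit formulas in the proof of Lemma \ref{lem:conjugatetoconjugateface}) to knowing $g^{12}$ on the corresponding 2D slab in $\Z^N$. Thus prescribing the conjugate face-net on the coordinate 2-planes of $F_N$ is equivalent to prescribing $g^{12}$ on the coordinate 2-planes of $\Z^N \simeq F_N^{12}$.

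Finally, I would invoke Theorem \ref{th:conjugatenetsconsistency}: the initial data for $g^{12}$ extends uniquely to a conjugate net on all of $\Z^N$, and then $\lceil g^{12}\rceil$ produces the unique conjugate face-net on $F_N$ restricting to the prescribed initial data. Multi-dimensional consistency (for all $N \geq 3$) follows because the construction commutes with restriction to lower-dimensional coordinate sublattices.

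The main obstacle is making the initial data correspondence precise and checking that it behaves symmetrically in all coordinate directions, even though the bijection apparently privileges the pair $(1,2)$. Since $\lceil\cdot\rceil$ is canonical and Lemma \ref{lem:conjugatefacetoconjugate} applies to every pair $(i,j)$, the resulting conjugate face-net on $F_N$ is intrinsic and independent of this choice; the only technical point is to verify that the initial data on the $F_2^{kl}$ for $(k,l)\neq(1,2)$ translates faithfully into initial data for the conjugate net $g^{12}$ via focal-point identifications. Once this dictionary is in place, consistency is transported from the Doliwa–Santini result without further combinatorial bookkeeping.
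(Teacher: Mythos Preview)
Your proposal is correct and takes essentially the same approach as the paper: reduce to Theorem~\ref{th:conjugatenetsconsistency} via the bijection between conjugate face-nets and conjugate nets furnished by Lemmas~\ref{lem:conjugatefacetoconjugate} and~\ref{lem:conjugatetoconjugateface}. The paper's proof is in fact terser than yours---it simply cites the bijection and the consistency of conjugate nets without spelling out the initial-data dictionary or the independence of the chosen pair $(i,j)$---so your additional care on those points is not wasted, but it goes beyond what the paper records.
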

\proof{
	Due to Lemma \ref{lem:conjugatetoconjugateface}, conjugate face-nets are in bijection with conjugate nets. Moreover, due to Theorem \ref{th:conjugatenetsconsistency}, conjugate nets are multi-dimensionally consistent. Therefore, conjugate face-nets are also multi-dimensionally consistent.\qed
}

\begin{theorem}
	Conjugate binets are multi-dimensionally consistent 3D-systems.
\end{theorem}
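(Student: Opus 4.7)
The plan is to exploit the fact that the definition of a conjugate binet decouples completely into two independent conditions: one on the restriction to $V_N$ (that it be a conjugate net) and one on the restriction to $F_N$ (that it be a conjugate face-net). Unlike orthogonal or polar binets, there is no condition on conjugate binets that couples vertex values with face values. Hence the consistency problem splits into two independent consistency problems.

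More precisely, I would first observe that the 2-dimensional initial data for a conjugate binet on a coordinate 2-plane of $\Z^N$ consists of the corresponding 2-dimensional initial data for a conjugate net on $V_2$ together with the 2-dimensional initial data for a conjugate face-net on $F_2$. By Theorem~\ref{th:conjugatenetsconsistency}, the vertex data extends uniquely and without contradiction to a conjugate net $b|_{V_N} : V_N \to \RP^n$. By Theorem~\ref{th:conjugatefacenetsconsistency}, the face data extends uniquely and without contradiction to a conjugate face-net $b|_{F_N} : F_N \to \RP^n$. Assembling the two gives a well-defined map $b : D_N \to \RP^n$ whose restrictions to $V_N$ and $F_N$ are, by construction, a conjugate net and a conjugate face-net respectively. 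Hence $b$ is a conjugate binet.

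For multi-dimensional consistency in the sense required, one also needs that propagating initial data through different sequences of $3$-dimensional cubes yields the same result. But again this splits: the vertex-part of this propagation is exactly the multi-dimensional consistency of conjugate nets (Theorem~\ref{th:conjugatenetsconsistency}), and the face-part is exactly the multi-dimensional consistency of conjugate face-nets (Theorem~\ref{th:conjugatefacenetsconsistency}). Both being established, no contradiction can arise on $\Z^m$ for any $m \geq 3$.

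There is essentially no obstacle to this argument; the content of the theorem has already been absorbed by the preceding two theorems, and the main work was in establishing Theorem~\ref{th:conjugatefacenetsconsistency} via the bijection $g \leftrightarrow \lceil g \rceil$ of Lemmas~\ref{lem:conjugatefacetoconjugate} and~\ref{lem:conjugatetoconjugateface}. The only thing to double-check is that "3D-system" is meant in the same sense for binets as for nets and face-nets, i.e.\ that the initial data is prescribed on the three 2-dimensional coordinate planes through the origin; this matches the standard convention of \cite{ddgbook} and both factors of the binet use this same initial data structure.
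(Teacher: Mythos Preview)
Your proposal is correct and takes essentially the same approach as the paper: the paper's proof is the single sentence ``This is an immediate consequence of Theorem~\ref{th:conjugatenetsconsistency} and Theorem~\ref{th:conjugatefacenetsconsistency}'', and your write-up simply makes explicit the decoupling of vertex and face data that justifies this.
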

\proof{
	This is an immediate consequence of Theorem \ref{th:conjugatenetsconsistency} and Theorem \ref{th:conjugatefacenetsconsistency}.\qed
}

\begin{remark}
	As mentioned in the beginning, we generally assume objects to be generic enough, without being precise. In particular, the proofs of Lemma~\ref{lem:conjugatefacetoconjugate} and Lemma~\ref{lem:conjugatetoconjugateface} assume that the ambient dimension is large enough. If this is not the case, it is always possible to consider a lift to an ambient space with large enough dimension. Here a lift means that the central projection of the lift is the original configuration. The statement is then proved in that higher dimensional space and projected back, which does not affect the validity of the claim. This is a common technique in these situations, which was also already applied for conjugate nets \cite[Exercise 2.2]{ddgbook}. Note that for this technique to work, a condition is that the configuration is determined by the same initial data before and after the lift, which is the case here (a counterexample would be line complexes, see \cite{bslinecomplexes}).
\end{remark}

Another concept of discrete integrability is the concept of \emph{consistent reductions}. A reduction is an additional constraint on an $N$-dimensional system. The reduction is called consistent if it suffices to prescribe the constraint on the initial data, and then the constraint is satisfied on all of $\Z^N$. 

In our case, we are interested in polar conjugate binets in $\RP^4$ as a consistent reduction of conjugate binets in $\RP^4$. Recall that a binet $b: D_2 \rightarrow \RP^n$ is polar if for all incident $d,d'\in D_2$ holds that $b(d)$ is polar to $b(d')$. We use the same definition in higher dimensions $N > 2$, except that we replace $D_2$ with $D_N$.

\begin{theorem} \label{th:polarconsistency}
  Polar conjugate binets in $\RP^4$ are a consistent reduction of conjugate binets in $\RP^4$.
\end{theorem}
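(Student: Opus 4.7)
The strategy is to reduce multi-dimensional consistency of the polar reduction to elementary consistency on a single 3-cube in $\Z^3$, invoking the already-established 3D-consistency of conjugate binets (Theorem~\ref{th:conjugatefacenetsconsistency}). Since polarity is a condition local to incident pairs $d \inc d'$, it suffices to show that if a polar conjugate binet structure is given on the three 2D coordinate faces of such a 3-cube, then the unique conjugate extension to the full 3-cube is again polar.

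On a 3-cube with vertices $v_\epsilon$ ($\epsilon \in \{0,1\}^3$) and 2-faces $f^{ij}_\pm$ ($i<j$), Q-net propagation uniquely determines $b(v_{111})$, and the conjugate face-net extension fixes the three new face values $b(f^{ij}_+)$. There are twelve new polarity conditions to verify: nine between each new face and its three old incident vertices, and three between $v_{111}$ and the new faces. The key geometric observation is that if $b(v)$ is polar to a spanning set of face values in the 2-plane $\square b(v)$, then the entire plane lies inside the polar hyperplane $b(v)^\pol \subset \RP^4$. Combined with the duality relation
\[
b(f) = \bigcap_{v \inc f} \square b(v),
\]
which follows from $\square^* \circ \square = \id$ for regular conjugate binets, this immediately yields $b(f^{ij}_+) \in \square b(v) \subset b(v)^\pol$ for every old incident vertex $v$, establishing the nine polarities involving old vertices. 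For the remaining three polarities with $v_{111}$, the Q-net condition places $b(v_{111})$ in the 2-plane spanned by the four coplanar vertices of $f^{ij}_+$, so polarity of $b(f^{ij}_+)$ with three independent old vertices spanning that plane automatically forces polarity with $b(v_{111})$ as well.

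The main obstacle is to handle the non-degeneracy assumptions carefully---that each $\square b(v)$ is genuinely 2-dimensional, that the three old incident vertices of $f^{ij}_+$ span its plane, and that the intersection formula above produces a single point---which are needed throughout the argument. These genericity conditions hold in the regular case considered here, and the polar condition being closed, the general statement follows. Once elementary 3-cube consistency is verified, induction in $N$ extends the polar reduction to all of $\Z^N$, in analogy with the inductive step used for conjugate binets themselves.
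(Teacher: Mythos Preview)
Your proposal is correct and follows essentially the same route as the paper: both reduce to the single 3-cube, propagate $b$ and $\square b$ simultaneously via spans and intersections, and verify the new polarities using the inclusion $\square b(v) \subset b(v)^\pol$ together with the duality $b(f) = \bigcap_{v \inc f} \square b(v)$. The only cosmetic difference is that the paper organizes the check around the equivalent characterization $\square b(d) \subset b(d)^\pol$ for each new $d$, whereas you count the twelve new incident pairs directly; the underlying geometry is identical.
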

\proof{
  An equivalent way to characterize the polarity of polar conjugate binets is to require that $\square b(d) \subset b(d)^\pol$ for all $d\in D_N$.
  Therefore, we build polar conjugate binets from initial data by constructing both $b$ and $\square b$ simultaneously.
  We have to check that we can ``complete the cube'' (see Figure~\ref{fig:consistent-reduction}).
  Let
  \begin{align}
    v,v_1,v_2,v_3,v_{12},v_{23},v_{13},v_{123},
  \end{align}
  be the eight vertices of a cube and
  \begin{align}
    f^{12},f^{23},f^{13},f_3^{12},f_1^{23},f_2^{13},
  \end{align}
  be the six faces of the same cube (see Figure~\ref{fig:consistent-reduction}).
  Assume we know $b$ on all vertices of the cube except $v_{123}$ and on the three faces $f^{12},f^{23},f^{13}$.
  Clearly,
  \begin{align}
    \square b(f^{12}_3) &= b(v_{23}) \vee b(v_{3}) \vee b(v_{13}),\\
    \square b(f^{23}_1) &= b(v_{13}) \vee b(v_{1}) \vee b(v_{12}),\\
    \square b(f^{13}_2) &= b(v_{12}) \vee b(v_{2}) \vee b(v_{23}).
  \end{align}
  Analogously,
  \begin{align}
    b(f^{12}_3) &= \square b(v_{23}) \cap  \square b(v_{3}) \cap \square b(v_{13}),\\
    b(f^{23}_1) &= \square b(v_{13}) \cap \square b(v_{1}) \cap \square b(v_{12}),\\
    b(f^{13}_2) &= \square b(v_{12}) \cap \square b(v_{2}) \cap \square b(v_{23}),
  \end{align}
  which holds since adjacent planes intersect in lines. We need to check the three new polarity conditions, that is that $\square b(f^{12}_3)$ is polar to $b(f^{12}_3)$, that $\square b(f^{23}_1)$ is polar to $b(f^{23}_1)$ and that $\square b(f^{13}_2)$ is polar to $b(f^{13}_2)$. However, these are almost trivial. For example, $b(v_{23}) \in \square b(v_{23})^\pol$, $b(v_{3}) \in \square b(v_{3})^\pol$, $b(v_{13}) \in \square b(v_{13})^\pol$ and therefore
  \begin{align}
    \square b(f_3^{12}) &= b(v_{23}) \vee b(v_{3}) \vee b(v_{13})\\
			&\subset \square b(v_{23})^\pol \vee \square b(v_{3})^\pol \vee \square b(v_{13})^\pol\\
			&= (\square b(v_{23}) \cap  \square b(v_{3}) \cap \square b(v_{13}))^\pol\\
			&= b(f^{12}_3)^\pol.
  \end{align}
  Analogously one may check the other two polarity conditions. Finally, we obtain
  \begin{align}
    \square b(v_{123}) &= b(f_3^{12}) \vee b(f_2^{13}) \vee b(f_1^{23}),\\
    b(v_{123}) &= \square b(f_3^{12}) \cap  \square b(f_2^{13}) \cap \square b(f_1^{23}).
  \end{align}
  Again, we need to check polarity, which we do analogously to before:
  \begin{align}
    \square b(v_{123}) &= b(f_3^{12}) \vee b(f_2^{13}) \vee b(f_1^{23})\\
                       &\subset \square b(f_3^{12})^\pol \vee \square b(f_2^{13})^\pol \vee \square b(f_1^{23})^\pol\\
                       &= (\square b(f_3^{12}) \cap  \square b(f_2^{13}) \cap \square b(f_1^{23}))^\pol\\
                       &= b(v_{123})^\pol.
  \end{align}
  This concludes the proof.\qed
}
\begin{figure}[H]
  \centering
  \begin{overpic}[width=0.27\textwidth]{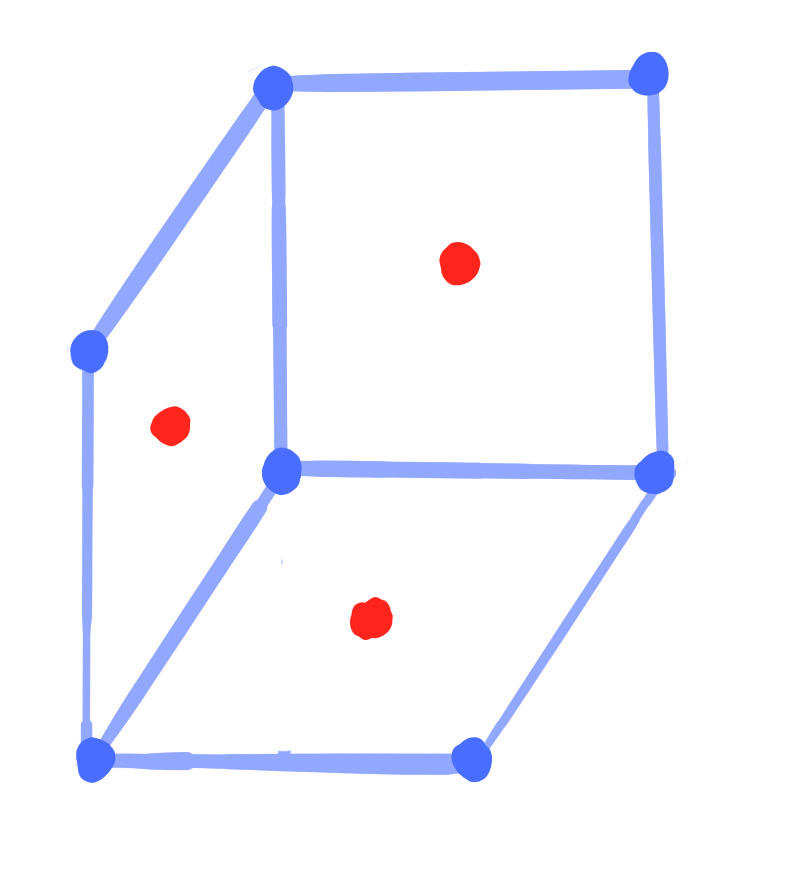}
    \put(34,49){$\color{blue}v$}
    \put(4,6){$\color{blue}v_1$}
    \put(78,45){$\color{blue}v_2$}
    \put(25,95){$\color{blue}v_3$}
    \put(55,6){$\color{blue}v_{12}$}
    \put(76,95){$\color{blue}v_{23}$}
    \put(-3,61){$\color{blue}v_{13}$}
    \put(45,28){$\color{red}f^{12}$}
    \put(55,68){$\color{red}f^{23}$}
    \put(15,42){$\color{red}f^{13}$}
  \end{overpic}
  \hspace{0.1\textwidth}
  \begin{overpic}[width=0.27\textwidth]{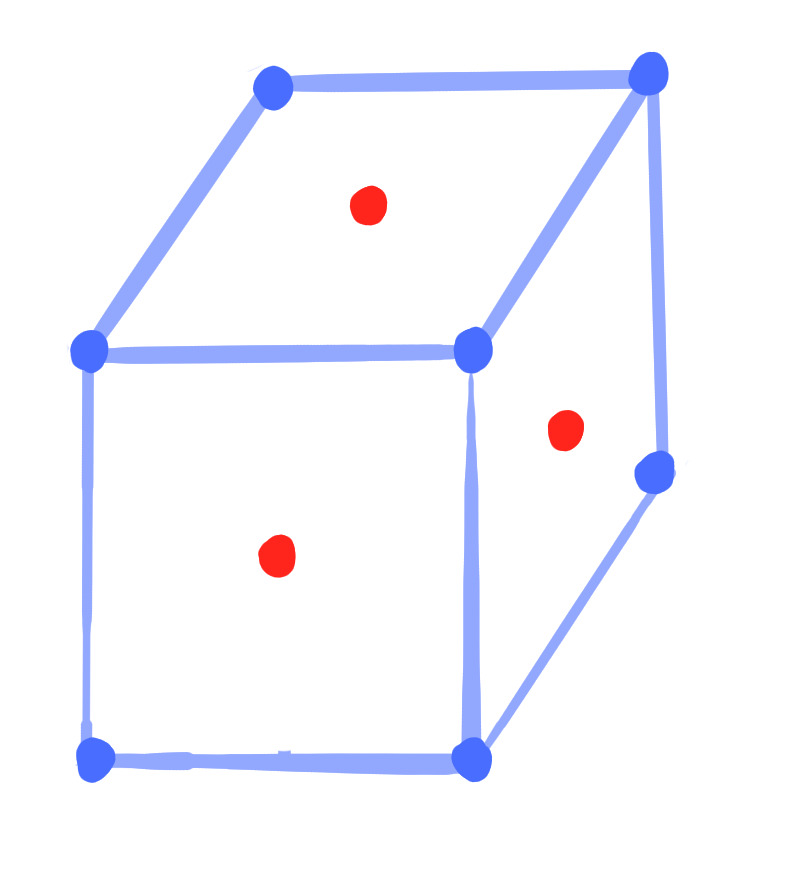}
    \put(57,59){$\color{blue}v_{123}$}
    \put(4,6){$\color{blue}v_1$}
    \put(78,45){$\color{blue}v_2$}
    \put(25,95){$\color{blue}v_3$}
    \put(55,6){$\color{blue}v_{12}$}
    \put(76,95){$\color{blue}v_{23}$}
    \put(-3,61){$\color{blue}v_{13}$}
    \put(44,75){$\color{red}f^{12}_3$}
    \put(34,35){$\color{red}f^{23}_1$}
    \put(59,41){$\color{red}f^{13}_2$}
  \end{overpic}
  \caption{Combinatorics for the consistency around the the cube of a polar binet.}
  \label{fig:consistent-reduction}
\end{figure}

\begin{theorem}
	Principal binets are a consistent reduction of conjugate binets.
\end{theorem}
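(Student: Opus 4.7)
The strategy is to reduce the claim to Theorem~\ref{th:polarconsistency}, which states that polar conjugate binets in $\RP^4$ form a consistent reduction of conjugate binets. The vehicle for this reduction is the Möbius lift. As is standard in discrete integrability, it suffices to verify the cube-completion property for $\Z^3$: given a principal binet on the seven-vertex, three-face corner of a cube with vertices $v, v_1, v_2, v_3, v_{12}, v_{13}, v_{23}$ and faces $f^{12}, f^{13}, f^{23}$, one must show that it extends uniquely to a principal binet on the full cube. Multi-dimensional consistency for $N > 3$ then follows by the usual inductive propagation cube-by-cube.

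My first task would be to construct a Möbius lift $b_\mobq$ of the corner data into $\RP^4$. Each of the three initial faces, together with its four incident vertices, constitutes a two-dimensional principal binet and therefore admits a Möbius lift by Theorem~\ref{th:mobiusortho}, with a 1-parameter freedom. I would fix this freedom by prescribing $b_\mobq(v)$ once at the corner vertex. The resulting three lifts must agree on the three coordinate axes meeting at $v$. Using the coordinate formula $\rho(d) + \rho(d') = \langle b(d), b(d')\rangle$ that characterizes a Möbius lift, the propagated value $\rho(v_i) - \rho(v) = \langle b(v_i) - b(v), b(f)\rangle$ a priori depends on which face $f$ through the edge $(v, v_i)$ is used; it is independent of this choice precisely when the diagonal cross $(v, f^{ij}, v_i, f^{ik})$ is orthogonal, which is part of the principal-binet hypothesis on the corner. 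This compatibility guarantees that the three face-lifts glue into a well-defined conjugate polar binet $b_\mobq$ on the full corner in $\RP^4$.

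The second step would be to invoke Theorem~\ref{th:polarconsistency} to extend $b_\mobq$ to a conjugate polar binet on the entire cube in $\RP^4$; its cube-completion argument propagates conjugacy and polarity simultaneously. Projecting the completed lift by $\pi_\eucl$ then produces a binet $b$ on the cube, which restricts to the original principal binet on the corner by definition of the Möbius lift. By Lemma~\ref{lem:principalmobiuslift}, the projection of any conjugate polar binet in $\RP^4$ to $\eucl$ is a principal binet, so $b$ completes the cube.

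The main obstacle is the compatibility check in the first step: ensuring that the three independent Möbius lifts on the initial faces glue together on the shared coordinate axes. This is exactly the place where the orthogonality of the diagonal crosses enters, and it is what makes the Möbius-lift framework faithfully convert the orthogonality structure of principal binets into the polarity structure required by Theorem~\ref{th:polarconsistency}.
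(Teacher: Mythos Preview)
Your approach is essentially the same as the paper's: lift the initial data via the M\"obius lift to a polar conjugate binet in $\RP^4$, invoke Theorem~\ref{th:polarconsistency}, and project back using Lemma~\ref{lem:principalmobiuslift}. The paper's proof is a terse four-line version of exactly this strategy; your added care about gluing the M\"obius lifts along the coordinate axes (where the diagonal crosses $(v,f^{ij},v_i,f^{ik})$ enter) is a genuine clarification that the paper elides when it simply asserts ``we may choose a M\"obius lift $b_\mobq$ on the initial data,'' since Theorem~\ref{th:mobiusortho} was only established for $D_2$ and the well-definedness on the union of coordinate planes does require precisely the orthogonality at the non-planar crosses of $C_N$.
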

\proof{
	Let $b$ be a principal binet defined on 2-dimensional initial data. We may choose a Möbius lift $b_\mobq$ on the initial data. The Möbius lift is a polar conjugate binet in $\RP^4$, therefore Theorem \ref{th:polarconsistency} applies. Thus we may extend the Möbius lift from the initial data to all of $D_N$. Then $\pi_\eucl(b_\mobq)$ is a principal binet on all of $D_N$.\qed
}


Previously, we have seen that every conjugate binet $c$ can be obtained as a completion of two conjugate nets $g,h$, that is such that $c = (g, \lceil h \rceil)$. However, since $g$ and $h$ are the same types of nets, one may ask what happens if we complete $g$ instead of $h$. To answer this, introduce the set $W_N$ that is isomorphic to $F_N$, but such that $W^{12}_N$ is identified with $V_N$. So for the moment, let us think of $g$ as defined on the restriction $W^{12}_N = V_N$ of the set $W_N \simeq F_N$. Then for all $v \in W_N$ we define
\begin{align}
	\lceil g\rceil(v) \coloneqq \bigcap_{f \in F_N^{12}, f\inc v} \square g(f).
\end{align}

\begin{theorem}
	Let $g,h: \Z^n \rightarrow \eucl$ be two conjugate nets. Then the binet $(g,\lceil h\rceil): V_N \cup F_N \rightarrow \eucl$ is a principal binet if and only if $(\lceil g\rceil, h): W_N \cup F^{12}_N \rightarrow \eucl$ is a principal binet.
\end{theorem}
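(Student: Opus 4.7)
The plan is to pass from one binet to the other through a common Möbius lift, exploiting the consistency of polar conjugate binets (Theorem~\ref{th:polarconsistency}).

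First, I observe that both binets are automatically conjugate binets: $g$ and $h$ are conjugate nets by hypothesis, and $\lceil g\rceil$, $\lceil h\rceil$ are the unique conjugate face-net completions guaranteed by Lemma~\ref{lem:conjugatetoconjugateface}. Moreover, both binets share the same data on the common set $V_N \cup F_N^{12}$: on $V_N = W_N^{12}$ both take the values $g$, and on $F_N^{12}$ both take the values $h$. Thus, the statement amounts to showing that the orthogonality conditions, viewed as extensions of the shared 2D-like data, are equivalent.

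Assume $(g, \lceil h\rceil)$ is a principal binet. By Theorem~\ref{thm:conjugatemobiuslift}, it admits a Möbius lift $b_\mobq : D_N \to \RP^4$, which is a conjugate polar binet with respect to $\mobq$. Its restriction to the common core $V_N \cup F_N^{12}$ provides valid initial data for a conjugate polar binet. Applying the cube-consistency argument of Theorem~\ref{th:polarconsistency} to the combinatorial structure $W_N \cup F_N^{12}$ (which shares the same local cube incidences as $D_N$), the common core data extends uniquely to a conjugate polar binet $\tilde{b}_\mobq : W_N \cup F_N^{12} \to \RP^4$. Its projection $\pi_\eucl \circ \tilde{b}_\mobq$ is then a principal binet by Lemma~\ref{lem:principalmobiuslift}. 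On $F_N^{12}$ this projection equals $h$ by construction; on $W_N$, the conjugate-face-net structure combined with polarity forces the projection to satisfy the intersection formula defining $\lceil g\rceil$, and so they coincide by uniqueness of the completion. The reverse implication follows by swapping the roles of $g$ and $h$ throughout.

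The hard part will be rigorously identifying the $W_N$-values of the extended polar lift with the focal-type completion $\lceil g\rceil$. This requires verifying that the polarity relation $\tilde{b}_\mobq(w) \pol \tilde{b}_\mobq(f)$ for $w \in W_N$ and $f \in F_N^{12}$ with $f \inc w$ projects, under $\pi_\eucl$, to the planarity relation $\lceil g\rceil(w) \in \square g(f)$. This is where the interplay between polarity in $\RP^4$ and planarity in $\eucl$ must be carefully unwound, and where the combinatorial definition of the incidence $\inc$ on $W_N \cup F_N^{12}$ must be shown to align with the defining formula for $\lceil g\rceil$.
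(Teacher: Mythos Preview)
Your strategy of passing through the M\"obius lift is exactly the paper's, but you take an unnecessary detour through Theorem~\ref{th:polarconsistency} and then flag as the ``hard part'' precisely what the paper handles in one line. Instead of invoking cube-consistency to extend the restricted lift, the paper directly \emph{defines} the candidate lift as $c'_\mobq \coloneqq (\lceil g_\mobq\rceil, h_\mobq)$, applying the completion operator $\lceil\cdot\rceil$ of Lemma~\ref{lem:conjugatetoconjugateface} to the lifted conjugate net $g_\mobq$ in $\RP^4$. This is automatically a conjugate binet; since the completion is built from intersections of the face-planes $\square g_\mobq(f)$, it commutes with the central projection $\pi_\eucl$, so the projection is $(\lceil g\rceil, h)$ with no identification argument needed. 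Polarity is then the single inclusion
\[
  \lceil g_\mobq\rceil(w) \in \bigcap_{f' \inc w} \square g_\mobq(f') \subset \square g_\mobq(f) \subset h_\mobq(f)^\pol,
\]
where the last containment is the polarity already present in the original lift $c_\mobq = (g_\mobq,\lceil h_\mobq\rceil)$, namely $\square g_\mobq(f) = \square c_\mobq(f) \subset c_\mobq(f)^\pol = h_\mobq(f)^\pol$.

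Your appeal to Theorem~\ref{th:polarconsistency} is also not in the form that theorem provides: it establishes cube-by-cube propagation from $2$-dimensional Cauchy data, whereas your ``common core'' $V_N \cup F_N^{12}$ is of a different shape (all vertices of the new structure plus only one type of faces). Making that extension argument precise would force you to reinvent the face-completion formula anyway, so the paper's direct construction is both shorter and avoids the gap.
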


\proof{
	Let $p \coloneqq (g,h)$ be the pair of conjugate binets, and $c \coloneqq (g,\lceil h\rceil)$ and $c' \coloneqq (\lceil g\rceil, h)$ be the two completions of $p$. If $c$ is a principal binet, then its Möbius lift $c_\mobq$ is a polar conjugate binet.  The restriction of $c_\mobq$ to $V_N \cup F^{12}_N$ is a pair of conjugate nets $p_\mobq = (g_\mobq, h_\mobq)$. It is straight-forward to show that $c_\mobq = (g_\mobq, \lceil h_\mobq \rceil)$. Moreover, $p_\mobq$ has the property that for all $v \in V$ holds $\square h_\mobq(v) \subset g(v)^\pol$. Analogously for all $f \in F$ holds $\square g_\mobq(f) \subset h(f)^\pol$. Next, we claim that $c'_\mobq \coloneqq (\lceil g_\mobq \rceil,  h_\mobq)$ is a Möbius lift of $c'$. Due to Lemma \ref{lem:conjugatetoconjugateface}, $c'_\mobq$ is a conjugate binet. It is clear that $\pi_\eucl(c'_\mobq) = c'$. Therefore it remains to show that $c'$ is a polar binet. The completion of $g_\mobq$ is defined via
	\begin{align}
		\lceil g_\mobq\rceil(v)   = \bigcap_{f \in F_N^{12}, f\inc v} \square g_\mobq(f),
	\end{align}
	for all $v \in W_N$. The point $\lceil g_\mobq\rceil (v)$ is polar to $h_\mobq(f)$ whenever $v$ is adjacent to $f$, because
	\begin{align}
		\lceil g_\mobq\rceil(v) = \bigcap_{f' \in F_N^{12}, f\inc v} \square g_\mobq(f') \subset \square g_\mobq(f) \subset h_\mobq(f)^\pol.
	\end{align}
	Hence, $c'_\mobq$ is indeed a Möbius lift of $c'$, which means $c'$ is a principal binet.\qed
}

Therefore, one could argue that a pair of conjugate nets $g,h$ which have a polar Möbius lift in the sense of the proof of the previous theorem are actually defining a principal binet. In this way, the relation between $g$ and $h$ becomes completely symmetric. However, as a pair they are not symmetrically defined on $\Z^N$, since in this way the $12$-directions are distinguished. On the other hand, if we think of binets on $\Z^N$ as transformations of discrete surfaces, it does make sense to have a distinguished $12$-direction.

\begin{remark}
	There is an alternative route to obtain the multi-dimensional consistency of principal binets in $\Z^N$ directly via the Lie lift. We provide a short discussion in this remark. In the $\Z^2$ case, the Lie lift $b_\lieq$ of a principal binet $b$ is by definition a line bicongruence (see Definition \ref{def:lielift}). There is a generalization of line congruences to $\Z^N$ called line complexes, which was introduced by Bobenko and Schief \cite{bslinecomplexes}. A \emph{line complex} is a map $\ell: V_N \rightarrow \mbox{Lines}(\RP^n)$ such that adjacent lines intersect. They also showed that line complexes are multi-dimensionally consistent. It turns out that the restriction $g_\lieq$ of a Lie lift $b_\lieq$ to $V_N$ is indeed a line complex. In fact, any line complex $g_\lieq$ in $\RP^5$ is the restriction of a Lie lift. Moreover, the restriction $g_\lieq$ determines the whole Lie lift $b_\lieq$ uniquely, since
	\begin{align}
		b_\lieq(f) = \cap_{v\sim f} (g_\lieq(v))^\perp = \left(\bigvee_{v\sim V} g_\lieq(v) \right)^\perp
	\end{align}
	is a unique line, as it is the polar complement of a 3-space in $\RP^5$. Note that the restriction of $b_\lieq(f)$ to $F_N$ is \emph{not} a line complex. Instead, the restriction to $F_N$ is another generalization of a line complex, where lines are defined on faces of $\Z^N$ such that the lines of all faces around an edge span a plane. It is clear that this generalization is multi-dimensionally consistent in $\RP^5$ due to the polarity construction from line complexes. We claim this generalization is also multi-dimensionally consistent in any $\RP^N$ with $N > 5$, and may be modified to be multi-dimensionally consistent also for $N = 3,4$ analogously to \cite{bslinecomplexes}, but since we do not need this here, we do not provide a proof.
\end{remark}

\begin{remark}
  \newcommand{\mathbbx}{\mbox{\boldmath $x$}}
  One of the motivating examples in the introduction (Example~\ref{itm:exampleconfocal}) are discrete confocal quadrics \cite{bsstconfocali, bsstconfocalii} (see Figure~\ref{fig:discrete-confocal}, left).
  Consider $(\Z^3)^*$ -- the dual of $\Z^3$ -- as the set
  \begin{align}
    (\Z^3)^* \simeq \Z^3 + \frac12(1,1,1),
  \end{align}
  so that every point in $(\Z^3)^*$ corresponds to a unit-cube of $\Z^3$.
  Discrete confocal quadrics are defined as a pair of a map $\mathbbx$ defined on $\Z^3$ and a map $\mathbbx^*$ defined on $(\Z^3)^*$ with some constraints. In particular, for every edge $(v,v')$ and dual edge $(c,c')$ such that $v,v' \in c\cap c'$ the corresponding lines are orthogonal, that is
  \begin{align}
    (\mathbbx(v) \vee \mathbbx(v')) \perp (\mathbbx^*(c) \vee \mathbbx^*(c')).
  \end{align}
  In this sense, discrete confocal quadrics are orthogonal nets.
  In fact, discrete confocal quadrics are a discretization of \emph{orthogonal coordinate systems} -- orthogonal parametrizations of $\eucl$.
  Moreover, it turns out that as a consequence of this orthogonality condition the images of all quads of both $\mathbbx$ and $\mathbbx^*$ are planar,
  hence discrete confocal quadrics are also conjugate nets.
    In fact, if we view one $\Z^2$ slice together with an adjacent $(\Z^2)^*$ slice
  of a discrete confocal quadric the definition coincides with our definition of a principal binet (see Figure~\ref{fig:discrete-confocal}, right) \cite{hstellipsoid}.
  However, in $\Z^3$ the combinatorics of discrete confocal quadrics are clearly different from the combinatorics that we use for principal binets. 
  At the moment of writing, it is unclear if there is any relation between the two discretizations.  
   If there is not, then principal binets defined on $D_N$ may provide a different discretization of orthogonal coordinate systems. However, it would appear that one should not consider all points of a principal binet to be part of the same orthogonal coordinate system, possibly only the restrictions to two conjugate nets on $V_N \cup F^{12}_N$ as discussed before.
\end{remark}
\begin{figure}[H]
  \centering
  \includegraphics[width=0.49\textwidth]{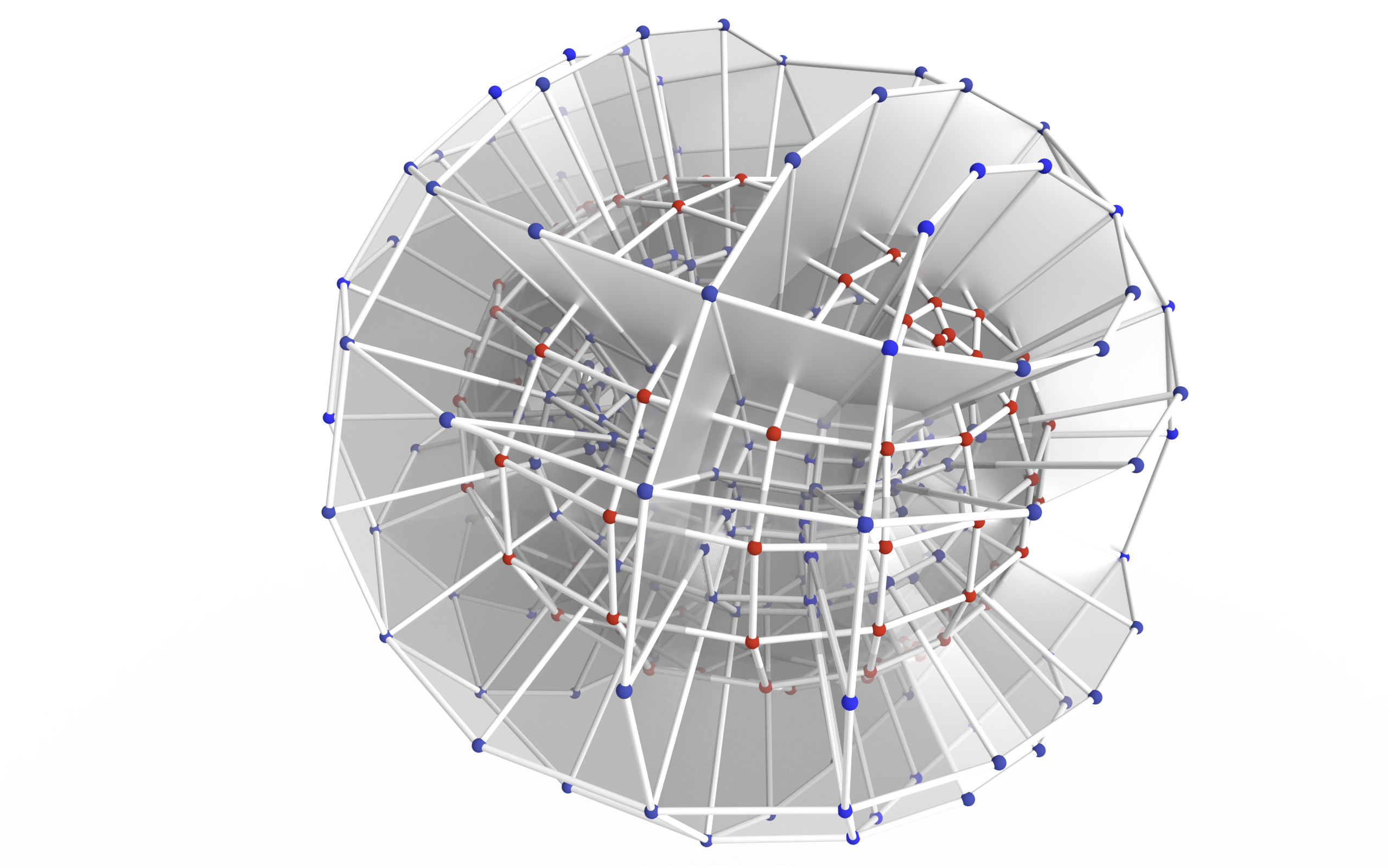}
  \includegraphics[width=0.49\textwidth]{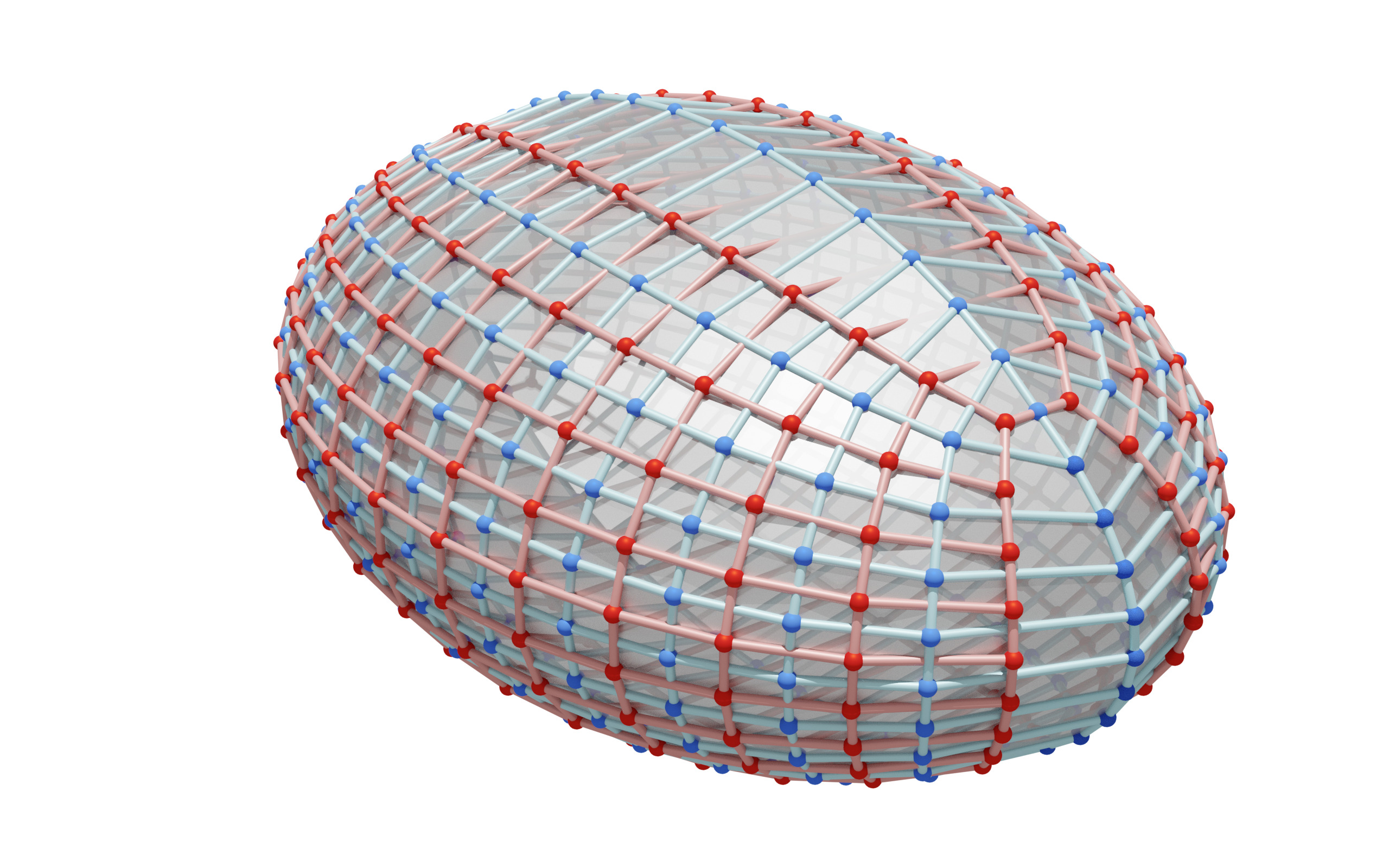}
  \caption{
    Left: Discrete confocal quadrics (blue vertices on $\Z^3$ and red vertices on $(\Z^3)^*$).
    Right: Discrete ellipsoid as a 2-dimensional slice of discrete confocal quadrics (red vertices on $\Z^2$ and blue vertices on $(\Z^2)^*$).
  }
  \label{fig:discrete-confocal}
\end{figure}

\newpage
\appendix

\section{Appendix: Overview of notation}\label{sec:cheatsheet}

A cheat sheet for our plentiful notation.

{\renewcommand{\arraystretch}{1.2}
\begin{tabular}{|l|l|}
	\hline
	 $V \simeq \Z^2$ &  vertices \\
	\hline
	$E = E(\Z^2)$ & edges \\
	\hline
	$F \simeq (\Z^2)^*$ & faces \\
	\hline
	$D = V\cup F$ & double \\
	\hline
	$C \simeq E$ & crosses \\
	\hline
\end{tabular}}
{\renewcommand{\arraystretch}{1.2}
\begin{tabular}{|l|l|}
	\hline
	 $\pl$ &  planes \\
	\hline
	 $\opl$ & oriented planes \\
	\hline
	 $\apl$ & angled planes \\
	\hline
	 $\sp$ &  spheres \\
	\hline
	 $\osp$ & oriented spheres \\
	\hline
	 $\asp$ & angled spheres \\
	\hline
\end{tabular}}
{\renewcommand{\arraystretch}{1.2}
\begin{tabular}{|l|l|}
	\hline
	 $\lieq$ &  Lie quadric \\
	\hline
	 $\mobq$ & Möbius quadric \\
	\hline
	 $\blac$ & Blaschke cylinder \\
	\hline
	 $\eucl$ &  Euclidean space \\
	\hline
	 $\eucl^*$ & Euclidean dual space \\
	\hline
	 $\euclq$ & absolute conic \\
	\hline
\end{tabular}}

\begin{minipage}{4cm}
	Subsets of $\RP^5$ \\and their \\(defining) relations.
\end{minipage}
{\renewcommand{\arraystretch}{1.2}\renewcommand{\tabcolsep}{0.32cm}
\begin{tabular}{|l|l|l|}
	\hline
	 $B \in \lieq$ &  $M \in \lieq^+$ & $B \perp M$ \\
	\hline
	 $\blac = B^\perp \cap \lieq$ & $\mobq = M^\perp \cap \lieq$ & $\ell_\blac=M\vee B$  \\
	\hline
	 $S_\eucl \subset M^\perp$ & $E^\infty = S_\eucl \cap B^\perp$ & $\eucl = S_\eucl \setminus E^\infty$ \\
	\hline
	 $S_{\eucl^*} = M^\perp \cap B^\perp$ &  $\eucl^* = S_{\eucl^*} \setminus \{B\} $ & $\euclq = E^\infty \cap \mobq$ \\
	\hline
	 $S_\unis \subset B^\perp$ & $S_\unis \ni M$ & $\unis = \blac \cap S_\unis$  \\
	\hline
\end{tabular}}

\begin{minipage}{4cm}
	Projections and \\identifications.
\end{minipage}
{\renewcommand{\arraystretch}{1.2}
\begin{tabular}{|ll|}
	\hline
	$\pi_{\eucl} : \RP^5 \setminus \ell_\blac \rightarrow S_\eucl,$ & $X \mapsto (X \vee \ell_\blac) \cap S_\eucl$ \\
  	\hline
	$\xi_\sp: \RP^5 \setminus \{\PM\} \rightarrow \sp,$ & $X \mapsto \pi_\eucl( X^\pol \cap \mobq )$ \\
	\hline
	$\pi_{\eucl^*}: B^\perp \setminus \{ \PM \} \rightarrow S_{\eucl^*},$ & $X \mapsto  (X \vee \PM) \cap S_{\eucl^*}$ \\
	\hline
	$\xi_\pl : B^\perp \setminus \ell_\blac \rightarrow \pl,$ & $X \mapsto \star\pi_{\eucl^*}(X) \subset \eucl$ \\
	\hline
	$\pi_{\unis} : B^\perp \setminus \{\PB\} \rightarrow S_\unis,$ & $ X \mapsto (X \vee \PB) \cap S_\unis$ \\
	\hline
	$\xi_\opl: \blac \setminus \{\PB\} \rightarrow \opl \simeq \pl \times \unis,$ & $X \mapsto (\xi_\pl(X), \pi_{\unis}(X))$  \\
	\hline
	$\xi_\nci : B^\perp \setminus \ell_\blac \rightarrow \circlesof{\unis},$ & $X \mapsto \pi_{\unis}(X)^\perp \cap \unis$ \\
	\hline
	$\xi_\apl : B^\perp \setminus \ell_\blac \rightarrow \apl,$ & $X \mapsto \left(\xi_\pl(X), \xi_{\nci}(X)\right)$ \\
	\hline
	$\xi_\osp: \RP^5 \setminus \ell_\blac \rightarrow \osp,$ & $\xi_\osp(X) \cong \xi_\opl(X^\perp \cap \blac) $ \\
	\hline
	$\xi_\asp: \RP^5 \setminus \ell_\blac \rightarrow \asp \simeq \sp \times \osp,$ & $X \mapsto \left(\xi_\sp(X), \xi_{\osp}(X)\right)$ \\
	\hline
\end{tabular}}

\begin{minipage}{2.95cm}
	Binets, lifts and \\representations.
\end{minipage}
{\renewcommand{\arraystretch}{1.2}
\begin{tabular}{|l|l|}
	\hline
	 $b_\mobq: D \rightarrow M^\perp$ &  Möbius lift \\
	\hline
	$b_\sp = \xi_\sp\circ b_\mobq: D \rightarrow \sp$ & orthogonal sphere rep. \\
	\hline
	$b_\blac: D \rightarrow B^\perp$ & Laguerre lift \\
	\hline
	$b_\nci = \xi_{\nci} \circ b_\blac: D \rightarrow \circlesof{\unis}$ & orthogonal circle rep. \\
	\hline
	$b_\apl = \xi_{\apl} \circ b_\blac: D \rightarrow \apl$ & angled plane rep. \\
	\hline
	$b_\lieq = b_\blac \vee b_\mobq: D \rightarrow \linesof{\RP^5}$ & Lie lift \\
	\hline
	$n: D \rightarrow S_\unis$ & normal binet \\
	\hline
	$N: D \rightarrow \linesof{\eucl}$ & normal congruence \\
	\hline
	$\circl: D \rightarrow \circlesof{\eucl}$ & $\circl(d) \cong \pi_\eucl (\square b_\mobq(d) \cap \mobq)$ \\
	\hline
	$\cone: D \rightarrow \conesof{\eucl}$ & $\cone (d) \cong \xi_\opl (\square b_\blac(d) \cap \blac)$ \\
	\hline	
	$\square b: D \rightarrow \pl$ & planes of conj.~binet $b$ \\
	\hline
	$\square^* b: D \rightarrow \pl$ & points of conj.~bi*net $b$ \\
	\hline	
\end{tabular}}

\bibliographystyle{alpha}

\bibliography{references}

\end{document}